\documentclass[11pt,letterpaper]{article}

\input{preamble}

\title{Robust and Leakage-Resilient\texorpdfstring{\\}{ } Device-Independent Oblivious Transfer\texorpdfstring{\\}{ } in MiniQCrypt}
\newif\ifanonymous
\anonymousfalse %
\ifanonymous
  \author{}
\else
  \author{%
    Rishabh Batra$^1$\thanks{\nolinkurl{rishabh.batra@epfl.ch}} \quad
    Zhili Chen$^2$\thanks{\nolinkurl{chen.zhili@u.nus.edu}} \quad
    Rahul Jain$^{2,3,4}$\thanks{\nolinkurl{dcsrahul@nus.edu.sg}} \quad
    YaoNan Zhang$^2$\thanks{\nolinkurl{zhang.yaonan@u.nus.edu}}\\[0.5em]
    \normalsize $^1$EPFL\\
    \normalsize $^2$Centre for Quantum Technologies, Singapore\\
    \normalsize $^3$Department of Computer Science, National University of Singapore\\
    \normalsize $^4$MajuLab, UMI 3654, Singapore}
\fi
\date{}

\begin{document}
\raggedbottom
\setlength{\emergencystretch}{2em}
\hypersetup{pdftitle={Robust and Leakage-Resilient Device-Independent Oblivious Transfer in MiniQCrypt}}
\ifanonymous
  \hypersetup{pdfauthor={}}
\else
  \hypersetup{pdfauthor={Rishabh Batra, Zhili Chen, Rahul Jain, YaoNan Zhang}}
\fi
\setlength{\textfloatsep}{10pt plus 2pt minus 2pt}
\setlength{\floatsep}{9pt plus 2pt minus 2pt}
\setlength{\intextsep}{10pt plus 2pt minus 2pt}
\renewcommand{\topfraction}{0.9}
\renewcommand{\bottomfraction}{0.9}
\renewcommand{\textfraction}{0.08}
\renewcommand{\floatpagefraction}{0.85}
\setcounter{topnumber}{3}
\setcounter{bottomnumber}{3}
\setcounter{totalnumber}{5}
\makeatletter
\setlength{\@fptop}{0pt}
\setlength{\@fpsep}{12pt}
\setlength{\@fpbot}{0pt plus 1fil}
\makeatother
\pagestyle{plain}

\hypersetup{pageanchor=false}
\begin{titlepage}
\maketitle
\thispagestyle{empty}

\begin{abstract}
Assuming post-quantum one-way functions, we construct device-independent oblivious transfer and bit commitment that tolerate independent honest faults at a constant rate and adaptive, bidirectional quantum leakage between the laboratories at a constant rate. Honest parties use trusted classical computation to operate untrusted quantum devices, which may share arbitrary entanglement and perform joint measurements without assumptions of independence or identical distribution. Security is simulation-based against quantum polynomial-time adversaries and composes sequentially with efficient simulators. For security parameter $\lambda$ and a prescribed total leakage budget $\ell$, one complete one-bit oblivious transfer uses $\Theta(\ell+\log^2\lambda)$ device coordinates in at most three single-use device windows, including preprocessing, and polynomial classical computation and communication. The leakage budget is shared across these windows. The commitment has efficient simulators against both parties and yields device-independent coin tossing with abort; each uses one window. Completeness of oblivious transfer gives device-independent secure computation with abort for every efficiently computable classical functionality on a fixed number of parties, against static corruption of any proper subset, provided each reached call to oblivious transfer satisfies the device and budget conditions.

The construction combines short selectors with code-based choice masking, a global audit, and error correction across field symbols. The commitment extractor changes a classical code coset while leaving the device execution and its leakage unchanged. Sender security uses a global Fourier analysis and a selector-aware Magic Square repetition bound derived from the two-round threshold theorem of Kundu and Tan (2023).

\medskip
\noindent\textbf{Keywords:} Device-independent cryptography; oblivious transfer; multiparty computation; bit commitment; one-way functions; leakage resilience.
\end{abstract}
\end{titlepage}
\hypersetup{pageanchor=true}

\pagenumbering{roman}
\setcounter{tocdepth}{1}
\tableofcontents
\clearpage
\pagenumbering{arabic}

\section{Introduction}
\label{sec:intro}

Bit commitment (BC) and oblivious transfer (OT) are foundational primitives of two-party cryptography. A commitment lets Alice fix a bit while hiding it until a later opening. Kilian showed that OT is complete for secure computation~\cite{Killian_OT_implies_everything}: in one-out-of-two OT, Alice holds $(S_0,S_1)$, Bob holds a choice bit $D$, and Bob learns $S_D$ and nothing about $S_{1-D}$, while Alice learns nothing about $D$. Classically, OT is not known to follow from one-way functions, and Impagliazzo and Rudich rule out fully black-box constructions~\cite{IR89}. With quantum communication, it does follow: Grilo, Lin, Song, and Vaikuntanathan~\cite{OT1} and Bartusek, Coladangelo, Khurana, and Ma~\cite{OT2} independently construct simulation-secure OT from post-quantum one-way functions (qOWFs). This places OT in MiniQCrypt, the world in which qOWFs exist and quantum computation and communication are available. Their security proofs require trusted quantum hardware: the sender prepares BB84 states exactly, and the receiver performs the prescribed measurements.

\paragraph{Device independence.}
Devices obtained from a potentially colluding vendor need not implement the advertised quantum operations. A proof that assumes exact state preparation or measurement need not apply to such devices. In the device-independent (DI) model, security instead uses only observed correlations, here those of the Magic Square game of Mermin and Peres~\cite{Mermin_Magic_Sqaure,Peres_Magic_Square}. Honest users have trusted classical computation, communication, and private storage; they send prescribed classical questions to untrusted devices and receive classical answers. All quantum operations are performed by the untrusted devices, which may share arbitrary prior entanglement and need not act independently or identically across coordinates (non-IID). Unlike trusted-quantum protocols, the honest implementation is constrained only through this classical interface. In our protocols the two devices play $L$ Magic Square games in parallel, one per \emph{coordinate}, in a single use that we call the \emph{device window}.

We study two further requirements. Robustness means tolerating faults in the honest device implementation; leakage resilience means preserving security under bounded communication between the laboratories. Both are needed when the devices are faulty and imperfectly isolated.

\paragraph{The question.}
Device independence has so far been reached for OT only under further restrictions: earlier DI protocols assume bounded quantum storage (Broadbent and Yuen~\cite{Broadbent}; Batra, Chakraborty, Jain, and Kapshikar~\cite{BCJK25}), restrict the attack to be sequential (Kaniewski and Wehner~\cite{KW16}), or achieve constant cheating bounds (Kundu, Sikora, and Tan~\cite{KST22}); to our knowledge, none obtains simulation-secure OT from one-way functions alone. We therefore ask:
\begin{quote}
\itshape
Do post-quantum one-way functions suffice for oblivious transfer, and hence for secure computation, when the quantum devices are untrusted, faulty, and imperfectly isolated?
\end{quote}

\paragraph{The answer.}
We construct DI OT and DI bit commitment with simulation security against quantum polynomial-time (QPT) adversaries, efficient simulators, and sequential composition. Both tolerate constant rates of independent honest faults and physical leakage. The complete OT uses at most three device windows, including preprocessing, and polylogarithmically many device coordinates when the leakage budget is polylogarithmic. Classical communication remains polynomial-round. Under the per-call conditions stated below, OT completeness gives secure computation for every efficiently computable classical functionality on a fixed number of parties. The commitment has simulators against both parties and gives DI coin tossing; each uses one device window.

\paragraph{Robustness and leakage resilience.}
The commitment, the OT protocol, and their composition must all preserve these two properties.

\emph{Robustness.} Honest manufacturing, preparation, or measurement faults violate some Magic Square constraints. A protocol that certifies perfect correlations aborts on real devices, and a noiseless-only commitment would defeat the robustness of the OT built on it. Our protocols tolerate a fixed fraction of such faults, in the commitment and in the OT alike. Independent honest faults govern correctness only; security holds against non-IID devices.

\emph{Leakage resilience.} DI security rests on the devices in the two laboratories not communicating, and a colluding vendor's devices may communicate through hidden side channels. Charles Bennett contrasts DI-QKD with measurement-device-independent QKD, where the untrusted device stays outside the users' enclosures: in DI-QKD ``they must trust Eve not to have deliberately created side channels'' from the untrusted devices to herself~\cite[Box~VI.B.2]{XMZLP20}. We allow such a channel to be deliberately built into the devices, subject to a bound of $\ell$ qubits of adaptive, bidirectional communication. The model below specifies the window and charges every use of the channel. Let $L_{\rm total}$ be the total number of device coordinates in one complete OT. For fixed constants $c_0,\nu_0>0$, a complete invocation tolerates total communication at most $c_0L_{\rm total}$ qubits and independent honest coordinate-fault probability at most $\nu_0$. The communication budget is shared across the three single-use batches, rather than renewed at each window (\Cref{sec:model-assumptions}).

\emph{Simultaneous guarantees.} Both constants hold in the same protocol, with arbitrary joint receiver measurements (\Cref{tab:intro-result-summary}). They are fixed by the proof and need not be large enough for a practical implementation.

\subsection{The model, and what each assumption is for}
\label{sec:model-assumptions}

We assume qOWFs and a possibly nonuniform QPT adversary with polynomial-size quantum advice. Corruption is static. In a two-party protocol, at most one party is corrupt; in the multiparty corollary, the corrupted set is an arbitrary proper subset, which acts as one adversary. The vendor of all devices may collude with it. The devices may share arbitrary entanglement, retain quantum memory, and perform joint measurements, subject to the QPT model and the conditions below. At each call, honest trusted storage is isolated, fresh honest coins are independent of the device state, and only the current batch is accessible. These conditions also hold at every reached boundary in the simulation hybrids (\Cref{def:admissible-extension,def:ot-call-boundary}).

\paragraph{Trusted classical control.}
The trusted classical controller of the DI model has no leakage interface. Without it, an honest party has nothing with which to test a device. Preprocessing masks and their opening witnesses remain private behind the prescribed commitment and proof operations; the caller cannot read unused masks.

\paragraph{Bounded communication between the laboratories.}
A batch is used in parallel, in a single window, and during batch $j$'s window the two laboratories exchange at most $\ell_j$ qubits (\Cref{def:ot-admissible-leakage}). A global budget $\ell$ requires $\sum_j\ell_j\le\ell$. The parallel use is what makes the bound meaningful: every additional window in which the devices are active is an additional opportunity to leak, so each batch is used once. Some bound is unavoidable: Magic Square correlations certify nothing if the devices can exchange their questions, and a receiver's device that reports its questions reveals the receiver's choice bit. The standard DI assumption is $\ell=0$; we make the bound explicit and charge it in the entropy analysis. Isolation can be replaced by a computational assumption: the compiler of Kalai, Lombardi, Vaikuntanathan, and Yang turns a nonlocal game into a single-prover protocol whose soundness rests on quantum homomorphic encryption, and its quantum value is now bounded for every two-player game~\cite{KLVY23,NZ23,KMPSW24}. That route needs encryption schemes based on structured assumptions such as learning with errors, and a prover that runs the encrypted game. We keep isolation, with an explicit leakage budget, and assume only one-way functions.

\paragraph{Fresh, isolated batches.}
Every invocation of \Cref{prot:di-bc} or \OTprotocolparts{} uses its own device batch, once, and a batch exchanges no information with a retired or pending one, also within one laboratory (\Cref{def:compiled-device-calls}). This excludes the memory attacks of Barrett, Colbeck, and Kent~\cite{DIQKD_3}, in which a reused device releases earlier private data through later public outputs. Devices are distributed before question randomization; the window count concerns their subsequent use. Each batch shares one leakage budget across its blocks, and classical subprotocols run outside its window.

\paragraph{Published device outputs also disclose information.}
Let $t_A$ be the number of openly tested coordinates. Bob's published test answers and the resulting accept/abort decision carry at most $3t_A+1$ bits from his device to Alice; Alice's audit openings similarly disclose information to Bob. These prescribed classical disclosures occur outside the device window and are charged separately from its $\ell$-qubit communication budget: against the selector entropy on Alice's side and the pad entropy on Bob's side (\Cref{sec:Bob-to-Alice-leakage,tab:game-correspondence}). The hidden test set is revealed only after the answer records are fixed, so the device must produce the tested and untested answers before it learns which coordinates will be tested. Bob's questions are uniform coordinatewise; he chooses the question class and opens its selector afterwards.

\paragraph{Joint measurements.}
The honest receiver's untrusted device may measure jointly across its entire batch. This permits correlations among all its answers; the security proof must account for the information in the published test answers even under such measurements.

\paragraph{The ideal experiment keeps the honest party's device share.}
The honest party's device is supplied by the adversary and may be entangled with the adversary's workspace. No simulator can fabricate it. The ideal experiment therefore gives the simulator the same share, through the interface that the honest party has: the prescribed classical questions, once, with no reset and no copy (\Cref{def:comp-aug-sec}). The simulator receives no honest input. Security then says that whatever the adversary obtains from the real execution, it could obtain from the ideal functionality together with a party that operates the honest device as prescribed and holds no private input. Because every batch is used once and then retired, this notion composes sequentially (\Cref{thm:sequential-composition}).

\paragraph{Honest faults.}
The honest coordinate implementations are independent, but need not be identical. Their loss probability is bounded for every question pair. Independence is used for honest correctness only; security permits arbitrary entangled attacks.

\subsection{Our results}
\label{sec:results}

In this model, each DI reduction uses the same parameter family. Let $\lambda$ be the security parameter, let $\ell$ be a prescribed polynomially bounded global leakage budget, and put $s_\lambda=\lceil\log_2^2(\lambda+2)\rceil$. The outer batch contains $m=\Theta(\ell+s_\lambda)$ field symbols, each using a fixed number of device coordinates. The two preprocessing batches have sizes within fixed factors of the outer size. All security and correctness errors are $\negl(\lambda)$.

\begin{table}[!htbp]
\centering
\caption{Guarantees for one complete one-bit DI OT, including both preprocessings. The constants are independent of $\lambda$ and $\ell$; the leakage budget is global across all three batches.}
\label{tab:intro-result-summary}
\normalsize
\setlength{\tabcolsep}{5pt}
\renewcommand{\arraystretch}{1.1}
\begin{tabular}{>{\raggedright\arraybackslash}p{0.30\textwidth}
                >{\raggedright\arraybackslash}p{0.61\textwidth}}
\hline
\textbf{Property} & \textbf{Guarantee}\\
\hline
Physical leakage & Adaptive bidirectional communication;
$\ell\le c_0L_{\rm total}$ qubits for a fixed $c_0>0$\\
Honest-device robustness & Independent coordinate-fault probability
at most a fixed $\nu_0>0$\\
Receiver measurements & Arbitrary joint measurements\\
Device coordinates & $L_{\rm total}=\Theta(\ell+\log^2\lambda)$\\
Device windows & At most three, including preprocessing\\
Classical resources & Polynomial computation and communication\\
\hline
\end{tabular}
\end{table}

\begin{theorem}[Device-independent OT from qOWFs, informal; see \Cref{thm:qowf-di-ot}]
\label{thm:intro-qowf}
Under qOWFs and the conditions in \Cref{def:admissible-extension,def:compiled-device-calls}, let $\Pi_{\rm OT}$ be \Cref{prot:three-window-ot}, with \Cref{tab:intro-result-summary}. Then $\Pi_{\rm OT}$ is sequentially composable for either corruption, and satisfies
\[
 \begin{gathered}
 \forall\mathcal A\in\QPT\quad\exists\Sim\in\QPT
 \quad\forall\eta^{\Rref}_{\rm adm}:\\
 \mathsf{Real}_{\Pi_{\rm OT},\mathcal A}(\eta^{\Rref}_{\rm adm})
 \approx_c
 \mathsf{Ideal}_{\mathsf{OT},\Sim}(\eta^{\Rref}_{\rm adm}).
 \end{gathered}
\]
Here $\eta^{\Rref}_{\rm adm}$ ranges over admissible inputs and extensions; $\varepsilon_{\rm corr}=\negl(\lambda)$, and $R_{\rm DI}\le3$, including preprocessing. Total device use is $\Theta(\ell+\log^2\lambda)$; total classical computation and communication are $\poly(\lambda)$.
\end{theorem}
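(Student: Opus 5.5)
Since the statement is the informal version of \Cref{thm:qowf-di-ot}, the plan is to assemble it from the component guarantees rather than prove anything new. The proof goes through the three single-use windows of \Cref{prot:three-window-ot} in order: two preprocessing batches (bit commitment via \Cref{prot:di-bc} and its derived masking material), then the outer OT batch. The proof is a hybrid argument split by corruption. For each party, I would build the simulator $\Sim$ as the composition of the commitment simulators with an outer-OT simulator. Each preprocessing call is replaced by its ideal functionality using \Cref{thm:sequential-composition}. The legitimacy of that replacement comes from \Cref{def:admissible-extension,def:compiled-device-calls}: batches are fresh and isolated, the honest storage and coins are independent of the device state, and the honest device share is handed to $\Sim$ through the prescribed classical interface of \Cref{def:comp-aug-sec}. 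Correctness ($\varepsilon_{\rm corr}=\negl(\lambda)$) is argued separately. Independent honest faults at rate at most $\nu_0$ give, by a Chernoff bound over $m=\Theta(\ell+s_\lambda)$ field symbols, that the global audit passes and the error correction across symbols decodes. This holds except with probability $e^{-\Omega(s_\lambda)}=\negl(\lambda)$.

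\emph{Corrupt receiver (sender security).} Here $\Sim$ uses the commitment extractor to read Bob's committed choice. The extractor only changes a classical code coset, so the device execution and its leakage transcript are identical across the hybrid. I would then show that $S_{1-D}$ is hidden. The route is to lower-bound the smooth min-entropy of Alice's unrevealed outputs on the masked side, using the selector-aware Magic Square repetition bound derived from Kundu--Tan. Then I would subtract the charges: at most $\ell$ qubits of communication (by the chain rule, $2\ell$ bits of min-entropy), plus the $3t_A+1$ bits of published test answers and decision. The global Fourier analysis then turns the surviving entropy, together with the code-based choice masking, into statistical closeness of the unchosen string to uniform. A final computational hybrid swaps the qOWF-based commitments for simulated ones.

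\emph{Corrupt sender (receiver security).} Here $\Sim$ equivocates or extracts from Alice's commitment using the commitment simulator against the committer, and extracts both $(S_0,S_1)$ from her messages. Hiding of $D$ rests on three facts. Bob's questions are uniform coordinatewise. The hidden test set is revealed only after the answer records are fixed. Alice's audit openings are charged against the pad entropy, so her view is independent of $D$ up to $\negl(\lambda)$.

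\emph{Parameters and main obstacle.} The parameter claims follow by bookkeeping. There are three windows. $L_{\rm total}$ is a constant multiple of $m$, which gives $\Theta(\ell+\log^2\lambda)$. The global budget satisfies $\sum_j\ell_j\le\ell\le c_0L_{\rm total}$, and $c_0$ is chosen below the entropy rate guaranteed by the repetition bound minus the test overhead. I expect the main obstacle to be the sender-security entropy step. The Kundu--Tan threshold theorem must be made selector-aware and must hold against non-IID, jointly measured devices with adaptive bidirectional leakage. The per-window leakage must also be charged against one global budget without the adversary concentrating it where the bound is weakest. I would first try a single joint entropy accounting over all three windows, treating the leakage as one $\ell$-qubit register conditioned on everything else. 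Only if that fails would I fall back to per-window accounting with worst-case allocation.
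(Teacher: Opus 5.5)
Your overall shape (hybrids per corruption, then bookkeeping) is right, but the step that glues the layers together is missing. The two preprocessing runs of \Cref{prot:di-bc} are string commitments to random mask banks of length $q_{\rm bank}=10L_0$. They are never completed calls: their native witnesses stay pending and are used inside the zero-knowledge proof of every projection and of the $\Label$ and $\Pay$ checks throughout the transfer. So you cannot replace them one by one by an ideal functionality via \Cref{thm:sequential-composition}. The source has no ideal functionality (\Cref{def:vector-commitment} only lists properties), and the completed-call theorem does not cover overlapping record lifetimes.

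The missing idea is the masked commit-and-prove compiler, applied jointly to both owner directions and all pending records (\Cref{thm:masked-cp}). The corrupt party's bank is extracted, and agreement with all native openings turns every false predicate into a false NP statement. The honest party's proofs are simulated first, and only then is its bank switched to a dummy commitment by joint hiding. This gives an ideal-CP execution with the outer device interface unchanged, and \Cref{thm:unified-di-ot} is applied only there; \Cref{thm:sequential-composition} is used only for completed OT calls.

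The order also matters. In the real execution the records sit behind statistically binding inner commitments and are statistically determined by the other party's transcript. Every entropy and Fourier step must therefore come after this computational substitution, not before the ``final'' swap you propose. Likewise, the corrupt-receiver simulator does not extract a committed choice. It extracts Bob's bank, hence his records, and reads $d=\secmap_\sigma(M+\beta)$ from the certified $\Label$ relation.

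Inside the outer OT your two sketches have their ingredients crossed. The debit $3t_A+1$ (Bob's tested answers and the aggregate decision) is charged against Bob's label entropy in \emph{receiver} security. Sender security instead charges $2\ell_{\sf B}$, Alice's $3km$ audit answers, her test bit and the granted selected pads $k^2m$ against $\gamma L$ (\Cref{tab:game-correspondence}). Code masking hides the choice, not $S_{1-D}$.

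Receiver security needs two ingredients that uniform questions and test timing do not supply:
\begin{itemize}
\item the selector symmetry of \Cref{lem:post-J-selector-entropy}: swapping question classes $0,1$ outside the audit and flipping $\beta_{i,j}$ preserves the selector and erasure pattern;
\item the random-diagonal small-bias mask of \Cref{lem:diagonal-mask}.
\end{itemize}
Even then, ``Alice's view is independent of $D$'' does not give simulation, because honest Bob's output must be adjoined jointly with her residual state. One first uses the global audit and decoder (\Cref{claim:blockwise-vanishing-audit,lem:blockwise-transcript-reconstruction}) to replace Bob's output by the $d$-th component of the pair extracted from $\Pay$. Only then is choice hiding applied to a common continuation.

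In-window leakage cannot be charged afterwards by a min-entropy chain rule: the messages let the devices coordinate and void the repetition bound itself. It must be removed by transcript guessing (\Cref{lem:interactive-transcript-guessing}) before that bound is applied.

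Finally, no joint accounting across windows is needed. Every batch has length at least $L_{\rm total}/C_{\rm size}$, so the whole budget $c_{\rm global}L_{\rm total}$ fits each batch's local budget (\Cref{lem:unified-parameters}). The selector-aware repetition bound is an input here (\Cref{thm:restricted-order-repetition}), not the obstacle of this proof.
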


Here $\approx_c$ includes the adversary's quantum side information. For OT and the bit commitment of \Cref{thm:intro-bc}, one QPT simulator per adversary works for every admissible input. A device window is one parallel input--output interval; it does not count classical rounds or bound the adaptive leakage messages inside it.

OT completeness extends this construction to multiparty computation (MPC).

\begin{corollary}[Device-independent multiparty computation, informal; see \Cref{cor:di-mpc}]
\label{cor:intro-mpc}
Assume qOWFs and private authenticated classical channels, and impose the invocation conditions of \Cref{def:ot-call-boundary,def:compiled-device-calls}, with the corresponding correctness assumptions for honest--honest calls.
Every classical PPT functionality on a fixed number of parties, with public input/output lengths and corrupt-output-first, adversarially suppressible honest delivery, admits a DI protocol with simulation security with abort against static corruption of any proper subset of the parties.
The protocol uses polynomially many sequential pairwise DI OT calls and polynomial total resources, with each admissible call inheriting \Cref{tab:intro-result-summary}.
\end{corollary}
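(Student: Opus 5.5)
The plan is to derive \Cref{cor:intro-mpc} from \Cref{thm:intro-qowf} through OT completeness in the OT-hybrid model, then replace ideal OT calls one at a time by the DI protocol using sequential composition (\Cref{thm:sequential-composition}). First, I fix a classical protocol $\Pi^{\mathsf{OT}}_f$ that securely computes the given PPT functionality $f$ with abort against static corruption of any proper subset, in the $\mathsf{OT}$-hybrid model with private authenticated channels. For this I take the Kilian / GMW-style compiler (IPS-style OT-based MPC with abort also works). I will choose it so that it makes only polynomially many \emph{sequential} pairwise calls to one-bit OT, and so that its simulator is straight-line and efficient. Its security is statistical or computational against QPT adversaries, because the hybrid protocol is classical and the simulator never rewinds. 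Rewinding would be the issue for a quantum adversary, and I will fix a construction whose simulator is straight-line in the OT-hybrid model, so that no rewinding arises. The corrupt-output-first, suppressible-delivery form of $f$ matches security with abort, and the public input/output lengths keep the number of OT calls fixed in advance.

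Second, I will replace the ideal OT calls by $\Pi_{\rm OT}$ (\Cref{prot:three-window-ot}). Calls are ordered $1,\dots,N$ with $N=\poly(\lambda)$. In hybrid $H_i$ the first $i$ calls are real DI executions and the rest are ideal. Calls between two corrupted parties need nothing, since the adversary runs them internally. Calls between an honest and a corrupt party are handled by \Cref{thm:intro-qowf}: they are simulated with the sequentially composable simulator for the corresponding corruption. Honest--honest calls are handled by the correctness assumption, with $\varepsilon_{\rm corr}=\negl(\lambda)$, together with a secrecy argument that the adversary only sees the classical transcript over private channels. Each such call's output then equals the ideal one except with negligible probability.

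The key requirement is that at each call boundary the state satisfies the admissibility conditions of \Cref{def:ot-call-boundary,def:compiled-device-calls}. The relevant conditions are fresh isolated batch, private honest storage, fresh coins, and the leakage budget for that call. These are exactly the per-call hypotheses of the corollary. Everything else produced by the surrounding MPC, including other parties' views and earlier retired batches, is absorbed into the admissible extension $\eta^{\Rref}_{\rm adm}$. A hybrid argument over $N$ steps then gives total distinguishing advantage at most $N\cdot\negl(\lambda)=\negl(\lambda)$.

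I expect the main obstacle to be verifying that the admissible extension remains admissible in the intermediate hybrids. The problem arises when the outer MPC simulator is interleaved with the OT simulators, each of which drives the honest party's device share. The boundary conditions must hold in every reached simulation hybrid, and not only in the real execution. I would handle this by defining the combined simulator explicitly. It runs the OT simulators in call order, and each one touches only its own batch once. Retired batches are never accessed again, so the isolation conditions are preserved by construction.

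Finally, resource bounds follow by summation. There are polynomially many calls, each with $\Theta(\ell+\log^2\lambda)$ coordinates, at most three windows, and $\poly(\lambda)$ classical cost, which gives polynomial totals.
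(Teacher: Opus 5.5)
Your proposal is correct and follows essentially the paper's proof. The paper takes a statistically secure OT-hybrid protocol (\cite{IPS08}, with its string-OT to bit-OT reduction), lifts it to quantum security, and then replaces the pairwise calls sequentially in three cases:
\begin{itemize}
\item corrupt--corrupt calls are left untouched;
\item honest--corrupt calls go through \Cref{thm:sequential-composition};
\item honest--honest calls use the correctness hypothesis together with no-signalling on the reference.
\end{itemize}

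Two points should be stated more precisely.
\begin{itemize}
\item \emph{The lifting step.} The paper invokes Unruh's quantum lifting theorem \cite{Unruh_2}, which needs statistical straight-line security in the OT-hybrid model. Your phrase ``statistical or computational'' should therefore read ``statistical''. A GMW-style compiler with rewinding zero-knowledge is not an admissible choice.
\item \emph{The call inputs.} You cannot simply absorb the classical inputs into $\eta^{\Rref}$, because an admissible extension fixes the honest input in product form. The boundary condition guarantees admissibility only conditionally on each input value. The paper uses the simulator's uniformity over inputs and averages over those values explicitly.
\end{itemize}
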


The three-window bound and constant leakage rate apply per OT call. The proof in \Cref{app:mpc-proof} uses the classical OT-to-MPC reduction of Ishai, Prabhakaran, and Sahai~\cite{IPS08} and Unruh's quantum lifting theorem~\cite{Unruh_2}. Only trusted classical operations occur outside DI OT.

We also obtain the following commitment and coin-tossing guarantees. \Cref{sec:di-bc-coin} gives the bit-commitment and coin-tossing constructions and their security proofs.

\begin{theorem}[Device-independent commitment and coin tossing, informal; see \Cref{thm:di-vector-commitment,thm:di-bc-two-sided,cor:di-coin}]
\label{thm:intro-bc}
Under qOWFs and the conditions in \Cref{def:admissible-extension,def:compiled-device-calls}, with \Cref{tab:intro-result-summary}: for $q=\polylog(\lambda)$, \Cref{prot:di-bc} is a one-window DI commitment to a $q$-bit string, with an efficient simulator that extracts the string at commitment completion and with joint computational hiding against a corrupt receiver. \Cref{prot:di-bc-certified} is a DI bit commitment with efficient simulators against both parties, and \Cref{prot:di-coin} realizes coin tossing with abort. Each of the latter two protocols uses one device window. The string commitment uses $O(q+\ell+\log^2\lambda)$ coordinates; all three have polylogarithmic device use when their input lengths and leakage budgets are polylogarithmic.
\end{theorem}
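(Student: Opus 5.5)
This informal theorem is a summary of three formal results, so the proof is a reduction to them; the bulk of the plan concerns the string commitment of \Cref{prot:di-bc}, from which the other two follow.

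\textbf{Step 1: the string commitment.} I would build \Cref{prot:di-bc} from one device batch of $m=\Theta(q+\ell+\log^2\lambda)$ field symbols. Each symbol uses a constant number of Magic Square coordinates.

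\emph{Committing.} Alice derives a raw key from her device answers and masks the string with a coset of a classical linear code. She commits to the coset syndrome with a qOWF-based statistically binding commitment in the style of Naor.

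\emph{Auditing.} A hidden test set, revealed only after the answer records are fixed, certifies a bounded Magic Square loss rate. Error correction across field symbols absorbs the independent honest faults at rate $\nu_0$.

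\emph{Hiding.} I would bound Bob's smooth min-entropy on Alice's key. The starting point is the selector-aware Magic Square repetition bound derived from the Kundu--Tan two-round threshold theorem. From it I subtract the $\ell_j$ qubits of in-window leakage and the at most $3t_A+1$ bits of published test disclosures, then extract with a two-universal hash. The remaining masked string is hidden by the computational hiding of the classical commitment. The constant leakage rate is exactly what keeps the remaining entropy linear in $m$.

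\textbf{Step 2: the extractor against a corrupt committer.} The simulator equivocates the classical coset commitment (or extracts through the qOWF scheme's extractable variant). It changes only the classical code coset, so the device execution and its leakage are identical in the two hybrids. Indistinguishability therefore reduces to the classical commitment plus a statistical term. Honest-device correctness follows from a Chernoff bound over the independent faults, which the code corrects with error $\negl(\lambda)$.

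\textbf{Step 3: the two-sided bit commitment and coin tossing.} For \Cref{prot:di-bc-certified}, I would compose the extractable string commitment (with $q=\polylog\lambda$) with a qOWF-based equivocal layer, such as a commit-and-prove wrapper using post-quantum zero knowledge. This yields simulators against both parties within the same single window. The simulator against the receiver is obtained by equivocation, and it uses the ideal-experiment device share of \Cref{def:comp-aug-sec}.

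For coin tossing (\Cref{prot:di-coin}), I would run the standard Blum protocol: Alice commits to $r_A$, Bob sends $r_B$, Alice opens, and the output is $r_A\oplus r_B$. The simulator uses extraction when Alice is corrupt and equivocation when Bob is corrupt. The abort is inherited from the commitment.

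Each protocol touches one batch, so the window count is one. Sequential composition then follows from \Cref{thm:sequential-composition}.

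\textbf{Main obstacle.} The hard step is the entropy bound in Step 1. It must hold simultaneously against non-IID joint measurements, adaptive bidirectional leakage, and the published test answers. I would first try to charge the leakage and the disclosures through the chain rule for smooth min-entropy conditioned on classical and quantum side registers of bounded size. I would then apply the selector-aware repetition bound to the residual state, relying on the fact that the test set is fixed before it is revealed.
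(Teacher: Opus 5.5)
Your Step~3 reduction (a masked commit-and-prove wrapper with post-quantum ZK on the string commitment, then Blum's protocol) essentially matches the paper. Steps~1--2, however, have a genuine gap: the extractor.

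Against a corrupt committer, the simulator plays the \emph{receiver}. It therefore cannot equivocate a commitment that Alice herself produced; equivocation is a tool for simulating an honest committer. Your alternative, a classical-message extractable commitment from qOWFs, is nowhere assumed, and supplying extraction is exactly what the device experiment is for. Your construction in fact rules extraction out. Your own entropy bound says that any receiver using the prescribed device interface has high min-entropy about Alice's key, and the simulator has only that interface (\Cref{def:comp-aug-sec}). Hiding and extraction are in direct tension unless the receiver holds a hidden switch. For the same reason, you invoke the selector-aware bound although your construction has no receiver-chosen selector.

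The paper builds this switch as an OT in which the committer is the sender:
\begin{itemize}
\item Bob privately picks a question class $\beta_{i,j}$ in each microblock. On his selector he can recompute the selected pads from $B_a[X_a]$.
\item He publishes $M=\beta+R$ with $R$ in the zero-secret fiber.
\item Alice's payloads make the selected sum $H=f+R*D$ have secret $\secmap_\sigma(R)*u$ (\Cref{blk:lem:algebra}). In the real protocol this secret is zero, and the Fourier argument of \Cref{lem:ordered-block-hidden-pads} turns that into hiding.
\item The extractor draws $R$ from the all-one fiber, so the secret becomes $u$, and decodes it from genuine device answers.
\end{itemize}
The ``classical coset'' that changes is this receiver mask, not a committer coset.

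Three ingredients you lack make this work:
\begin{itemize}
\item Post-selector label entropy with random-diagonal masking, so Alice cannot detect the mode change (\Cref{blk:lem:label-entropy,lem:diagonal-mask}).
\item BCKM equivocal commitments on Bob's \emph{own} records. They give the alternative witness $\beta'=\beta+\sigma*\mathbf1^m$, so the zero-mode ZK statement stays true (\Cref{blk:lem:witness}).
\item One global audit together with statistical binding of Alice's table, so that $u^*$ agrees with \emph{every} accepted native opening (\Cref{blk:lem:audits,blk:thm:agreement}).
\end{itemize}
That last property, item~(ii) of \Cref{def:vector-commitment}, is what makes an inconsistent certified opening in your Step~3 a false NP statement. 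You never establish it.

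Your leakage accounting also runs in the wrong order. The in-window communication is adaptive and bidirectional, so a communicating strategy is not a strategy for the nonlocal game. \Cref{thm:restricted-order-repetition} therefore does not apply to it or to its ``residual state''. A chain rule for a bounded side register presupposes a state already produced without communication.

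The paper instead proceeds in this order:
\begin{itemize}
\item It eliminates the communication by transcript guessing (\Cref{lem:interactive-transcript-guessing}). This yields a no-communication strategy that reproduces the original state on a branch of weight at least $2^{-2\ell}$; each qubit costs two bits, not one.
\item It applies the game bound to that strategy.
\item It charges the published test and audit bits by guessing them at their first release (\Cref{tab:game-correspondence}).
\end{itemize}
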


\paragraph{Composition and efficiency.}
\Cref{def:comp-aug-sec,thm:sequential-composition} permit polynomially many adaptive, non-overlapping calls, each replaced with its caller, continuation, corrupt state, honest input--output interface, reference, and abort branches retained. The simulations also preserve pending commitment records across delayed openings. An outer OT batch of $L$ coordinates stores $O(L)$ bits and performs two prescribed polynomial-time predicate checks in addition to projections. The results of \Cref{thm:di-vector-commitment,thm:masked-cp} give polynomial total computation and communication, with polylogarithmic device use per OT for polylogarithmic leakage budgets.

\paragraph{Significance.}
Under this model, parties can securely compute every efficient classical functionality using trusted classical computers and untrusted quantum devices, with efficient simulators, constant honest-fault tolerance, and constant physical-leakage rate. The computational assumption is qOWFs. The alternative routes differ in their hardware and computational assumptions. The trusted-quantum constructions also give simulation-based security from qOWFs but need exact state preparation and measurement. Bounded-storage DI OT needs no computational assumption but assumes a limit on the adversary's quantum memory, and its security proof uses that restriction. Compiled nonlocal games remove the isolation between the laboratories, but need structured assumptions such as learning with errors and a prover that runs the encrypted game. The price paid here is the model of \Cref{sec:model-assumptions}: isolated laboratories with an explicit leakage budget, a fresh single-use batch per elementary call, and the fixed constants for honest faults and physical leakage.

\subsection{Protocol and proof overview}
\label{sec:results-overview}

The construction has three ingredients: OT from commit-and-prove, commit-and-prove from a string commitment, and a device-independent implementation of that commitment. We describe them in this order, which is also the order of the technical sections.

\paragraph{Commit-and-prove.}
A commit-and-prove (CP) functionality $\ideal(\mathsf{CP})$ stores immutable strings; a receipt reveals only the record identifier, roles, and length. An owner may certify a predicate of an ordered tuple of its existing records. A partial opening publishes a substring and certifies its equality to the stored projection. Every later check uses the same records (\Cref{def:cp}).

\begin{figure}[!htbp]
\centering
\begin{tikzpicture}[
  box/.style={draw,rounded corners,minimum height=6.5mm,align=center,font=\footnotesize,inner xsep=3pt,inner ysep=2pt},
  ours/.style={box,thick},
  lbl/.style={align=center,font=\scriptsize},
  row/.style={font=\footnotesize,anchor=east},
  arr/.style={-{Latex[length=2.2mm]},thick},
  x=1cm,y=1cm]
 \node[row] at (-0.2,3.0) {\cite{OT1}};
 \node[box] (a1) at (0.9,3.0) {qOWFs};
 \node[box] (a2) at (3.35,3.0) {weak selective-\\opening BC};
 \node[box] (a3) at (5.9,3.0) {weak\\parallel OT};
 \node[box] (a4) at (8.05,3.0) {weak\\vCDS};
 \node[box] (a5) at (10.75,3.0) {simulated selective-\\opening BC};
 \node[box] (a6) at (13.3,3.0) {OT};
 \foreach \p/\q in {a1/a2,a2/a3,a3/a4,a4/a5,a5/a6} \draw[arr] (\p) -- (\q);
 \node[row] at (-0.2,1.5) {\cite{OT2}};
  \node[box] (b1) at (0.9,1.5) {qOWFs\\or EFI};
 \node[box] (b2) at (3.35,1.5) {statistically\\binding BC};
 \node[box] (b3) at (5.9,1.5) {equivocal\\BC};
 \node[box] (b4) at (8.05,1.5) {extractable\\BC};
 \node[box] (b5) at (10.75,1.5) {extractable and\\equivocal BC};
 \node[box] (b6) at (13.3,1.5) {OT};
 \foreach \p/\q in {b1/b2,b2/b3,b3/b4,b4/b5,b5/b6} \draw[arr] (\p) -- (\q);
 \node[row] at (-0.2,0) {this work};
 \node[ours] (c1) at (0.9,0) {qOWFs};
  \node[ours] (c2) at (3.10,0) {post-quantum\\equivocal BC\\[-2pt]{\scriptsize Section~\ref{sec:bc-step-one}}};
 \node[ours] (c3) at (5.8,0) {DI string\\commitment\\[-1pt]{\scriptsize Protocol~\ref{prot:di-bc}}\\[-2pt]{\scriptsize Section~\ref{sec:di-bc}}};
 \node[ours] (c4) at (8.70,0) {CP\\[-1pt]{\scriptsize Protocol~\ref{prot:masked-cp}}\\[-2pt]{\scriptsize Section~\ref{sec:bc-cp-compiler}}};
 \node[ours] (c5) at (11.0,0) {DI OT\\[-1pt]{\scriptsize Protocol~\ref{prot:di-ot}}\\[-2pt]{\scriptsize Section~\ref{sec:leakage-ot}}};
 \node[ours] (c6) at (13.45,0) {MPC\\[-1pt]{\scriptsize Section~\ref{sec:complete-construction}}};
 \draw[arr] (c1) -- (c2);
 \draw[arr] (c2) -- node[lbl,above=1pt]{DI} (c3);
 \draw[arr] (c3) -- node[lbl,above=1pt]{qOWFs} (c4);
 \draw[arr] (c4) -- node[lbl,above=1pt]{DI} (c5);
 \draw[arr] (c5) -- (c6);
\end{tikzpicture}
\caption{Reduction chains. The two earlier chains use trusted quantum communication and place no condition on devices; ``weak'' permits inefficient simulation, vCDS is verifiable conditional disclosure of secrets. EFI pairs are efficiently generated quantum states that are statistically far but computationally indistinguishable. They also suffice for the chain of~\cite{OT2}, using quantum communication \cite{brakerski2022computational}. Ours removes the trusted quantum operations; in exchange, its two DI arrows require fresh isolated batches with bounded leakage. Its boxes name the protocol and the section that proves each step. The middle arrow uses one string commitment per owner and only classical operations afterwards. CP also gives BC and coin tossing (\Cref{sec:bc-both-sides,sec:coin-toss}); OT completeness gives MPC.}
\label{fig:chain}
\end{figure}

The two DI arrows in \Cref{fig:chain} use the same device experiment, with code lengths chosen for their respective payloads. Their classical mode and opening phases differ as described below.

\paragraph{The OT protocol in words.}
Alice has bits $(S_0,S_1)$ and Bob has a choice $D$. Their devices play $L$ parallel Magic Square games: Alice's device returns a row, Bob's a column, and a winning pair agrees at the intersection. Outside a small hidden test set, the coordinates form $m$ field symbols. Each symbol consists of six \emph{microblocks}, each of a fixed size $N_0$. A private bit $\beta_{i,j}$ chooses one of two question classes in a microblock; its unaudited positions form a \emph{selector} $J_{i,j}$. The six bits represent a field element $\beta_i\in\mathbb F_{64}$.

The OT proof uses two consistency predicates. $\Label$ checks that every selected position has its claimed question label and that the public code mask is the sum of Bob's label vector and a codeword encoding one receiver-choice bit. $\Pay$ checks that Alice's payloads use the hashes of her committed table and are consistent with codewords encoding a single pair of sender inputs. Both predicates bind these claims to the same immutable records (\Cref{def:selector-consistency-proof,def:payload-consistency-proof}). The following numbers match \Cref{prot:di-ot}.
\begin{enumerate}[label=\textnormal{\arabic*.},leftmargin=1.8em,itemsep=1pt,topsep=3pt]
\item \emph{Alice} commits to a small hidden test set.
\item \emph{Bob} commits to uniformly random question offsets.
\item \emph{Alice} sends fresh offsets and queries her device.
\item \emph{Bob} queries his device with the sum of the two offsets. The physical window closes, and he commits to his full answer record.
\item \emph{Alice} opens the test set and sends a fresh random ordering that partitions the remaining positions into microblocks.
\item \emph{Bob} samples one audit position and one private label per microblock. If either candidate question class is too small or too large, he erases the whole symbol. He commits to all audits, labels, erasure flags, and selectors without revealing a size decision.
\item \emph{Alice} samples late independent anchor flags, replaces anchored records by a fixed row and question, and commits to her table.
\item \emph{Bob} opens the audits and his question-offset and answer records on the test set.
\item \emph{Alice} tests the sampled Magic Square correlations and, on success, opens her table on the audits.
\item \emph{Bob} checks one aggregate audit-loss threshold and a cap on erased symbols, then opens the erasure flags and selectors.
\item \emph{Alice} checks the selectors' positions and sizes.
\item \emph{Bob} samples a fresh diagonal scaling of the masking code and a codeword $R$ with secret $D$. He sends $M=\beta+R$ and certifies $\Label$ on his early and late records jointly.
\item \emph{Alice} opens her questions and anchor flags on the selectors. She sends two hashed, masked field payloads per active microblock and certifies their $\Pay$ relation with her table.
\item \emph{Bob} unmasks his selected payloads and decodes one codeword across the surviving symbols. He outputs its secret, with a fixed private fallback on decoding failure and no public decoding flag.
\item \emph{Alice} returns the empty output or the prescribed abort.
\end{enumerate}
The two offsets make Bob's physical question uniform while keeping it hidden from Alice. The audit uses one sample per microblock, with a single threshold for the whole batch. Error correction acts on the field symbols, so a small constant fraction of physical faults is correctable even though individual microblocks have fixed size.

\paragraph{Why the code payloads give OT.}
Write $\CodeB^\sigma$, $\CodeD$, and $\CodeA^\sigma$ for the masking, difference, and payload codes, where $\sigma$ is the public diagonal scaling. The linear map $\secmap_\sigma$ returns the secret of a payload codeword; $e_1,\ldots,e_6$ form a fixed binary basis of $\mathbb F_{64}$. The codes support coordinatewise multiplication: $\CodeB^\sigma*\CodeD\subseteq\CodeA^\sigma$. Alice samples $f\in\CodeA^\sigma$ with secret $S_0$ and $g\in\CodeD$ with secret $S_0+S_1$. Within symbol $i$ she shares $f_i+M_ig_i$ among the branch-zero payloads; switching branch $j$ adds $e_jg_i$. Bob's actual selected sum is therefore
\[
 H=f+(M+\beta)*g=f+R*g\in\CodeA^\sigma,
 \qquad \secmap_\sigma(H)=S_D.
\]
$\Pay$ enforces this identity for a cheating sender's uniquely determined effective input pair. Each payload is masked by a fresh matrix hash of the corresponding column bits of Alice's recorded rows.

Receiver security uses the entropy of the whole label vector after the selector disclosure. Exchanging question classes zero and one in an unaudited microblock and flipping its label preserves the selector. Physical leakage and the few public test answers reduce that entropy by a linear debit. A random diagonal scaling turns the zero-secret code into a small-bias family, so $M$ hides which secret $R$ encodes. To obtain full simulation, first use the audit to replace Bob's decoded output by the corresponding extracted sender input; only then apply choice hiding to the common continuation.

For sender security, the selected codeword $H$ depends only on $S_D$. The repetition and Fourier argument \hyperref[par:intro-core]{below} hides the remaining dependence on $S_{1-D}$.

\paragraph{The commitment protocol in words.}
The source commits to a string $u\in\mathbb F_2^q$, using $q$ secret evaluation positions in the same code family. Its formal protocol denotes the difference codeword $g$ by $D$. It uses classical inner commitments rather than CP. Its numbers are those of \Cref{prot:di-bc}. Here $\Mode$ is the commitment analogue of $\Label$: it checks the selected question classes against Bob's committed offsets and labels, and requires the public mask minus the witnessed label vector to be a codeword with secret $0^q$ (\Cref{blk:def:mode}). Steps~1--2 commit the test and offset records; Step~3 performs the one physical window and commits Bob's answers; Steps~4--8 fix the partition, selectors, and table and perform the test, audit, and selector checks.
\begin{enumerate}[label=\textnormal{\arabic*.},leftmargin=1.8em,itemsep=1pt,topsep=3pt,start=9]
\item \emph{Bob} sends a code mask with zero secret and proves the $\Mode$ relation in ordinary post-quantum zero knowledge.
\item \emph{Alice} sends the payloads above with $f$ of secret zero and $g$ of secret $u$, and opens the selected questions and anchors.
\item \emph{Bob} issues a receipt after the public checks and discards his private questions, answers, labels, mode witnesses, and device state.
\item \emph{Alice} opens $u$ and her table. Bob verifies the public payload equations and their unique extensions to the two codes.
\end{enumerate}
In ordinary mode the selected codeword has secret zero, so the joint payload argument hides $u$. Extraction must recover $u$ while preserving the continuing state and agreement with every native opening. The obstacles below explain how the same device experiment provides this guarantee.

\begin{figure}[!htbp]
\centering
\setlength{\tabcolsep}{5pt}
\renewcommand{\arraystretch}{1.12}
\begin{tabular}{>{\raggedright\arraybackslash}p{0.14\textwidth}
                >{\raggedright\arraybackslash}p{0.77\textwidth}}
\hline
\textbf{OT steps} & \textbf{Order of records and disclosures}\\
\hline
1--4 & The test mask and offsets are committed; the device window
closes and Bob's answers are committed before the test set opens.\\
5--6 & The partition, audits, full labels, erasure flags, and selectors
are fixed before Alice samples anchors.\\
7 & Alice's canonical table is fixed before any audit mask opens.\\
8--11 & The test, aggregate audit, and selector decisions precede
the fresh diagonal and code mask.\\
12--15 & $\Label$ binds the mask to the early records; $\Pay$ binds the
payloads to the table; decoding has no public success flag.\\
\hline
\end{tabular}
\caption{The timing used by the entropy and simulation proofs. The source commitment uses the same order through its $\Mode$ message.}
\label{fig:cuts}
\end{figure}

\paragraph{Implementing CP from a committed mask string.}
A \emph{mask bank} is a commitment to a uniform string $\mu\in\Ftwo^q$, with its private opening witness and an allocation counter. Write $\Ver_{\rm str}(\tau,u,w)=1$ for a valid native opening of its transcript $\tau$. The phases of \Cref{prot:masked-cp} are:
\begin{enumerate}[label=\textnormal{\arabic*.},leftmargin=1.8em,itemsep=1pt,topsep=3pt]
\item \emph{Alice} completes the string commitment to $\mu$ with Bob.
\item \emph{Alice} allocates an unused interval $I_a$ for a later message $M_a$ and sends $c_a=M_a\oplus\mu_{I_a}$ with its identifier.
\item \emph{Bob} checks and stores this record, then sends a receipt.
\item \emph{Alice} publishes $(\mathbf a,P,y)$ for an ordered tuple of existing records and proves that one native bank opening $(u,w)$ unmasks all its components to values satisfying $P$; she aborts if $P(M_{a_1}\|\cdots\|M_{a_r},y)=0$.
\item \emph{Bob} verifies the ordinary post-quantum zero-knowledge proof and sends its result. Selective openings use equality predicates.
\end{enumerate}
Only preprocessing uses devices. The bank is never natively opened; proof-internal commitments are ordinary classical commitments from qOWFs.

Against a corrupt owner, extraction fixes $u^*$ and hence $M_a^*=c_a\oplus u^*_{I_a}$. Agreement with all native openings makes inconsistent predicates false NP statements, so soundness suffices. Against a corrupt verifier, first simulate proofs of the true statements, then replace the bank commitment by a commitment to zero using joint hiding. Unused masks remain independent of the preceding quantum view: the simulator sends uniform ciphertexts and needs only ideal CP's public requests and outcomes for the simulated proofs.

The complete OT creates Alice's bank, then Bob's, then uses the outer batch. Longer source codes supply the required mask capacity; audit thresholds use each batch's own decoding radius. A one-bit record with a full certified projection gives BC; commit-and-reveal gives coin tossing with abort. Both use one bank (\Cref{sec:bc-both-sides,sec:coin-toss}).

\paragraph{Three obstacles.}
Three constraints govern the simulation and entropy arguments.

\emph{Extraction without changing the device execution.} The simulator may only ask the prescribed device questions once; changing the physical execution could also change its leakage. Our extractor changes the mask codeword's secret from zero to the all-one vector and privately decodes $u$ from its genuine device answers. Post-quantum equivocal commitments provide an alternative $\Mode$ witness, and choice masking hides the change. The physical questions, answers, and leakage are unchanged. A single global audit controls every native opening in this same extraction experiment (\Cref{sec:bc-step-two}).

\emph{Accounting for leakage.} The transcript-guessing bound (\Cref{lem:interactive-transcript-guessing}) charges each leaked qubit two bits. The selector vector has linear entropy in the number of fixed-size microblocks, and the global payload proof certifies entropy linear in the full batch. Code masking and error correction across field symbols leave a constant margin for both leakage and honest noise (\Cref{sec:unified-parameters}).

\emph{Entropy against non-IID devices.} Sender security needs min-entropy in Alice's device answers, on a set of coordinates that a cheating receiver chooses, when the devices are arbitrary and entangled, and the receiver may measure jointly and late. We incorporate this choice into the repetition game and sample independent erasures after the selector is fixed. This permits the anchored threshold theorem to yield the bound described below.

\paragraph{Repetition with a receiver-chosen selector.}
\label{par:intro-core}

Sender security of OT and hiding of the commitment both reduce to one statement about $n$ parallel Magic Square games. Charlie receives row questions and returns rows. Dave receives column questions and fixes his answers $B$ together with a selector $J\subseteq[n]$ of his choice. Only then is each coordinate erased, independently with a fixed probability $\delta_*$, and Dave receives Charlie's questions on the coordinates that were not erased. Dave must output Charlie's rows on $J$; an erased coordinate counts as correct. Let $L$ be the number of unerased coordinates on which the Magic Square condition fails, and $Q$ the number of selected coordinates that are erased or on which Dave both guesses the row correctly and wins the Magic Square game.

\begin{theorem}[Selector-aware repetition, informal version of \Cref{thm:restricted-order-repetition}]
For every selector density $0<\rho\le1$ and every $\tfrac{1+\delta_*}{2}\rho<\gamma<\rho$ there is $\beta>0$ such that $\Pr[L\leq\beta n\ \wedge\ |J|\leq\rho n\ \wedge\ Q\geq\gamma n] \leq2^{-\Omega(n)}$, for every strategy with arbitrary entanglement and quantum memory, and with no lower bound on $|J|$.
\end{theorem}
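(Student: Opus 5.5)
The plan is to reduce the selector-aware statement to the two-round threshold theorem of Kundu and Tan (2023) for parallel repetition of Magic Square with a guessing phase. We treat Charlie and Dave as playing a two-round game. In round one they receive row and column questions and answer. Dave's first-round output is $(B,J)$. In round two the erasure pattern is sampled, and Dave receives Charlie's questions on the unerased coordinates and guesses rows. The key observation is that the selector $J$ can be absorbed into a per-coordinate predicate. Define the coordinate-level win indicator $W_i$ to be $1$ on $i\notin J$ exactly when the Magic Square condition holds or $i$ is erased. On $i\in J$, define $W_i$ to be $1$ exactly when $i$ is erased, or when Dave both wins and guesses Charlie's row. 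Then the event $\{L\le\beta n,\ |J|\le\rho n,\ Q\ge\gamma n\}$ implies that the number of coordinates failing $W_i$ is small. However, one round's predicate now depends on a first-round output $J$, not on the questions alone. That is why the anchored (threshold) form of the theorem is needed rather than plain parallel repetition.

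First, we fix the single-coordinate value. In one coordinate where Dave wins Magic Square and also must guess Charlie's row given the row question, his success is bounded by the quantum guessing bound from the Magic Square rigidity. Because Charlie's row is not determined by Dave's column answer beyond the intersection entry, this is at most some $1-\kappa$ with $\kappa>0$. Erasure with probability $\delta_*$ contributes a free win, so a selected unerased-or-erased coordinate succeeds with probability at most $\delta_*+(1-\delta_*)(1-\kappa)$ in a suitable weighted sense. The admissible window $\tfrac{1+\delta_*}{2}\rho<\gamma<\rho$ reflects the worst case. There the guessing part is at most $1/2$: Dave knows one of the two remaining bits of the row via parity and guesses the other. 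Hence the expected $Q$ is at most about $\tfrac{1+\delta_*}{2}|J|\le\tfrac{1+\delta_*}{2}\rho n$ when the Magic Square condition holds on almost all coordinates.

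Second, we apply the Kundu--Tan threshold theorem to the product game with the combined predicate. Their theorem gives an exponential bound on the probability of winning more than a threshold fraction of coordinates beyond the single-round value. The threshold here is of the form "at most $\beta n$ Magic Square failures and at least $\gamma n$ guessing successes on the at most $\rho n$ selected coordinates." We encode this as a weighted threshold. Each coordinate gets a score of $1$ for a Magic Square win, plus a bonus if selected and guessed, minus a penalty if selected. We then choose weights so that the single-coordinate expected score is strictly less than the target. Because $J$ is chosen before erasures, and erasures are independent and sampled afterwards, the anchoring structure of their theorem applies: erased coordinates act as anchors in the sense of anchored games. That is precisely why the erasures are sampled after $J$ is fixed. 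We then pick $\beta$ small relative to the gap $\gamma-\tfrac{1+\delta_*}{2}\rho$, so that allowing $\beta n$ Magic Square failures cannot close the gap. Since there is no lower bound on $|J|$, the penalty term must be linear in $|J|$. That way a small $J$ only makes $Q\ge\gamma n$ harder, and the bound is uniform.

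The main obstacle is the second step. The Kundu--Tan theorem is stated for a fixed predicate. Here, the set of coordinates where guessing matters is adversarially chosen in round one and correlated across coordinates. I would first try to handle this by enlarging each coordinate's round-one answer to include the bit $\mathbf 1[i\in J]$. This turns the selector into an ordinary answer, with a predicate depending only on questions and answers, so the theorem applies verbatim to the enlarged game. I would then verify that the enlarged single-round game still has value strictly below the chosen weighted threshold for every choice of the selection bit.
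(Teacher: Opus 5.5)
Your architecture is the same as the paper's. $J$ becomes part of Dave's first-round answer, the late independent erasures serve as Kundu--Tan anchors, the Fu--Miller residual-guessing bound is the one-copy input, and a penalty linear in $|J|$ removes any need for a lower bound on $|J|$. But the step you flag as the main obstacle is left unresolved, and there is a genuine gap there. Adding $\mathbf 1[i\in J]$ to Dave's first-round answer, which the paper also does, only handles the dependence of the predicate on $J$. The threshold theorem bounds $\Pr[\sum_iV_i\ge(v+\xi)n]$ for a \emph{binary} per-copy predicate: a deterministic function of questions and answers that accepts on the anchor. Your signed score ``Magic Square win plus bonus minus penalty'' is not such a predicate. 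Your unweighted $W_i$ is no use either, because its game has value $1$: take $J=\varnothing$ and a perfect Magic Square strategy.

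The missing idea is to implement the weights by randomizing the check inside the second-round question. On an unerased copy, run the Magic Square test with probability $M/(M+1)$ and a guess test with probability $1/(M+1)$. In the guess test, accept an unselected copy by an independent fair coin; this coin is exactly a penalty of $1/2$ per selected copy. This construction then needs two concentration steps that your sketch omits.
\begin{itemize}
\item Conditional on the full classical transcript, the $V_i$ are independent with mean equal to your weighted score. A Chernoff bound is needed to pass from the selector event to $\sum_iV_i\ge(v+\xi)n$.
\item Anchored copies accept regardless of $J$, while $Q$ counts erased selected copies as wins. You therefore need $\sum_i(1-C_i)\ge(\delta_*-u)n$ and $\sum_iJ_i(1-C_i)\le\delta_*|J|+un$. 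These hold because $C$ is sampled independently after $J$ is frozen, and this is where the factor $(1+\delta_*)/2$ enters.
\end{itemize}

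The one-copy input also needs more than ``at most $1-\kappa$'' or ``the guessing part is at most $1/2$.'' A generic $1-\kappa$ would only give the window $\gamma>(\delta_*+(1-\delta_*)(1-\kappa))\rho$. Dave chooses $J$ jointly with his residual state, so what is needed is the joint bound $\Pr[J=1,\ \text{row guessed}]\le\frac12\Pr[J=1]+15\sqrt\varepsilon$. You get it from the single-bit bound by outputting a fair coin when $J=0$ and subtracting $(1-\Pr[J=1])/2$.

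The guessing advantage is of order $\sqrt\varepsilon$, while a Magic Square loss costs only $M\varepsilon$. So the one-copy value of the lifted game exceeds the ideal value by up to $\max_\varepsilon(15\sqrt\varepsilon-M\varepsilon)=225/(4M)$. The binding constraint is therefore on the weights, not only on $\beta$. The weight $M$ of the Magic Square test must be of order $1/g$, with $g=\gamma-\frac{1+\delta_*}{2}\rho$. Only then can $\beta$ be fixed, and it must be of order $g/M$. With fixed weights, making $\beta$ small relative to $g$ does not bring the threshold above the game value.
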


\noindent The bound has to be parallel. An honest party may use its device one coordinate at a time, but a corrupt receiver holds its device and may measure all coordinates jointly and as late as it likes, so sequential analyses such as entropy accumulation do not describe it. For one copy, Fu and Miller~\cite{FM-BRE} show that a Dave who wins the Magic Square game with probability $1-\varepsilon$ guesses a second bit of Charlie's row with probability at most $\tfrac12+O(\sqrt\varepsilon)$; this is why the game is Magic Square and not CHSH, in which Dave's answer and both questions determine Charlie's winning answer and leave nothing to hide (\Cref{sec:selector-repetition-input}). To handle a selector, we mix the Magic Square test with a row-guess test in which an unselected coordinate is completed by a fair coin. The mixture is a two-round product-anchored game of value below one, and the threshold theorem of Kundu and Tan~\cite{kundu_tan} applies to it. The erasures play the role of the anchor; because they are sampled after $(B,J)$ is frozen, they are independent of the selector, and concentration needs no conditioning on Dave's wins.

For sender security, the code's dual distance forces every input-dependent Fourier character of the payload to involve a constant fraction of the symbols. Their selectors form an early-frozen set of constant density in the whole device batch. Guessing the complementary raw bits on that set reconstructs Alice's rows using Bob's committed answers and row parity. The repetition theorem forbids this, with an exponent linear in the whole batch. \Cref{tab:game-correspondence} accounts for subsequent disclosures before the Fourier bound is summed. There is one common subnormalized good branch for all characters, so the repetition error is charged once.

\subsection{Previous work and technical contributions}
\label{sec:previous-work}
\label{sec:comparison-novelty-ot}

Our construction draws on two lines of work: quantum constructions of OT from qOWFs and device-independent entropy certification. Combining these approaches in our setting requires entropy bounds that allow arbitrary joint measurements and receiver-chosen selectors, together with extraction that preserves the prescribed device execution. The comparisons below explain how these requirements shape the reduction and its commit-and-prove interface, and how simulation security, robustness, and leakage resilience are maintained throughout.

\paragraph{Assumptions and reduction chains.}
Grilo, Lin, Song, and Vaikuntanathan~\cite{OT1} and Bartusek, Coladangelo, Khurana, and Ma~\cite{OT2} obtain OT from qOWFs with efficient simulation and sequential composition. Their constructions use trusted preparation and measurement of BB84 states. Earlier DI protocols achieve security under different restrictions, such as bounded quantum storage or sequential attacks (\Cref{tab:di-comparison}); Kundu, Sikora, and Tan use Magic Square but obtain one-shot constant cheating bounds~\cite[Sections~1.3--1.5]{KST22}. Our construction gives DI OT and hence secure computation from qOWFs, with at most three device windows for one complete OT, including preprocessing. These guarantees rely on the computational and device assumptions of \Cref{sec:model-assumptions}, so the earlier protocols and ours apply in different regimes.

\begin{table}[!htbp]
\centering
\caption{Device-independent two-party protocols: assumptions beyond isolated laboratories and the corresponding security guarantees.}
\label{tab:di-comparison}
\normalsize
\setlength{\tabcolsep}{4pt}
\renewcommand{\arraystretch}{1.1}
\begin{tabular}{
 >{\raggedright\arraybackslash}p{0.13\textwidth}
 >{\raggedright\arraybackslash}p{0.19\textwidth}
 >{\raggedright\arraybackslash}p{0.29\textwidth}
 >{\raggedright\arraybackslash}p{0.29\textwidth}}
\hline
\textbf{Work} & \textbf{Task} & \textbf{Security rests on} &
\textbf{Guarantee}\\
\hline
\cite{SAC11,ACK14} & commitment, coin flipping & nothing beyond
isolation & constant cheating probability and bias\\
\cite{KST22} & XOR OT, one copy & nothing beyond isolation & constant
cheating bounds\\
\cite{KW16} & two-party tasks & attacks restricted to be sequential &
security against sequential attacks\\
\cite{Broadbent,BCJK25} & OT & bounded quantum storage, in
\cite{Broadbent} together with computational assumptions and IID devices & OT in that
model; \cite{BCJK25} is robust and composable\\
This work & OT, commitment, coin tossing, MPC & qOWFs; isolated
single-use batches; leakage at most $\ell$ & simulation-based,
sequentially composable, negligible error, robust\\
\hline
\end{tabular}
\end{table}

The reduction chains in \Cref{fig:chain} locate the main difficulty in the commitment used by OT: it needs extraction and simulation guarantees beyond ordinary binding and hiding (\Cref{def:vector-commitment}).\footnote{The Impagliazzo--Rudich barrier concerns fully black-box constructions of key agreement from one-way functions~\cite{IR89}; it does not separate ideal BC from OT.} In the trusted quantum-communication model, Bartusek, Coladangelo, Khurana, and Ma can build the needed guarantees from black-box access to a statistically binding, quantum-computationally hiding commitment~\cite[Corollary~6.5]{OT2}. EFI pairs therefore suffice~\cite{EFI,brakerski2022computational}. Brakerski, Canetti, and Qian establish equivalences between EFI pairs and cryptographic primitives, while Ananth, Qian, and Yuen obtain cryptographic tasks from pseudorandom quantum states~\cite{brakerski2022computational,OT3}. These results permit trusted quantum operations or communication. An EFI-based commitment sends quantum states between the parties, whereas our inner commitment must have classical messages; for this primitive, qOWFs are the minimal known assumption.

The two trusted-quantum constructions obtain these stronger commitments by different routes. Grilo, Lin, Song, and Vaikuntanathan~\cite{OT1} first combine Naor commitments with post-quantum ZK; the resulting BB84 OT has an inefficient receiver simulator. Parallel composition, garbling, commitments, and ZK yield weak verifiable conditional disclosure of secrets (vCDS), which releases a secret on a valid NP witness and certifies input--output consistency. A simulated trapdoor then lets an efficient CDS receiver recover the committed secret; consistency proofs bind it to later openings. The resulting selective-opening commitment, which permits opening chosen positions of a committed string, is substituted back into OT for efficient simulation on both sides \cite[Sections~3--4, Figure~6]{OT1}.

Bartusek, Coladangelo, Khurana, and Ma~\cite{OT2} first make Naor's commitment equivocal. Trusted BB84 states and delayed-measurement simulation then supply extraction; applying the equivocation compiler again preserves extraction and gives the commitments used in the final BB84 OT \cite[Sections~2.2--2.5, Theorems~4.1, 5.1, and~6.2]{OT2}. Our chain retains classical commitments and proofs from qOWFs, and our NP relations, like the proof machinery of~\cite{OT1}, use classical algorithms inside NP statements. Replacing the trusted BB84 step requires both entropy against arbitrary joint device measurements and an extractor that preserves the prescribed physical execution.

\paragraph{Entropy certification and selector-aware repetition.}
For the entropy requirement, earlier DI OT protocols provide a common starting point: test the devices to certify entropy in a raw answer string, then use privacy amplification to obtain a mask. Kaniewski and Wehner use CHSH, extending a single-round uncertainty tradeoff by min-entropy additivity for memoryless devices and a martingale argument for sequential attacks \cite[Sections~II.2.1 and~II.3.1--II.3.2]{KW16}. Broadbent and Yuen use Learning-with-Errors-based single-device self-testing, an IID device model, an entropic uncertainty relation, and two-universal hashing~\cite{Broadbent}.

Batra, Chakraborty, Jain, and Kapshikar give the closest technical antecedent to our entropy argument: they combine Magic Square, parallel repetition, and a seeded extractor \cite[Section~1 (Proof overview), Sections~4.3 and~5]{BCJK25}. The parallel DI key distribution analyses of Jain, Miller, and Shi and of Jain and Kundu provide related repetition and entropy techniques \cite{JMS_DIQKD,Jain_Kundu_DIQKD}. For our sender-security proof, the additional issue is that Bob chooses his answer record and selector jointly. Applying a bound for a fixed set of coordinates therefore requires an argument that accounts for this choice.

Our selector-aware bound includes Bob's choice in the game itself (see the \hyperref[par:intro-core]{proof overview}). The threshold theorem of Kundu and Tan~\cite{kundu_tan} bounds wins over all copies; we use it to bound success on the chosen subset together with losses in the full batch. The reduction scores the selector without conditioning on the adversary's wins and samples the anchors after the selector is frozen, keeping the lifted game product-anchored. This supplies the repetition bound needed for sender security; \Cref{tab:game-correspondence} connects it to the protocol and accounts for the intervening disclosures.

\paragraph{Extraction, simulation, and commit-and-prove.}
Simulation security also requires the commitment to admit extraction without changing the device execution. The delayed-measurement extractor of~\cite{OT2} relies on control of the receiver's quantum measurements. Our DI string-commitment extractor instead changes only classical records, preserving the physical execution and its leakage (\Cref{sec:bc-step-two}).

To use the extracted values throughout the outer protocol, later checks must refer to the same immutable records. This is the role of commit-and-prove: an owner can successively certify predicates of its records, including projections. The public-output commitments of~\cite{CDR16} provide an interface-level precedent. \Cref{thm:masked-cp} realizes our interface from one extractable string commitment per owner, preserving the complete continuing state, including the adversary's quantum side information, device leakage, and reference (\Cref{sec:masked-cp-proof}). The outer OT can therefore use the certificates through stateful substitution.

Preserving the complete state is also essential when simulating the outer OT against a cheating Alice. Marginal indistinguishability of her views for $D=0,1$ does not by itself justify adjoining Bob's output: his raw answers may remain correlated with her residual quantum state. The trusted-quantum constructions account for this correlation through joint-state hybrids \cite[Appendix~B.1 and Figure~7]{OT1}\cite[Section~6.3]{OT2}. Our common-continuation argument ties the joint abort/output distribution to one extracted sender input pair while preserving the physical execution (\Cref{sec:results-overview}).

Beyond OT, the same interface yields DI bit commitment with simulators for both parties and DI coin tossing with negligible error, each in one device window (\Cref{sec:di-bc-coin}). For comparison, the information-theoretic DI protocols of Silman, Chailloux, Aharon, Kerenidis, Pironio, and Massar~\cite{SAC11} and of Aharon, Chailloux, Kerenidis, Massar, Pironio, and Silman~\cite{ACK14} have constant cheating probability and bias.

\paragraph{Robustness and leakage.}
Because the full reduction uses devices in both commitment preprocessing and the outer OT, honest noise must be tolerated throughout the construction. Robustness is already achieved by the bounded-storage DI OT protocol of Batra, Chakraborty, Jain, and Kapshikar~\cite{BCJK25}. Noise-tolerant extraction also appears in the trusted-quantum protocol of Diamanti, Grilo, Innocenzi, Lefebvre, Yacoub, and Y\'ang\"uez, which uses syndrome-based error correction \cite[Sections~1.2 and~2.5]{practical-ot}. Here the guarantee concerns the classical input--output interface of untrusted devices and must survive the commitment and the commit-and-prove compiler as well as the outer protocol.

Leakage imposes a corresponding requirement on simulation: the physical communication must be preserved through the reduction. Nielsen, Nordholt, Orlandi, and Burra use random bucketing to remove leakage of individual authenticated-OT choices in classical secure computation~\cite[Section~5]{NNOB12}; in our setting, quantum leakage may correlate the full selector vector. In DI key distribution, Zapatero and Curty~\cite{ZC26} model partial leakage of the inputs as a noisy channel in a CHSH-based protocol. We allow adaptive, bidirectional quantum communication between the laboratories, with a party acting as the adversary. Reproducing this leaked transcript unchanged is part of the same requirement that governs our extraction and OT simulation.

\subsection{Open problems}
\label{sec:open-problems}

We highlight two directions for simplifying the construction: reducing classical interaction and using a simpler nonlocal game. In both cases, the goal is to preserve simulation security, robustness, and leakage resilience under the same qOWF assumption.

\paragraph{Constant-round classical communication.}
The protocol uses at most three device windows, but its classical communication remains polynomial-round because the equivocal commitments and post-quantum zero-knowledge proofs use sequential repetition (\Cref{app:concrete-primitives}; see also~\cite[Section~1.3]{OT2}). Can all classical interaction, including preprocessing, be reduced to a constant number of rounds under qOWFs while retaining our simulation, robustness, and leakage guarantees? This would make the complete OT protocol constant-round.

\paragraph{CHSH-based constructions.}
Can the same guarantees be obtained using CHSH correlations in place of Magic Square? Our proof uses residual uncertainty in Alice's row after Bob has fixed his answer and learned her question. In CHSH, the winning relation $a\oplus b=xy$ determines Alice's answer from Bob's answer and both questions, so the residual-guessing argument does not carry over directly (\Cref{sec:selector-repetition-input}). A CHSH-based construction would need a different way to retain and certify the entropy required for OT against arbitrary joint receiver measurements, while accommodating noise and leakage.


\section{Preliminaries}
\label{sec:proof-tools-security}
This section fixes notation, recalls the Magic Square game, records standard probabilistic, quantum-information, and algebraic tools, and defines the simulation and commitment interfaces. The protocol constructions and their technical analyses begin in \Cref{sec:leakage-ot}. The conventions that govern every proof are in \Cref{sec:proof-conventions}.

For trits, $+$ and $-$ denote addition and subtraction in $\Zthree$, applied coordinatewise to trit strings. The symbol $\oplus$ denotes bitwise XOR. All logarithms are base two. Write $\rho\approx_\varepsilon\sigma$ if $\tfrac12\|\rho-\sigma\|_1\leq\varepsilon$; $\approx_c$ denotes quantum-computational indistinguishability.

\subsection{The Magic Square game}
\label{sec:magic-square-game}
In the Magic Square game~\cite[Section~3]{Coudron_Natarajan}, a referee samples independent uniform indices $x,y\in\{0,1,2\}$ and sends $x$ to Alice and $y$ to Bob. The players may share entanglement but cannot communicate after receiving their questions. Alice returns row $x$ as $a\in\{0,1\}^3$ and Bob returns column $y$ as $b\in\{0,1\}^3$. They win if the row has even parity, the column has odd parity, and the two answers agree at their intersection. The following predicate records these conditions.

\begin{definition}[Magic Square predicate]
\label{def:ms-predicate}
For $x,y\in\{0,1,2\}$ and $a,b\in\{0,1\}^3$, define
\[
 \MS(x,y,a,b):=
 \Id\!\left\{a_y=b_x,\quad
       \bigoplus_{j=0}^2a_j=0,\quad\bigoplus_{j=0}^2b_j=1\right\}.
\]
Set $\MS=0$ on malformed inputs. Every $\MS$ check includes all three conditions.
\end{definition}

\subsection{Concentration, entropy, and communication}
\label{sec:standard-tools}

We use, without further mention, the contractivity of trace distance under quantum channels~\cite{Tom16} and closure of computational indistinguishability under efficient postprocessing, which follows by composing the distinguisher.

\subsubsection{Concentration}
\label{sec:concentration-tools}

\begin{fact}[Chernoff--Hoeffding bounds \cite{Hoe63,SSS93}]
\label{fact:Chernoff}
Let $X_1,\ldots,X_n\in\{0,1\}$ be independent, $X=\sum_iX_i$, and $\mu=\Exp X$. For $0<\epsilon\leq1$,
\[
 \Pr[|X-\mu|\geq\epsilon\mu]\leq2e^{-\epsilon^2\mu/3}.
\]
For independent $X_1,\ldots,X_n\in[0,1]$ and $t>0$,
\[
 \Pr[X-\mu\geq tn]\leq e^{-2t^2n},\qquad
 \Pr[X-\mu\leq-tn]\leq e^{-2t^2n}.
\]
The same inequalities hold for a uniform sample without replacement from a finite population, with the population contained in $\{0,1\}$ for the multiplicative bound and in $[0,1]$ for the additive bounds \cite[Section~6]{Hoe63}.
\end{fact}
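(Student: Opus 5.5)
The statement immediately above is \Cref{fact:Chernoff}, a collection of standard concentration inequalities. I would prove it by the exponential-moment (Chernoff) method, and handle sampling without replacement by Hoeffding's convex-ordering argument.

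\emph{Multiplicative bound.} For independent $X_i\in\{0,1\}$ with $p_i=\Exp X_i$, I would first bound
\[
\Exp e^{sX_i}=1+p_i(e^s-1)\le e^{p_i(e^s-1)},
\]
which gives $\Exp e^{sX}\le e^{\mu(e^s-1)}$. Markov's inequality applied to $e^{sX}$, with the choice $s=\ln(1+\epsilon)$, then yields
\[
\Pr[X\ge(1+\epsilon)\mu]\le\bigl(e^{\epsilon}/(1+\epsilon)^{1+\epsilon}\bigr)^{\mu}.
\]
For the lower tail I would use $s=\ln(1-\epsilon)$, which yields $\bigl(e^{-\epsilon}/(1-\epsilon)^{1-\epsilon}\bigr)^{\mu}$. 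It remains to check the elementary inequalities $(1+\epsilon)\ln(1+\epsilon)-\epsilon\ge\epsilon^2/3$ and $(1-\epsilon)\ln(1-\epsilon)+\epsilon\ge\epsilon^2/2$ on $(0,1]$. For each, the difference from the right-hand side vanishes at $0$ and has nonnegative derivative. A union bound over the two tails gives the factor $2$.

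\emph{Additive bounds.} For $X_i\in[0,1]$ I would use Hoeffding's lemma: a random variable $Y$ with values in $[0,1]$ satisfies $\Exp e^{s(Y-\Exp Y)}\le e^{s^2/8}$. To prove it, bound $e^{sy}$ by its chord on $[0,1]$, and then check that the logarithm of the resulting expression has second derivative at most $1/4$. Multiplying over the $n$ independent terms gives $\Exp e^{s(X-\mu)}\le e^{ns^2/8}$. Markov's inequality with $s=4t$ gives $e^{-2t^2n}$. Applying the same argument to $1-X_i$ gives the lower tail.

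\emph{Sampling without replacement.} This is the only nonroutine step. I would invoke Hoeffding's result \cite[Theorem~4]{Hoe63}: if $X$ is the sum of $n$ draws without replacement and $X'$ is the sum of $n$ draws with replacement from the same population, then $\Exp f(X)\le\Exp f(X')$ for every continuous convex $f$. The draws with replacement are independent, so they satisfy the moment-generating-function bounds above. Since every bound above was obtained solely from $\Exp e^{sX}$, and $y\mapsto e^{sy}$ is convex for every real $s$, all the tail inequalities transfer to sampling without replacement. The ranges $\{0,1\}$ and $[0,1]$ are inherited by each individual draw, as the statement requires.

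In the paper itself I would not reprove these facts. Like the statement, I would cite \cite{Hoe63,SSS93}, since the fact is quoted as a standard tool and is not a contribution of the paper.
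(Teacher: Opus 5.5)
Your proof is correct and is the standard argument behind the citations. The paper gives no proof of this Fact and simply cites \cite{Hoe63,SSS93}. The without-replacement clause it attributes to \cite[Section~6]{Hoe63} is exactly the convex-ordering comparison (Hoeffding's Theorem~4) that you use to transfer the moment-generating-function bounds, relying on the with- and without-replacement sums having the same mean $\mu$.

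The only detail to add is the endpoint $\epsilon=1$ of the lower tail, where $s=\ln(1-\epsilon)$ is undefined. There, bound $\Pr[X=0]\le\mathbb E e^{sX}\le e^{\mu(e^s-1)}$ for $s<0$ and let $s\to-\infty$ to get $e^{-\mu}$.
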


\subsubsection{Min-entropy and seeded extraction}
\label{sec:extractor-tools}

\begin{definition}[Conditional min-entropy~\cite{KRS09}]
  \label{def:conditional_min_entropy}
For a subnormalized cq-state $\rho^{XE}=\sum_x\ketbra{x}{x}\otimes\rho_x^E$, define
\begin{equation}
 \begin{aligned}
 \pg(X\mid E)_\rho
   &:=\max_{\substack{M_x\succeq0\\\sum_xM_x=I_E}}
                      \sum_x\Tr(M_x\rho_x^E),\\
 \hmin(X\mid E)_\rho&=-\log \pg(X\mid E)_\rho,\\
 \hmin^{\varepsilon}(X\mid E)_\rho
   &:=\sup_{\substack{\widetilde\rho\text{ cq, subnormalized}\\
                      P(\widetilde\rho,\rho)\leq\varepsilon}}
                 \hmin(X\mid E)_{\widetilde\rho},
 \end{aligned}
 \label{eq:guessing_prof_def}
\end{equation}
where $P$ is purified distance and $\log0=-\infty$.
\end{definition}

\begin{fact}[Classical leakage chain rule~{\cite[Lemma~6.18]{Tom16}}]
  \label{fact:minentropy-calculus}
For a subnormalized cq-state $\rho^{XEL}$ with classical $L$,
\[
 \hmin^{\varepsilon}(X\mid E,L)_\rho
 \geq \hmin^{\varepsilon}(X\mid E)_\rho-\log|\mathcal L|.
\]
\end{fact}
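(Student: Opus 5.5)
The plan is to prove the unsmoothed inequality first, by a direct guessing-probability argument, and then lift it to the smoothed quantities by extending the optimal smoothing state and projecting it back onto classical registers.

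\emph{Unsmoothed case ($\varepsilon=0$).} Write the state as
\[
 \rho^{XEL}=\sum_{x,l}\ketbra{x}{x}\otimes\rho^E_{x,l}\otimes\ketbra{l}{l},
 \qquad
 \rho^E_x=\sum_l\rho^E_{x,l}.
\]
A guessing strategy for $X$ given $(E,L)$ amounts to a family of POVMs $\{M^l_x\}_x$ on $E$, one for each value $l$. Its success probability is $\sum_{x,l}\Tr(M^l_x\rho^E_{x,l})$. I will build a guesser that sees only $E$:
\begin{enumerate}
 \item it samples $l\in\mathcal L$ uniformly at random;
 \item it then applies the POVM $\{M^l_x\}_x$ to $E$.
\end{enumerate}
Since $\rho^E_x\succeq\rho^E_{x,l}$ and $M^l_x\succeq0$, its success probability is
\[
 \frac1{|\mathcal L|}\sum_l\sum_x\Tr\bigl(M^l_x\rho^E_x\bigr)
 \;\geq\;\frac1{|\mathcal L|}\sum_{x,l}\Tr\bigl(M^l_x\rho^E_{x,l}\bigr).
\]
Optimizing over the family $\{M^l_x\}$ gives $\pg(X\mid E)\geq\pg(X\mid E,L)/|\mathcal L|$. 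Taking $-\log$ yields the claim for $\varepsilon=0$. This argument uses only positivity, so it holds verbatim for subnormalized states.

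\emph{Smoothing.} Fix $\widetilde\rho^{XE}$ that is cq, subnormalized, satisfies $P(\widetilde\rho^{XE},\rho^{XE})\leq\varepsilon$, and nearly attains $\hmin^\varepsilon(X\mid E)_\rho$. I will extend it in three steps.
\begin{enumerate}
 \item By the extension property of purified distance (Uhlmann's theorem), there is a subnormalized $\widehat\rho^{XEL}$ with marginal $\widehat\rho^{XE}=\widetilde\rho^{XE}$ and $P(\widehat\rho^{XEL},\rho^{XEL})\leq\varepsilon$.
 \item Apply the pinching channel in the computational bases of $X$ and $L$. This channel fixes $\rho^{XEL}$, and purified distance is monotone under it, so the pinched state $\bar\rho^{XEL}$ is cq on $XL$ and still within $\varepsilon$ of $\rho^{XEL}$.
 \item Pinching $L$ does not change the $XE$ marginal, and pinching $X$ leaves $\widetilde\rho^{XE}$ unchanged because it is already cq. Hence $\bar\rho^{XE}=\widetilde\rho^{XE}$.
\end{enumerate}
Applying the unsmoothed bound to $\bar\rho$ gives
\[
 \hmin^\varepsilon(X\mid E,L)_\rho\geq\hmin(X\mid E,L)_{\bar\rho}
 \geq\hmin(X\mid E)_{\widetilde\rho}-\log|\mathcal L|.
\]
Taking the supremum over $\widetilde\rho$ finishes the proof.

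The only delicate point is the smoothing step. The extension must simultaneously remain inside the $\varepsilon$-ball, be classical on both $X$ and $L$, and keep exactly the chosen $XE$ marginal. The pinching argument above is what I expect to secure all three properties together.
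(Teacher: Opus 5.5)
Your proof is correct. The paper gives no argument of its own for this statement; it imports it from \cite[Lemma~6.18]{Tom16}. Your two-stage derivation is therefore a self-contained proof of the cited fact. The first stage is the unsmoothed bound $\pg(X\mid E)\ge\pg(X\mid E,L)/|\mathcal L|$, obtained by guessing a uniformly random $l$. In the SDP form this is the operator inequality $\rho^{XEL}\preceq\rho^{XE}\otimes I_L$. The second stage is the Uhlmann extension of the smoothing state followed by pinching on $X$ and $L$.

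Two points deserve a line each in a write-up:
\begin{itemize}
\item A general guessing POVM $\{N_x\}$ on $EL$ reduces to your family via $M^l_x=(I_E\otimes\langle l|)N_x(I_E\otimes|l\rangle)$, since $L$ is classical. You asserted this reduction without justification.
\item The pinching on $X$ is genuinely needed, not merely cosmetic. The smoothing ball in \Cref{def:conditional_min_entropy} contains only cq states, so the extended state must be returned to that class while its $XE$ marginal is kept fixed. Your Step~3 checks exactly this.
\end{itemize}
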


\begin{lemma}[Restriction to a cq-substate]
\label{lem:cq-substate-calculus}
For cq-substates $0\preceq\sigma^{XE}\preceq\rho^{XE}$,
\[
 \begin{aligned}
 \pg(X\mid E)_\sigma&\leq \pg(X\mid E)_\rho,\\
 \hmin(X\mid E)_{\sigma/p}
 &\geq \hmin(X\mid E)_\rho-\log(1/p),
 \qquad p:=\Tr\sigma>0.
 \end{aligned}
\]
Moreover, $\Tr(\rho-\sigma)\leq\delta$ implies
\[
 \begin{gathered}
 P(\rho,\sigma)\leq\sqrt{2\delta},\\
 \pg(X\mid E)_\sigma\leq2^{-h}
 \quad\Longrightarrow\quad
 \hmin^{\sqrt{2\delta}}(X\mid E)_\rho\geq h.
 \end{gathered}
\]
\end{lemma}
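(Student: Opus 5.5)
The plan is to prove the three claims in the order stated. For the first inequality, fix any POVM $\{M_x\}$ on $E$. Write $\sigma^{XE}=\sum_x\ketbra{x}{x}\otimes\sigma_x^E$ and $\rho^{XE}=\sum_x\ketbra{x}{x}\otimes\rho_x^E$. Projecting $\rho-\sigma\succeq0$ onto the block $\ketbra{x}{x}\otimes I_E$ gives $\sigma_x^E\preceq\rho_x^E$ for every $x$. Since each $M_x\succeq0$, this yields $\Tr(M_x\sigma_x^E)\leq\Tr(M_x\rho_x^E)$. Summing over $x$ and then maximizing over the POVM gives $\pg(X\mid E)_\sigma\leq\pg(X\mid E)_\rho$.

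For the normalized bound, note that the guessing probability is linear under positive scaling: $\pg(X\mid E)_{\sigma/p}=\pg(X\mid E)_\sigma/p\leq\pg(X\mid E)_\rho/p$. Taking $-\log$ gives $\hmin(X\mid E)_{\sigma/p}\geq\hmin(X\mid E)_\rho-\log(1/p)$. The case $\pg(X\mid E)_\rho=0$ is covered by the convention $\log0=-\infty$.

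For the purified-distance claim, use the generalized fidelity of subnormalized states,
\[
\bar F(\rho,\sigma)=\|\sqrt\rho\sqrt\sigma\|_1+\sqrt{(1-\Tr\rho)(1-\Tr\sigma)},
\]
together with $P=\sqrt{1-\bar F^2}$. Because $\sigma\preceq\rho$, operator monotonicity of the square root gives $\sqrt\sigma\preceq\sqrt\rho$. Hence
\[
\|\sqrt\rho\sqrt\sigma\|_1\geq\Tr(\sqrt\rho\sqrt\sigma)\geq\Tr\sigma.
\]
Writing $r=\Tr\rho$ and $s=\Tr\sigma$ (so $s\le r\le 1$ and $r-s\le\delta$), we get
\[
\bar F\geq s+\sqrt{(1-r)(1-s)}\geq s+(1-r)=1-(r-s)\geq1-\delta.
\]
Therefore $P\leq\sqrt{1-(1-\delta)^2}\leq\sqrt{2\delta}$. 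Alternatively, one can cite the standard subnormalized bound $P(\rho,\sigma)\le\sqrt{2\delta}$ for $\sigma\preceq\rho$ from \cite{Tom16}.

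The final implication then follows directly. The state $\sigma$ is a cq-state, subnormalized, and satisfies $P(\sigma,\rho)\leq\sqrt{2\delta}$, so it is feasible in the supremum of \Cref{def:conditional_min_entropy}. Its min-entropy is $-\log\pg(X\mid E)_\sigma\geq h$, which gives $\hmin^{\sqrt{2\delta}}(X\mid E)_\rho\geq h$. The only delicate step is the fidelity estimate, specifically the inequality $\Tr(\sqrt\rho\sqrt\sigma)\ge\Tr\sigma$. The first inequality in that chain holds because $\Tr(\sqrt\rho\sqrt\sigma)=\Tr(\sigma^{1/4}\sqrt\rho\,\sigma^{1/4})\geq0$ is bounded by the trace norm. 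The second holds by monotonicity of $\Tr(\sigma^{1/4}\,\cdot\,\sigma^{1/4})$ applied to $\sqrt\rho\succeq\sqrt\sigma$.
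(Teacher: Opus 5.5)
Your proposal is correct. The first two claims and the final smoothing step match the paper's proof: blockwise domination $\sigma_x^E\preceq\rho_x^E$, linearity of $\pg$ under scaling, and feasibility of $\sigma$ in the supremum.

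The routes differ only in the purified-distance bound. The paper notes that $\rho-\sigma\succeq0$ makes the generalized trace distance $\bar D(\rho,\sigma)=\tfrac12\|\rho-\sigma\|_1+\tfrac12|\Tr\rho-\Tr\sigma|$ equal to $\Tr(\rho-\sigma)\le\delta$. It then quotes the subnormalized Fuchs--van de Graaf relation $P\le\sqrt{2\bar D}$ from \cite[Section~3.3]{Tom16}. You instead lower-bound the generalized fidelity directly. Operator monotonicity of the square root gives $\|\sqrt\rho\sqrt\sigma\|_1\ge\Tr\sigma$, and the ordering $\Tr\sigma\le\Tr\rho\le1$ gives $\sqrt{(1-\Tr\rho)(1-\Tr\sigma)}\ge1-\Tr\rho$. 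In effect you prove $1-\bar F\le\bar D$ only in the ordered case $\sigma\preceq\rho$, where it is elementary.

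Your argument is therefore self-contained, needing only L\"owner--Heinz. The paper's route is a one-line citation, but it rests on the general subnormalized inequality, whose proof (via extension to normalized states and the usual Fuchs--van de Graaf argument) is less elementary.

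One small remark: the case $\pg(X\mid E)_\rho=0$ that you cover by convention cannot arise. Since $p>0$, we have $\pg(X\mid E)_\rho\ge\pg(X\mid E)_\sigma\ge\max_x\Tr\sigma_x^E>0$.
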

\begin{proof}
Every guessing measurement has no greater success on $\sigma$ than on $\rho$, and normalization divides its success by $p$. Since $\rho-\sigma\succeq0$, their generalized trace distance is $\Tr(\rho-\sigma)$. The relation between generalized trace distance and purified distance~\cite[Section~3.3]{Tom16} gives the displayed bound. The last implication follows from the definition of smoothing.
\end{proof}

\begin{fact}[Quantum leftover hashing~{\cite[Section~7.3.2]{Tom16}}]
\label{fact:quantum-linear-hash}
Let $\rho^{XE}$ be a normalized cq-state and let $\{h_s:\mathcal X\to\{0,1\}^q\}_{s\in\mathcal S}$ be a two-universal family: for all $x\ne x'$, $\Pr_S[h_S(x)=h_S(x')]\leq2^{-q}$, where $S\gets_U\mathcal S$ is independent of $XE$. Then
\[
 \frac12\left\|
 \rho_{h_S(X)SE}-U_q\otimes\rho_{SE}
 \right\|_1
 \leq\frac12\sqrt{2^{q-\hmin(X\mid E)_\rho}}.
\]
\end{fact}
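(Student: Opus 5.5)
The plan is the standard Hilbert--Schmidt (collision-entropy) argument. Fix $h:=\hmin(X\mid E)_\rho$ and, by the operator characterization of min-entropy equivalent to the guessing form in \Cref{def:conditional_min_entropy}, a normalized state $\omega_E$ with $\rho_x^E\preceq2^{-h}\omega_E$ for every $x$. The quantity to bound is
\[
 \tfrac12\big\|\rho_{h_S(X)SE}-U_q\otimes\rho_{SE}\big\|_1=\mathbb E_s\,\tfrac12\|\Delta_s\|_1,
\]
where $\Delta_s:=\rho^{s}_{ZE}-U_q\otimes\rho_E$ and $\rho^s_{ZE}=\sum_z\ketbra{z}{z}\otimes\sum_{x:h_s(x)=z}\rho_x$. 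The equality holds because $S$ is independent of $XE$ and classical, so the trace norm splits over the $S$ register.

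The first step converts the trace norm into a weighted $2$-norm. Put $\sigma:=I_Z\otimes\omega_E$, so that $\Tr\sigma=2^q$. Hölder/Cauchy--Schwarz gives
\[
 \|\Delta_s\|_1\le\sqrt{\Tr\sigma}\,\big\|\sigma^{-1/4}\Delta_s\sigma^{-1/4}\big\|_2 .
\]
If $\omega_E$ is not full rank, one restricts to its support, which contains the supports of all $\rho_x$. Jensen's inequality (concavity of the square root) then moves the expectation over $s$ inside the square, so it remains to bound $\mathbb E_s\Tr\big[(\sigma^{-1/4}\Delta_s\sigma^{-1/4})^2\big]$.

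The second step, the only real computation, expands the square. Write $\Gamma(A,B):=\Tr(A\,\omega^{-1/2}B\,\omega^{-1/2})$. The cross terms of $\Delta_s$ collapse because the marginal of $\rho^s_{ZE}$ on $E$ is $\rho_E$, which leaves
\[
 \sum_{x,x'}\Id\{h_s(x)=h_s(x')\}\,\Gamma(\rho_x,\rho_{x'})-2^{-q}\Gamma(\rho_E,\rho_E).
\]
Taking the expectation over $s$ and splitting into diagonal terms $x=x'$ and off-diagonal terms, two-universality together with $\Gamma(\rho_x,\rho_{x'})\ge0$ (a trace of a product of PSD operators) bounds the off-diagonal contribution by $2^{-q}\sum_{x,x'}\Gamma(\rho_x,\rho_{x'})=2^{-q}\Gamma(\rho_E,\rho_E)$. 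This cancels the subtracted term, leaving at most $\sum_x\Gamma(\rho_x,\rho_x)$.

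The third step compares collision and min-entropy. From $\rho_x\preceq2^{-h}\omega$ one gets
\[
 \Gamma(\rho_x,\rho_x)=\Tr\big(\rho_x^{1/2}\omega^{-1/2}\rho_x\,\omega^{-1/2}\rho_x^{1/2}\big)\le2^{-h}\Tr\big(\rho_x^{1/2}\omega^{-1/2}\omega\,\omega^{-1/2}\rho_x^{1/2}\big)=2^{-h}\Tr\rho_x .
\]
Summing over $x$ bounds the collision term by $2^{-h}$. Combining the three steps,
\[
 \tfrac12\,\mathbb E_s\|\Delta_s\|_1\le\tfrac12\sqrt{2^q\cdot2^{-h}},
\]
which is the claim.

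I expect the main obstacle to be bookkeeping rather than ideas. One has to check that the cross terms really cancel to $-2^{-q}\Gamma(\rho_E,\rho_E)$, which uses that the $E$-marginal of $\rho^s_{ZE}$ does not depend on $s$. One also has to handle the possibly singular $\omega_E$ by restricting to its support.
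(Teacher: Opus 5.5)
Your argument is correct and is the standard Hilbert--Schmidt/collision-entropy proof from the cited source \cite[Section~7.3.2]{Tom16}: a weighted $2$-norm bound, two-universality to collapse the expected collision term, then $\rho_x\preceq2^{-h}\omega_E$ to pass from collision entropy to min-entropy. The paper states this result as a Fact and gives no proof of its own; the one phrasing fix is to justify $\Gamma(\rho_x,\rho_{x'})\ge0$ by cyclicity as $\Tr(AB)$ with $A=\omega^{-1/4}\rho_x\omega^{-1/4}\succeq0$ and $B=\omega^{-1/4}\rho_{x'}\omega^{-1/4}\succeq0$, rather than as a trace of a product of four PSD operators, which in general need not even be real.
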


\begin{lemma}[Linear extraction with a public input length]
\label{lem:linear-extraction}
Let $q\geq1$ be fixed and let $\rho^{XEN}$ be a subnormalized cq-state with classical $N\geq1$ and $X\in\Ftwo^N$. Conditional on $N=n$, sample $S\gets_U\Ftwo^{q\times n}$ independently of $XE$ and set $\ext_{n,q}(x;S)=Sx$. If $\hmin^{\varepsilon}(X\mid E,N)_\rho\geq h$, then
\[
 \frac12\left\|\rho_{\ext_{N,q}(X;S)SEN}
          -U_q\otimes\rho_{SEN}\right\|_1
 \leq2\varepsilon+\tfrac12 2^{(q-h)/2}.
\]
The family has seed length $qn$, uniform Boolean depth $O(\log n)$, and size $O(qn)$.
\end{lemma}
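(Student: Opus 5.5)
The plan is to reduce to \Cref{fact:quantum-linear-hash} separately for each value $n$ of the public length, and then recombine the branches while paying for smoothing.

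\emph{Step 1: fix a smoothing state.} Let $\widetilde\rho^{XEN}$ be a subnormalized cq-state with $P(\widetilde\rho,\rho)\le\varepsilon$ and $\hmin(X\mid E,N)_{\widetilde\rho}\ge h$. We may take it classical on $N$, since pinching $N$ is a channel that fixes $\rho$ and does not decrease min-entropy. Write $\widetilde\rho=\sum_n p_n\ketbra{n}{n}\otimes\widetilde\rho_n^{XE}$ with each $\widetilde\rho_n$ normalized. Because $N$ is classical, the guessing probability splits as $\pg(X\mid E,N)_{\widetilde\rho}=\sum_n p_n\,\pg(X\mid E)_{\widetilde\rho_n}$.

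\emph{Step 2: apply leftover hashing in each branch.} For each $n$, the family $x\mapsto Sx$ with uniform $S\in\Ftwo^{q\times n}$ is two-universal. For $x\ne x'$ we have $S(x-x')$ uniform on $\Ftwo^q$, so collisions occur with probability $2^{-q}$. \Cref{fact:quantum-linear-hash} therefore bounds branch $n$ by $\tfrac12\sqrt{2^q\,\pg(X\mid E)_{\widetilde\rho_n}}$.

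\emph{Step 3: recombine the branches.} Summing with weights $p_n$ and applying Cauchy--Schwarz (Jensen for the concave square root) gives
\[
 \sum_n p_n\tfrac12\sqrt{2^q\pg_n}\le\tfrac12\sqrt{\textstyle\sum_n p_n}\,\sqrt{2^q\textstyle\sum_n p_n\pg_n}\le\tfrac12 2^{(q-h)/2},
\]
using $\sum_n p_n\le1$. If $\widetilde\rho$ is subnormalized, the subnormalized version of the fact follows by applying it to the normalized branches and scaling, which is exactly how the weights $p_n$ enter above. This yields the $\tfrac12 2^{(q-h)/2}$ term for $\widetilde\rho$.

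\emph{Step 4: pay for smoothing.} The remaining step is the one that needs care with subnormalization, so that the constant comes out as $2\varepsilon$. Generalized trace distance is at most purified distance, so $\rho$ and $\widetilde\rho$ are within $\varepsilon$. The map that samples $S$ and outputs $(SX,S,E,N)$ is a channel, so the real outputs are within $\varepsilon$. The ideal states $U_q\otimes\rho_{SEN}$ and $U_q\otimes\widetilde\rho_{SEN}$ are also within $\varepsilon$ by contractivity. The triangle inequality then gives the total bound $2\varepsilon+\tfrac12 2^{(q-h)/2}$. Since $\rho$ may itself be subnormalized, the traces are handled through generalized trace distance, as in \Cref{lem:cq-substate-calculus}.

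\emph{Step 5: complexity claims.} The seed is a $q\times n$ matrix, so it has $qn$ bits. Computing $Sx$ takes $q$ inner products of length $n$: $qn$ AND gates followed by XOR trees of depth $O(\log n)$. This gives size $O(qn)$ and uniform depth $O(\log n)$ for fixed $q$.
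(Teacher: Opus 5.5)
Your proof is correct and follows essentially the same route as the paper. Both apply quantum leftover hashing to each normalized length branch, recombine by Cauchy--Schwarz using $\sum_n p_n\,\pg(X\mid E)_{\widetilde\rho_n}=\pg(X\mid E,N)_{\widetilde\rho}$, and charge $2\varepsilon$ for smoothing via contractivity on both marginals and the triangle inequality; your pinching argument for taking the smoothing state classical on $N$ justifies a step the paper only asserts.
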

\begin{proof}
For every nonzero $x-x'$, the vector $S(x-x')$ is uniform in $\Ftwo^q$, so the matrix family is two-universal. First take $\varepsilon=0$. Write $\rho_n^{XE}=\langle n|\rho^{XEN}|n\rangle$, $p_n=\Tr\rho_n$, and $r_n=\pg(X\mid E)_{\rho_n}$. For $p_n>0$, apply \Cref{fact:quantum-linear-hash} to $\rho_n/p_n$ and multiply by $p_n$. Summing the resulting trace distances over the classical length blocks gives at most
\[
 \tfrac12\sqrt{2^q}\sum_n\sqrt{p_nr_n}
 \leq\tfrac12\sqrt{2^q\Bigl(\sum_np_n\Bigr)
                                  \Bigl(\sum_nr_n\Bigr)}
 \leq\tfrac12 2^{(q-h)/2},
\]
since $\sum_np_n\leq1$ and $\sum_nr_n=\pg(X\mid E,N)_\rho$. For $\varepsilon>0$, choose a smoothing state classical on $X,N$, apply the unsmoothed bound, and use contractivity and the triangle inequality on the two marginals, adding at most $2\varepsilon$. Each output bit is the parity of $n$ products; a balanced XOR tree gives the stated circuit bounds.
\end{proof}

For $q=1$, this is the inner-product extractor $\ext_n(x;S)=\langle x,S\rangle_{\Ftwo}$. The independent public seed requires no additional entropy deduction. For constant $q$, $h=\omega(\log\lambda)$ and $\varepsilon=\negl(\lambda)$ give negligible extraction error.

\begin{fact}[Helstrom bound; see, e.g., \cite{Wil17}]
\label{fact:helstrom}
Let $\sigma_0,\sigma_1\geq0$ act on one space, with $\operatorname{Tr}(\sigma_0+\sigma_1)=1$. The maximum, over all measurements, of the probability of guessing $s$ from $\sigma_s$, where $s$ occurs with probability $\operatorname{Tr}\sigma_s$, equals $\tfrac12\bigl(1+\|\sigma_0-\sigma_1\|_1\bigr)$.
\end{fact}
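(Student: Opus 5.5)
The plan is to compute the success probability of an arbitrary two-outcome measurement exactly, and then bound and attain the resulting expression. We work on a finite-dimensional space, as everywhere in this paper.

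First, fix an arbitrary measurement $\{M_0,M_1\}$ with $M_0,M_1\succeq0$ and $M_0+M_1=I$. The probability of guessing $s$ correctly is
\[
 p_{\rm succ}=\operatorname{Tr}(M_0\sigma_0)+\operatorname{Tr}(M_1\sigma_1).
\]
Any measurement with more outcomes can be coarse-grained to these two guesses without changing this quantity, so two outcomes suffice. Put $\Delta:=\sigma_0-\sigma_1$ and $\Sigma:=\sigma_0+\sigma_1$. The identity $ab+cd=\tfrac12(a+c)(b+d)+\tfrac12(a-c)(b-d)$, applied under the trace, gives
\[
 p_{\rm succ}
 =\tfrac12\operatorname{Tr}\bigl((M_0+M_1)\Sigma\bigr)
 +\tfrac12\operatorname{Tr}\bigl((M_0-M_1)\Delta\bigr).
\]
Since $M_0+M_1=I$ and $\operatorname{Tr}\Sigma=1$, the first term equals $\tfrac12$. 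This reduces the problem to maximizing $\operatorname{Tr}(X\Delta)$ over $X=M_0-M_1$.

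For the upper bound, note that $X$ is Hermitian and $-I\preceq X\preceq I$, because $0\preceq M_0,M_1\preceq I$. Write the Jordan decomposition $\Delta=P-N$ with $P,N\succeq0$ of orthogonal supports, so that $\|\Delta\|_1=\operatorname{Tr}P+\operatorname{Tr}N$. Then
\[
 \operatorname{Tr}(X\Delta)=\operatorname{Tr}(XP)-\operatorname{Tr}(XN)\leq\operatorname{Tr}P+\operatorname{Tr}N=\|\Delta\|_1.
\]
Here each term is bounded using $X\preceq I$ and $-X\preceq I$ against the positive operators $P$ and $N$; equivalently, this is Hölder's inequality with $\|X\|_\infty\leq1$. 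Hence $p_{\rm succ}\leq\tfrac12(1+\|\sigma_0-\sigma_1\|_1)$.

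For achievability, take $M_0$ to be the projector $\Pi_+$ onto the span of the eigenvectors of $\Delta$ with positive eigenvalue, and $M_1=I-\Pi_+$. Then $X=2\Pi_+-I$, and $\operatorname{Tr}(X\Delta)=\operatorname{Tr}P+\operatorname{Tr}N=\|\Delta\|_1$, so this measurement attains the bound.

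No step is a genuine obstacle. The only point needing care is the normalization convention: the statement uses the full trace norm $\|\cdot\|_1$, not $\tfrac12\|\cdot\|_1$, and the priors $\operatorname{Tr}\sigma_s$ are absorbed into the subnormalized $\sigma_s$. This is exactly why no factor $\tfrac12$ appears in front of $\|\sigma_0-\sigma_1\|_1$.
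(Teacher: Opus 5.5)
Your proof is correct. It is the standard Helstrom--Holevo argument: rewrite the success probability as $\tfrac12+\tfrac12\operatorname{Tr}\bigl((M_0-M_1)(\sigma_0-\sigma_1)\bigr)$, bound this by the trace norm using $-I\preceq M_0-M_1\preceq I$, and attain the bound with the projector onto the positive part of $\sigma_0-\sigma_1$. The paper gives no proof of its own and simply quotes the fact from the cited reference, which proceeds essentially as you do. Your normalization remark matches how the paper applies the fact in \Cref{lem:conservative-residual-guessing}, where $\|\sigma_0-\sigma_1\|_1\le15\sqrt{2\varepsilon}$ gives a guessing probability of at most $\tfrac12+(15/\sqrt2)\sqrt\varepsilon$.
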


\subsubsection{Bounded communication}
\label{sec:communication-tools}

The following form of communication elimination follows the transcript-guessing argument of Jain and Kundu \cite[proofs of Lemmas~8 and~19]{Jain_Kundu_DIQKD}.

\begin{lemma}[Eliminating bounded interactive communication]
\label{lem:interactive-transcript-guessing}
Let $\mathcal S$ exchange at most $c$ classical bits and $q$ qubits within a specified causal window, with arbitrary prior entanglement. The classical budget includes adaptive routing, direction, lengths, continuation, and stopping; fixed public schedules have zero cost. For every event $\mathcal E$, there exists a strategy $\mathcal S_0$ with no communication in that window satisfying
\[
 \Pr_{\mathcal S_0}(\mathcal E)
 \geq2^{-(c+2q)}\Pr_{\mathcal S}(\mathcal E)
 \geq2^{-2b}\Pr_{\mathcal S}(\mathcal E),\qquad b:=c+q.
\]
The strategy $\mathcal S_0$ respects the prescribed causal order. On the branch where its transcript guess is correct, its state on the original registers is the state of $\mathcal S$, including any reference registers.
\end{lemma}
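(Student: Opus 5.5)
We construct $\mathcal S_0$ by guessing the transcript. First we describe the communication of $\mathcal S$ in the window as an ordered sequence of messages. Each message is either a classical string or a quantum register. Routing, direction, lengths, and stopping are all encoded in the $c$ classical bits, so a classical transcript $t\in\{0,1\}^{c}$ (padded to a fixed length) determines the whole message schedule. In $\mathcal S_0$, both laboratories share, as part of their prior entanglement, a uniformly random string $t$ of shared randomness. Each laboratory plays as if the classical messages received were exactly the corresponding parts of $t$. Wherever it would have sent a classical message, it instead checks locally that its own message agrees with $t$, and otherwise it proceeds arbitrarily. Agreement holds simultaneously on both sides with probability $2^{-c}$. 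We still have to deal with the qubits.

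For the $q$ qubits, we use quantum teleportation with a guessed Pauli correction. The parties pre-share $q$ EPR pairs. Wherever $\mathcal S$ sends a qubit, the sender teleports it by performing the Bell measurement, and the receiver applies a correction $P\in\{I,X,Y,Z\}$ guessed from shared random bits. The guess is correct with probability $1/4$ per qubit, which gives total probability $2^{-2q}$.

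To make this a conditioning statement, we argue as follows. Conditioned on all $c+2q$ guessed bits being correct, with each Bell outcome equal to the shared guess, the post-measurement state on the original registers, including reference registers, is exactly the state produced by $\mathcal S$ with the corresponding classical transcript. Since the Bell-measurement outcomes are uniform and independent of the input state, this branch has weight $2^{-2q}$ times the weight of the classical-transcript branch. We then sum over transcripts $t$. For each $t$, the weight of $\mathcal E$ together with transcript $t$ in $\mathcal S$ is multiplied by $2^{-c}$, giving $\Pr_{\mathcal S_0}(\mathcal E)\ge \Pr_{\mathcal S_0}(\mathcal E\wedge\text{guess correct}) = 2^{-(c+2q)}\Pr_{\mathcal S}(\mathcal E)$. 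The second inequality in the statement is immediate from $c+2q\le 2(c+q)$.

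The main obstacle is causality under adaptivity. A laboratory in $\mathcal S_0$ must act at each point using only its own inputs and the guessed data. It must not depend on whether the other side's guess was correct. We handle this by having each party consult only the pre-shared guess, so the prescribed causal order of local operations is unchanged, and by verifying correctness only in the analysis, never inside the strategy. We also have to ensure that message lengths and stopping times are functions of the classical transcript prefix; this is exactly why they are charged to $c$. Under this accounting a fixed public schedule contributes nothing. With these conventions, the equality of states on the correct-guess branch follows step by step along the schedule, by induction on the messages.
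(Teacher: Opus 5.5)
Your proposal is correct and follows essentially the same route as the paper. You teleport each qubit at a cost of two classical bits and have the parties share a uniform guess of the resulting $(c+2q)$-bit transcript. Receivers use the guess at the prescribed times, senders fall back on a mismatch, and the consistent branch is exactly $2^{-(c+2q)}$ times the original state, reference registers included. The only cosmetic difference is that you condition on the Bell outcomes matching the guessed Paulis in the analysis, whereas the paper has the sender check its correction bits against the guess; this changes nothing.
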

\begin{proof}
Teleport each communicated qubit using fresh shared entanglement, replacing it with two classical correction bits. Encode and pad the resulting classical transcript, including the charged schedule information, to length $c+2q$. The parties share a uniform guess of this string. Each recipient uses the guessed incoming messages at their prescribed times, and each sender checks its own outgoing messages against the guess. A party detecting a mismatch uses a legal fallback continuation.

On a jointly consistent branch, these operations reproduce the original strategy. For every complete transcript, guessing that transcript has probability $2^{-(c+2q)}$; hence the consistent substate on the original registers equals $2^{-(c+2q)}$ times the original state. Its contribution to $\mathcal E$ proves the first inequality, and $c+2q\leq2(c+q)$ proves the second. Guessed messages are used only when the corresponding messages would have arrived, so the construction preserves the causal order.
\end{proof}

We apply \Cref{lem:interactive-transcript-guessing} to the device input--output interval. The simulation leaves immutable records unchanged; the prescribed operations outside that interval are unchanged.

\subsection{Algebraic codes and quantum masking}
\label{sec:coding-masking-tools}

The following facts give the algebraic and masking tools used in \Cref{sec:unified-parameters}.

\paragraph{Function fields and evaluation codes.}
For a function field over $\Ffield$, a divisor $G$ specifies the finite-dimensional space $\mathcal L(G)$ of functions whose poles are bounded by $G$. Evaluation at distinct rational places outside its support is an $\Ffield$-linear map.

\begin{fact}[Riemann--Roch and effective tower codes]
\label{fact:effective-ag-tower}
For a function field of genus $g$, $\dim\mathcal L(G)\geq\deg G+1-g$, with equality when $\deg G>2g-2$. A nonzero function with poles bounded by an effective divisor of degree $a$ has at most $a$ distinct zeros outside that divisor \cite[Chapters 1--2]{Stichtenoth09}. The tower
\[
 F_0=\mathbb F_{64}(x_0),\qquad
 F_{j+1}=F_j(x_{j+1}),\qquad
 x_{j+1}^{8}+x_{j+1}=\frac{x_j^{8}}{x_j^{7}+1}
\]
has a unique pole $P_\infty$ of $x_0$. It has $N_j=56\cdot8^j$ rational evaluation places above $x_0=\alpha$ with $\alpha^8+\alpha\ne0$, all distinct from $P_\infty$, and genus $g_j<8^{j+1}$ \cite{GarciaStichtenoth96}. For fixed field and pole degree $O(N_j)$, bases of $\mathcal L(aP_\infty)$ and their evaluations at these places are computable deterministically in time polynomial in $N_j$ \cite{ShumEtAl01}.
\end{fact}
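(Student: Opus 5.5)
Since this statement collects standard facts, I would prove it by reducing each clause to the cited sources and checking only the specializations the paper needs.

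\emph{Dimension and zero count.} The dimension inequality $\dim\mathcal L(G)\geq\deg G+1-g$ is Riemann's theorem. Equality for $\deg G>2g-2$ follows from the full Riemann--Roch theorem, because then $\deg(W-G)<0$ for a canonical divisor $W$, so $\mathcal L(W-G)=0$ \cite[Chapters 1--2]{Stichtenoth09}. For the zero count, take a nonzero $f\in\mathcal L(G)$ with $G\geq0$ and $\deg G=a$. Let $P_1,\ldots,P_k$ be distinct zeros of $f$ outside $\operatorname{supp}G$. Then $(f)+G-\sum_iP_i\geq0$, so $f\in\mathcal L(G-\sum_iP_i)$. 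A nonzero element of $\mathcal L$ of a divisor forces that divisor to have nonnegative degree, because principal divisors have degree zero. Hence $k\leq a$.

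\emph{The tower.} I would identify the displayed tower with the Garcia--Stichtenoth tower over $\mathbb F_{q^2}$ with $q=8$ \cite{GarciaStichtenoth96}. Three properties need checking.
\begin{enumerate}
\item The pole of $x_0$ is totally ramified at every step. This gives a unique place $P_\infty$ above it, and that place is rational.
\item Each $\alpha\in\mathbb F_{64}$ with $\alpha^8+\alpha\neq0$ splits completely in every step. There are $64-8=56$ such $\alpha$, because $\alpha^8+\alpha=0$ exactly on $\mathbb F_8$. Over such a place, $x_j$ takes a value $\beta$ with $\beta^8+\beta\neq0$, and $\beta^8/(\beta^7+1)$ is a value $c$ for which $T^8+T=c$ has $8$ distinct roots in $\mathbb F_{64}$ that again lie outside $\mathbb F_8$. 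Induction then gives $N_j=56\cdot8^j$ rational places. These are distinct from $P_\infty$ because $x_0$ is finite at them.
\item The genus bound $g_j<8^{j+1}$ follows from the explicit genus formula in \cite{GarciaStichtenoth96}, obtained from Hurwitz's formula with wild ramification only over the totally ramified places. That formula gives $g_j\leq q^{j+1}=8^{j+1}$; I would confirm that the inequality is strict at the indices used, or else weaken the later use to $\leq$.
\end{enumerate}

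\emph{Effective computation.} For bases of $\mathcal L(aP_\infty)$ and their evaluations, I would cite the algorithm of \cite{ShumEtAl01}, which runs in time polynomial in the code length for this tower. The main point to check is that the pole degree $a=O(N_j)$ and the fixed field $\mathbb F_{64}$ fall within that algorithm's stated regime. I expect this to be the main obstacle, together with the splitting check in the second property above, since both rest on details of the cited works rather than on a self-contained argument.
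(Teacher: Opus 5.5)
Your proposal is correct and takes the same route as the paper, which states this as a cited fact. Your zero-count argument is complete, and the splitting step you assert holds: $c=\beta^{9}/(\beta^{8}+\beta)$ is a norm divided by a nonzero trace, so $c\in\mathbb F_8^{*}$, the equation $T^8+T=c$ has eight roots in $\mathbb F_{64}\setminus\mathbb F_8$, and Kummer's theorem then gives complete splitting.

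The genus hedge is unnecessary, since the Garcia--Stichtenoth formula gives $g_j=(8^{(j+1)/2}-1)^2$ for odd $j$ and $g_j=(8^{(j+2)/2}-1)(8^{j/2}-1)$ for even $j$, both strictly below $8^{j+1}$ for every $j\ge0$. Your parenthetical description of the ramification is inaccurate, because ramification also occurs, non-totally, above $x_0=0$; this does not affect the cited formula.
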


\begin{fact}[Quantum masking by a small-bias family]
\label{fact:quantum-small-bias}
Let $(Z_s)_s$ be distributions on $\mathbb F_2^n$ such that for every nonzero binary character $\chi$, $\mathbb E_s|\mathbb E\chi(Z_s)|^2\leq\delta^2$. Let $\rho_{XE}$ be a normalized cq state with $\hmin(X\mid E)\geq h$. Choose $s$ uniformly from the finite index set, independently of $XE$, and sample $Z_s$ independently of $XE$ given $s$. Then
\[
 \bigl\|\rho_{s,X+Z_s,E}
       -\rho_s\otimes\tau_n\otimes\rho_E\bigr\|_1
 \leq\delta\,2^{(n-h)/2},
\]
where $\tau_n$ is uniform on $\mathbb F_2^n$ \cite[Theorem 3.2 and its min-entropy formulation, arXiv:0706.2606v2]{FehrSchaffner08}.
\end{fact}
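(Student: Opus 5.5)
The plan is to prove the bound one seed at a time and average at the end, using a weighted Hilbert--Schmidt (collision) estimate in the style of the quantum leftover-hash argument. Because $s$ is uniform and independent of $XE$,
\[
 \bigl\|\rho_{s,X+Z_s,E}-\rho_s\otimes\tau_n\otimes\rho_E\bigr\|_1
 =\mathbb E_s\bigl\|\rho_{X+Z_s,E}-\tau_n\otimes\rho_E\bigr\|_1 .
\]
So it suffices to bound, for each fixed $s$, the operator $A^{(s)}:=\rho_{X+Z_s,E}-\tau_n\otimes\rho_E$. Then I apply Jensen/Cauchy--Schwarz over $s$, so that only the averaged bias $\mathbb E_s|\mathbb E\chi(Z_s)|^2\le\delta^2$ enters.

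\textbf{Step 1 (weighted $L_2$ reduction).} From $\hmin(X\mid E)\ge h$, fix a normalized $\sigma_E$ with $\rho_x\preceq2^{-h}\sigma_E$ for all $x$. If $\sigma_E$ is not full rank, restrict to its support, which contains the support of every $\rho_x$; alternatively, perturb and take a limit. Put $\Sigma:=I_{2^n}\otimes\sigma_E$. The standard inequality $\|A\|_1\le\sqrt{\Tr\Sigma}\,\|\Sigma^{-1/4}A\Sigma^{-1/4}\|_2$ then gives
\[
 \|A^{(s)}\|_1^2\le2^n\sum_{y}\Tr\bigl[(\sigma_E^{-1/4}A^{(s)}_y\sigma_E^{-1/4})^2\bigr].
\]

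\textbf{Step 2 (Fourier expansion).} Write $A^{(s)}_y=\sum_x\Pr[Z_s=y-x]\rho_x-2^{-n}\rho_E$. For $k\in\mathbb F_2^n$, set $\hat\rho(k):=\sum_x(-1)^{k\cdot x}\rho_x$. The Fourier coefficient of $y\mapsto A^{(s)}_y$ at $k$ is then $\hat\rho(k)\,\mathbb E\chi_k(Z_s)$ for $k\ne0$, and it vanishes at $k=0$, since the subtracted term is exactly the $k=0$ component. Parseval, applied entrywise after conjugating by $\sigma_E^{-1/4}$, gives
\[
 \sum_y\Tr\bigl[(\sigma_E^{-1/4}A^{(s)}_y\sigma_E^{-1/4})^2\bigr]
 =2^{-n}\sum_{k\ne0}|\mathbb E\chi_k(Z_s)|^2\,T_k,
 \qquad T_k:=\Tr\bigl[(\sigma_E^{-1/4}\hat\rho(k)\sigma_E^{-1/4})^2\bigr].
\]
The quantities $T_k$ do not depend on $s$. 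Averaging over $s$ and using the hypothesis termwise therefore bounds the average of the right side by $2^{-n}\delta^2\sum_kT_k$. Parseval in the other direction gives $\sum_kT_k=2^n\sum_x\Tr[(\sigma_E^{-1/4}\rho_x\sigma_E^{-1/4})^2]$.

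\textbf{Step 3 (collision versus min-entropy).} The remaining task is to show
\[
 \sum_x\Tr\bigl[\rho_x\sigma_E^{-1/2}\rho_x\sigma_E^{-1/2}\bigr]\le2^{-h}.
\]
This says that the conditional collision entropy relative to $\sigma_E$ is at least the min-entropy relative to the same $\sigma_E$. It is standard (Renner's thesis, Section~5, and the proof of \cite[Section~7.3.2]{Tom16}), and I would cite it rather than reprove it.

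\textbf{Step 4 (assembly).} Combining Steps 1--3 with Jensen,
\[
 \mathbb E_s\|A^{(s)}\|_1
 \le\sqrt{2^n\,\mathbb E_s\textstyle\sum_y\Tr[\cdots]}
 \le\sqrt{2^n\cdot2^{-n}\delta^2\cdot2^n\,2^{-h}}
 =\delta\,2^{(n-h)/2}.
\]

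The main obstacle is Step 3. The operator inequality $\rho_x\preceq2^{-h}\sigma_E$ does not by itself give the bound termwise, because $\sum_x\Tr(\sigma_E^{-1/2}\rho_x)$ need not be $1$. If a self-contained argument is wanted, I would first try the operator-monotonicity route used in the proof of \cite[Section~7.3.2]{Tom16}, and otherwise fall back on the citation. The Fourier bookkeeping in Step 2 is routine, provided $\sigma_E^{-1/4}$ is applied consistently on both sides.
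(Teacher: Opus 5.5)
Your proof is correct and is essentially the Fehr--Schaffner argument that the paper cites without reproducing: Renner's weighted Hilbert--Schmidt bound, operator-valued Fourier analysis with Parseval, then collision entropy versus min-entropy. The $\sigma_E$ of Step~1 comes from the K\"onig--Renner--Schaffner duality, since the paper defines $\hmin$ through the guessing probability. Step~3 is not an obstacle and needs no citation: with $B_x:=\sigma_E^{-1/2}\rho_x\sigma_E^{-1/2}\succeq0$, the inequality $\rho_x\preceq2^{-h}\sigma_E$ gives $\Tr[\rho_xB_x]\le2^{-h}\Tr[\sigma_EB_x]=2^{-h}\Tr\rho_x$ termwise, so the sum over $x$ is at most $2^{-h}$.
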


\subsection{Simulation security}
\label{sec:simulation-security}

Simulator-based security compares real and ideal executions. We first specify the physical leakage interface and the admissible initial states, then define the comparison and the boundaries between calls.

\subsubsection{Physical leakage interface}
\label{sec:physical-interface}

One device window contains all parallel Magic Square coordinates of one batch. The following interface bounds the communication available to the laboratories during that window.

\begin{definition}[Admissible OT leakage]
  \label{def:ot-admissible-leakage}
Fix a corruption experiment.
\begin{enumerate}[label=\textnormal{(\roman*)},leftmargin=1.9em,itemsep=1pt]
\item \emph{Window and budget.} From the instant the first Magic Square input is supplied until all outputs in the complete block batch have been collected, the laboratories may exchange arbitrary adaptive, bidirectional quantum messages of total cost at most $\ell$ qubits, correlated across blocks and with arbitrary private memory. The budget is global, not renewed per block. The interface then closes permanently.
\item \emph{Accounting.} Classical bits count as computational-basis qubits. Every adaptive control symbol, including direction, routing, message length, continuation, and stopping, is encoded in a self-delimiting classical transcript and charged to the same budget; only a fixed public communication schedule is free. All systems delivered to the corrupted laboratory, its retained memory, and its classical transcript copies enter the corrupted party's side information.
\item \emph{Exclusion.} The interface cannot read honest trusted inputs, coins, or storage. An honest device receives only its prescribed question and charged traffic: Alice's offset, test masks, anchors, selectors, and later transcripts are not separate device inputs.
\end{enumerate}
We write $\ell_{\sf A}$ for the budget in the experiment with corrupted Alice and $\ell_{\sf B}$ for the budget with corrupted Bob.
\end{definition}

\noindent The exclusion is necessary: an interface that could read the honest choice could transmit it, and receiver privacy would be impossible. Later selectors, fresh seeds, CP operations, certified projections, audits, the transfer, and the predicate-proof messages run outside the window; source-dependent disclosures are charged explicitly in the entropy proof.

\subsubsection{Real and ideal executions}

The admissible initial state separates trusted classical storage from the untrusted devices and the adversary's reference.

\begin{definition}[Admissible initial extension]
  \label{def:admissible-extension}
Fix classical party inputs and one corruption. Write $I_H,W_H,K_H$ for honest trusted input, blank workspace, and fresh coins; $D_A,D_B$ for the outer device laboratories; and $P,\Rref$ for corrupt workspace and inaccessible reference.
\begin{enumerate}[label=\textnormal{(\roman*)},leftmargin=1.9em,itemsep=1pt]
\item \emph{State.} An admissible initial extension has
\[
 \begin{aligned}
 \eta^{I_HW_HD_AD_BP\Rref}
 &=\ketbra{i_H}{i_H}\otimes\ketbra{0}{0}^{W_H}
       \otimes\eta^{D_AD_BP\Rref},\\
 \eta^{K_HD_AD_BP\Rref}&=\tau_{K_H}\otimes\eta^{D_AD_BP\Rref},
 \end{aligned}
\]
where $\eta^{D_AD_BP\Rref}$ is arbitrary, subject to the fixed corrupt input.
\item \emph{Computation.} Active instruments are nonuniform QPT on polynomial-size registers, and no operation accesses $\Rref$.
\item \emph{Trusted storage.} Honest trusted storage has no physical leakage interface. Its only outputs are prescribed classical messages and device inputs.
\item \emph{Communication and subprotocols.} Inter-laboratory communication is exactly \Cref{def:ot-admissible-leakage}. CP and ZK operations never access $D_A,D_B,\Rref$ and introduce no outer device communication. They occur outside the outer input--output window; only saved records persist within it. Compiled calls use separate batches satisfying \Cref{def:compiled-device-calls}.
\end{enumerate}
\end{definition}

We use the standard auxiliary-input-uniform quantifier order
\[
  \forall\,\mathsf P^*\ \exists\,\Sim\
  \forall\,\eta^{\Rref},
\]
where one simulator chosen from the adversary works for every admissible extension, uniformly also over the fixed honest input values. All simulators are QPT, with a polynomial bound on every branch.

\begin{definition}[Computational augmented security]
  \label{def:comp-aug-sec}
\begin{enumerate}[label=\textnormal{(\roman*)},leftmargin=1.9em,itemsep=1pt]
\item \emph{Compared states.} $\mathsf{Real}_{\mathcal P_f,\mathsf P^*}(\eta^{\Rref})$ retains
\[
 (\text{corrupt final cq-state},\ \text{honest input/output},\
   \text{public transcript},\ \Rref),
\]
tracing out only honest call-local workspace irrevocably discarded at return; $\mathsf{Ideal}_{f,\Sim}(\eta^{\Rref})$ retains the same systems.
\item \emph{Requirement.} Computational augmented security means
\[
 \begin{gathered}
 \forall\,\mathsf P^*\in\QPT\ \exists\,
 \Sim_{\mathsf P^*}\in\QPT\ \forall\,\eta^{\Rref}:\\
 \mathsf{Real}_{\mathcal P_f,\mathsf P^*}(\eta^{\Rref})
 \approx_c\mathsf{Ideal}_{f,\Sim_{\mathsf P^*}}(\eta^{\Rref}).
 \end{gathered}
\]
For each nonuniform QPT test, its negligible advantage bound is uniform over admissible $\eta^{\Rref}$ and fixed honest inputs.
\item \emph{The device resource.} Both experiments use the same initial call-local honest device share $D_H$. In the ideal experiment, a declared interface $\mathsf{Dev}_H$ allows SIM only the prescribed honest classical device operations and leakage window on that share, preserving its correlations with the corrupt workspace and $\Rref$. No reset, copying, or re-preparation of $D_H$ is allowed.
\item \emph{What SIM receives.} SIM receives the corrupt initial workspace, including its device share, and ideal releases, but no honest trusted input or protected pending state; it acts as $\Sim\otimes\mathrm{id}_{\Rref}$.
\end{enumerate}
All realization claims, including the notation $\mathsf{Ideal}_{f,\Sim}$, are relative to these declared device resources.
\end{definition}

Throughout, a statement that one functionality is realized or reduced to another includes this efficient-simulation requirement: every efficient corrupted strategy has an auxiliary-input-uniform QPT simulator. The BC-to-CP compiler uses the continuing-state contract in \Cref{def:vector-commitment}, realized by our DI source commitment.  The ideal functionalities themselves are interfaces; security is a property of the protocols realizing them.

\subsubsection{Invocation boundaries}

Sequential calls preserve the joint continuing state and use fresh honest coins at each reached invocation.

\begin{definition}[Invocation boundary]
\label{def:ot-call-boundary}
Calls have distinct labels and disjoint execution intervals.
\begin{enumerate}[label=\textnormal{(\roman*)},leftmargin=1.9em,itemsep=1pt]
\item \emph{Admissible state and fresh coins.} At every reached call in every replacement hybrid, and for every classical input $i$ with $\Pr[I=i]>0$,
\[
 \rho_{\mid I=i}\in\mathsf{Adm},\qquad
 \rho_{K,\mathrm{rest}\mid I=i}=\tau_K\otimes\rho_{\mathrm{rest}\mid I=i},
\]
where $\mathsf{Adm}$ comprises \Cref{def:admissible-extension} and the subroutine's device conditions, and $K$ denotes fresh honest call coins.
\item \emph{Reference.} The reference contains all untouched honest outer, scheduler, channel, and future-call registers.
\item \emph{Return interface.} It retains honest input/output and the entire corrupt state, including transcript, leakage, and device residuals. Only irrevocably discarded honest scratch and continuation-inaccessible devices are traced out.
\item \emph{Devices.} Only the current batch is accessible, through its declared device interface; future batches remain protected. At return, the current interface closes, and its retired honest share is discarded in both experiments.
\item \emph{Aborts.} Abort branches are included in the real--ideal comparison. On such branches, the honest parties make no further calls.
\end{enumerate}
\end{definition}

\begin{lemma}[Efficient continuation and polynomial hybrids]
\label{lem:efficient-composition-hybrids}
For joint state ensembles $\rho\approx_c\sigma$ and every QPT channel $\mathcal E$,
\[
 \mathcal E(\rho)\approx_c\mathcal E(\sigma).
\]
For $q=\poly(\lambda)$, hybrids $H_0,\ldots,H_q$, and a QPT test $\mathcal D$, put $p_j=\Pr[\mathcal D(H_j)=1]$. Then
\[
 |p_0-p_q|\leq\sum_{j=1}^q|p_{j-1}-p_j|
 \leq q\max_{1\leq j\leq q}|p_{j-1}-p_j|.
\]
In particular, $\max_j|p_{j-1}-p_j|=\negl(\lambda)$ implies $|p_0-p_q|=\negl(\lambda)$, including arbitrary auxiliary registers.
\end{lemma}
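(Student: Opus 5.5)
The first claim follows from the definition of $\approx_c$ by composing distinguishers. Fix a nonuniform QPT test $\mathcal D$, possibly with quantum advice. Since $\mathcal E$ is a QPT channel, $\mathcal D\circ\mathcal E$ is again a nonuniform QPT test: the circuit sizes add, and the advice of $\mathcal D$ passes through unchanged. The ensembles $\rho$ and $\sigma$ include arbitrary auxiliary registers, and $\mathcal E$ acts as $\mathcal E\otimes\mathrm{id}$ on any register it does not touch. Therefore
\[
 \bigl|\Pr[\mathcal D(\mathcal E(\rho))=1]-\Pr[\mathcal D(\mathcal E(\sigma))=1]\bigr|
 =\bigl|\Pr[(\mathcal D\circ\mathcal E)(\rho)=1]-\Pr[(\mathcal D\circ\mathcal E)(\sigma)=1]\bigr|,
\]
and the right-hand side is negligible by the hypothesis $\rho\approx_c\sigma$. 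If the paper's notion demands that the bound be uniform over admissible extensions $\eta^{\Rref}$, this uniformity carries over directly: the composed test $\mathcal D\circ\mathcal E$ does not depend on $\eta^{\Rref}$.

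For the hybrid inequality, write $p_0-p_q=\sum_{j=1}^q(p_{j-1}-p_j)$ as a telescoping sum. The triangle inequality gives the first bound, and bounding each term by the maximum gives the second.

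For the "in particular" clause, I need to be careful with quantifiers. The term $\max_j|p_{j-1}-p_j|$ must be bounded by a single negligible function $\varepsilon(\lambda)$ that is uniform in $j$; the intended reading is that each adjacent pair is indistinguishable with a common negligible bound. If instead only each individual $j$ is known to be negligible, I would use the standard guessing-index argument. Define $\mathcal D'$ to sample $j\gets_U[q]$ and then run $\mathcal D$ on a sample from $H_{j-1}$ or $H_j$. This needs the hybrids to be uniformly efficiently generable from a common source, which is the case in the replacement hybrids of this paper. Then $\mathcal D'$ distinguishes its two challenge ensembles with advantage at least $|p_0-p_q|/q$. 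Once the maximum is bounded by $\varepsilon(\lambda)$, we get $|p_0-p_q|\le q\,\varepsilon(\lambda)$, and a polynomial times a negligible function is negligible.

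Auxiliary registers cause no difficulty. Each hybrid is a joint state that includes those registers, and the tests act on the whole joint state. The only point needing care is this uniformity over $j$ (equivalently over the hybrid index), and I expect it to be the main, though mild, obstacle. I would state explicitly that the adjacent-hybrid bound holds with one negligible function over all $j$, because each step is obtained from a reduction to a single fixed assumption through an efficient, index-parameterized channel $\mathcal E_j$.
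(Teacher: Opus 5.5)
Your proposal is correct and follows essentially the same route as the paper. The paper likewise composes the test with the channel for the first claim and uses telescoping plus the triangle inequality for the second, carrying the auxiliary registers along. Your index-guessing detour is unnecessary: the hypothesis already bounds $\max_j|p_{j-1}-p_j|$ by a single negligible function of $\lambda$, so $|p_0-p_q|\le q\max_j|p_{j-1}-p_j|$ finishes the argument directly.
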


\begin{proof}
Compose the distinguisher with the channel for the first assertion. The triangle inequality on the telescoped acceptance probabilities gives the second. Both arguments retain arbitrary auxiliary registers.
\end{proof}

The selector argument uses the following elementary marginal identity.

\begin{fact}[Local no-signalling~\cite{Wil17}]
  \label{fact:local-instrument-marginal}
For every $\rho^{ABR}$ and CPTP map $\mathcal N:B\to B'$,
\[
 \Tr_{B'}[(\mathrm{id}_{AR}\otimes\mathcal N)(\rho)]
 =\rho^{AR}.
\]

\end{fact}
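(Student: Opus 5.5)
The plan is to reduce the identity to the trace-preserving property of $\mathcal N$, first on product operators and then on all of $\rho^{ABR}$ by linearity. Since $\mathcal N$ is CPTP, it satisfies $\Tr_{B'}\circ\mathcal N=\Tr_B$ as linear functionals on operators of $B$. This is the only property of $\mathcal N$ I will use; complete positivity plays no role in the identity itself.

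\emph{Product operators.} Fix operator bases $\{X_i\}$ on $AR$ and $\{Y_j\}$ on $B$. Then $\rho^{ABR}=\sum_{i,j}c_{ij}\,X_i\otimes Y_j$ for some complex coefficients $c_{ij}$, after reordering tensor factors. For each product term,
\[
 \Tr_{B'}\bigl[(\mathrm{id}_{AR}\otimes\mathcal N)(X_i\otimes Y_j)\bigr]
 =X_i\,\Tr_{B'}\mathcal N(Y_j)
 =X_i\,\Tr Y_j
 =\Tr_B(X_i\otimes Y_j).
\]
The first equality holds because $\mathrm{id}\otimes\mathcal N$ acts factorwise on product operators, and partial trace over $B'$ acts only on the second factor. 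The second equality is trace preservation.

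\emph{Extension by linearity.} Both $\Tr_{B'}\circ(\mathrm{id}_{AR}\otimes\mathcal N)$ and $\Tr_B$ are linear maps. They agree on a spanning set, so they agree on $\rho^{ABR}$. Since $\Tr_B\rho^{ABR}=\rho^{AR}$, this gives the claim.

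\emph{Alternative via Kraus operators.} One could instead write $\mathcal N(Y)=\sum_kK_kYK_k^\dagger$ with $\sum_kK_k^\dagger K_k=I_B$. Cyclicity of the partial trace in the $B'$ factor then gives the same conclusion directly.

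There is no genuine obstacle in this argument. The one point needing care is that the identity must hold on the full operator, including coherences between $AR$ and $B$. This is why I argue on an operator basis rather than only on product states. The paper needs this form because the reference $\Rref$ may be entangled with the device share.
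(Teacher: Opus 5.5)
Your argument is correct: trace preservation gives $\Tr_{B'}\circ\mathcal N=\Tr_B$ on all operators of $B$, and checking the identity on a product operator basis and then extending by linearity correctly covers the coherences between $AR$ and $B$. The paper gives no proof of its own and only cites \cite{Wil17}. The standard argument there is essentially yours: trace preservation plus linearity, or equivalently the Kraus-operator computation you mention.
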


\subsection{Oblivious transfer}
\label{sec:ot-interface}

The ideal functionality delivers the selected bit to the receiver and no value to the sender. Its delivery rule permits abort after the corrupted party receives its prescribed output.

\begin{definition}[Ideal functionality for oblivious transfer]
\label{def:idealOT}
For an ordered sender/receiver pair, $\ideal(\mathsf{OT})$ has
\[
 \mathcal I_S=\{0,1\}^2,\quad \mathcal I_R=\{0,1\},\qquad
 ((S_0,S_1),d)\longmapsto(\emptystring,S_d).
\]
$\perp$ is not an input. The simulator may issue a global abort at any time before honest delivery, including before corrupt input; then every honest party receives $\perp$ and the call terminates. After valid inputs, release the corrupt prescribed output to the simulator first. For the honest-party set $H$, subsequently deliver
\[
 \mathsf{out}_H=
 \begin{cases}
 \perp^H,&\text{abort},\\
 (\emptystring,S_d)|_H,&\text{otherwise}.
 \end{cases}
\]
Thus a corrupt receiver may learn $S_d$ before suppressing the sender's empty output. Alice is the OT sender below.
\end{definition}

\subsection{Commit-and-prove functionality}
\label{sec:cp-layer}

The outer OT uses a commit-and-prove functionality to store immutable strings and certify owner-authorized predicates on them. Public-output commitments provide an interface-level precedent for linking protocol components through commitments~\cite{CDR16}. \Cref{thm:masked-cp} realizes this functionality from qOWFs and the source commitment of \Cref{def:vector-commitment}.

\begin{definition}[Commit-and-prove functionality]
\label{def:cp}
$\ideal(\mathsf{CP})$ is a classical functionality that maintains polynomially many immutable bit strings of prescribed public polynomial lengths. Each record has a unique session and record identifier, a designated owner, and a designated verifier. Identifiers and roles are part of every command and transcript and are left implicit in the notation.
\begin{enumerate}[label=\textnormal{(\roman*)},leftmargin=1.9em,itemsep=1pt]
\item \emph{Commitment.} On the owner's first $(\mathsf{commit},M)$ command, check that $M$ is a bit string of the prescribed public length; otherwise abort the session without creating a record. On success, store $M$ and issue a receipt consisting of the identifier, the roles, and the length.
\item \emph{Predicate certification.} Only the common owner may issue $(\mathsf{check},\mathbf a,P,y)$, where $\mathbf a=(a_1,\ldots,a_r)$ is an ordered nonempty tuple of distinct previously committed record identifiers from this session, all with that owner and the same designated verifier. Every record must already have its prescribed receipt. Check the tuple, public lengths, and authorized predicate schedule. The Boolean circuit $P$ and the tuple length obey their public polynomial bounds. Publish the request identifier, $\mathbf a$, $P$, and $y$. If $P(M_{a_1}\|\cdots\|M_{a_r},y)=1$, issue a success certificate subject to the delivery rule below; otherwise abort the session.

\item \emph{Selective opening.} $\mathsf{Open}(M_S)$ is the request $(\mathsf{check},P_S,(S,z))$ with $P_S(M;S,z)=[\pi_S(M)=z]$, where $\pi_S(M)$ is the substring of $M$ at the positions in $S$. Any two successfully certified openings must agree on all positions common to their requested sets in the same immutable record.
\item \emph{Rejections.} Wrong-role, cross-session, duplicate-commitment, unknown or unreceipted handles, repeated handles, wrong-owner or wrong-verifier tuples, malformed, and out-of-schedule commands cause session abort. Apart from receipts, only the contents of an owner-authorized public request and its success or $\bot$ are released. No record is ever updated.
\item \emph{Delivery with abort.} The adversary schedules the delivery of receipts, certificates, and $\bot$, and receives its own releases first. It may send $\bot$ at any time. On abort, every honest party receives $\bot$, and the session terminates; no further operation on any record in that session is permitted. No fairness is required~\cite[Chapter~7]{Goldreich04}.
\end{enumerate}
\end{definition}

\noindent Four consequences are used repeatedly. A receipt is no function of the contents of $M$. A failed check does not certify that the predicate is false, and success is impossible when $P(M,y)=0$. An opening request makes $(S,z)$ public even if the session aborts, and only a successful certificate authenticates $z$; no opening randomness and no unselected entry is revealed. The verifier learns nothing about an honest owner's record without an owner-initiated request, and predicate checks remain available after successful projections. Honest OT controllers authorize only the requests that the protocol prescribes, and the functionality never acts on a device register or on the reference.

Write $\mathsf{Commit}(M_a)$ for creation of record $a$ and $\mathsf{Check}((a_1,\ldots,a_r);P,y)$ for an owner-authorized check on that ordered tuple. When $r=1$, $\mathsf{Check}(M_a;P,y)$ is shorthand for the same operation. Identifiers, rather than hidden contents, locate records. Every handle and stored value is immutable after its receipt; a tuple check does not recommit or update any component. A simulator may log the corrupted owner's classical commitment inputs; there is no public read or extraction command for an honest owner's record.

The commitment and predicate simulators must preserve the complete corrupted continuation and untouched reference, uniformly over all admissible inputs. All records of one OT call, including opposite-owner records and delayed projections, belong to one CP session and share the stateful simulation contract proved in \Cref{thm:masked-cp}. The externally compared continuation excludes an honest record's internal commitment seeds and proof coins. They remain encapsulated behind the prescribed operations; protected outer state and other pending honest sessions remain part of the continuing reference.

\begin{fact}[Post-quantum commitments from one-way functions]
\label{fact:pqowf-np-anchors}
Post-quantum one-way functions imply classical-message statistically binding, quantum-computationally hiding bit commitments, by the pseudorandom-generator construction of H{\aa}stad, Impagliazzo, Levin, and Luby~\cite{HILL} and Naor's commitment~\cite{Naor91}; see \cite[Section~3.1.1 and Footnote~6]{OT2} for the post-quantum instantiation.
\end{fact}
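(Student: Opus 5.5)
This statement is a cited fact, so the plan is to assemble it from the two cited constructions and check that both survive quantum adversaries with the stated message types.

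\emph{Step 1: a post-quantum PRG.} Start from a post-quantum one-way function $f$. Apply the H{\aa}stad--Impagliazzo--Levin--Luby construction~\cite{HILL} to obtain a length-tripling generator $G:\{0,1\}^n\to\{0,1\}^{3n}$. The HILL security proof is a classical black-box reduction: any distinguisher for $G$ yields an inverter for $f$ that calls the distinguisher as an oracle. I will check that the reduction makes only non-adaptive, classical-input oracle calls to the distinguisher, plus hybrid steps. Under that check, a QPT distinguisher with quantum advice can be run as such an oracle, repeating it on fresh copies of the polynomial-size advice as needed. This gives a QPT inverter, contradicting post-quantum one-wayness. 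The main obstacle is to confirm that no step of HILL uses rewinding or state reuse that fails for quantum distinguishers. I expect none does, since the reduction treats the distinguisher as a one-shot predicate and the advice is reusable (it is a fixed classical or copyable description per input length in the nonuniform model). In any case this is exactly the point addressed in~\cite[Section~3.1.1]{OT2}, which I would cite rather than reprove.

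\emph{Step 2: Naor's commitment.} The receiver sends a uniform $r\in\{0,1\}^{3n}$. To commit to $b$, the sender picks $s$ and sends $c=G(s)\oplus b\cdot r$; the opening is $(b,s)$. All messages are classical.

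\emph{Step 3: statistical binding.} Binding fails only if $G(s)\oplus G(s')=r$ for some $s,s'$. There are at most $2^{2n}$ such XORs among $2^{3n}$ values of $r$, so this happens with probability at most $2^{-n}$ over $r$. The bound is information-theoretic, so it holds against unbounded senders.

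\emph{Step 4: quantum-computational hiding.} For any receiver message $r$, possibly chosen by a quantum receiver holding side information, replace $G(s)$ by a uniform $u$; then $c$ is uniform regardless of $b$. Distinguishing $b=0$ from $b=1$ therefore yields a QPT PRG distinguisher with the receiver's residual state as advice, contradicting Step 1. Both hybrid steps are efficient postprocessings, so they are covered by \Cref{lem:efficient-composition-hybrids}. Any such distinguisher is a single-shot measurement on a classical challenge, so no rewinding is needed.
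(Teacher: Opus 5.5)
Your proposal follows the paper's route: you cite \cite{HILL} and \cite[Section~3.1.1]{OT2} for the post-quantum PRG, and your binding count ($2^{2n}$ bad receiver strings among $2^{3n}$) and two-hybrid hiding argument are the ones the paper gives in \Cref{lem:naor-auxiliary-setup}. One small correction to Step~1: quantum advice cannot be copied, so the HILL inverter should instead receive polynomially many copies of the distinguisher's advice as its own nonuniform advice, which is what your ``fresh copies'' phrasing requires; the oracle calls also need not be non-adaptive, since each call is a separate run on a classical input.
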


\Cref{app:concrete-primitives} gives the concrete commitments and zero-knowledge proofs used by the compiler and the source protocol.

\subsection{Commitment interfaces}
\label{sec:commitment-interfaces}

The reductions use two commitment interfaces: an extractable string commitment for preprocessing, and a bit commitment with simulation against either party.

\subsubsection{Extractable string commitments}

We identify $q$-bit strings with elements of $\Ftwo^q$ for the code algebra. The source interface opens the entire string; the CP compiler in \Cref{sec:bc-cp-compiler} supplies certified substring openings. No succinctness requirement is imposed on per-position opening proofs.

Fix a public length $1\le q\le\poly(\lambda)$. A string commitment to $u\in\Ftwo^q$ has a completed-commitment public transcript $\tau$ and a polynomial-time native-opening predicate $\Ver_{\rm str}(\tau,u,w)$. The witness $w$ is a polynomial-length classical string. The predicate includes the session, syntax, and consistency with every field already opened. The honest receiver's retained verification record is determined by $\tau$.

\begin{definition}[Extractable string commitment]
\label{def:vector-commitment}
A correct string commitment satisfies the following properties in either role assignment, uniformly over the admissible auxiliary and reference states of \Cref{def:comp-aug-sec}.
\begin{enumerate}[label=\textnormal{(\roman*)},leftmargin=2em,itemsep=3pt]
\item \emph{State-preserving extraction.} For every QPT committer, one QPT simulator returns $u^*\in\Ftwo^q\cup\{\perp\}$ at commitment completion. Omitting this private output, its joint corrupt state, public transcript, honest retained verification state, and untouched reference are computationally indistinguishable from the real ones.
\item \emph{Agreement with all native openings.} For that same simulator,
\begin{equation}
 \Pr_{\rm sim}\bigl[\exists u,w:\
 \Ver_{\rm str}(\tau,u,w)=1\ \land\ u\ne u^*\bigr]
 \le\varepsilon_{\rm ao}(\lambda)=\negl(\lambda).
 \label{eq:vector-all-openings}
\end{equation}
Every string differs from $\perp$.
\item \emph{Joint hiding.} For every QPT receiver, commitments to any prescribed $u_0,u_1\in\Ftwo^q$ give computationally indistinguishable receiver--reference views through commitment completion. The bound is uniform over these messages and admissible auxiliary states. The comparison may retain a common classical label specifying both candidate messages, provided that label belongs to the admissible pre-input state and is independent of fresh game questions, or is a function of receiver information already available or explicitly charged in the hiding proof. In particular, one may compare commitments to a fresh random $\mu$ and to $0^q$ with the same $\mu$ label on both sides. The honest native witness is private.
\item \emph{Honest native witnesses.} Against every receiver, on each continuing branch an honest committer can compute $w$ from its saved classical state with $\Ver_{\rm str}(\tau,u,w)=1$, without a further device operation.
\end{enumerate}
The guarantees permit sequential completed phases and subsequent QPT continuations. Other pending honest states remain untouched during a source simulation; retired honest device shares remain inaccessible.
\end{definition}

\subsubsection{Bit commitments with simulation for both parties}

For a bit commitment, the receiver's simulator must also reproduce the opening after learning the committed bit.

\begin{definition}[Bit commitment with simulation for both parties]
\label{def:bc-two-sided}
A correct bit commitment has a commitment receipt and one opening attempt, after which the session closes. It is simulation-secure if, uniformly over the admissible auxiliary and reference states:
\begin{enumerate}[label=\textnormal{(\roman*)},leftmargin=2em,itemsep=3pt]
\item For every QPT committer, one QPT simulator extracts a bit $b^*$ by the receipt. Omitting this private output, the joint corrupt state, retained honest verification state, and reference are computationally indistinguishable from the real state. Any subsequent accepted opening has value $b^*$ except with negligible probability.
\item For every QPT receiver, one QPT simulator runs the commitment phase without the honest bit. Given that bit only at opening, it simulates the opening and the receiver's continuing state, jointly with the reference and the prescribed honest input--output interface.
\end{enumerate}
The guarantees include rejection and withholding, permit sequential completed phases and delayed opening, and use only the declared honest device interface of \Cref{def:comp-aug-sec}.
\end{definition}

\subsection{Device invocations}
\label{sec:device-invocations}

In a compiled protocol, a source commitment supplies a random string for masking later classical records. Call the committed random string, its native witness, and its allocation counter a \emph{mask bank}. The following conditions specify the device access and retirement rules for such invocations.

\begin{definition}[Device invocations in the compiled protocol]
\label{def:compiled-device-calls}
An admissible compiled schedule satisfies the following conditions.
\begin{enumerate}[label=\textnormal{(\roman*)},leftmargin=1.9em,itemsep=3pt]
\item \emph{Batches and budgets.} Each invocation has a distinct single-use batch $\mathcal B_j$, fresh controller questions, and a budget $\ell_j$ covering all its coordinates and adaptive routing. A global budget $\ell$ requires
\[
 \operatorname{cost}(\mathcal L_j)\le\ell_j,
 \qquad \sum_j\ell_j\le\ell.
\]
\item \emph{Reached states.} Every reached invocation, including in replacement hybrids, satisfies its input and device hypotheses. Arbitrary initial entanglement is permitted subject to these hypotheses; untouched outer and future-device registers are retained.
\item \emph{Timing.} Classical phases lie outside device windows. Every equivocation or proof-simulation segment completes before any other outer operation or device invocation.
\item \emph{Isolation and private records.} Distinct batches never communicate, also within one laboratory. Honest masks, native witnesses, and proof coins are private behind their prescribed interfaces; callers cannot read or alter unused masks.
\item \emph{Retirement.} At receipt, each honest source receiver discards its questions, answers, labels, and device share. Banks keep their private classical witnesses only until the CP session's terminal delivery or abort. Then all honest call-local masks, raw tables, witnesses, and coins are erased, retaining only prescribed input--output and public transcript. Retired honest devices are never reused.
\end{enumerate}
\end{definition}

\subsection{Proof conventions}
\label{sec:proof-conventions}
Three conventions hold in every proof. \emph{(i) Subnormalized branches.} For an event $\mathcal E$, $\rho_{\mathcal E}$ denotes its subnormalized branch, of trace $\Pr(\mathcal E)$. Error, entropy, and hashing bounds are stated for such branches; they are therefore absolute and hold at every acceptance probability. Abort branches are retained, and renormalization is stated explicitly where it occurs. \emph{(ii) Fixed statistical hybrids.} Every information-theoretic step takes place inside one fixed experiment: the ideal CP execution for OT, and the equivocal-commitment executions $\Eqa,\Eqb$ of \Cref{sec:di-bc} for BC. Inside such an experiment, the unique value of a statistically binding commitment may be assigned by an unbounded function of the classical transcript. This value is an analysis register: no algorithm reads it, and it never enters a computational comparison. Only the final efficient real/ideal comparison is transferred to the concrete implementation; computational errors are added to the statistical bound and are never multiplied by a guessing factor. \emph{(iii) Totalization.} A variable undefined on a stopped branch receives a fixed default in a private analysis register; the compared views contain only the actual stopped transcript.

\subsection{Glossary of recurring terms}
\label{sec:glossary}

\Cref{tab:glossary} collects the terms used by the three reductions.
\begin{table}[!htbp]
\centering
\caption{Recurring terms for the common device experiment.}
\label{tab:glossary}
\normalsize
\setlength{\tabcolsep}{5pt}
\renewcommand{\arraystretch}{1.1}
\begin{tabular}{>{\raggedright\arraybackslash}p{0.26\textwidth}
                >{\raggedright\arraybackslash}p{0.66\textwidth}}
\hline
\textbf{Term} & \textbf{Meaning}\\
\hline
coordinate, batch & One Magic Square game; all coordinates of one
single-use invocation.\\
device window & The interval in which a batch receives its questions
and returns its answers (\Cref{def:ot-admissible-leakage}).\\
symbol, microblock & One of $m$ field coordinates; one of its $k=6$
groups of $N_0$ device coordinates.\\
test set, audit & Alice's hidden set $T_A$ of openly tested coordinates;
Bob's one hidden audit coordinate in each microblock.\\
label, selector & Bob's private bit $\beta_{ij}$ and the unaudited
coordinates $J_{ij}$ of the corresponding question class. The six bits
form one $\Ffield$-valued label $\beta_i$.\\
anchor & A late independent flag $C_x=0$, on which Alice replaces her
record by $(X_x,A_x)=(0,000)$.\\
erased symbol & An index $i\in\mathcal E$ for which all six selectors are empty;
distinct from a coordinate anchor.\\
source & Alice's raw device answers on the frozen selectors.\\
receipt, certified projection & CP's acknowledgement containing only
identifiers, roles, and length; a public substring with a success
certificate (\Cref{def:cp}).\\
code mask, mode & The message $M=\beta+R$; the prescribed secret of
$R$ in the string commitment.\\
good audit & An accepted audit whose underlying whole-batch discrepancy
count is within the outer code's decoding bound.\\
independent-mask experiment & An analysis experiment with a uniform
$M\in\Ffield^m$ independent of the preceding view.\\
call-local & Belonging to one invocation and discarded when it returns.\\
totalized & Assigned a private default on a stopped branch, while the
compared transcript retains its actual abort.\\
\hline
\end{tabular}
\end{table}

\section{Device-independent oblivious transfer from commit-and-prove}
\label{sec:leakage-ot}

We first construct OT with immutable classical records and certified predicates. The code family of \Cref{lem:multiplicative-codes} links linearly many private selector bits to one receiver choice. A global audit and one global decoder tolerate errors across the batch. \Cref{sec:bc-cp-compiler,sec:di-bc} realize the required classical functionality; the present section uses its ideal interface. We use the OT functionality and physical interface of \Cref{def:idealOT,def:ot-admissible-leakage}. After specifying honest faults and execution conventions, we construct the codes and mask, choose the parameters, and give the transfer protocol and its proof.

\subsection{Honest-device faults and execution conventions}
\label{sec:ot-notions}

We specify the honest-device noise assumption and how the outer OT uses the interfaces of \Cref{sec:proof-tools-security}.

\paragraph{Honest-device faults.}
Honest controllers encode an active row as $(a_0,a_1,a_0\oplus a_1)$ and a column as $(b_0,b_1,1\oplus b_0\oplus b_1)$ from the two recorded bits. The honest coordinate implementations are independent, though they need not be identical. Normalization preserves every winning Magic-Square record. The fault rate concerns the resulting parity-correct records: for every question pair,
\begin{equation}
 \Pr[\MS(x,y,A_i,B_i)=0\mid X_i=x,Y_i=y]\leq q_{\rm dev}.
 \label{eq:honest-device-fault}
\end{equation}
Classical computation, communication, storage, commitments, and postprocessing have their ordinary correctness guarantees. Anchors contribute the known bit zero. Independence is used only for honest correctness; security permits the stated arbitrary entangled attacks.

Under the schedule, take $q_{\rm dev}\le\nu/2$. Chernoff's bound then gives the consequence used by the proofs:
\begin{equation}
 \Pr\!\left[\bigl|\{a\in[L]:
  \MS(\Xraw_a,Y_a,\Araw_a,B_a)=0\}\bigr|>\nu L\right]
 \le\negl(\lambda).
 \label{eq:honest-noise-promise}
\end{equation}
Trusted test and audit coins are fresh and are not used during the physical execution.

We use CP commitment, owner-authorized certified projections, and predicate checks from \Cref{prot:cp-layer}. This classical layer does not act on an outer device register. Device actions assigned to the same window are concurrent local instructions. A global abort returns $\perp$ to every honest party. At delivery or abort, an honest party retires all call-local devices and discards private call-local scratch, retaining its input--output interface and public transcript; untouched outer state remains protected. A corrupted receiver may request abort after learning its selected bit; the simulator relays it before honest-sender delivery under \Cref{def:idealOT}. Otherwise Alice outputs $\emptystring$.

\subsection{Codes, choice masking, and parameters}
\label{sec:unified-parameters}

The transfer and source commitment use the same codes, selectors, and audit thresholds. A batch has $m$ code coordinates, each supplied by $k=6$ physical groups of a fixed size $N_0$. The field is $\Ffield$; $q$ denotes the number of secret slots, with $q=1$ for bit OT and $q$ binary slots for a source commitment. We construct the codes and choice mask from the tools of \Cref{sec:coding-masking-tools}, then choose the parameters.

\paragraph{Codes with secret evaluation points.}

For the tower in \Cref{fact:effective-ag-tower}, use the admissible lengths
\[
 m_j=\left\lfloor\frac{100N_j}{101}\right\rfloor,\qquad j\geq1.
\]
Then $m_1=443$, $m_j+\lfloor m_j/100\rfloor\leq N_j$, $g_j<0.146m_j$, and $m_{j+1}<9m_j$. For the genus bound, write $x=8^j\geq8$ and use $m_j>5600x/101-1$ and $8x<0.146(5600x/101-1)$. For the length bound, $m_{j+1}\leq8m_j+7<9m_j$. Let $\mathsf{Next}(a)$ be the first admissible length at least $\max\{443,a\}$. Thus
\[
 \max\{443,a\}\leq\mathsf{Next}(a)<9\max\{443,a\}.
\]

Fix an admissible $m$, $1\leq q\leq\lfloor m/100\rfloor$, $q$ secret places $Q_1,\ldots,Q_q$, and $m$ tail places $P_1,\ldots,P_m$. Put $G=\lfloor0.35m\rfloor P_\infty$ and
\[
 \CodeB=\CodeD=\operatorname{ev}_{P}\mathcal L(G),\qquad
 \CodeA=\operatorname{ev}_{P}\mathcal L(2G).
\]
For each of these codes, $\secmap$ evaluates the corresponding function at $(Q_1,\ldots,Q_q)$. A subscript $u$ denotes the fiber of this map over $u\in\Ffield^q$; subscript zero denotes its kernel. The product $v*w$ is coordinatewise multiplication.

\begin{lemma}[Multiplication-compatible codes]
\label{lem:multiplicative-codes}
Tail evaluation is injective, all three secret maps are onto, and
\[
 \CodeB*\CodeD\subseteq\CodeA,\qquad
 \secmap(b*d)=\secmap(b)*\secmap(d).
\]
The distances satisfy
\[
 \begin{split}
 d(\CodeB_0^\perp),d(\CodeD_0^\perp)
   &\geq\deg G-q-2g+2>0.047m,\\
 d(\CodeD)&\geq m-\deg G\geq0.65m,\qquad
 d(\CodeA)\geq m-2\deg G\geq0.30m.
 \end{split}
\]
Code membership, secret recovery, uniform sampling in any secret fiber, and recovery of a codeword from all but at most $\lfloor m/100\rfloor$ erased coordinates are polynomial-time operations.
\end{lemma}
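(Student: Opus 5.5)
The plan is to prove each clause directly from \Cref{fact:effective-ag-tower}, using the numerical facts already recorded for admissible lengths: $g=g_j<0.146m$, $q\le\lfloor m/100\rfloor$, and $m+\lfloor m/100\rfloor\le N_j$. The last fact guarantees that the $q$ secret places and $m$ tail places can be chosen distinct, rational, and different from $P_\infty$. Throughout, $\deg G=\lfloor0.35m\rfloor$, so $0.35m-1<\deg G\le0.35m$.

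\emph{Injectivity, products, and primal distances.} These are all instances of the zero-counting bound in \Cref{fact:effective-ag-tower}. A nonzero $f\in\mathcal L(aP_\infty)$ has at most $a$ zeros among the tail places. Since $\deg 2G\le0.70m<m$, tail evaluation is injective on $\mathcal L(2G)\supseteq\mathcal L(G)$. The same count shows that a nonzero codeword vanishes on at most $\deg G$ (resp.\ $2\deg G$) coordinates, which gives $d(\CodeD)\ge m-\deg G\ge0.65m$ and $d(\CodeA)\ge m-2\deg G\ge0.30m$. For multiplication, $f,h\in\mathcal L(G)$ implies $fh\in\mathcal L(2G)$. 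Evaluation at any place, tail or secret, is a ring homomorphism, so $\operatorname{ev}_P(f)*\operatorname{ev}_P(h)=\operatorname{ev}_P(fh)$ and $\secmap(b*d)=\secmap(b)*\secmap(d)$. Injectivity makes $\secmap$ well defined on codewords.

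\emph{Surjectivity of the secret maps.} The kernel of $f\mapsto(f(Q_1),\ldots,f(Q_q))$ on $\mathcal L(G)$ is $\mathcal L(G-\sum_iQ_i)$. Its degree satisfies $\deg G-q>0.35m-1-0.01m>0.292m>2g-2$. Riemann--Roch therefore holds with equality for both divisors, so the codimension is exactly $q$ and the map is onto $\Ffield^q$. The same holds on $\mathcal L(2G)$.

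\emph{Dual distance of the zero-secret codes.} This is the step where the constants are tight, so I expect it to be the main obstacle. I would use an elementary Riemann--Roch argument and avoid residue-code duality. Put $G'=G-\sum_iQ_i$, so that $\CodeB_0=\CodeD_0=\operatorname{ev}_P\mathcal L(G')$. A nonzero dual word supported on a set $S$ of tail places, with $|S|=w$, means the evaluation map $\mathcal L(G')\to\Ffield^S$ is not onto. Suppose instead that $w\le\deg G'-2g+1$. Then $\deg(G'-\sum_{P\in S}P)\ge2g-1>2g-2$, so Riemann--Roch gives equality for both $G'$ and $G'-\sum_{P\in S}P$. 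Hence the kernel $\mathcal L(G'-\sum_{P\in S}P)$ has codimension exactly $w$, and the evaluation map is onto, a contradiction. Therefore
\[
 d(\CodeB_0^\perp)=d(\CodeD_0^\perp)\ge\deg G-q-2g+2 .
\]
Numerically, $\deg G-q-2g+2>0.35m-1-0.01m-0.292m+2>0.048m>0.047m$. I will recheck this arithmetic against $g<0.146m$ when writing out the proof.

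\emph{Algorithms.} By \cite{ShumEtAl01}, bases of $\mathcal L(G)$ and $\mathcal L(2G)$ and their evaluations at all $m+q$ places are computable in polynomial time. All remaining tasks then reduce to linear algebra over $\Ffield$:
\begin{itemize}
\item membership is a rank test;
\item secret recovery is solving for coefficients and then evaluating at the $Q_i$;
\item uniform sampling in a fiber is a particular solution plus a uniform element of the kernel $\mathcal L(G')$, computed as a null space;
\item erasure recovery works because at most $\lfloor m/100\rfloor<0.30m\le d(\CodeA)$ erasures leave a unique codeword consistent with the remaining coordinates, found by solving the restricted linear system.
\end{itemize}
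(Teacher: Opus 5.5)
Your proposal is correct and follows essentially the same route as the paper: the zero bound on $\mathcal L(aP_\infty)$ for injectivity and the two primal distances, Riemann--Roch with equality on $\mathcal L(G)$ and $\mathcal L(G-\sum_aQ_a)$ for surjectivity, and surjectivity of evaluation of $\mathcal L(G-\sum_aQ_a)$ on any at most $\deg G-q-2g+1$ tail places for the dual distance. The algorithmic claims then reduce to linear algebra over the computed bases, and your arithmetic giving $\deg G-q-2g+2>0.048m$ matches the paper's bound.
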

\begin{proof}
The pole degrees $\deg G$ and $2\deg G$ are smaller than $m$, so the zero bound in \Cref{fact:effective-ag-tower} proves injectivity and the two code-distance bounds. Write $S=\sum_{a=1}^q Q_a$. Since $\deg G-q>2g-2$, Riemann--Roch gives $\dim\mathcal L(G)-\dim\mathcal L(G-S)=q$. This proves surjectivity; the same argument applies to $2G$. Multiplying two functions in $\mathcal L(G)$ gives a function in $\mathcal L(2G)$, and evaluation respects multiplication.

The zero fiber is the evaluation of $\mathcal L(G-S)$. On any $r\leq\deg G-q-2g+1$ tail places, its evaluation map is onto, again by the Riemann--Roch dimension formula. Consequently no nonzero dual word has support of size at most $r$. This gives the stated dual-distance bound, since $\deg G-q-2g+2>0.048m$.

Compute the one-point bases using \Cref{fact:effective-ag-tower}. All remaining operations are linear algebra over $\Ffield$. A nonempty fiber is an affine space, so independent uniform field coefficients sample it uniformly. The distance bounds make extension across the permitted erasures unique.
\end{proof}

For a nonzero diagonal $\sigma\in(\Ffield^*)^m$, set
\[
 \CodeB^\sigma=\sigma\CodeB,\qquad
 \CodeA^\sigma=\sigma\CodeA,\qquad
 \secmap_\sigma(v)=\secmap(\sigma^{-1}v).
\]
The code $\CodeD$ is unchanged. Thus $\CodeB^\sigma*\CodeD\subseteq\CodeA^\sigma$ and $\secmap_\sigma(b*d)=\secmap_\sigma(b)*\secmap(d)$. Scaling preserves distances and reduces decoding to the unscaled code.

\begin{lemma}[Polynomial decoding with errors and erasures]
\label{lem:ag-decoding}
Let $\mathcal E\subseteq[m]$ be the erasure set and let $T$ be a public integer with $0\le T\le m$. There is a deterministic polynomial-time decoder for $\CodeA^\sigma$ that recovers any codeword with at most $T$ errors outside $\mathcal E$ whenever
\begin{equation}
 2T+|\mathcal E|<m-2\deg G-g.
 \label{eq:unified-code-radius}
\end{equation}
On other inputs it returns a codeword within that error budget if one is found, and otherwise a fixed failure value.
\end{lemma}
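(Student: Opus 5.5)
The plan is to reduce to the unscaled code, puncture the erased coordinates, and run the basic (Skorobogatov--Vl\u{a}du\c{t}) error-locating algebraic-geometry decoder on the punctured one-point code. The hypothesis \eqref{eq:unified-code-radius} should be exactly that decoder's designed radius.

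\emph{Reduction.} Given a received word $r\in\Ffield^m$, first replace $r$ by $\sigma^{-1}r$ coordinatewise. This maps $\CodeA^\sigma$ onto $\CodeA$ and preserves error positions and counts. Next, delete the coordinates in $\mathcal E$. Set $m'=m-|\mathcal E|$ and let $P'$ be the surviving tail places. The punctured code $\operatorname{ev}_{P'}\mathcal L(2G)$ is still injective, because $2\deg G<m'$ by the hypothesis. Hence it suffices to recover the function $f\in\mathcal L(2G)$ from its evaluations on $P'$ with at most $T$ errors. Re-evaluating $f$ at all of $P$ and multiplying by $\sigma$ then returns the full codeword, and with it the secret via $\secmap_\sigma$.

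\emph{Decoding step.} Fix an auxiliary one-point divisor $F=(T+g)P_\infty$. By \Cref{fact:effective-ag-tower}, $\ell(F)\ge T+1$, so the space $\mathcal L(F)$ contains a nonzero error locator: a function vanishing at the at most $T$ error places. Then solve the homogeneous linear system for pairs $(\Lambda,N)\in\mathcal L(F)\times\mathcal L(2G+F)$ satisfying $\Lambda(P_a)r_a=N(P_a)$ for all $P_a\in P'$. The key counting step follows from the hypothesis, which gives
\[
\deg(2G+F)=2\deg G+T+g<m'-T.
\]
From this I would derive two facts.
\begin{enumerate}[label=\textnormal{(\alph*)},leftmargin=1.9em,itemsep=1pt]
\item The true pair $(\Lambda_0,\Lambda_0f)$ is a solution, so a nonzero solution exists.
\item For every nonzero solution $(\Lambda,N)$, the function $N-\Lambda f$ lies in $\mathcal L(2G+F)$ and vanishes on the at least $m'-T$ error-free places. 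The zero bound of \Cref{fact:effective-ag-tower} then forces $N=\Lambda f$.
\end{enumerate}
Consequently $f=N/\Lambda$. To obtain $f$, I would compute it by linear algebra as the unique $h\in\mathcal L(2G)$ with $h(P_a)=r_a$ at every place where $\Lambda(P_a)\ne0$. At most $\deg F$ places are zeros of $\Lambda$, and $m'-\deg F>2\deg G$, so this interpolation is unique.

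\emph{Efficiency and the fallback.} All bases are computable in polynomial time by \Cref{fact:effective-ag-tower}, and everything else is Gaussian elimination over $\Ffield$. On arbitrary inputs, either no candidate is found or it may not lie within the budget. So the decoder finally checks that the candidate is in $\CodeA^\sigma$ and disagrees with $r$ in at most $T$ positions outside $\mathcal E$. It outputs the candidate if the check passes and the fixed failure value otherwise. This check gives the second clause of the statement.

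The main obstacle I expect is the degree bookkeeping in (a) and (b): making sure the single inequality \eqref{eq:unified-code-radius} simultaneously supplies the existence of a nonzero locator and the uniqueness of $N/\Lambda$. This may require choosing $\deg F=T+g$ and using $\ell(F)\ge\deg F+1-g$ without assuming $\deg F>2g-2$. If the off-by-one in the strict inequality fails, I would instead cite a standard AG decoding theorem, such as Skorobogatov--Vl\u{a}du\c{t} or the Feng--Rao/Sakata majority-logic decoder, at its designed radius $(m'-2\deg G-g-1)/2$.
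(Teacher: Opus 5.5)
Your proposal is correct and follows the paper's proof essentially step for step. Both undo the diagonal and solve the key equation with locator space $\mathcal L((T+g)P_\infty)$ on the nonerased places, use the zero bound together with \eqref{eq:unified-code-radius} to force $N=\Lambda f$, interpolate $f\in\mathcal L(2G)$ on the more than $2\deg G$ places where $\Lambda\ne0$, and finally check the error count. Your off-by-one worry does not arise: Riemann's inequality $\ell(D)\ge\deg D+1-g$ holds for every divisor (it is the form stated in \Cref{fact:effective-ag-tower}), so the fallback to a cited AG decoding theorem is unnecessary.
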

\begin{proof}
Undo the diagonal. Put $F=(T+g)P_\infty$ and solve
\[
 h(P_i)=z(P_i)y_i\quad(i\notin\mathcal E),\qquad
 z\in\mathcal L(F),\quad h\in\mathcal L(2G+F).
\]
If the errors are supported on $U$, $|U|\leq T$, Riemann--Roch gives a nonzero $z\in\mathcal L(F-\sum_{i\in U}P_i)$. The pair $(z,zf)$ is a solution for the transmitted function $f$. For any nonzero solution, $h-zf$ vanishes on at least $m-|\mathcal E|-T$ correct positions and has pole degree at most $2\deg G+T+g$. Equation~\eqref{eq:unified-code-radius} therefore forces $h=zf$. A nonzero solution cannot have $z=0$, because then $h$ would vanish on more than its pole degree.

At least $m-|\mathcal E|-(T+g)>2\deg G$ nonerased positions have $z(P_i)\ne0$. Interpolate $f\in\mathcal L(2G)$ at these positions, check its error count, and restore the diagonal. The solution is unique. The bases and all linear systems have polynomial size by \Cref{fact:effective-ag-tower}.
\end{proof}

\paragraph{Quantum masking by a random diagonal.}
Identify $\Ffield^m$ with $\mathbb F_2^{km}$ using a fixed basis. For $\alpha,x\in\Ffield^m$, the associated binary character is $(-1)^{\operatorname{Tr}_{\Ffield/\mathbb F_2}(\sum_i\alpha_i x_i)}$.

\begin{lemma}[Diagonal code masking]
\label{lem:diagonal-mask}
Put $\xi=6(0.047)-\log_2(64/63)>0.25$. Let $\rho_{\beta E}$ be a possibly subnormalized cq state on $\Ffield^m$ with $\hmin(\beta\mid E)\geq km-b$. After this state is fixed, sample an independent uniform $\sigma\in(\Ffield^*)^m$ and, given $\sigma$, sample $R$ uniformly in $\CodeB^\sigma_u$, independently of $\beta E$, for any prescribed secret $u\in\Ffield^q$. Then
\begin{equation}
 \bigl\|\rho_{\sigma,\beta+R,E}
       -\rho_\sigma\otimes\tau_{km}\otimes\rho_E\bigr\|_1
 \leq 2^{-(\xi m-b)/2}.
 \label{eq:diagonal-choice-mask}
\end{equation}
The bound is uniform in $u$, including a retained common classical label, included in $E$, that selects $u$ and the cq state.
\end{lemma}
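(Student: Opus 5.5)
The plan is to apply \Cref{fact:quantum-small-bias} over $\mathbb F_2^{km}$, with index $s=\sigma$ and $Z_\sigma=R$ uniform in the affine fiber $\CodeB^\sigma_u=r_\sigma+\sigma\CodeB_0$. Since $\hmin(\beta\mid E)\ge km-b$, that fact gives $\delta\,2^{(km-h)/2}\le\delta\,2^{b/2}$. It therefore suffices to prove $\delta^2:=\max_{\alpha\ne0}\Exp_\sigma|\Exp\chi_\alpha(R)|^2\lesssim 2^{-\xi m}$, with the constant absorbed as discussed below.

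\emph{Fourier coefficient of a coset.} For a nonzero $\alpha\in\Ffield^m$, write $\chi_\alpha(x)=(-1)^{\operatorname{Tr}(\sum_i\alpha_ix_i)}$. Averaging over the linear space $\sigma\CodeB_0$ gives $|\Exp\chi_\alpha(R)|\in\{0,1\}$. It equals $1$ exactly when $\operatorname{Tr}(\langle\alpha,\sigma c\rangle)=0$ for all $c\in\CodeB_0$. Because $\CodeB_0$ is $\Ffield$-linear, we may replace $c$ by $\lambda c$ for every $\lambda\in\Ffield$. Nondegeneracy of the trace form then upgrades the condition to $\langle\sigma\alpha,c\rangle=0$, that is, $\sigma\alpha\in\CodeB_0^\perp$. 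The coset shift $r_\sigma$ only contributes a phase. Hence $\Exp_\sigma|\Exp\chi_\alpha(R)|^2=\Pr_\sigma[\sigma\alpha\in\CodeB_0^\perp]$, and this is independent of the secret $u$. That independence is what gives uniformity in $u$, and also over a retained classical label: I will condition on the label value, apply the bound blockwise, and sum.

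\emph{Counting.} Let $S=\operatorname{supp}\alpha$, $w=|S|$, and $d=d(\CodeB_0^\perp)>0.047m$ by \Cref{lem:multiplicative-codes}. The word $\sigma\alpha$ is uniform on $(\Ffield^*)^S$. If $w<d$ the probability is $0$. Otherwise, the subspace of $\CodeB_0^\perp$ supported in $S$ has minimum distance at least $d$, so by the Singleton bound its dimension is at most $w-d+1$. This gives
\[
 \Pr_\sigma[\sigma\alpha\in\CodeB_0^\perp]\le\frac{64^{\,w-d+1}}{63^{\,w}}
 =64^{1-d}\Bigl(\frac{64}{63}\Bigr)^{w}
 \le64\cdot2^{-6d+m\log_2(64/63)}.
\]
With $d>0.047m$ this yields $\delta^2\le64\cdot2^{-\xi m}$.

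\emph{Subnormalization and the main obstacle.} For a subnormalized state of trace $p$, I will apply the fact to $\rho/p$, whose min-entropy is $h+\log(1/p)$ with $h=km-b$, and multiply back by $p$. This leaves a factor $\sqrt p\le1$. The main obstacle is the leftover constant $8=\sqrt{64}$ from the Singleton "$+1$", which must be absorbed to get exactly $2^{-(\xi m-b)/2}$. I would first use the sharper estimate $d\ge\deg G-q-2g+2>0.048m$ from the proof of \Cref{lem:multiplicative-codes}, which gives an extra $2^{-0.006m}$. If that margin is not enough at $m=443$, I would use the slack $\xi>0.25$ against $6(0.047)-\log_2(64/63)\approx0.259$. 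As a last resort, I would state the constant and fold it into the negligible error, since downstream uses only need $2^{-\Omega(m)+O(b)}$.
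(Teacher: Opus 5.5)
Your approach is the paper's. The coset bias is the indicator of $\sigma\alpha\in\CodeB_0^\perp$; your trace-form argument supplies the justification the paper leaves implicit. The support-restricted dual has at most $64^{w-d+1}$ words, against $63^w$ equally likely values of $\sigma\alpha$. Shifting to the fiber $\CodeB^\sigma_u$ only changes phases, and \Cref{fact:quantum-small-bias} finishes.

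\textbf{The gap.} What you leave open is exactly the step you call the main obstacle, and none of your fallbacks closes it.
\begin{itemize}
\item Using $d>0.048m$ gives $\delta^2<64\cdot2^{-0.006m}\cdot2^{-\xi m}$. This is not below $2^{-\xi m}$ until $m\ge1000$, so it fails at the admissible length $m=443$, as you suspected.
\item There is no slack between $\xi$ and $6(0.047)-\log_2(64/63)$: that expression is the definition of $\xi$. Replacing $\xi$ by $0.25$ proves a weaker inequality than the one stated.
\item Folding a constant into a negligible error changes the statement.
\end{itemize}
The missing observation is that the dual-distance bound has a unit to spare, so the Singleton $+1$ is absorbed exactly. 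Since $\lfloor0.35m\rfloor>0.35m-1$, $q\le0.01m$ and $2g<0.292m$,
\[
 d-1\ \ge\ \deg G-q-2g+1\ >\ (0.35m-1)-0.01m-0.292m+1\ =\ 0.048m .
\]
Hence, for every admissible $m$ and with no leftover constant,
\[
 \Pr_\sigma[\sigma\alpha\in\CodeB_0^\perp]\le(64/63)^m64^{-(d-1)}\le2^{-(0.288-\log_2(64/63))m}<2^{-\xi m}.
\]
This is what the paper uses when it writes $d_\perp-1\ge0.048m$.

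\textbf{Two smaller points.}
\begin{itemize}
\item \emph{Sign of the normalization.} Normalizing a state of mass $p$ lowers its min-entropy: $\hmin(\beta\mid E)_{\rho/p}=\hmin(\beta\mid E)_\rho-\log(1/p)$, not $+\log(1/p)$. With the correct sign you do get the factor $\sqrt p$ you state.
\item \emph{Summing over the retained label.} Summing per-block bounds of the form $\delta2^{b/2}\sqrt{p_\ell}$ would lose a factor $\sum_\ell\sqrt{p_\ell}$. Instead, write block $\ell$'s contribution as $\delta2^{km/2}\sqrt{p_\ell r_\ell}$, where $r_\ell$ is its subnormalized guessing probability. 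Then apply Cauchy--Schwarz with $\sum_\ell r_\ell=\pg(\beta\mid E)\le2^{-(km-b)}$, as the paper does.
\end{itemize}
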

\begin{proof}
For a linear code $C\subseteq\Ffield^m$ and a nonzero character index $w$, the bias of a uniform word in $\sigma C$ is the indicator of $\sigma w\in C^\perp$. Write $s=|\operatorname{supp}(w)|$ and $d_\perp=d(C^\perp)$. The event is impossible if $s<d_\perp$. Otherwise the generator matrix of $C$ restricted to that support has rank at least $d_\perp-1$. At most $64^{s-d_\perp+1}$ vectors on the support satisfy its equations, while $\sigma w$ is uniform on the $63^s$ vectors with that exact support. Hence the squared bias averaged over $\sigma$ is at most
\[
 (64/63)^m64^{-(d_\perp-1)}.
\]
For $C=\CodeB_0$, the preceding proof gives $d_\perp-1\geq0.048m$, so this is at most $2^{-\xi m}$. Translating each zero fiber to a prescribed secret fiber changes only the phase of each character. \Cref{fact:quantum-small-bias} proves the normalized-state claim.

For a subnormalized state of mass $p>0$, normalization changes conditional min-entropy by $\log_2p$. Apply the normalized bound and multiply by $p$; the resulting extra factor is $\sqrt p\leq1$. The zero-mass case is immediate. The same calculation on the classical-label blocks proves the last assertion, using Cauchy--Schwarz if their entropy bounds are expressed through the joint guessing probability.
\end{proof}

All samplers have fixed polynomial clocks. Field elements use six fair bits. To sample $\sigma_i\ne0$, reject zero at most $m$ times and then use a fixed nonzero fallback. The distance from the ideal diagonal distribution is at most $m64^{-m}$. Uniform trits and finite-set samples use the same bounded rejection convention with negligible total error. These errors are added after the ideal-sampling estimates.

\paragraph{The schedule.}
Fix $a_*=0.047$, anchoring probability $\delta_*=1/2$, and selector bounds $\rho_{\min}=1/8$, $\rho_{\max}=1/2$. Apply \Cref{lem:density-bucketing} on the density interval $[a_*/[8(k+1)],1]$, and let $\gamma,\eta_{\rm MS}>0$ be its common guessing exponent and loss threshold. Choose the fixed integer
\[
 N_0\geq
 \max\left\{166,\left\lceil\frac{200}{k\gamma}\right\rceil\right\}.
\]
For security parameter $\lambda$, prescribed leakage budget $\ell$, and binary secret capacity $q$, put
\begin{equation}
 \begin{gathered}
 s_\lambda=\left\lceil\log_2^2(\lambda+2)\right\rceil,\qquad
 m=\mathsf{Next}\bigl(\max\{100q,100\ell,s_\lambda\}\bigr),\\
 n=kN_0m,\qquad t_A=\lceil m/200\rceil,\qquad L=n+t_A .
 \end{gathered}
 \label{eq:unified-block-schedule}
\end{equation}
Use one audit coordinate per physical group, at most $\lfloor m/100\rfloor$ erased code coordinates, and decoder budget $T=\lfloor m/40\rfloor$. Set
\begin{equation}
 \begin{gathered}
 \eta_A=\min\{\eta_{\rm MS}/4,\ 1/(40kN_0)\},\qquad
 \kappa_{\rm aud}=\eta_A/2,\qquad
 \kappa_{\rm test}=\eta_{\rm MS}/8,\\
 \nu=\frac18\min\{\kappa_{\rm aud},\kappa_{\rm test}\}.
 \end{gathered}
 \label{eq:batchwise-slack}
\end{equation}
Independent honest coordinate faults of probability at most $\nu/2$ give at most $\nu L$ actual Magic Square losses except with negligible probability. Fresh hidden audit and test masks are independent of the resulting fault pattern.

For the complete OT, apply \eqref{eq:unified-block-schedule} with $q=1$ to obtain $m_0,L_0$. Each owner's mask bank has capacity $q_{\rm bank}=10L_0$. Take $m_{\rm pre}=\mathsf{Next}(100q_{\rm bank})$ for both source invocations, and use the same $N_0$ and thresholds. This meets their leakage and security lower bounds because $100q_{\rm bank}=1000L_0\geq m_0$. Write $K=kN_0$ and set
\begin{equation}
 \begin{gathered}
 C_{\rm size}=1+18000(K+1),\qquad
 c_{\rm local}=\frac1{100(K+1)},\\
 c_{\rm global}=\frac1{100(K+1)C_{\rm size}},\qquad
 \nu_{\rm global}=\frac{\nu}{C_{\rm size}} .
 \end{gathered}
 \label{eq:constant-global-rates}
\end{equation}

\begin{lemma}[Feasibility of the common schedule]
\label{lem:unified-parameters}
The schedule gives uniform polynomial-time code operations and fixed positive noise and leakage thresholds. Its label-masking, hidden-payload, audit, test, and size errors are negligible. For an invocation of capacity $q$ its physical size is $\Theta(q+\ell+\log^2\lambda)$. For the complete OT,
\[
 L_{\rm total}=2L_{\rm pre}+L_0\leq C_{\rm size}L_0
 =O(\ell+\log^2\lambda).
\]
A total physical-window budget $\ell_{\rm total}\leq c_{\rm global}L_{\rm total}$ meets every invocation's local budget, regardless of its distribution among the three invocations. A total empirical-loss bound $\nu_{\rm global}L_{\rm total}$ likewise meets each invocation's honest-noise bound.
\end{lemma}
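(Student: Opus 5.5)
The proof is a bookkeeping verification of the schedule \eqref{eq:unified-block-schedule}--\eqref{eq:constant-global-rates}, organized in four groups: algorithmic efficiency, error terms, sizes, and the global-to-local budget transfer.

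\emph{Efficiency.} By definition, $m=\mathsf{Next}(\max\{100q,100\ell,s_\lambda\})<9\max\{443,100q,100\ell,s_\lambda\}$, so $m=\Theta(q+\ell+\log^2\lambda)$, which is polynomial in $\lambda$. Consequently \Cref{fact:effective-ag-tower}, \Cref{lem:multiplicative-codes}, and \Cref{lem:ag-decoding} give code operations in time polynomial in $m$. The inequality $q\le\lfloor m/100\rfloor$ holds because $m\ge100q$. The physical size is $L=kN_0m+\lceil m/200\rceil$, and since $kN_0$ is a fixed constant this is $\Theta(q+\ell+\log^2\lambda)$.

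\emph{Error terms.} \Cref{lem:diagonal-mask} gives a masking error of $2^{-(\xi m-b)/2}$. Here the entropy debit $b$ consists of $2\ell$ from \Cref{lem:interactive-transcript-guessing}, at most $3t_A+1\le 3m/200+O(1)$ from the published test answers, and the linear selector and audit debits. Because $m\ge100\ell$ and $\xi>0.25$, we get $\xi m-b\ge cm$ for a fixed $c>0$, and $m\ge s_\lambda=\log^2$ makes $2^{-cm}$ negligible. The hidden-payload error is the repetition bound $2^{-\Omega(n)}$ of \Cref{lem:density-bucketing}; the choice $N_0\ge200/(k\gamma)$ ensures its exponent dominates the linear debits in $m$. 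The audit, test, and size errors are Chernoff/Hoeffding sampling bounds (\Cref{fact:Chernoff}, in the without-replacement form) with deviation parameters $\kappa_{\rm aud}$, $\kappa_{\rm test}$, $\rho_{\min}$, $\rho_{\max}$. Each has the form $e^{-\Omega(m)}$, which is again negligible. For the honest-noise tolerance, the threshold $\nu=\tfrac18\min\{\kappa_{\rm aud},\kappa_{\rm test}\}$ leaves slack. The error count then satisfies $2T+|\mathcal E|\le m/20+m/100<m-2\deg G-g$, using $2\deg G\le0.70m$ and $g<0.146m$, so \eqref{eq:unified-code-radius} holds.

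\emph{Size of the complete OT.} We have $L_0=\Theta(\ell+\log^2\lambda)$, and $m_{\rm pre}<9\max\{443,1000L_0\}\le 9000L_0$ (for large $\lambda$). Hence $L_{\rm pre}\le(K+1)m_{\rm pre}\le9000(K+1)L_0$, and therefore $2L_{\rm pre}+L_0\le C_{\rm size}L_0$.

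\emph{Budget transfer.} Suppose $\ell_{\rm total}\le c_{\rm global}L_{\rm total}\le c_{\rm global}C_{\rm size}L_0=L_0/(100(K+1))$. Each invocation's local requirement is at most $c_{\rm local}$ times its own length, and $L_0$ is the smallest of the three batches. So any distribution of $\ell_{\rm total}$ among the three invocations, even concentrating all of it in one, is within every local budget. In this step $m\ge100\ell$ is replaced by $m\ge L/(100(K+1))\cdot$const, which follows from $L\le(K+1)m$. The noise argument is identical: $\nu_{\rm global}L_{\rm total}\le\nu L_0\le\nu L_j$.

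The main obstacle is the entropy margin: showing that the total linear debit $b$, which combines leakage, test disclosures, and selector/audit openings, stays strictly below $\xi m$ and below the repetition exponent. I would tabulate each debit as a constant times $m$ and check the sum against $0.25m$ and against $k\gamma N_0 m/2\ge100m$.
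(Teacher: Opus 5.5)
\textbf{Verdict.} Your length and budget bookkeeping is correct and matches the paper's Step~3: $m_{\rm pre}<9000L_0$, $L_{\rm pre}\le9000(K+1)L_0$, and $c_{\rm global}L_{\rm total}\le c_{\rm local}L_0\le c_{\rm local}L_i\le m_i/100$, and likewise for $\nu_{\rm global}$. The gap is in the entropy margins. These are the substantive content of the lemma, you defer them, and the accounting you sketch for them would not close.

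\textbf{Label masking.} The debit is exactly $b_A=2\ell_{\sf A}+3t_A+1$ from \Cref{lem:post-J-selector-entropy}. The selector and erasure disclosure costs nothing, because swapping question classes $0,1$ in a microblock and flipping $\beta_{i,j}$ preserves the selector. The audit costs only its one aggregate bit. The resulting slack is $\xi m-b_A>(0.25-0.02-0.015)m-4$, only about $0.2m$. So the ``linear selector and audit debits'' you add to $b$ cannot be absorbed; a face-value charge for the opened selectors is of order $kN_0m$ bits. Hence $\xi m-b\ge cm$ does not follow from $m\ge100\ell$ and $\xi>0.25$ as you argue it.

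\textbf{Hidden payload.} The relevant error is $\varepsilon_f$ in \eqref{eq:leakage-ot-regime}, not the bare repetition bound. Its exponent requires $\gamma L>2\ell+3km+k^2m+2k^2m+O(1)$. Here $3km$ is Alice's audit answers, $k^2m$ is the granted selected pads, and $2k^2m$ comes from the $2^{k^2m}$ Fourier indices (doubled because the exponent is halved). You never mention the Fourier count. With $k=6$ the requirement is about $126m$, which exceeds your benchmark $k\gamma N_0m/2\ge100m$. The check closes only against the full $\gamma L\ge\gamma kN_0m\ge200m$, with no factor $1/2$. That gives the paper's margin $(200-0.02-18-108)m-O(1)>0$, and it is why the constant in $N_0\ge200/(k\gamma)$ is $200$.

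\textbf{Further missing checks.}
\begin{itemize}
\item $\gamma$ must be fixed before $N_0$, or $N_0\ge200/(k\gamma)$ is circular. You must show every input-sensitive selector has density in $[a_*/(8(k+1)),1]$. It has at least $a_*mN_0/8$ positions out of $L\le(k+1)N_0m$, a ratio independent of $N_0$.
\item Your decoding step checks only the radius inequality $2T+|\mathcal E|\le0.06m<m-2\deg G-g$. You also need that on the good-audit event at most $T$ symbols are wrong. That is where $\eta_A\le1/(40kN_0)$ gives $W\le\eta_An\le m/40$.
\item The size error is two-level. A Chernoff bound on one microblock of constant size $N_0$ gives only a constant per-symbol failure bound $24e^{-N_0/18}<1/400$; this is where $N_0\ge166$ enters. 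The $e^{-\Omega(m)}$ bound then comes from a second Chernoff bound over the $m$ independent symbols against the cap $m/100$.
\end{itemize}
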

\begin{proof}
\emph{Step 1: entropy margins.} The selector density used to fix the repetition constants is sufficient: a sensitive code character meets at least $a_*m$ code coordinates, contributing at least $a_*mN_0/8$ physical positions, while $L\leq(k+1)N_0m$. Thus the choice of $\gamma$ precedes that of $N_0$. The label bound of \Cref{lem:post-J-selector-entropy} and \Cref{lem:diagonal-mask} have slack
\[
 \xi m-2\ell-3t_A-1
 >(0.25-0.02-0.015)m-4.
\]
In \Cref{lem:ordered-block-hidden-pads}, the public raw-answer debit is at most $3km+O(1)$, the granted selected pads cost $k^2m$ bits, and the Fourier count is at most $2^{k^2m}$. The final exponent therefore has margin
\[
 \gamma L-2\ell-3km-3k^2m-O(1)
 \geq(200-0.02-18-108)m-O(1)>0
\]
for sufficiently large $\lambda$.

\emph{Step 2: auditing and decoding.} The global audit of \Cref{claim:blockwise-vanishing-audit} bounds the number $W$ of erroneous selected physical coordinates by $\eta_A n\leq m/40$, except with $\exp[-2(\eta_A-\kappa_{\rm aud})^2km]$ probability. Every erroneous code coordinate contains such a physical coordinate. Hence
\[
 2T+|\mathcal E|\leq 0.06m
 <0.15m<m-2\deg G-g,
\]
so \Cref{lem:ag-decoding} applies. For honest questions the probability of a size failure in a code coordinate is at most $24e^{-N_0/18}<1/400$. These events depend only on independent questions and audit coins on disjoint groups. A Chernoff bound gives $\exp[-\Omega(m)]$ probability of more than $m/100$ erasures. The gaps in \eqref{eq:batchwise-slack} give the same exponential audit-completeness bound for each fixed permitted fault pattern. The hidden-test estimate of \Cref{claim:sparse-test-to-random-blocks} is $\exp[-\Omega(t_A)]$.

\emph{Step 3: lengths and global budgets.} The bound for $\mathsf{Next}$ proves the stated size of a single invocation. Since $Km\leq L(m)\leq(K+1)m$,
\[
 m_{\rm pre}<9000L_0,\qquad
 L_{\rm pre}\leq9000(K+1)L_0.
\]
Both source lengths are at least $L_0$, so every batch length $L_i$ satisfies $L_i/L_{\rm total}\geq1/C_{\rm size}$. Consequently
\[
 \ell_i\leq\ell_{\rm total}
 \leq c_{\rm global}L_{\rm total}
 \leq c_{\rm local}L_i
 \leq m_i/100.
 \qquad
 \nu_{\rm global}L_{\rm total}\leq\nu L_i.
\]
This charges a single total budget. Finally, $m_i\geq s_\lambda$ makes all exponential tails and bounded sampling errors negligible. The code algorithms are polynomial by \Cref{lem:multiplicative-codes,lem:ag-decoding}; the classical cryptographic operations retain security parameter $\lambda$.
\end{proof}

\subsection{Records and the two certified predicates}
\label{app:cp-operations}
\label{sec:ot-pay}

We use the same-owner tuple form of \Cref{def:cp}: an owner-authorized check can name an ordered tuple of previously committed records and evaluate a predicate of their concatenation. Their individual commitment times and immutable contents remain fixed. This is needed to relate Bob's pre-offset question record to his later selector record. All records use fixed canonical bit encodings. Later projections of the same record reuse previously certified entries and must agree on overlaps, as required by \Cref{def:cp}.

\begin{algorithmica}[!htbp]
\protocolstyle
\emph{Roles:} one owner and one verifier, with fixed session and record identifiers and public record lengths.
\begin{protocolsteps}
\item \emph{Commitment.} The owner submits the prescribed bit string. The verifier receives only its identifier, roles and length.
\item \emph{Projection.} The owner publishes $(S,z)$ and requests $[\pi_S(M)=z]$ for that record. The verifier checks routing, syntax, overlap consistency and the success certificate.
\item \emph{Predicate.} The owner names the prescribed tuple $(M_1,\ldots,M_r)$ and public parameters $x$, and requests $\mathsf{Check}((M_1,\ldots,M_r);P,x)$. The verifier requires success for that exact tuple, predicate and public parameters.
\end{protocolsteps}
\smallskip\noindent\textit{Where each step is used.} All steps:
\Cref{def:cp,thm:masked-cp}; the predicates below are \Cref{def:selector-consistency-proof,def:payload-consistency-proof}.
\caption{The outer CP operations. Public projection claims are released before certification, including on rejection.}
\label{prot:cp-layer}
\end{algorithmica}

Use $\Ffield$, $k=6$, and a fixed $\Ftwo$-basis $e_1,\ldots,e_k$. For OT the codes have one secret coordinate. Let $n=kN_0m$ and $L=n+t_A$ as in \Cref{eq:unified-block-schedule}. A public partition divides the data positions into disjoint sets $I_{i,j}$ of size $N_0$, with $i\in[m]$, $j\in[k]$. An audit set contains one position $T_{i,j}\subset I_{i,j}$. The early symbol-erasure set is $\mathcal E\subseteq[m]$; write $I=[m]\setminus\mathcal E$. Each selector $J_{i,j}$ has its inherited coordinate order. The private bits $\beta_{i,j}$ give $\beta_i=\sum_j e_j\beta_{i,j}\in\Ffield$.

Bob's records are his two-bit-per-trit offset $U$, his three-bit columns $B$, and one fixed-length record
\[
 M_B=\operatorname{enc}\bigl((T_{i,j}),(J_{i,j}),
                    (\beta_{i,j}),\mathcal E\bigr).
\]
Audit and selector incidence masks occupy $n$ bits each. Every label, including those on erased symbols, has a committed bit. Alice's table is one record
\begin{equation}
 M_A=\operatorname{enc}((X_a,A_a,C_a)_{a\notin T_A}),\qquad |M_A|=6n.
 \label{eq:registered-alice-table}
\end{equation}
Canonical syntax means valid trits and bits, even active rows, and $(X_a,A_a)=(0,000)$ when $C_a=0$.

\begin{definition}[Selector and code-mask predicate]
\label{def:selector-consistency-proof}
For Bob's records $(U,M_B)$ and public $(V,T_A,\pi,(T_{i,j}),(J_{i,j}),\mathcal E,\sigma,M)$, $P_{\Label}$ checks their prescribed syntax and consistency with all preceding certified projections. It requires $|\mathcal E|\le\lfloor m/100\rfloor$, one audit position per microblock, and
\[
 \begin{cases}
 J_{i,j}=\varnothing,&i\in\mathcal E,\\
 J_{i,j}\subseteq I_{i,j}\setminus T_{i,j},\
 N_0/8\le |J_{i,j}|\le N_0/2,\
 U_a+V_a=\beta_{i,j}\quad(a\in J_{i,j}),&i\notin\mathcal E.
 \end{cases}
\]
Trit arithmetic is modulo three. Finally, it requires $\sigma\in(\Ffield^*)^m$ and
\[
 R=M+\beta\in\CodeB^\sigma,\qquad \secmap_\sigma(R)\in\{0,1\}.
\]
Bob authorizes $\Pi_{\rm label}=\mathsf{Check}((U,M_B);P_{\Label},x_{\rm label})$.
\end{definition}

The predicate checks inclusion and sizes. It does not certify that a selector equals a whole question class or that $\mathcal E$ follows the honest erasure rule. These conditions permit the alternative native witness used in \Cref{sec:di-bc}.

The pads use the linear family of \Cref{lem:linear-extraction} with $k=6$ output bits and public input length $|J_{i,j}|$.

\begin{definition}[Payload-consistency predicate]
\label{def:payload-consistency-proof}
For each nonerased $(i,j)$ and branch $b$, the public payload contains $S_{i,j}^b\in\Ftwo^{k\times |J_{i,j}|}$ and $F_{i,j}^b\in\Ffield$. On Alice's stored canonical table define
\begin{equation}
 Z_{i,j}^b=(A_a[b])_{a\in J_{i,j}},\quad
 K_{i,j}^b=S_{i,j}^b Z_{i,j}^b,\quad
 p_{i,j}^b=F_{i,j}^b+K_{i,j}^b.
 \label{eq:block-local-pads}
\end{equation}
The fixed basis identifies binary $k$-vectors with $\Ffield$. $P_{\Pay}$ checks all dimensions, canonical syntax, prescribed projections and the following equations. For every $i\in I$,
\[
 g_i=(p_{i,j}^0+p_{i,j}^1)/e_j
 \quad\hbox{is independent of }j,\qquad
 f_i=\sum_{j=1}^k p_{i,j}^0+M_i g_i.
\]
It requires extensions $g\in\CodeD$ and $f\in\CodeA^\sigma$, with
\[
 s_0=\secmap_\sigma(f)\in\{0,1\},\qquad
 s_1=\secmap_\sigma(f)+\secmap(g)\in\{0,1\}.
\]
Alice authorizes $\Pi_{\rm pay}=\mathsf{Check}(M_A;P_{\Pay},x_{\rm pay})$ for the exact public tuple.
\end{definition}

These extensions are unique: two extensions differ on at most $|\mathcal E|<\min\{d(\CodeA),d(\CodeD)\}$ positions. They are found by linear algebra, so both predicates are polynomial time. Checking all branch equations ties reconstruction to the stored table even when Alice chooses her payload after seeing the code mask.

\begin{corollary}[Certificates concern the same stored records]
\label{cor:cp-soundness}
A successful prescribed CP check implies its predicate on the actual stored records exactly. In particular, $\Label$ fixes a binary effective choice, and $\Pay$ fixes one input pair $(s_0,s_1)$.
\end{corollary}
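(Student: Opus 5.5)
The argument rests on the semantics of $\ideal(\mathsf{CP})$ in \Cref{def:cp} together with the uniqueness remark that follows \Cref{def:payload-consistency-proof}. We first establish the general claim, that a success certificate implies the predicate on the stored records, and then specialize it to $\Label$ and $\Pay$.

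For the general claim, recall item (ii) of \Cref{def:cp}. On $(\mathsf{check},\mathbf a,P,y)$, the functionality evaluates $P(M_{a_1}\|\cdots\|M_{a_r},y)$ on the records it actually stores. It issues a success certificate only when this value is $1$; otherwise it aborts the session. Item (iv) states that no record is ever updated. Moreover, every handle and value is immutable after its receipt, and a tuple check neither recommits nor updates anything. The concatenation evaluated at certification time is therefore exactly the tuple of contents fixed at each commitment. The certificate also names the exact tuple, predicate, and public parameters, and the verifier of \Cref{prot:cp-layer} checks these. Hence success entails $P=1$ on those stored records and those public inputs. The same holds for projection requests, which are special checks by item (iii), and which must agree on overlapping positions. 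Consistency with earlier certified projections is itself part of both predicates, so it also holds exactly. Note that this is a statement about the ideal functionality, with no error term. The computational error enters only when \Cref{thm:masked-cp} replaces CP by its realization, which is why the corollary is stated for ideal CP. This matches proof convention (ii).

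For $\Label$, success gives $P_{\Label}((U,M_B),x_{\rm label})=1$. This yields $R:=M+\beta\in\CodeB^\sigma$, where $\beta$ is decoded from the stored $M_B$ and $M,\sigma$ are public, and it yields $\secmap_\sigma(R)\in\{0,1\}$. Since $\secmap_\sigma$ is a function of $R$, and $R$ is determined by the stored record and the public mask, the value $D^*:=\secmap_\sigma(R)$ is a well-defined binary effective choice. Injectivity of tail evaluation (\Cref{lem:multiplicative-codes}) ensures that this secret is determined by the codeword itself.

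For $\Pay$, success gives that $g_i$ and $f_i$, computed from the stored table $M_A$ via \eqref{eq:block-local-pads} for $i\in I$, extend to $g\in\CodeD$ and $f\in\CodeA^\sigma$, and that the resulting $s_0,s_1$ are bits. The only step needing care is showing that these extensions are unique, so that $(s_0,s_1)$ is a function of the stored record and the public payload rather than of some witness choice. Suppose two extensions agreed on $I$. Their difference would be a codeword vanishing outside $\mathcal E$, and $|\mathcal E|\le\lfloor m/100\rfloor<\min\{d(\CodeA),d(\CodeD)\}$ by \Cref{lem:multiplicative-codes} (distances at least $0.30m$). The bound on $|\mathcal E|$ is itself certified by $\Label$ and fixed publicly. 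So the difference is zero, and $(s_0,s_1)=(\secmap_\sigma(f),\secmap_\sigma(f)+\secmap(g))$ is uniquely determined. This uniqueness is the main point. Everything else is immediate from the functionality's definition.
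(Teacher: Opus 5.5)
Your proposal is correct and follows the paper's own reasoning. The paper states the corollary without a separate proof. It relies on the ideal-CP semantics of \Cref{def:cp}: success is issued only when the predicate holds on the immutable stored records. It also uses the uniqueness remark after \Cref{def:payload-consistency-proof}, namely that two extensions differ on at most $|\mathcal E|<\min\{d(\CodeA),d(\CodeD)\}$ positions. Two of your observations make explicit what the paper leaves implicit: that the secret map is well defined on codewords because tail evaluation is injective, and that the erasure cap is already certified by $\Label$ before $\Pay$ is requested.
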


\subsection{The transfer protocol}
\label{sec:blockwise-leakage-protocol}

\begin{protocolbox}
\protocolstyle
\emph{Inputs:} Alice holds $(S_0,S_1)\in\{0,1\}^2$ and Bob holds $d\in\{0,1\}$. Parameters and codes are from \Cref{sec:unified-parameters}, with one secret coordinate. All records and checks use \Cref{prot:cp-layer}. Invalid syntax or a failed prescribed opening/check causes global abort. All finite-set samplers use the bounded implementations of the parameter section. No device operation occurs after Step~4.
\begin{protocolsteps}
\item \emph{Alice.} Sample independent uniform $\Xraw,V\in\{0,1,2\}^L$ and a uniform size-$t_A$ set $T_A$. Commit its incidence string.
\item \emph{Bob.} Sample uniform $U\in\{0,1,2\}^L$ and commit its fixed two-bit trit encoding as one record.
\item \emph{Alice.} Send $V$, input $\Xraw$, and record the parity-correct rows $\Araw$ in the window shared with the next step.
\item \emph{Bob.} Input $Y=U+V$ and record the parity-correct columns $B$. Once both complete batches are recorded, close the leakage interface and commit the full $B$ record.
\item \emph{Alice.} Open $T_A$ and send a fresh uniform ordering $\pi$ of its complement, whose consecutive size-$N_0$ groups define the $I_{i,j}$.
\item \emph{Bob.} Verify $T_A,\pi$. Sample one uniform audit position in each microblock and independent fair bits $\beta_{i,j}$. Form $J_{i,j}^b=\{a\in I_{i,j}\setminus T_{i,j}:Y_a=b\}$ for $b=0,1$. Put $i\in\mathcal E$ if either size in any of its $k$ microblocks is outside $[N_0/8,N_0/2]$. Set $J_{i,j}=J_{i,j}^{\beta_{i,j}}$ for $i\notin\mathcal E$ and $J_{i,j}=\varnothing$ otherwise. Commit $M_B$ on its fixed schedule; keep all size flags private.
\item \emph{Alice.} Sample independent $C_a\gets\operatorname{Bern}(1-\delta_*)$ on all $L$ positions. For $C_a=1$ set $(X_a,A_a)=(\Xraw_a,\Araw_a)$; otherwise set $(X_a,A_a)=(0,000)$ and discard the raw controller record. Commit the data table $M_A$.
\item \emph{Bob.} Open the audit masks and the offsets and columns on $T_A$, using projections of $M_B,U,B$.
\item \emph{Alice.} Verify these openings and audit sizes. Set $\mathsf{tst}=1$ iff at most $\kappa_{\rm test}t_A$ test positions have $C_a=1$ and fail $\MS(X_a,U_a+V_a,A_a,B_a)=1$. Send $\mathsf{tst}$; on success open her canonical table on all audits.
\item \emph{Bob.} Verify the audit openings and their canonical syntax. Set $\mathsf{aud}=1$ iff $|\mathcal E|\le\lfloor m/100\rfloor$ and
\[
 \sum_{i,j}\sum_{a\in T_{i,j}}
 C_a\,\ind\{A_a[Y_a]\ne B_a[X_a]\}\le\kappa_{\rm aud} km.
\]
Send this one aggregate decision. On success open $\mathcal E$ and all selectors from the same $M_B$ record.
\item \emph{Alice.} Verify the erasure cap, empty erased selectors, and $J_{i,j}\subseteq I_{i,j}\setminus T_{i,j}$ with the stated size bounds otherwise. Send $\mathsf{sel}=1$ on success.
\item \emph{Bob.} Sample uniform $\sigma\in(\Ffield^*)^m$ and uniform $R\in\CodeB^\sigma_d$. Send $\sigma,M=\beta+R$ and authorize $\Pi_{\rm label}$.
\item \emph{Alice.} Require the valid $\Label$ certificate. Put $\Delta=S_0+S_1$, sample independently $f\gets_U\CodeA^\sigma_{S_0}$ and $g\gets_U\CodeD_\Delta$. For every $i$ sample $(Q_{i,j})_{j=1}^k$ uniformly with sum $f_i+M_i g_i$ and set $p_{i,j}^b=Q_{i,j}+b e_jg_i$. For $i\notin\mathcal E$ sample independent uniform matrices $S_{i,j}^b$ and send $F_{i,j}^b=p_{i,j}^b+S_{i,j}^bZ_{i,j}^b$, together with the matrices. Send no payload on erased symbols. Open $(X_a,C_a)$ on the selectors and authorize $\Pi_{\rm pay}$.
\item \emph{Bob.} Check both branches' public syntax and all prescribed projections, and require $\Pay$. On each selector use the raw estimate $\widehat Z_{i,j}=(C_a B_a[X_a])_{a\in J_{i,j}}$, and put
\[
 \widehat H_i=\sum_{j=1}^k
       \bigl(F_{i,j}^{\beta_{i,j}}
          +S_{i,j}^{\beta_{i,j}}\widehat Z_{i,j}\bigr)
       \quad(i\notin\mathcal E).
\]
Decode in $\CodeA^\sigma$ with erasures $\mathcal E$ and error budget $\lfloor m/40\rfloor$. Output $\secmap_\sigma(H)$ if decoding succeeds with a binary secret, and $0$ otherwise. Signal completion, revealing no decoder or selected-output flag. A corrupt Bob may instead request the late abort allowed by \Cref{def:idealOT}.
\item \emph{Alice.} Output $\emptystring$ on completion and $\perp$ on global abort.
\end{protocolsteps}
\smallskip\noindent\textit{Where each step is used.} Steps 1--4:
\Cref{lem:cp-causal-locality,lem:post-output-block-restriction}. Steps 5--7: \Cref{def:selector-consistency-proof}. Steps 8--10: \Cref{claim:sparse-test-to-random-blocks,claim:blockwise-vanishing-audit}. Steps 10--12: \Cref{lem:post-J-selector-entropy,cor:selector-hash-uniform}. Steps 13--15: \Cref{lem:committed-payload-pairing,lem:ordered-block-hidden-pads,lem:pay-success-continuation}.
\end{protocolbox}
\begingroup
\captionsetup{hypcap=false}
\captionof{algorithmica}{DI OT from CP, using private microblock selectors and global reconstruction.}
\label{prot:di-ot}
\endgroup

On an honest selected branch the field sums are
\begin{equation}
 H_i=\sum_jp_{i,j}^{\beta_{i,j}}=f_i+R_i g_i,\qquad
 H\in\CodeA^\sigma,\qquad \secmap_\sigma(H)=S_d.
 \label{eq:blockwise-Bob-output}
\end{equation}
The linear secret maps and the multiplication law of \Cref{lem:multiplicative-codes} give the last equality.

\subsection{The theorem and its simulators}
\label{sec:ot-theorem}

We state the OT reduction and give its two simulators. The following subsections prove correctness and the two security bounds.

\begin{theorem}[DI reduction from CP to OT]
\label{thm:unified-di-ot}
Under \Cref{thm:restricted-order-repetition} and the parameter schedule of \Cref{lem:unified-parameters}, \Cref{prot:di-ot} realizes $\ideal(\mathsf{OT})$ in the CP hybrid with computationally augmented, auxiliary-input-uniform security. For each admissible QPT corrupted party there is one strict-QPT simulator working for every admissible initial extension. The protocol uses one batch of $L=kN_0m+t_A$ coordinates, permits arbitrary joint device instruments and the adaptive leakage of \Cref{def:ot-admissible-leakage}, and has negligible completeness error under \eqref{eq:honest-device-fault} with $q_{\rm dev}\le\nu/2$. Its honest noise and leakage fractions can both be fixed positive constants. Sequential composition holds at \Cref{def:ot-call-boundary}.
\end{theorem}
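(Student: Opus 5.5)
The plan is to split the theorem into correctness, security against a corrupt Bob (receiver security, with extraction of the effective choice), and security against a corrupt Alice (sender security, with extraction of the effective input pair). All three work in the CP hybrid, so every certified predicate holds exactly on the stored records by \Cref{cor:cp-soundness}. Composition then follows from \Cref{lem:efficient-composition-hybrids} and the boundary conditions of \Cref{def:ot-call-boundary}.

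\emph{Correctness.}
\begin{itemize}
\item Use \eqref{eq:honest-noise-promise} to bound the number of losses by $\nu L$.
\item The hidden-test and audit tail bounds from the proof of \Cref{lem:unified-parameters} then show that both the test and the audit pass except with negligible probability.
\item A Chernoff bound on the size failures keeps $|\mathcal E|\le m/100$.
\item The estimate $\widehat H$ differs from $H$ in at most $m/40$ symbols, so \Cref{lem:ag-decoding} recovers $H$. Then \eqref{eq:blockwise-Bob-output} gives $S_d$.
\end{itemize}

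\emph{Corrupt Bob.} The simulator runs honest Alice's controller on the supplied device share, reading Bob's CP inputs as they arrive.
\begin{itemize}
\item From the records $(U,M_B)$ and $(\sigma,M)$ certified by $\Label$, it reads $d^*=\secmap_\sigma(M+\beta)$. This is well defined by \Cref{cor:cp-soundness}.
\item It queries $\ideal(\mathsf{OT})$ on $d^*$ and obtains $S_{d^*}$.
\item It generates payloads from $f\in\CodeA^\sigma$ and $g\in\CodeD$ whose secrets are consistent with $S_{d^*}$ and a uniform replacement for $S_{1-d^*}$.
\end{itemize}
Indistinguishability reduces to showing that the component of the payloads depending on $S_{1-d^*}$ is hidden by the pads $K^{1-\beta}$ on the unselected branches. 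To see why, expand the payload distribution in Fourier characters of the pad vector. By the dual distance in \Cref{lem:multiplicative-codes}, any character that depends on $\Delta$ touches at least $a_*m$ symbols. Guessing the pads on those symbols amounts to guessing Alice's row bits on a frozen selector of constant density. \Cref{thm:restricted-order-repetition}, applied through \Cref{lem:density-bucketing} with the leakage removed by \Cref{lem:interactive-transcript-guessing}, bounds that guessing probability by $2^{-\gamma L}$. This has to survive the debits listed in Step~1 of \Cref{lem:unified-parameters}: the public answers, the granted pads, and the $2^{k^2m}$ characters, for which that step leaves margin. The repetition error is charged once, on a common good branch, and then summed over characters.

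\emph{Corrupt Alice.} The simulator feeds Bob's prescribed uniform questions to the honest device share. It extracts $(s_0,s_1)$ from the $\Pay$-certified table and payloads, sends this pair to the ideal functionality, and obtains no output.
\begin{itemize}
\item First hybrid: replace Bob's decoded output by $s_d$. The global audit (\Cref{claim:blockwise-vanishing-audit}) gives at most $m/40$ wrong selected coordinates. Since $\Pay$ guarantees that the selected sum lies in $\CodeA^\sigma$ with the right secret, decoding yields $s_d$ except with negligible probability.
\item Second hybrid: replace $M$ by uniform, using \Cref{lem:diagonal-mask}. The label min-entropy after selector disclosure is $km-2\ell-3t_A-1$ (from \Cref{lem:post-J-selector-entropy}, where swapping classes preserves the selector). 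This leaves margin $(\xi-0.035)m$.
\end{itemize}
Once $M$ is uniform, the view no longer depends on $d$. The final experiment is exactly the ideal one.

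The main obstacle is the sender-security Fourier/repetition step. Its difficulty is that Bob chooses the selector adaptively, performs joint measurements, and receives disclosures after the window closes. My first attempt would be to freeze $(B,J)$ before the anchors are drawn, bucket the selector density, and bound all characters simultaneously on a single subnormalized branch.
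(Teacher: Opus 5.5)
Your proposal follows the paper's proof essentially step for step. Correctness comes from the fault bound, the hidden test, the global audit and the decoder. Against a corrupt receiver, you extract $d$ from the $\Label$-certified records and bound the payload difference with the global Fourier argument through the selector-aware repetition theorem, on one common good branch. Against a corrupt sender, you extract the pair from $\Pay$ and replace Bob's decoded output using the audit before applying the diagonal code mask, and then apply sequential composition.

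The remaining differences are minor. Your ``receiver/sender security'' parentheticals are swapped. Also, a uniform $M$ makes the $\Label$ statement false, so your final simulator must emit the $\Label$ success itself along a witness-free continuation. The paper instead ends at the dummy choice $0$, which has a genuinely true certificate.
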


The simulator boxes list actions in chronological order at the corresponding message boundaries; the corrupted party continues its given strategy throughout.

\begin{algorithmica}[!htbp]
\protocolstyle
\emph{Corrupt Alice; SIM plays Bob.} SIM logs Alice's classical commitment inputs and preserves her quantum workspace and reference.
\begin{protocolsteps}
\item \emph{Physical prefix.} Execute the honest Bob interface, masks, tests and projections of Steps 1--11. Relay public aborts.
\item \emph{Dummy choice.} Sample the honest diagonal and $R\gets_U\CodeB^\sigma_0$, send $M=\beta+R$, and provide the genuine true $\Label$ certificate.
\item \emph{Ideal pair and delivery.} From Alice's stored table and the received payload, evaluate $\Pay$ by linear algebra. On success obtain its unique $(f,g)$ and submit $(\secmap_\sigma(f),\secmap_\sigma(f)+\secmap(g))$ to ideal OT. Reproduce public delivery or abort and retire honest scratch. SIM does not use its dummy physical decoder as the ideal honest output.
\end{protocolsteps}
\smallskip\noindent\textit{Where each step is used.} Step 1:
\Cref{claim:blockwise-vanishing-audit,lem:post-J-selector-entropy}. Steps 2--3: \Cref{thm:receiver-leakage-simulation}.
\caption{The corrupt-sender OT simulator.}
\label{prot:di-ot-sim}
\label{prot:di-ot-sim-a}
\end{algorithmica}

\begin{algorithmica}[!htbp]
\protocolstyle
\emph{Corrupt Bob; SIM plays Alice.} SIM logs Bob's CP inputs and uses the genuine honest physical interface.
\begin{protocolsteps}
\item \emph{Physical prefix.} Execute Steps 1--12 and their prescribed checks. Relay every public rejection.
\item \emph{Ideal choice.} At a valid $\Label$ compute $R=M+\beta$ from the logged records and $d=\secmap_\sigma(R)\in\{0,1\}$. Submit $d$ to ideal OT and receive $S_d$.
\item \emph{Payload and delivery.} Use the ordinary honest payload algorithm with dummy pair $(S_d,S_d)$ and the same canonical table. Provide its true $\Pay$ certificate. Relay Bob's possible late abort, or complete Alice's empty output, and retire honest scratch.
\end{protocolsteps}
\smallskip\noindent\textit{Where each step is used.} Steps 1--2:
\Cref{lem:cp-causal-locality,cor:cp-soundness}. Step 3: \Cref{lem:ordered-block-hidden-pads,lem:pay-success-continuation,thm:sender-leakage-simulation}.
\caption{The corrupt-receiver OT simulator.}
\label{prot:di-ot-sim-b}
\end{algorithmica}

\subsection{The causal cut and global sampling}
\label{app:deferred-main-proofs}
\label{app:ot-correctness}

We first place the frozen records in the game of the repetition theorem, then use sampling to bound discrepancies in Alice's table.

\begin{lemma}[CP records preserve the causal cut]
\label{lem:cp-causal-locality}
Before Alice fixes $M_A$, Bob's accepted CP inputs immutably determine $U,B$, every audit and selector, $\mathcal E$, and the full label vector. Metadata-only receipts convey no Alice-side source to this early instrument. Later certified projections refer to those same values and act on neither devices nor the reference.
\end{lemma}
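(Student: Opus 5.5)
The plan is a chronological walk through Steps 1--7 of \Cref{prot:di-ot}, tracking which registers each of Bob's committed records can depend on. In the CP hybrid, $\ideal(\mathsf{CP})$ stores a string at the commitment command and never updates it (\Cref{def:cp}(i),(iv)). So $U$ is fixed at Step~2 and $B$ at Step~4. At Step~6, $M_B$ fixes $(T_{i,j})$, $(J_{i,j})$, $(\beta_{i,j})$ and $\mathcal E$. Each of these is a deterministic function of the classical command string that the corrupt or honest Bob submitted at that instant. The full label vector $\beta$ is part of $M_B$, including the committed bits on erased symbols, so it is fixed there as well. All of this happens before Alice's commitment command for $M_A$ in Step~7. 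I will identify ``Bob's early instrument'' with the composite of his operations from the start through Step~6. Its outputs are these records together with his retained quantum workspace.

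Next I will bound what can flow from Alice into this early instrument. The inputs it receives are: the leakage transcript of the window in Steps~3--4, which is charged under \Cref{def:ot-admissible-leakage}; the public messages $V$, $T_A$ and $\pi$; and the CP receipts for Alice's incidence record. Each receipt contains only the identifier, roles and length (\Cref{def:cp}(i)), so it is a fixed function of the public schedule and carries no information about Alice's stored contents. The anchors $C_a$ are sampled only in Step~7, by fresh honest coins that are independent of everything before them (\Cref{def:ot-call-boundary}(i)). Alice's raw rows $\Araw$ are held in trusted storage, and by \Cref{def:admissible-extension}(iii) that storage releases only prescribed messages. Hence the only channels from Alice's source to Bob before $M_A$ is fixed are the charged leakage and the public $(V,T_A,\pi)$. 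The latter are independent coins, or test incidences that are also opened publicly. I will state this as a factorization: the cq-state on Bob's early records, his workspace, the leakage transcript and the reference is produced by a channel that acts on Alice's side only through the window interface.

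The third step handles later projections. By \Cref{def:cp}(iii), a certified opening or predicate check on a record is evaluated against the stored immutable value, and it must agree with any earlier overlapping opening. So every projection from Steps~8--13, together with $\Pi_{\rm label}$, refers to these same values. \Cref{cor:cp-soundness} applies to this. For the claim that these operations touch neither devices nor the reference, I will invoke \Cref{def:admissible-extension}(iv): CP operations never access $D_A$, $D_B$ or $\Rref$. Applying \Cref{fact:local-instrument-marginal} then shows that the marginal on the device and reference registers is unchanged by each CP call.

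I expect the main obstacle to be the second step, and specifically making precise that the metadata-only receipts and the adversary-scheduled delivery of \Cref{def:cp}(v) open no side channel. The adversary controls the timing of receipts and of aborts, but it does not control their content. Timing is a fixed function of Bob's own scheduling choices, so it carries no Alice-side information. I will formalize this by writing the receipt stream as a function of public lengths and Bob's own commands. The remaining steps are definitional bookkeeping.
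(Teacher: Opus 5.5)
Your proposal is correct and follows essentially the same route as the paper: the stored commitment inputs are immutable (the paper keeps an inaccessible analysis copy of each), receipts can be reproduced from public metadata alone, no source-dependent Alice request precedes $M_A$, and later projections are evaluated against those same saved values. The paper states one ingredient explicitly that you omit: rejected or malformed commands are totalized with fixed defaults rather than conditioned on acceptance, so the frozen records are defined on every branch without postselection, and the later game embedding and entropy arguments rely on this.
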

\begin{proof}
Retain an inaccessible classical copy of each actual commitment input. The receipt is simulated locally from its prescribed metadata. Before $M_A$ no source-dependent Alice predicate is requested. Rejected commands are totalized with defaults without conditioning on success. Later operations cannot change the saved copy.
\end{proof}

\begin{lemma}[Embedding in the whole-batch game]
\label{lem:post-output-block-restriction}
After causal communication elimination, the physical batch and every selector determined by the frozen records obey the interface of \Cref{thm:restricted-order-repetition}.
\end{lemma}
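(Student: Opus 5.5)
The plan is to verify the hypotheses of \Cref{thm:restricted-order-repetition} one at a time, by identifying the roles in the protocol with those in the repetition game. Alice's device is Charlie and the row questions $\Xraw$ are Charlie's questions. Bob's device, his controller, and everything he holds up to the commitment of $M_B$ together form Dave. Dave's question is $Y=U+V$; since $U$ is committed before $V$ is sent and $V$ is uniform and independent, $Y$ is uniform coordinatewise and independent of $\Xraw$, as the game requires. The anchor flags $C_a$ serve as the erasures with $\delta_*=1/2$. On a coordinate with $C_a=0$ the entry $(0,000)$ is fixed and known, so it counts as correct, exactly as an erased coordinate does in the game.

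First I would apply \Cref{lem:interactive-transcript-guessing} to the device window of Steps~3--4. This yields a no-communication strategy that retains every event of interest up to the factor $2^{-2\ell}$ (after the classical charges are added), and on the consistent branch its state on the original registers is the original state. Inside the window the two laboratories then act as a product of local instruments on shared entanglement, which is precisely the two-player setting of the game.

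Next I would use \Cref{lem:cp-causal-locality} to fix Dave's answer $(B,J)$. The partition $\pi$ and the test set $T_A$ are public classical values independent of Alice's raw answers, so Dave's simulator may sample them itself; the same holds for the audit positions and labels. Receipts carry only metadata. The selector vector is therefore a function of Dave's local information alone, committed before any Alice-side source-dependent message. Because the $C_a$ are sampled in Step~7, after $M_B$ is frozen, and with fresh coins (\Cref{def:admissible-extension}), they are independent of $(B,J)$. Revealing $\Xraw$ on the unanchored coordinates then matches the game's late disclosure of Charlie's questions. \Cref{fact:local-instrument-marginal} ensures that Alice's later classical actions do not affect the marginal of Dave's earlier registers. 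I would also check that $|J|\le\rho_{\max}n$ follows from the enforced size bounds, and that the game places no lower bound on $|J|$.

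The main obstacle is the mismatch in timing and in the partition. In the game, Dave commits his answer $B$ and selector $J$ in a single round. In the protocol, $J$ depends on $\pi$ and $T_A$, which are revealed after $B$. I would resolve this by folding $\pi$, $T_A$ and the audit coins into Dave's shared randomness, which is legitimate because they are independent of the device state. The erasure set and the test coordinates, whose opened data also leak to Alice, would be handled by restricting $J$ to a subset and charging the test disclosure separately, as the paper's leakage table indicates. Finally, Bob's later steps must be treated as a post-measurement instrument acting only on Dave's registers.
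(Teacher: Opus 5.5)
Your role assignment, the communication elimination, the use of the CP records to freeze $(B,J)$ before $M_A$, the treatment of $T_A$ and $\pi$ as independent trusted coins, and the late anchors all agree with the paper's proof. There is, however, a gap in what Dave receives between freezing $F=(B,J)$ and making his late guess. In \Cref{thm:restricted-order-repetition} his only late input is $\widehat X$. At Step~9, however, Bob also receives Alice's test decision and her canonical table on all $km$ audit positions, including the raw answers $A_a$. These are functions of Charlie's outputs. For non-IID devices they can be correlated with Charlie's rows on the selectors, so the strategy as you embed it is not a legal strategy of the game. The paper replaces each such message by a guess at its first disclosure, with debits $3km$ and $1$ in \Cref{tab:game-correspondence}; only after this does the reduced strategy fit the interface. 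Your remedy points the wrong way. The data Bob opens to Alice on $T_A$, and the erasure set, are Bob's own frozen records and never enter Dave's information. Restricting $J$ also cannot remove Alice's audit answers, which already lie outside every selector; the problem is correlation, not overlap. You charge the test decision but not the audit answers.

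You also do not check the game's alphabets against a corrupt Bob. The game requires trit questions and odd columns in Dave's first-round output. Bob, however, may commit the encoding $11$ in $U$, columns of the wrong parity, or malformed selector records. The paper handles these as follows:
\begin{itemize}
\item it maps invalid trits to $0$ for analysis;
\item it replaces malformed columns by odd-parity ones, which preserves every originally winning coordinate and so cannot increase the loss count;
\item it sends invalid selector branches to a legal fallback that is excluded from the target event.
\end{itemize}
Without these normalizations, identifying Dave's answer with Bob's committed record does not give an allowed strategy. You should also state, as the paper does, that Charlie's instrument reads only $\Xraw$. Likewise, say that fresh $V$ makes $Y$ independent of $U$ and of the prior laboratory and reference state, not just of $\Xraw$.
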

\begin{proof}
Assign the effective offset $U$ at its pre-$V$ cut, mapping invalid trits to zero for analysis only. Fresh uniform $V$ gives $\rho_{YUQ_0}=\tau_Y\otimes\rho_{UQ_0}$ for $Y=U+V$, where $Q_0$ is the prior laboratory/reference state and excludes Alice's saved $V$. Fresh $X^{\rm in}$ is independent. Dave knows $U$, receives $Y$ and reconstructs $V$; Charlie's physical instrument reads only $X^{\rm in}$. The hidden test and partition are independent trusted coins, revealed at their prescribed times. Every selector freezes before any $C$ is used, and only canonical $(C,X)$ is free late input. Later answers and the test decision are guessed at their first disclosures. Malformed column records are replaced in the game by odd-parity columns; this preserves every originally winning coordinate. Invalid selector branches use a legal fallback and are excluded from the target event. One game contains all $L$ coordinates, so no question-dependent other-block output is initial advice.
\end{proof}

\begin{claim}[The sparse test bounds global losses]
\label{claim:sparse-test-to-random-blocks}
Let $\mathsf{Lose}$ be the positions with $C_a=1$ and a Magic Square loss against Bob's fixed effective questions and columns. For $\kappa_{\rm test}<\eta_{\rm MS}$,
\[
 \Pr[\mathsf{tst}=1,\ |\mathsf{Lose}|>\eta_{\rm MS}L]
 \le \exp[-2(\eta_{\rm MS}-\kappa_{\rm test})^2t_A].
\]
\end{claim}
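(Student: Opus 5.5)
This is a sampling-without-replacement argument: condition on everything fixed before the test set is revealed, then apply the additive Hoeffding bound of \Cref{fact:Chernoff} to the hidden test set. The key structural input is \Cref{lem:cp-causal-locality} and the protocol order.

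First, fix the branch. Bob's effective offsets $U$ and columns $B$ are committed at Steps~2 and~4. Alice's device answers $\Araw$ are recorded in the window, which closes before $T_A$ is opened at Step~5. The anchor flags $C_a$ are sampled at Step~7 independently of everything else, so each $C_a$ is independent of $T_A$. The test uses the canonical $(X_a,A_a)$ with $C_a=1$, and $Y_a=U_a+V_a$. Let $\Lambda\subseteq[L]$ be the set of positions with $C_a=1$ that lose $\MS(X_a,Y_a,A_a,B_a)$. Then $\Lambda$ is a deterministic function of $(\Xraw,V,U,\Araw,B,C)$, and $\mathsf{Lose}$ in the claim is exactly $\Lambda$.

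Next, observe that $T_A$ is a trusted uniform size-$t_A$ subset. It is committed through a metadata-only receipt and revealed only after $U$, $B$, and the device answers are fixed. It reaches neither device, since \Cref{def:ot-admissible-leakage}(iii) excludes honest trusted coins from the interface. Hence the joint distribution of the tuple determining $\Lambda$ is independent of $T_A$. This is where the causal-cut lemma is used. One subtlety is that $\Lambda$ is defined on all $L$ positions, including those later placed in $T_A$, so it does not depend on which positions are tested. Rejected or malformed commitments are totalized with defaults, as in the proof of \Cref{lem:cp-causal-locality}. Conditioning on any fixed value of $\Lambda$ with $|\Lambda|>\eta_{\rm MS}L$ therefore leaves $T_A$ uniform.

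On that branch, the count $|\Lambda\cap T_A|$ is a sample without replacement of size $t_A$ from a $\{0,1\}$ population whose mean exceeds $\eta_{\rm MS}$. The event $\mathsf{tst}=1$ means $|\Lambda\cap T_A|\le\kappa_{\rm test}t_A$, a downward deviation of more than $(\eta_{\rm MS}-\kappa_{\rm test})t_A$ from the mean. The additive Hoeffding bound for sampling without replacement gives probability at most $\exp[-2(\eta_{\rm MS}-\kappa_{\rm test})^2t_A]$. Averaging over the fixed branches yields the claim as a subnormalized-branch bound.

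The main obstacle is justifying independence of $T_A$ from $\Lambda$ in the presence of adaptive leakage and joint measurements. I would argue it purely causally, since $T_A$ is a trusted coin outside the leakage interface and is used only after all answer records are frozen in CP. If the bounded sampler for $T_A$ is not exactly uniform, its negligible distance is added at the end.
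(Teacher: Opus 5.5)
Your proposal is correct and follows the paper's argument. The paper likewise presamples the unused anchor coins and, since the commitment receipt for $T_A$ reveals nothing, defers the sampling of $T_A$ until the raw records and Bob's $B$ are fixed; it then applies sampling without replacement and notes that invalid openings only reduce acceptance. Your causal-independence conditioning is the same step, phrased as conditioning on the frozen records instead of resampling $T_A$ later.
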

\begin{proof}
Privately presample unused anchor coins. Because the test receipt reveals no contents, sample $T_A$ after the complete raw records and Bob's $B$ input are fixed. Those records and the coins determine the loss set. Sampling without replacement gives the displayed bound. Invalid openings only reduce acceptance. No per-microblock loss bound is asserted.
\end{proof}

\begin{claim}[One global audit controls every later opening]
\label{claim:blockwise-vanishing-audit}
In ideal CP, or in the equivocal-Bob native experiment with one statistically binding Alice table, let $W$ count malformed, noncanonical or actively inconsistent data positions in that table. For $\eta_A>\kappa_{\rm aud}$,
\begin{equation}
 \Pr[\mathsf{aud}=1,\ W>\eta_A n]
 \le \exp[-2(\eta_A-\kappa_{\rm aud})^2km]+\varepsilon_{\rm bind}.
 \label{eq:global-native-audit}
\end{equation}
The resulting selected-raw-error bound holds simultaneously for every accepted native opening of the same table.
\end{claim}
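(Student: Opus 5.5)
The plan is a sampling argument for the global audit, parallel to \Cref{claim:sparse-test-to-random-blocks}. In that claim the audit coins were presampled after the raw records were fixed; here the same idea is applied to Alice's committed table against Bob's private audit positions.

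First fix the classical values that the audit compares. In ideal CP the table $M_A$ is an immutable record by \Cref{lem:cp-causal-locality}. In the equivocal-Bob native experiment we pass to the fixed statistical hybrid of \Cref{sec:proof-conventions}(ii). There the unique committed value of Alice's statistically binding table is an analysis register, defined by an unbounded function of the transcript. The event that some accepted opening differs from this value has probability at most $\varepsilon_{\rm bind}$, and we charge it once. Off that event, every accepted native opening, including those used later on selectors, equals this one stored table. This gives the final simultaneity statement, since $W$ is a function of the table and of Bob's columns $B$, questions $Y$, and anchors $C$ alone.

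Second, the audit positions must be independent of the table when it is fixed. Bob samples $T_{i,j}$ uniformly in each microblock at Step~6, and he commits them through metadata-only receipts. So Alice's instrument that chooses $M_A$ at Step~7 acts on a state independent of the audit coins, given the partition and the earlier public records. By \Cref{fact:local-instrument-marginal} and the receipt property we may therefore resample the audit positions after $M_A$ and $B$ are frozen without changing the joint distribution. Conditioned on the frozen records, define for each microblock the fraction $w_{i,j}\in[0,1]$ of its $N_0$ positions that are malformed, noncanonical, or actively inconsistent, meaning $C_a=1$ and $A_a[Y_a]\ne B_a[X_a]$. Then $W=N_0\sum_{i,j}w_{i,j}$. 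The audit indicators are independent Bernoulli variables with means $w_{i,j}$ over $km$ microblocks. A malformed or noncanonical audited entry makes Bob reject the audit openings, so it counts as a failure or reduces acceptance. Hence $\mathsf{aud}=1$ requires at most $\kappa_{\rm aud}km$ observed failures, while $W>\eta_A n$ means the expectation is more than $\eta_A km$. The additive Hoeffding bound of \Cref{fact:Chernoff} with $t=\eta_A-\kappa_{\rm aud}$ then gives the exponential term.

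The main obstacle is the second step in the native experiment. There Alice's openings on audits happen after she sees $\mathsf{tst}$ and Bob's test-set projections, and these may correlate with Bob's committed audit record. I would argue as follows. The projections on $T_A$ reveal nothing about audit positions outside $T_A$. The audit record is hidden by the equivocal or CP commitment, and in ideal CP it is hidden exactly. Because the inconsistency count is defined from the bound value, which is fixed at the commitment of $M_A$, later adaptive openings cannot move it. Off the binding event they can only abort. Abort branches are kept subnormalized, and aborting only lowers $\Pr[\mathsf{aud}=1]$. The erasure cap in $\mathsf{aud}$ plays no role in this bound.
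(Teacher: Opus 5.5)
Your proof is the paper's argument: you fix the unique table once (immutable in ideal CP, statistically bound in the native experiment) and charge $\varepsilon_{\rm bind}$ a single time, so every native opening is covered. You then use the independence of the whole pre-table channel from Bob's audit coins, jointly with $Y,B$, to sample the audits after $M_A$ is frozen, and apply Hoeffding to the $km$ microblock indicators with deviation $\eta_A-\kappa_{\rm aud}$. One correction to your last paragraph: Bob opens the audit masks before Alice opens her table on them, so the audit record is not hidden at that point. Moreover, in the native experiment the required independence is exact, because the equivocal commit simulator and Bob's receiving coins for Alice's table never read the mask; computational hiding would not justify the statistical deferral. The argument therefore rests only on the table already being fixed before that disclosure, as you also note.
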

\begin{proof}
\emph{Independence.} At the raw-output cut the fresh audit masks are independent of the joint state containing frozen $Y,B$ and Alice. Bob's private classical computation of labels, selectors and size flags can be deferred. Before table freezing, ideal receipts do not read their contents; in the native experiment the simulated commitment channels do not read their target values. Bob's receiving algorithm for Alice's table uses fresh coins, not its audit mask. Thus the entire pre-table channel is independent of the mask, jointly with $Y,B$. This is stronger than independence of Alice's marginal alone.

\emph{Sampling.} Fix $Y,B$ and the unique table outside binding failure. The $km$ sampled bad-entry indicators are independent and their mean sum is $W/N_0$. For $W>\eta_A kN_0m$, acceptance requires a downward deviation of $(\eta_A-\kappa_{\rm aud})km$. Hoeffding gives \eqref{eq:global-native-audit}. A sampled malformed opening rejects. The cap on $\mathcal E$ is imposed only as a further restriction; the sampling argument never conditions on $\mathcal E$. Binding gives one possible full table, so the bound covers all openings without a union over witnesses.
\end{proof}

\begin{lemma}[Global transcript reconstruction]
\label{lem:blockwise-transcript-reconstruction}
On the good-audit event and a successful $\Pay$ check, honest Bob's decoded output is the $d$th component of $\Pay$'s unique pair.
\end{lemma}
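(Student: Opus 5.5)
The plan is to compare Bob's estimate $\widehat H$ with the field vector $H^*$ obtained from $\Pay$'s unique extensions, and then apply the decoder of \Cref{lem:ag-decoding}.

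\emph{The target codeword.} By \Cref{cor:cp-soundness}, a successful $\Pay$ check fixes the stored table $M_A$ and extensions $g\in\CodeD$, $f\in\CodeA^\sigma$. These are unique because $|\mathcal E|<\min\{d(\CodeA),d(\CodeD)\}$, and they define the pair $(s_0,s_1)$. A valid $\Label$ gives $R=M+\beta\in\CodeB^\sigma$ with $\secmap_\sigma(R)=d$; for honest Bob this is his actual $d$. Set $H^*=f+R*g$. By \Cref{lem:multiplicative-codes} and the scaled multiplication law, $H^*\in\CodeA^\sigma$ and $\secmap_\sigma(H^*)=s_0+d(s_0+s_1)=s_d\in\{0,1\}$.

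\emph{The selected sums on the true table.} Define $p^b_{i,j}$ from the stored table by \eqref{eq:block-local-pads}, using the true pads $K^b_{i,j}=S^b_{i,j}Z^b_{i,j}$. The $\Pay$ equations give $p^1_{i,j}=p^0_{i,j}+e_jg_i$. Hence, for each $i\in I$,
\[
 \sum_j p^{\beta_{i,j}}_{i,j}
 =\sum_j p^0_{i,j}+\beta_i g_i
 =f_i-M_ig_i+\beta_ig_i
 =f_i+R_ig_i=H^*_i.
\]
This uses characteristic two, so $-M_i=M_i$. The identity is pure linear algebra on certified quantities and holds regardless of how Alice chose her payloads.

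\emph{Counting discrepancies.} Bob's $\widehat H_i$ differs from $H^*_i$ only through $\widehat Z_{i,j}\neq Z^{\beta_{i,j}}_{i,j}$. That requires some selected $a\in J_{i,j}$ with $C_aB_a[X_a]\ne A_a[\beta_{i,j}]$. On a selector, $Y_a=\beta_{i,j}$, since $\Label$ certifies $U_a+V_a=\beta_{i,j}$ on Bob's committed offsets and honest Bob used $Y=U+V$. So the discrepancy is exactly an active intersection mismatch $A_a[Y_a]\ne B_a[X_a]$, or malformed/noncanonical syntax at a position with $C_a=1$. At anchored positions $C_a=0$, canonical syntax forces $A_a=000$, so both sides are zero. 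Every such position is among the $W$ erroneous data positions of \Cref{claim:blockwise-vanishing-audit}. On the good-audit event, $W\le\eta_An\le m/40=T$, so at most $T$ nonerased symbols are wrong. Together with $|\mathcal E|\le\lfloor m/100\rfloor$, \Cref{lem:unified-parameters} (Step 2) gives $2T+|\mathcal E|<m-2\deg G-g$.

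\emph{Decoding.} Because $H^*$ is a codeword of $\CodeA^\sigma$ within $T$ errors of $\widehat H$ outside $\mathcal E$, \Cref{lem:ag-decoding} returns $H^*$. Bob then outputs $\secmap_\sigma(H^*)=s_d$, which is binary, so the fallback is never used.

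\emph{Main obstacle.} The hard step is justifying that the audit's bound on $W$ applies to the same table and positions that enter $\Pay$ and Bob's selectors. This needs immutability of the CP records (\Cref{lem:cp-causal-locality}) and the fact that the audit was drawn independently of the table (\Cref{claim:blockwise-vanishing-audit}). The good-audit event is defined exactly to carry this, so the lemma is stated on that event and the step only has to be invoked carefully.
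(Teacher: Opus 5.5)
Your proposal is correct and follows the same route as the paper. You show that the selected sums on the certified table equal $H=f+R*g$ with secret $s_d$, charge each wrong nonerased symbol to one of the $W$ bad selected positions, and use the good-audit bound to place $\widehat H$ within the decoder's radius. Your explicit algebra fills in what the paper states tersely, including the margin $2T+|\mathcal E|$ stated with the public budget $T=\lfloor m/40\rfloor$, which is the hypothesis \Cref{lem:ag-decoding} actually uses.
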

\begin{proof}
For each selected position, $C_aB_a[X_a]$ equals the committed $A_a[\beta_{i,j}]$ unless that position is counted in $W$. Correct raw inputs give the same pad for every public hash matrix. Consequently at most $W$ nonerased symbol sums are incorrect. $\Pay$ and $\Label$ imply that the correct sums extend to $H=f+R*g\in\CodeA^\sigma$ with secret $s_d$. The parameter bounds give $2W+|\mathcal E|\le 0.06m<0.15m$, within \Cref{lem:ag-decoding}. Decoding returns that unique word.
\end{proof}

\begin{lemma}[The committed payload pair]
\label{lem:committed-payload-pairing}
From Alice's logged table and accepted public payload one can compute in polynomial time the pair $(s_0^*,s_1^*)=(\secmap_\sigma(f),\secmap_\sigma(f)+\secmap(g))$. The computation uses no private receiver record or choice.
\end{lemma}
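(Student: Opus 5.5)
The computation is an explicit linear-algebra procedure that reproduces exactly what $P_{\Pay}$ already evaluates, run on Alice's logged table instead of inside the ideal CP functionality. The inputs are Alice's logged canonical table $M_A$, which SIM records from her CP commitment input, and the public data $(\sigma,M,\mathcal E,(J_{i,j}),(S^b_{i,j},F^b_{i,j}))$.

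\emph{Pads and per-symbol values.} For each nonerased $(i,j)$ and branch $b$, I first read $Z^b_{i,j}=(A_a[b])_{a\in J_{i,j}}$ from the logged table. The selectors $J_{i,j}$ are public after Step~10, so no receiver-private data is needed here. I then form $K^b_{i,j}=S^b_{i,j}Z^b_{i,j}$ and $p^b_{i,j}=F^b_{i,j}+K^b_{i,j}$ as in \eqref{eq:block-local-pads}. For each $i\in I$ I compute $g_i=(p^0_{i,j}+p^1_{i,j})/e_j$, using any fixed $j$, and $f_i=\sum_j p^0_{i,j}+M_ig_i$. The public $M$ and $\sigma$ enter here, but neither $\beta$ nor $R$ does.

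\emph{Extension and secrets.} I will find extensions $g\in\CodeD$ and $f\in\CodeA^\sigma$ agreeing with these values on $I$ by solving linear systems over $\Ffield$ in the bases of \Cref{lem:multiplicative-codes}. The erasure-recovery clause of that lemma applies because $|\mathcal E|\le\lfloor m/100\rfloor$. Uniqueness follows from the remark after \Cref{def:payload-consistency-proof}: two extensions differ only on $\mathcal E$, and $|\mathcal E|<\min\{d(\CodeA),d(\CodeD)\}$. I then output $s_0^*=\secmap_\sigma(f)$ and $s_1^*=\secmap_\sigma(f)+\secmap(g)$; both secret maps are linear. If any step fails, or a secret is nonbinary, the output is the totalized default. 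On that branch $\Pay$ rejects anyway.

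\emph{Running time and independence from the receiver.} Every step runs in polynomial time by \Cref{lem:multiplicative-codes}. By inspection, the procedure reads only Alice's table and public transcript values. Bob's $\beta$, $R$, $d$, $U$, $B$ and audit data are never used.

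\emph{Main obstacle.} The only subtle point is that the computed pair must agree with what the certified $\Pay$ check binds. I will invoke \Cref{cor:cp-soundness}: on success, the predicate holds on the actual stored record, which is exactly the logged copy retained by \Cref{lem:cp-causal-locality}. The pair computed here therefore coincides with the pair $\Pay$ fixes, and it is well defined even though Alice chose her payload after seeing $M$.
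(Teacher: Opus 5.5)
Your proposal is correct and follows essentially the same route as the paper: recompute both branch pads from the logged table with the public selectors, seeds and ciphertexts, recover $g_i,f_i$ on $I$, and solve the linear extension systems. These are unique by the erasure cap and the code distances noted after \Cref{def:payload-consistency-proof}, and you read off the two secrets, which $\Pay$ has already checked to be binary. Your added remark that the logged copy is the certified record, via \Cref{cor:cp-soundness} and \Cref{lem:cp-causal-locality}, only makes explicit what the paper's short proof leaves implicit.
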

\begin{proof}
Recompute both pads, obtain $g_i,f_i$ on $I$, and solve their linear code-extension equations. Their uniqueness was established in \Cref{def:payload-consistency-proof}. $\Pay$ checks both recovered secrets.
\end{proof}

\begin{claim}[Honest correctness]
\label{claim:blockwise-leakage-correctness}
Under \eqref{eq:honest-device-fault} with $q_{\rm dev}\le\nu/2$, honest parties output $(\emptystring,S_d)$ except with negligible probability.
\end{claim}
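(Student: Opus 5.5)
The plan is to reduce honest correctness to \Cref{lem:blockwise-transcript-reconstruction}. Both parties are honest, so every prescribed CP check passes except with negligible probability, and the $\Pay$ pair equals $(S_0,S_1)$ by construction. It then suffices to show that each public decision succeeds with overwhelming probability: $\mathsf{tst}=1$ in Step~9, $\mathsf{aud}=1$ in Step~10 (including $|\mathcal E|\le\lfloor m/100\rfloor$), and $\mathsf{sel}=1$ in Step~11. We also need the good-audit event, so that the decoder recovers $H$. A union bound over these events, together with the negligible bounded-sampler errors from the parameter section, will finish the proof.

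\emph{Step 1: syntax and predicates.} Honest rows and columns are normalized to even and odd parity. Anchored records are the canonical $(0,000)$, and the selectors satisfy the inclusion and size conditions by the erasure rule of Step~6. Consequently $P_{\Label}$ holds with $R\in\CodeB^\sigma_d$. By the sampling in Step~13, $P_{\Pay}$ holds with $g\in\CodeD_\Delta$, $f\in\CodeA^\sigma_{S_0}$, and secrets $(S_0,S_1)$. By \Cref{cor:cp-soundness} and correctness of ideal CP, both certificates are issued.

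\emph{Step 2: test and audit.} By \eqref{eq:honest-noise-promise}, with $q_{\rm dev}\le\nu/2$, at most $\nu L$ coordinates lose, except with negligible probability. Anchoring only removes losses from the test count. The hidden test set and the audit positions are fresh and independent of the fault pattern. Fix a fault pattern with at most $\nu L$ losses. Sampling without replacement in \Cref{fact:Chernoff} bounds the test losses by $\kappa_{\rm test}t_A$, since $\nu L/L\le\kappa_{\rm test}/8$. This holds except with probability $\exp(-\Omega(t_A))$. For the audit, the mean number of sampled audit failures is at most about $\nu L/N_0$, well below $\kappa_{\rm aud}km$ by \eqref{eq:batchwise-slack}. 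Hoeffding over the $km$ independent audit draws gives failure probability $\exp(-\Omega(m))$. This is the audit-completeness gap noted in Step~2 of \Cref{lem:unified-parameters}.

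\emph{Step 3: erasures and selector check.} For uniform honest questions $Y$, each microblock's class sizes exceed $[N_0/8,N_0/2]$ with probability at most $4e^{-N_0/18}$, so a symbol is erased with probability below $1/400$. The microblocks are disjoint and the questions independent, so Chernoff bounds the probability of more than $m/100$ erasures by $e^{-\Omega(m)}$. On the complementary event, $\mathsf{sel}=1$ holds deterministically.

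\emph{Step 4: decoding.} The honest discrepancy count $W$ is at most the number of losses, which is at most $\nu L\le\eta_A n\le m/40$. So the good-audit event holds. \Cref{lem:blockwise-transcript-reconstruction}, with \eqref{eq:blockwise-Bob-output} and \Cref{lem:ag-decoding}, then gives $\secmap_\sigma(H)=S_d$, and Alice outputs $\emptystring$.

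\emph{Main obstacle.} The hardest step will be the audit bound. It must hold for every fixed fault pattern permitted by the promise, while the audit positions and the partition $\pi$ are chosen after that pattern is fixed. I would condition on the loss set, note that $\pi$ and the audit coins are independent of it, and apply Hoeffding to the sampled indicators. This needs the per-sample mean to be bounded by the global loss fraction $\nu L/n$, and I would verify that inequality directly.
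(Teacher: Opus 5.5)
Your proposal is correct and takes essentially the same route as the paper's proof. Both use a Chernoff bound on symbol erasures from the uniform honest questions, the honest-noise promise \eqref{eq:honest-noise-promise} for the total loss count, hidden-test and global-audit tail bounds conditional on a fixed permitted fault pattern, that loss count lying within the decoding radius, and true honest predicates giving \eqref{eq:blockwise-Bob-output}. Two cosmetic points: certificate issuance follows from the success rule of ideal CP in \Cref{def:cp}, not from the soundness statement \Cref{cor:cp-soundness}; and your Hoeffding step only needs the total expected number of sampled audit failures, at most $\nu L/N_0$, to be compared with $\kappa_{\rm aud}km$, not a bound on each microblock's mean.
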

\begin{proof}
A class in a microblock has distribution $\operatorname{Bin}(N_0-1,1/3)$. Its two tails outside $[N_0/8,N_0/2]$ are each bounded by $e^{-N_0/18}$. The probability of a bad symbol is at most $24e^{-N_0/18}<1/400$. Disjoint trusted questions and audit coins make the symbol indicators independent, so more than $m/100$ erasures has probability $e^{-\Omega(m)}$. This erasure estimate uses only the trusted questions and audit coins. Equation~\eqref{eq:honest-noise-promise} bounds the total fault count. Conditional on any permitted fault pattern, hidden-test and global-audit upper-tail bounds give acceptance except with $e^{-\Omega(t_A)}+e^{-\Omega(m)}$. The total fault bound and parameter slack put the selected sums within the global decoding radius. Honest predicates are true; \eqref{eq:blockwise-Bob-output} gives the output.
\end{proof}

\subsection{Choice hiding and the corrupt-sender simulation}
\label{sec:Bob-to-Alice-leakage}

We bound the entropy of Bob's label vector after selector disclosure. Code masking then gives choice hiding, which combines with the audit to prove the corrupt-sender simulation.

\begin{lemma}[Leakage-robust post-selector entropy]
\label{lem:post-J-selector-entropy}
Before the fresh diagonal, the totalized ideal-CP experiment satisfies
\begin{equation}
 \pg(\beta\mid E_A)\le
       2^{\,2\ell_{\sf A}+3t_A+1-km},
 \label{eq:selector-global-entropy}
\end{equation}
where $E_A$ includes Alice's entire state, public transcript, submitted-record log and untouched reference. The same absolute bound holds on the continuing substate. It also holds in the native equivocal-Bob experiment with target-independent record simulations.
\end{lemma}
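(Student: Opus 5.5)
The plan is to prove \eqref{eq:selector-global-entropy} in three stages. First, obtain the bound $2^{-km}$ in an idealized experiment with no leakage and no public test disclosures. Second, charge the physical leakage $\ell_{\sf A}$ through \Cref{lem:interactive-transcript-guessing}. Third, charge Bob's published test answers and decision bits through the classical leakage chain rule, \Cref{fact:minentropy-calculus}, in guessing-probability form.

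\emph{Step 1: the symmetry argument with no communication.} Replace the strategy by the no-communication strategy $\mathcal S_0$ of \Cref{lem:interactive-transcript-guessing} for Bob's device window. In $\mathcal S_0$, Alice's side $E_A$ is produced by her own instrument, her fresh $V$ and partition, and the metadata-only receipts of \Cref{lem:cp-causal-locality}. On Bob's side, the question $Y=U+V$ is uniform and independent of Alice's laboratory. The key symmetry is as follows. For each nonerased microblock, consider the involution that exchanges question classes $0$ and $1$ on the unaudited positions $I_{i,j}\setminus T_{i,j}$ (equivalently, relabels Bob's offset $U_a$ on those positions) and flips $\beta_{i,j}$. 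This map preserves the uniform joint distribution of $(U,Y)$ and of the labels. It also leaves the selector $J_{i,j}=J_{i,j}^{\beta_{i,j}}$ unchanged, and it leaves the erasure flag unchanged because the size test is symmetric in the two classes. On erased symbols the labels are fresh fair bits that are never used. By \Cref{fact:local-instrument-marginal}, Alice's marginal state together with the opened selectors and $\mathcal E$ does not depend on Bob's private offset values on the selectors. Hence, conditioned on everything Alice sees, $\beta$ is uniform on $\Ffield^m$, and $\pg\le 2^{-km}$.

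One caveat must be handled explicitly: Alice later learns $U$ on $T_A$, $V$, and the openings. Because $T_A$ is disjoint from the microblocks, these disclosures do not touch selector positions.

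\emph{Step 2: charging the physical leakage.} Apply \Cref{lem:interactive-transcript-guessing} to the event that Alice's guessing measurement outputs $\beta$. This event is well defined because the guessing measurement acts after the window, and the lemma preserves the state on the consistent branch. The lemma gives $\Pr_{\mathcal S}\le 2^{2\ell_{\sf A}}\Pr_{\mathcal S_0}$.

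\emph{Step 3: charging the public disclosures.} The classical values disclosed in the protocol are Bob's test answers (at most $3t_A$ bits, the offsets being independent of $\beta$) and the aggregate audit bit. Guessing over these contributes the factor $2^{3t_A+1}$, using \Cref{fact:minentropy-calculus} or, directly, the fact that the guessing probability grows by at most $|\mathcal L|$. The test and selector acceptance bits need no separate charge, since Alice computes them herself. Subnormalized continuing branches follow from \Cref{lem:cq-substate-calculus}, because restriction to a substate only decreases $\pg$. For the native equivocal-Bob experiment, the record simulations are target-independent, so the same symmetry holds there.

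The main obstacle is Step~1 under arbitrary entangled, non-IID devices. One must check that the class-swapping involution is a genuine symmetry of the joint state, given that Bob's device may correlate its answers $B$ with $Y$. The route is to note that $B$ is committed and never opened on selectors before the mask, and that $U$ is private to Bob. The plan is therefore to define the involution on Bob's private classical registers $(U_a)$ and labels only, with $Y$ fixed, and argue that Alice's view is a function of $Y$ and never of $U$ off $T_A$.
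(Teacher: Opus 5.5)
Your skeleton matches the paper's proof. You remove the window communication with \Cref{lem:interactive-transcript-guessing} at cost $2^{2\ell_{\sf A}}$, debit the $3t_A$ tested column bits and the one audit bit, and swap question classes $0,1$ off the audit position while flipping $\beta_{i,j}$, which fixes every selector and the erasure set. The gap is in \emph{why} that swap is a symmetry of Alice's view. Your justification fails at two points.

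\emph{The test-set disclosures.} Your Step~1 caveat is wrong: disjointness of $T_A$ from the microblocks does not make the test-set disclosures harmless. Bob's device measures the whole batch jointly, so $B_{T_A}$ may encode $Y$ at microblock positions. Corrupt Alice chooses both $T_A$ and the partition and may have arranged this with the vendor. Since $Y_a=\beta_{i,j}$ for every $a\in J_{i,j}$, the later opening of $J$ turns this into label information. The symmetry therefore holds only for a view from which the actual tested columns are absent, and your Step~3 has to construct that view. Applying \Cref{fact:minentropy-calculus} to a final cq-state does not work as stated. Alice's test decision, whether she opens her audits, and hence whether the audit bit, $\mathcal E$ and $J$ are released at all, depend adaptively on $B_{T_A}$. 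So no fixed state ``Alice's view without $L$'' exists. The paper instead replaces the $3t_A$ tested bits at their first release by an independent uniform guess, without conditioning on its correctness. The correct-guess branch reproduces the real joint continuation at cost $2^{3t_A}$. In the reduced experiment, apart from the one aggregate audit bit, no actual Bob outcome reaches Alice. Alternatively, you could totalize the audit bit, $\mathcal E$ and $J$ as functions of records frozen before Step~8 ($M_A$ is committed at Step~7). Alice's final view is then a channel applied to her pre-Step-8 state, $L$, and these totalized values, and the static chain rule does apply. The guessed-value replacement is also what makes the native equivocal-Bob case go through: the equivocal openings of $U,B$ on $T_A$ are generated from the guessed values.

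\emph{The route in your final paragraph.} This route is backwards and would fail. You cannot hold $Y$ fixed and move $U$: $V$ is in Alice's view, so $U=Y-V$ is determined. Moreover $J^b_{i,j}$ is defined from $Y$, so flipping $\beta_{i,j}$ at fixed $Y$ changes the opened selector from $J^0_{i,j}$ to $J^1_{i,j}$; the flip is then not a symmetry at all. Nor is Alice's view ``a function of $Y$''; if it were, swapping classes in $Y$ would change it. The involution must act on the device input $Y$ together with $\beta$, as in your Step~1. What makes it measure-preserving jointly with Alice's view is the opposite fact: in the reduced experiment, Alice's entire view and reference are \emph{independent} of the uniform $Y$ off $T_A$. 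This follows from whole-laboratory no-signalling (\Cref{fact:local-instrument-marginal}) and the privacy of $U$. The correlation of $B$ with $Y$ that worries you is then irrelevant, because $B$ is traced out of Alice's view. Note also that your sentence saying Alice's state ``does not depend on Bob's private offset values on the selectors'' only restates the conclusion, since those offsets determine $\beta$ on $J$.
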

\begin{proof}
\emph{Totalization and disclosures.} Assign canonical test/partition defaults on invalid commands, sample every label even on aborted or erased branches, and keep private size flags undisclosed. Remove the physical communication by \Cref{lem:interactive-transcript-guessing}. At the first release of Bob's $3t_A$ tested answer bits replace them by an independent guess. Correct guesses reproduce the complete joint continuation, costing $2^{2\ell_{\sf A}+3t_A}$ in probability. Do not condition the reduced strategy on guess correctness.

\emph{Symmetry.} In this reduced experiment all actual Bob outcomes stay private. Whole-laboratory no-signalling makes Alice and her reference independent of the remaining uniform questions. Questions on the already fixed test set may be conditioned on freely. The partition precedes tested-answer disclosure. Within each microblock swap question labels $0,1$ outside its audit and flip the independent $\beta_{i,j}$. The selected set, the two-candidate size condition and the whole erasure mask remain unchanged. All labels are therefore uniform jointly with Alice's view before the one aggregate decision; appending that bit costs a factor of two. Alice's own table, $X,C$ disclosures and submitted-input log are channels of the same view. Undo the guessing factors and restrict the positive state to actual continuation without normalization.
\end{proof}

\begin{corollary}[The code mask hides the choice]
\label{cor:selector-hash-uniform}
Put $b_A=2\ell_{\sf A}+3t_A+1$. For each fixed binary choice $d$, the continuing state after $M=\beta+R_d$ differs from a uniform $M\in\Ffield^m$ independent of $E_A$, with the fresh diagonal retained, by at most
\[
 \varepsilon_m=2^{-(\xi m-b_A-O(1))/2}
\]
in trace norm, plus the negligible bounded-sampling error.
\end{corollary}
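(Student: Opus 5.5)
The plan is to combine \Cref{lem:post-J-selector-entropy} with \Cref{lem:diagonal-mask}, checking that the hypotheses of the latter hold on the continuing substate at the moment Bob samples $\sigma$ and $R$. First fix the cq state $\rho_{\beta E_A}$ just before Step~12. It is the totalized ideal-CP experiment restricted to the continuing branch, where $\mathsf{tst}=\mathsf{aud}=\mathsf{sel}=1$ and no abort has occurred. Here $E_A$ contains Alice's entire quantum state, the public transcript (including the opened selectors and $\mathcal E$), her submitted-record log, and the untouched reference. By \Cref{lem:post-J-selector-entropy}, this subnormalized state satisfies $\pg(\beta\mid E_A)\le 2^{b_A-km}$, that is, $\hmin(\beta\mid E_A)\ge km-b_A$. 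We read $\beta\in\Ffield^m$ through the fixed basis $e_1,\ldots,e_k$. On erased symbols $\beta_i$ is still a committed uniform label, so $\beta$ is a full length-$m$ vector, as \Cref{lem:diagonal-mask} requires.

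Second, verify the freshness conditions. Bob samples $\sigma$ and then $R\gets_U\CodeB^\sigma_d$ with fresh trusted coins after this state is fixed. By \Cref{def:admissible-extension} and the exclusion clause of \Cref{def:ot-admissible-leakage}, these coins are independent of $\beta E_A$, and no device window is open after Step~4. So $(\sigma,R)$ is independent of $\beta E_A$ given $d$. Apply \Cref{lem:diagonal-mask} with $b=b_A$ and prescribed secret $u=d$. The result is $\|\rho_{\sigma,M,E_A}-\rho_\sigma\otimes\tau_{km}\otimes\rho_{E_A}\|_1\le2^{-(\xi m-b_A)/2}$. This is absolute for the subnormalized branch by the proof conventions, so no renormalization is needed. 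The right-hand state is exactly the independent-mask experiment, with the fresh diagonal retained and uniform $M$ independent of $E_A$.

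Third, account for the $O(1)$ in the exponent and the sampler error. The $O(1)$ absorbs the factor-two debit for the single aggregate decision if it is counted separately. It also covers any constant lost when $\pg$ bounds are translated into blockwise bounds through the Cauchy--Schwarz remark at the end of \Cref{lem:diagonal-mask}, which we need if $E_A$ carries classical labels such as $\mathcal E$ and the selectors. Finally, replace ideal uniform $\sigma$ and ideal fiber sampling by the bounded rejection samplers. By contractivity this adds at most $m64^{-m}$ plus the negligible bounded-sampling error, and that error is added after the estimate.

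The main obstacle is the first step: confirming that the entropy bound holds for the right register $E_A$ at the right time, namely after selector disclosure and Alice's table commitment but before the diagonal. In particular, Alice's subsequent view must not include anything correlated with $R$ other than through $M$. We would handle this by noting that everything after Step~12 is a channel applied to $(\sigma,M,E_A)$ and Bob's later private records. The corollary itself only concerns the state immediately after $M$ is sent, so contractivity suffices for any later use.
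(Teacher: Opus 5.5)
Your proposal is correct and follows the paper's own proof. That proof is the one-line application of \Cref{lem:diagonal-mask} to the guessing bound \eqref{eq:selector-global-entropy}, taking $b=b_A$ and the prescribed fiber $\CodeB^\sigma_d$, plus the bounded-sampling error. Some of your extra bookkeeping is unnecessary but harmless: the aggregate-decision bit is already the $+1$ in $b_A$, and \Cref{lem:diagonal-mask} applies directly to an arbitrary $E_A$ (which may contain the classical selectors and $\mathcal E$). So no blockwise Cauchy--Schwarz step is needed, and the $O(1)$ in the exponent is pure slack.
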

\begin{proof}
Apply \Cref{lem:diagonal-mask} to \eqref{eq:selector-global-entropy} and the prescribed-secret fiber $\CodeB^\sigma_d$.
\end{proof}

\begin{theorem}[Simulation against corrupt Alice]
\label{thm:receiver-leakage-simulation}
\Cref{prot:di-ot-sim-a} simulates the complete corrupt-Alice execution, including the honest input and output and the reference, with error at most $\varepsilon_a+2\varepsilon_m+\negl(\lambda)$, where $\varepsilon_a$ is \eqref{eq:global-native-audit}.
\end{theorem}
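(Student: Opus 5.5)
I would prove \Cref{thm:receiver-leakage-simulation} by a short chain of hybrids between the real corrupt-Alice execution (honest Bob with choice $d$) and the ideal execution driven by \Cref{prot:di-ot-sim-a}. All of them live inside the single ideal-CP experiment of the proof conventions. Each hybrid keeps Alice's full quantum workspace, the public transcript, her submitted-record log, the honest input/output register, and the reference $\Rref$. The physical prefix (Steps 1--11) is identical in real and simulated executions, since SIM runs the genuine honest Bob interface on the same device share $D_B$. Every difference therefore sits after the selector disclosure, so the whole argument concerns the continuation from the code mask onward.

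\emph{Hybrid $H_0$} is the real execution. In it honest Bob sends $M=\beta+R$ with $R\gets\CodeB^\sigma_d$, and Bob's output is his physical decoder output $\secmap_\sigma(H)$ (or $0$). \emph{Hybrid $H_1$} keeps the real message but replaces Bob's output: on the branch where $\Pay$ succeeds, the output becomes the $d$th component of the pair $(s_0^*,s_1^*)$ computed from Alice's logged table and payload by \Cref{lem:committed-payload-pairing}. Abort branches are unchanged. By \Cref{lem:blockwise-transcript-reconstruction}, the two outputs agree on the good-audit event. By \Cref{claim:blockwise-vanishing-audit}, the complementary event together with $\mathsf{aud}=1$ has mass at most $\varepsilon_a$. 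Since $H_0$ and $H_1$ are the same experiment except for a classical output register that differs only on this branch, their distance is $\le\varepsilon_a$. This must be done \emph{first}, while $M$ still encodes the true $d$, because the physical decoder's output depends on $\beta$ and the device answers. After the replacement, the output is a function $s^*_d$ of Alice-side data and the bit $d$ alone. The $\Label$ certificate is a true statement in both cases, so ideal CP issues it identically.

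\emph{Hybrid $H_2$} replaces $R\gets\CodeB^\sigma_d$ by $R\gets\CodeB^\sigma_0$ while keeping $d$ in the output rule $s^*_d$. The continuation after $M$ is one common channel: Alice's later actions, the $\Pay$ evaluation, and the computation of $s^*$. It reads only $(\sigma,M)$, Alice's view $E_A$, and the classical label $d$, and does not read $\beta$ or Bob's raw records. By \Cref{cor:selector-hash-uniform}, applied once for $d$ and once for $0$ and combined via the triangle inequality through the independent-uniform-$M$ experiment, the states before this channel differ by at most $2\varepsilon_m$ plus the bounded-sampling error. Contractivity then transfers the bound to the final compared state, with $d$ retained as a common label as permitted by \Cref{lem:diagonal-mask}.

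The step I expect to be delicate is justifying that this post-mask continuation really is a channel of $(\sigma,M,E_A,d)$ only. Two things could break it. Bob's remaining private records (labels, answers, device residual) might still be accessed, since honest Bob's retained state is traced out only at return. And the certificate deliveries must not depend on $\beta$. I would handle this by observing that in $H_1$ Bob's decoder is no longer used. The only later Bob actions are the certificate of a true $\Label$ statement and verification of Alice's $\Pay$ request, whose success is decided by ideal CP from Alice's table and public data. Bob's residual scratch is discarded at return in both experiments by \Cref{def:ot-call-boundary}. The entropy bound of \Cref{lem:post-J-selector-entropy} already includes the full Alice view, the log, and $\Rref$, and is stated on subnormalized continuing branches, so no renormalization loss appears.

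Finally, $H_2$ coincides with the ideal execution. SIM submits $(s_0^*,s_1^*)$ to $\ideal(\mathsf{OT})$, which hands honest Bob $s^*_d$. Aborts are relayed in the same order, and Alice's output is empty under \Cref{def:idealOT}. SIM uses no honest input and runs in strict QPT. Summing the gaps gives $\varepsilon_a+2\varepsilon_m+\negl(\lambda)$.
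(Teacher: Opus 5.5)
Your proposal is correct and follows the paper's proof essentially step for step. You first replace Bob's physical decoder output by $s^*_d$ from the $\Pay$ log using the global audit (cost $\varepsilon_a$), and only then move from $\CodeB^\sigma_d$ to $\CodeB^\sigma_0$ through the independent uniform mask, by two applications of \Cref{cor:selector-hash-uniform} under one common witness-free continuation. The paper states one point more explicitly than you do: in the middle uniform-mask hybrid, the $\Label$ success symbol is emitted by that fixed continuation rather than by a CP check of a possibly false statement. This is exactly what your requirement of a common channel that does not read $\beta$ needs.
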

\begin{proof}
Fix the external honest choice $d_{\rm ext}$, retained only as a comparison label. In the genuine endpoint, replace the physical decoder's output by $s_{d_{\rm ext}}^*$, computed from the final $\Pay$ log. \Cref{lem:blockwise-transcript-reconstruction} makes the outputs equal outside mass $\varepsilon_a$. This replacement does not condition the masking state on the audit good event.

After it, every public receiver action depends on public syntax, the transcript and CP results, not on private $Y,B,\beta$. $\Label$ is true at both genuine choice endpoints, so its success symbol has the same continuation there. Apply \Cref{cor:selector-hash-uniform} to compare the choice-$d_{\rm ext}$ endpoint to a uniform analytical mask and then to choice zero. Use the same witness-free continuation and the same output function $s_{d_{\rm ext}}^*$ in these comparisons. The middle comparison does not request a CP proof of a possibly false $\Label$ statement. The final endpoint has a genuine true $\Label$ and is the simulator's execution. Ideal OT chooses the indicated component of the submitted pair. Retiring the unused private decoder state preserves the complete compared marginal, including the reference and public aborts.
\end{proof}

\subsection{Global Fourier hiding and the corrupt-receiver simulation}

The next lemma also supplies the packed source commitment. For that use the code secret map has $q$ coordinates, and $\Label$ requires $\secmap_\sigma(R)=d\mathbf1^q$. A common classical label specifying candidate inputs is permitted if it belongs to the admissible pre-input state, independent of fresh game questions, or is a function of information already available to Bob or explicitly charged below. An arbitrary label computed from Alice's undisclosed raw source is not granted by this statement.

For the source version, $\Mode$ denotes the consistency proof of \Cref{blk:def:mode}. Its witness opens Bob's committed offsets and labels, matches every selector to its question class, and makes the public mask minus the witnessed label vector a zero-secret codeword.

\begin{table}[!htbp]
\centering
\begin{tabular}{p{0.34\textwidth}p{0.43\textwidth}r}
\hline
\textbf{Game quantity or extra information}&
\textbf{Protocol step}&\textbf{Charge}\\
\hline
Uniform questions and frozen columns&Steps 1--4&$0$\\
Early selected set&Steps 5--6, frozen before Step 7&$0$\\
Late canonical $(C,X)$&Steps 9 and 13; anchors reveal no raw $X$&$0$\\
\hline
Physical-window communication&Steps 3--4&$2\ell_{\sf B}$\\
Alice's audit answers&Step 9, three bits at each of $km$ positions&$3km$\\
Alice's test decision&Step 9&$1$\\
Selected pads, granted for analysis&Step 13&$k^2m$\\
\hline
\end{tabular}
\caption{The whole-batch game correspondence and entropy debits. All source-dependent disclosures are charged at their first release.}
\label{tab:game-correspondence}
\end{table}

\begin{lemma}[Global hiding of the unselected payload]
\label{lem:ordered-block-hidden-pads}
In ideal CP, or in the native equivocal-Alice experiment with binding Bob records and a sound real $\Mode$ proof, compare two ordinary honest payload constructions with the same effective selected input. They may depend on the permitted common label and Bob's classical prefix. Their complete receiver-reference states before $\Pay$ differ in trace distance by at most
\begin{equation}
 \varepsilon_f=\varepsilon_{\rm bad}
   +2^{(3k^2m+1-\gamma L+2\ell_{\sf B}+b_{\rm pub})/2},
 \qquad b_{\rm pub}=3km+O(1).
 \label{eq:leakage-ot-regime}
\end{equation}
Here $\varepsilon_{\rm bad}$ is the single global test, binding and false-$\Mode$ exceptional mass in that statistical experiment. The estimate has no factor depending on the number $q$ of secrets.
\end{lemma}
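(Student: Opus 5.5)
The plan is a global Fourier argument on the payload. Fix the receiver's classical prefix, meaning its frozen records $U,B,M_B$, the diagonal $\sigma$, the mask $M$ and the permitted label. The two payload constructions then differ only through the unselected pads.

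\emph{Step 1: reduce to the unselected branch.} Because $\Label$ is sound (or, in the native experiment, $\Mode$ is sound outside $\varepsilon_{\rm bad}$), $R=M+\beta\in\CodeB^\sigma$ has a fixed binary secret $d$. On each nonerased symbol the selected payloads $F^{\beta_{i,j}}_{i,j}$ are then distributed identically in the two constructions once the selected pads are granted to Bob. The selected sum $H=f+R*g$ has the same secret $s_d$ in both.

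\emph{Step 2: Fourier expansion.} Write the unselected payloads as (one-time-pad shares) $+$ $K^{1-\beta}_{i,j}$. Expanding the trace distance between the two receiver states in binary characters of the unselected pad vector, the input-dependent part is controlled by characters $\chi_\alpha$ whose coefficient differs between the two input pairs. Because the shares $Q_{i,j}$ are uniform subject to their sum and $g$ is uniform in a fiber of $\CodeD$, such a coefficient is nonzero only when $\alpha$ is orthogonal to $\CodeD_0$ restricted to $I$ but not to the fiber shift. By the dual-distance bound of \Cref{lem:multiplicative-codes}, $\alpha$ then touches at least $0.047m-|\mathcal E|\ge a_*m$ symbols. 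There are at most $2^{k^2m}$ relevant characters.

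\emph{Step 3: bound each character by guessing.} For each such character, bound $\|\rho_{\chi}\|$ via Helstrom (\Cref{fact:helstrom}). Guessing $\chi_\alpha(K^{1-\beta})$ with advantage $\epsilon$ yields, by linearity of the matrix hash with public $S$, a guess of the complementary raw bits $A_a[1-\beta_{i,j}]$ on the frozen selectors of those symbols. Combining this with Bob's answers $B_a[X_a]$ and row parity reconstructs Alice's rows on a set of density at least $a_*/[8(k+1)]$. This set is receiver-chosen but frozen before the anchors (\Cref{lem:cp-causal-locality,lem:post-output-block-restriction}). Using \Cref{lem:interactive-transcript-guessing} and charging the disclosures in \Cref{tab:game-correspondence} (that is, $2\ell_{\sf B}$, $3km$, $1$, and $k^2m$ for granted selected pads), \Cref{thm:restricted-order-repetition} applied via \Cref{lem:density-bucketing} bounds the guessing success on one common subnormalized good branch. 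That branch has small losses by \Cref{claim:sparse-test-to-random-blocks}, and its exceptional mass is placed in $\varepsilon_{\rm bad}$ once. The bound is $2^{-\gamma L+2\ell_{\sf B}+b_{\rm pub}+k^2m}$.

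\emph{Step 4: sum the characters.} Summing over characters with Cauchy--Schwarz, using \Cref{lem:linear-extraction}-style block summation, gives $\sqrt{2^{k^2m}\cdot 2^{k^2m}\cdot 2^{k^2m+1-\gamma L+2\ell_{\sf B}+b_{\rm pub}}}$, which is \eqref{eq:leakage-ot-regime}. The summation is over characters of the payload rather than secret coordinates, so no $q$ factor appears.

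The main obstacle is Step 3. The reduction from a character-guess to a row-guess must stay inside one fixed game with the selector frozen before anchors, and without conditioning on wins. The per-character advantages must also be combined on a single good branch, so that the repetition error is charged once rather than per character.
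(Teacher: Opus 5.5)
Your route is the paper's. You remove one common exceptional substate, expand the unselected ciphertexts in characters, keep only aggregated indices $w\in\CodeD_0^\perp\setminus\CodeD^\perp$, and use the early selector union $K_\alpha$ with row reconstruction and \Cref{lem:density-bucketing}. You then charge the debits of \Cref{tab:game-correspondence} and sum over at most $2^{k^2m}$ indices. Three steps are misstated, though none changes the architecture.

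\emph{The weight bound.} You claim a sensitive index touches at least $0.047m-|\mathcal E|\ge a_*m$ symbols. With $a_*=0.047$ and $|\mathcal E|$ up to $\lfloor m/100\rfloor$, the second inequality is false, and the subtraction is unnecessary. No ciphertext is sent on erased symbols, so every index is supported on $I$. Moreover, $(\CodeD_0|_I)^\perp$ consists exactly of the restrictions whose zero extensions lie in $\CodeD_0^\perp$. Hence the zero-extended $w$ is itself a nonzero dual word of weight $>0.047m$, entirely inside $I$. This is why the paper insists on omitting erased payloads. With your subtraction you would have to re-fix the density interval, and hence $\gamma$ and $N_0$.

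\emph{The per-character bound.} In Step~3, a predictor of the one bit $\chi_\alpha(K^{1-\beta})$ does not yield a guess of $X_\alpha$ ``by linearity''; for a fixed seed, one parity says little about the string. The implication holds only averaged over a seed that is uniform and independent of $X_\alpha$ and the side information. Its quantitative form is the leftover-hash bound, so run the argument in the extraction direction. The selector theorem and the debits give $H_{\min}(X_\alpha\mid E)\ge h$ with $h=\gamma L-2\ell_{\sf B}-b_{\rm pub}-k^2m$. The complementary matrices $S^{1-\beta_{i,j}}_{i,j}$ stay uniform and independent after the selected pads are granted. Therefore $\operatorname{Tr}_{\Ffield/\Ftwo}\sum_{i,j}\alpha_{i,j}S^{1-\beta_{i,j}}_{i,j}Z^{1-\beta_{i,j}}_{i,j}$ is a uniformly seeded one-bit linear hash of $X_\alpha$, and \Cref{lem:linear-extraction} bounds each operator coefficient by $2^{(1-h)/2}$. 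Your Step~4 expression equals $2^{k^2m}\cdot2^{(1-h)/2}$, so Cauchy--Schwarz adds nothing beyond the triangle inequality.

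\emph{Step~1.} It should state a stronger fact than ``same secret''. By translation of $f$, $H=f+R*g$ is uniform on $\CodeA^\sigma_{s_d}$ independently of $g$ and the raw data. That independence is what licenses averaging $g$ over $\CodeD_\Delta$ conditional on the selected data.
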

\begin{proof}
\emph{One good substate.} Delete the common exceptional mass before communication guessing or Fourier expansion. The remaining subnormalized state $\widehat\rho$ retains all immutable classical records and the reference. For a fixed ambient Fourier index $\alpha\in\Ffield^{km}$, retain the early branch on which it is zero outside $I$ and
\[
 w_i=\sum_j e_j\alpha_{i,j}\ (i\in I),\quad
 w_i=0\ (i\notin I),\qquad
 w\in\CodeD_0^\perp\setminus\CodeD^\perp .
\]
Dual distance gives $\wt(w)>0.047m$. The union $K_\alpha$ of selectors with $\alpha_{i,j}\ne0$ contains at least $0.047mN_0/8$ positions. It depends only on the frozen selectors, labels and erasure set and the fixed index $\alpha$, not on the later diagonal, mask or payload. It is a legal early selector. The density interval $[0.047/(8(k+1)),1]$ and \Cref{lem:density-bucketing} give uniform constants for every possible size of this selector. Let $X_\alpha$ concatenate the canonical complementary bits $A_a[1-\beta_{i,j}]$ on these selectors, in the fixed selector order. Let $E_0$ contain Dave's residual state, his frozen records, the permitted common label, the untouched reference, and the late canonical questions, before the charged source disclosures and extractor pads. At a disclosure cut, the communication-elimination strategy uses a guess until the corresponding debit is charged below. Guessing $X_\alpha$ reconstructs Alice's full row at each active winning position, using $B_a[X_a]$ and row parity. The selector theorem and causal communication elimination yield
\[
 \pg(X_\alpha\mid E_0)_{\widehat\rho,\mathrm{sensitive}}
 \le2^{2\ell_{\sf B}-\gamma L}.
\]
This is a family of guessing inequalities on one state, not a union of separately smoothed statements. The game is unconditioned on later proof acceptance; invalid or nonsensitive early branches use legal fallback selectors and are excluded from the target event. The permitted input label is included in the initial state or in already charged receiver information, so the same game bound applies.

\emph{Disclosures and fresh seeds.} Charge the messages in \Cref{tab:game-correspondence}. Adaptive later $\sigma,M$ are postprocessing after the audit charge. Grant the selected seeds and pads $O_{i,j}=K_{i,j}^{\beta_{i,j}}$, costing $k^2m$ bits. The complementary seeds are still independent. Thus
\[
 h=\gamma L-2\ell_{\sf B}-b_{\rm pub}-k^2m
\]
lower-bounds the hidden concatenation's absolute conditional min-entropy. For nonzero $\alpha_{i,j}$, the functional $\operatorname{Tr}_{\Ffield/\Ftwo} (\alpha_{i,j}S_{i,j}^{1-\beta_{i,j}}Z_{i,j}^{1-\beta_{i,j}})$ is a uniform one-bit linear hash of that raw string. Combining its independent seed coordinates over the support gives a uniform one-bit hash of $X_\alpha$. \Cref{lem:linear-extraction} bounds the unnormalized operator Fourier coefficient by $2^{(1-h)/2}$, with all public seeds retained. For variable lengths and classical prefixes apply the hash bound blockwise and Cauchy--Schwarz; the global guessing bound suffices.

\emph{Which coefficients depend on the input.} Given the granted selected pads, the ciphertext pair is equivalently
\[
 U_{i,j}=p_{i,j}^{\beta_{i,j}},\qquad
 V_{i,j}=K_{i,j}^{1-\beta_{i,j}}+e_jg_i.
\]
For fixed legal $R,g$, translation $f\mapsto H=f+R*g$ maps $\CodeA^\sigma_{s_0}$ bijectively to $\CodeA^\sigma_{s_d}$. Consequently $H$, and its uniform selected splits $U$, are independent of $g$ and the raw data conditional on the prefix. Their laws agree for the two candidate pairs. The character of the added $g$ mask is indexed by $w$. Averaging uniform $g\in\CodeD_\Delta$ annihilates it unless $w\in\CodeD_0^\perp$; if $w\in\CodeD^\perp$ its phase is input independent and cancels. Only the sensitive coefficients above remain. Omitting erased ciphertexts is essential: their indices extend by zero directly into the full dual, since $(\CodeD|_I)^\perp=\{w_I:\operatorname{ext0}_I(w_I)\in\CodeD^\perp\}$.

\emph{Fourier inversion.} The inverse factor $|\Ffield|^{-k|I|}$ cancels when summing trace norms over the ciphertext values. Each sensitive difference has norm at most $2\cdot2^{(1-h)/2}$; the trace-distance factor is $1/2$. The single ambient space contains at most $|\Ffield|^{km}=2^{k^2m}$ indices, covering every early erasure pattern without another pattern count. Their sum is the exponential term in \eqref{eq:leakage-ot-regime}. Restore the one deleted bad mass once. The common-label and prefix argument is pointwise in the selected-payload algebra and uses a global entropy bound, so it introduces no $2^q$ or $|\Ffield|^q$ factor.
\end{proof}

\begin{lemma}[The $\Pay$ success continuation]
\label{lem:pay-success-continuation}
If two honest payload endpoints use the same canonical table and ordinary valid input pairs, their pre-$\Pay$ comparison extends through $\Pay$, delivery and the corrupted party's subsequent processing.
\end{lemma}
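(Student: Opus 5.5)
The plan is to show that everything happening at and after the $\Pay$ request is one fixed efficient channel applied to the pre-$\Pay$ state. Once that is established, contractivity of trace distance and \Cref{lem:efficient-composition-hybrids} carry the pre-$\Pay$ bound of \Cref{lem:ordered-block-hidden-pads} through to the end of the continuation.

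Concretely, fix the two honest payload endpoints. They share Alice's canonical table $M_A$, the same public prefix and the same CP session, and they differ only in the valid input pairs used in Step~13. The first step is to check that in ideal CP the $\Pay$ request produces the same output in both endpoints: its public part is the exact tuple $(M_A;P_{\Pay},x_{\rm pay})$, which is already in the pre-$\Pay$ state. By \Cref{eq:blockwise-Bob-output} and the construction of Step~13, the honest payload satisfies $P_{\Pay}$ on the stored table in both endpoints, because each pair is valid. Hence the functionality's released symbol is the success certificate in both, with delivery still scheduled by the adversary.

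This identifies the post-$\Pay$ map as follows: append the success certificate, let the corrupted party run its given QPT continuation (including any late abort request relayed under \Cref{def:idealOT}), then perform delivery and the retirement of honest call-local scratch prescribed by \Cref{def:compiled-device-calls}. None of these operations reads the input pair, Alice's private pads, or honest device shares beyond what is discarded identically in both endpoints.

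The main obstacle is making sure this continuation channel is genuinely independent of the endpoint. The honest side still holds private registers (the pair, $f,g$, the unselected pads, the complementary seeds), and these differ between the endpoints. I would handle this by noting that, after the payload is sent, honest Alice uses none of them except to authorize $\Pay$. The ideal CP answers that request with a symbol fixed by predicate truth, not by the registers' contents. At return, \Cref{def:ot-call-boundary}(iii)--(iv) traces these registers out as irrevocably discarded scratch. I would therefore write each endpoint as $\mathcal C\circ\mathrm{Tr}_{\rm scratch}$ applied to its pre-$\Pay$ state, after first moving the scratch trace earlier. That move is legitimate because no later operation touches the scratch. It also matches exactly the marginal compared in \Cref{lem:ordered-block-hidden-pads}, the complete receiver-reference state.

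Contractivity then gives the same $\varepsilon_f$ bound after the continuation. Because $\mathcal C$ is QPT, the first part of \Cref{lem:efficient-composition-hybrids} also preserves any computational comparison; this matters once ideal CP is replaced by its realization. Abort branches are covered because each is retained as a subnormalized branch of the same channel output.
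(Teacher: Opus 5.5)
Your proposal is correct and follows the paper's proof. $\Pay$ holds at both endpoints by construction, so ideal CP releases the same success symbol on the respective public tuples, and scheduling, Bob's possible late abort, delivery and retirement form one channel on the corrupt state and transcript, which is applied to the compared pre-$\Pay$ states while retaining the honest output interface and reference; your explicit tracing-out of Alice's differing private registers (and, implicitly, the stored table) is what the constant success certificate justifies, though $\Pay$'s truth follows directly from the Step~13 construction rather than from \eqref{eq:blockwise-Bob-output}.
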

\begin{proof}
$\Pay$ is true at both endpoints by construction. Its prescribed ideal certificate is therefore the same success symbol on their respective public tuples; scheduling and possible abort are one channel on the corrupt state and transcript. All syntax and projection checks concern the same table. Apply this channel to the compared states and retain the honest output interface and reference.
\end{proof}

\begin{theorem}[Simulation against corrupt Bob]
\label{thm:sender-leakage-simulation}
\Cref{prot:di-ot-sim-b} simulates the complete corrupt-Bob execution with trace-distance error at most $\varepsilon_f+\negl(\lambda)$.
\end{theorem}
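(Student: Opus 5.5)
The plan is a short hybrid argument that links the real corrupt-Bob execution to the output of \Cref{prot:di-ot-sim-b}. Throughout, every intermediate state keeps Bob's complete continuing state, the public transcript, Alice's honest output interface and the reference. Fix Alice's honest input $(S_0,S_1)$ and an admissible extension. Everything happens in the ideal-CP hybrid, so the simulator can read Bob's committed records from the CP log. This covers $U$, $M_B$, the labels $\beta$, and the later mask $M$ and diagonal $\sigma$.

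\emph{Hybrid 0} is the real execution. \emph{Hybrid 1} runs Steps 1--12 exactly as in the real execution, using the genuine honest physical interface, which is also the simulator's Step~1. Nothing changes, because the simulator uses the same device share through $\mathsf{Dev}_H$ and relays every public rejection. On the branch where $\Label$ succeeds, \Cref{cor:cp-soundness} says that $R=M+\beta\in\CodeB^\sigma$ and that its secret $d=\secmap_\sigma(R)\in\{0,1\}$ is determined by the stored records. By \Cref{lem:cp-causal-locality}, $d$ is a fixed function of Bob's frozen CP inputs and does not depend on Alice's payload. So the simulator can compute $d$ efficiently and submit it to $\ideal(\mathsf{OT})$, receiving $S_d$. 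This step is an exact rewriting, since the output is just a classical copy of a logged value.

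\emph{Hybrid 2} replaces Alice's payload input $(S_0,S_1)$ by the dummy pair $(S_d,S_d)$, keeping the same canonical table $M_A$, the same anchors and the same openings. The two constructions have the same effective selected input $S_d$. This follows from the translation $f\mapsto f+R*g$ together with $\secmap_\sigma(R)=d$: in both cases the selected codeword $H$ has secret $S_d$. \Cref{lem:ordered-block-hidden-pads} then bounds the trace distance between the receiver--reference states just before $\Pay$ by $\varepsilon_f$. \Cref{lem:pay-success-continuation} carries this bound through the $\Pay$ certificate, the possible late abort by Bob and his later processing. The reason is that $\Pay$ is true at both endpoints, so the certificate acts as a single common channel. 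Contractivity of trace distance finishes this step.

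Last, we check that Hybrid~2 is exactly the ideal experiment with \Cref{prot:di-ot-sim-b}. A late abort by Bob is relayed before honest delivery, as \Cref{def:idealOT} allows. Otherwise Alice outputs $\emptystring$. Retiring the honest scratch traces out the same registers on both sides. Abort branches before Step~12 match identically, because they are produced by the same channel. The bounded-sampling errors add $\negl(\lambda)$. The simulator is strict QPT: it does linear algebra over $\Ffield$, runs the honest payload sampler, and makes one use of the device interface.

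The main obstacle is checking that the hypotheses of \Cref{lem:ordered-block-hidden-pads} hold on the branch where $d$ is extracted. The difficulty is that $d$ is a function of Bob's prefix and is now used to choose which of Alice's inputs survives. So $d$ has to count as a permitted classical label, meaning a function of information Bob already has, rather than something computed from Alice's raw source. My first move is to observe that $d$ is a deterministic function of Bob's own records $(M,\beta,\sigma)$, which makes it part of his classical prefix. The lemma explicitly allows the payload construction to depend on that prefix. The exceptional masses (failed test, false $\Label$) are already contained in $\varepsilon_{\rm bad}$. Given that, the total error is $\varepsilon_f+\negl(\lambda)$, which is negligible under the margins of \Cref{lem:unified-parameters}.
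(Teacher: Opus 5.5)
Your proposal is correct and follows the paper's proof. At a valid $\Label$ you read off $d=\secmap_\sigma(M+\beta)$ from Bob's committed records and the public mask. This makes $d$, and hence the dummy pair $(S_d,S_d)$, a function of Bob's classical prefix, so \Cref{lem:ordered-block-hidden-pads} applies with the same effective selected input. You then extend the bound through $\Pay$ and Bob's late abort with \Cref{lem:pay-success-continuation}.

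One small correction: in the ideal-CP hybrid a false $\Label$ can never be certified (\Cref{cor:cp-soundness}). That event therefore contributes zero mass, rather than being absorbed into $\varepsilon_{\rm bad}$.
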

\begin{proof}
$\Label$ determines $d$ before fresh branch seeds. Its extraction uses Bob's existing committed records and public mask, so $d$ and the dummy pair satisfy the prefix condition of \Cref{lem:ordered-block-hidden-pads}. That lemma compares the real pair with $(S_d,S_d)$ while retaining Bob's complete device residual, quantum memory, transcript and reference. \Cref{lem:pay-success-continuation} appends the same success continuation and permits Bob's late abort after its ideal release.
\end{proof}

\begin{proof}[Proof of \Cref{thm:unified-di-ot}]
\Cref{claim:blockwise-leakage-correctness} gives correctness. \Cref{thm:receiver-leakage-simulation,thm:sender-leakage-simulation} give the two strict-QPT simulations, using only the declared physical interface and polynomial classical code operations. \Cref{lem:unified-parameters} makes their statistical errors negligible. The comparisons include the entire corrupt continuation, honest input/output and untouched reference. Apply \Cref{thm:sequential-composition} at the prescribed retirement boundaries.
\end{proof}

The statistical proofs above also apply to the native experiments used below. Against a corrupt receiver, equivocate Alice's test-mask and table commitments before deferring the test; leave Bob's binding records and $\Mode$ real. Against a corrupt committer, equivocate Bob's records while receiving Alice's table with the honest binding algorithm. Sequential primitive errors are added to the final comparison and are not multiplied by guessing factors or Fourier counts. This preserves the common bad-event and all-native- opening guarantees needed by the source commitment.

\section{Commit-and-prove from one committed mask string}
\label{sec:bc-cp-compiler}

The OT protocol of \Cref{sec:leakage-ot} uses immutable records and owner-authorized predicate checks. We realize this interface with one preprocessing commitment in each direction. Each subsequent record uses a fresh interval of the committed random string; ordinary post-quantum zero knowledge certifies its predicates. Only the preprocessing commitment uses devices. \Cref{sec:di-bc} constructs it in one window.

\subsection{The string-commitment interface}
\label{sec:bc-contract}

The compiler uses the extractable string commitment of \Cref{def:vector-commitment}. Its extraction guarantee preserves the continuing state, and its native-opening predicate checks a public transcript against a candidate string and a classical witness.

The quantifier over all witnesses in \eqref{eq:vector-all-openings} will make an inconsistent predicate a false NP statement. Agreement only against efficiently produced native openings would not supply this implication. The property is required in the extraction experiment, where the proof will use it.

\subsection{The masked record protocol}
\label{sec:compiler-statement}

For a mask bank, a public capacity $q$ bounds the total length of new records in one owner direction. Records may be created after arbitrary admissible outer computation. Their lengths, roles, and identifiers are public. All accesses to the honest bank's private state use the following interface.

\begin{protocolbox}
\protocolstyle
\noindent Alice owns the records; Bob verifies. Public capacity is $q$. Use the string commitment of \Cref{def:vector-commitment} and the ordinary classical-message ZK proof of \Cref{lem:npbc-pqzk}. A rejected check or malformed command causes terminal session abort.
\begin{protocolsteps}
\item \emph{Alice (preprocessing).} Sample $\mu\leftarrow_U\Ftwo^q$ and complete a string commitment to $\mu$, with Bob as receiver. Save its transcript $\tau$, native witness $w$, and an allocation counter initially zero. Abort if preprocessing fails.
\item \emph{Alice (new record).} On input $M_a\in\Ftwo^{n_a}$, allocate the next unused interval $I_a\subseteq[q]$ of length $n_a$. Abort on insufficient capacity. Send the tagged tuple
\[
 (a,I_a,c_a),\qquad c_a=M_a\oplus\mu_{I_a}.
\]
\item \emph{Bob (receipt).} Check the identifier, role, prescribed length, and next unused interval. Store $(a,I_a,c_a)$, advance the counter, and send the metadata-only receipt.
\item \emph{Alice (predicate).} Publish an authorized request $(\mathbf a,P,y)$ for previously receipted same-owner records $\mathbf a=(a_1,\ldots,a_r)$. Abort unless their handles, roles, lengths, saved intervals, and schedule are valid. If $P(M_{a_1}\|\cdots\|M_{a_r},y)=0$, abort. Otherwise prove
\begin{equation}
 \begin{gathered}
 \exists u,w:\quad \Ver_{\rm str}(\tau,u,w)=1\quad\land\\
 P\bigl((c_{a_1}\oplus u_{I_{a_1}})\|\cdots\|
        (c_{a_r}\oplus u_{I_{a_r}}),y\bigr)=1
 \end{gathered}
 \label{eq:masked-cp-relation}
\end{equation}
using the same bank witness $(\mu,w)$ for every component. A projection is the singleton request with $P(M;S,z)=[\pi_S(M)=z]$.
\item \emph{Bob (certificate).} Check routing and syntax against the saved tuple $(\tau,(a_h,I_{a_h},c_{a_h})_{h=1}^r)$, including each immutable interval and record length, and verify the proof. Caller-supplied replacement intervals or ciphertexts are rejected. Previously certified projections agree on overlapping positions of the same handle. Delivery and withholding follow \Cref{def:cp}.

\end{protocolsteps}
\noindent Steps~2--5 may repeat on the public polynomial schedule. Intervals are disjoint and immutable. The bank is never natively opened; all its private state is discarded at terminal return.
\par\smallskip\noindent\emph{Where each step is used.} Step~1: \Cref{def:vector-commitment}; Steps~2--3: \Cref{def:cp}; Steps~4--5: \eqref{eq:masked-cp-relation} and \Cref{thm:masked-cp}.
\end{protocolbox}
\par\nobreak\noindent
\begin{minipage}{\linewidth}
\captionsetup{hypcap=false}
\captionof{algorithmica}{Masked commit-and-prove with one preprocessing commitment per owner. All record and predicate phases are classical.}
\label{prot:masked-cp}
\end{minipage}

A record opening can reveal its consumed mask positions through $c_a$. The unopened bank witness and unused intervals remain private. The ZK proof's internal commitments are the ordinary classical commitments of \Cref{fact:pqowf-np-anchors}; they do not use the bank. Capacity therefore counts record bits, not proof messages. Reusing a record in joint predicates consumes no further mask positions.

\begin{theorem}[Masked commit-and-prove]
\label{thm:masked-cp}
Assume qOWFs and \Cref{def:vector-commitment}. \Cref{prot:masked-cp} realizes \Cref{def:cp} for adaptive sequential phases with total new-record length at most $q$ in each direction. It has auxiliary-input-uniform QPT simulation for either static corruption, retaining aborts and the complete continuing state. Each direction uses one source commitment. All later honest operations and messages are classical and make no source or device calls.
\end{theorem}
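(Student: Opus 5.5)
The plan is to give one simulator per corruption and argue indistinguishability through a short sequence of hybrids. All comparisons retain the complete continuing state, the reference, and the abort branches, and hybrids are composed with \Cref{lem:efficient-composition-hybrids}. Correctness is immediate. The honest bank witness of \Cref{def:vector-commitment}(iv) makes every true instance of \eqref{eq:masked-cp-relation} provable, and the honest verifier's checks on intervals and overlaps match the ideal rejections of \Cref{def:cp}.

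\emph{Corrupt owner (Alice).} The simulator first runs the state-preserving extractor of \Cref{def:vector-commitment}(i) at preprocessing completion and obtains $u^*$. If $u^*=\perp$, it treats every later proof as necessarily failing. For each new record $(a,I_a,c_a)$ it defines $M_a^*=c_a\oplus u^*_{I_a}$ and submits $\mathsf{commit}(M_a^*)$ to $\ideal(\mathsf{CP})$. For each predicate request it runs the honest ZK verifier. If the verifier accepts, it issues $(\mathsf{check},\mathbf a,P,y)$; otherwise it sends $\bot$.

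The hybrids are as follows. First, replace the real preprocessing by the extractor; the extraction guarantee (i) makes this change indistinguishable. Next, compare the verifier's acceptance with the ideal outcome. By the agreement property \eqref{eq:vector-all-openings}, except with probability $\varepsilon_{\rm ao}$ every witness $(u,w)$ with $\Ver_{\rm str}=1$ has $u=u^*$. Consequently, whenever $P(M^*_{\mathbf a},y)=0$ the statement \eqref{eq:masked-cp-relation} is false, and soundness of the ZK proof bounds acceptance per proof by a negligible quantity. A union bound over the polynomially many checks then finishes this case. The quantifier over all witnesses in \eqref{eq:vector-all-openings} is exactly what turns an inconsistent predicate into a false statement without any rewinding. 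Selective-opening agreement on overlaps follows because every certified value is a projection of the same $M^*_a$.

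\emph{Corrupt verifier (Bob).} The simulator runs preprocessing honestly as a commitment to $0^q$. It answers each new record with a uniform $c_a$. For each request it reads the public $(\mathbf a,P,y)$ and the ideal success bit, and on success runs the ZK simulator of \Cref{lem:npbc-pqzk}. The hybrids proceed in order.
\begin{enumerate}[label=\textnormal{(\arabic*)},leftmargin=2em,itemsep=1pt]
\item Start from the real execution, with a bank $\mu$ and true statements.
\item Replace each proof by its simulation, one at a time. Each step preserves the continuing quantum state by post-quantum ZK with auxiliary input; the honest owner's abort on a false predicate matches the ideal $\bot$.
\item Replace $c_a=M_a\oplus\mu_{I_a}$ by uniform strings. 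This is an identical distribution, because the intervals are disjoint and $\mu$ is uniform, but only once the bank commitment no longer depends on $\mu$.
\end{enumerate}
So before step (3), swap the commitment to $\mu$ for a commitment to $0^q$. Joint hiding \Cref{def:vector-commitment}(iii) supports this swap, with the label $\mu$ retained on both sides as the definition permits. The swap is legitimate at this point because, after (2), no later operation uses the witness $w$.

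The main obstacle is the ordering in this second simulation. The hiding reduction must run all later phases, which involve adaptive outer computation, the adversary's device residuals, and the reference, with knowledge of $\mu$ but without $w$. It must also keep the messages $M_a$, which are supplied adaptively by the environment and may depend on earlier views, as part of the reduction's own efficient continuation. I would handle this by treating $(\mu,\{M_a\})$ as the common classical label together with an efficient postprocessing channel, and by checking that this label is independent of fresh game questions as \Cref{def:vector-commitment}(iii) requires. Finally, capacity accounting and metadata-only receipts give identical transcripts, and terminal erasure of the bank matches the ideal interface.
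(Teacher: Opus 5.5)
Your two hybrid chains follow the paper's proof. For a corrupt owner: extraction, agreement with all native openings, then statistical soundness. For a corrupt verifier: proof simulation, then the labelled-hiding swap of the bank to $0^q$, then uniform ciphertexts. Three points still need repair.

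\textbf{The check is issued too late.} Your corrupt-owner simulator submits $(\mathsf{check},\mathbf a,P,y)$ to $\ideal(\mathsf{CP})$ only after the emulated verifier accepts. In \Cref{prot:masked-cp} the request is published before the proof, and by \Cref{def:cp} it stays public even if the session then aborts. So whenever the proof is rejected, which any cheating owner can force, the real honest verifier has received the claim $(\mathbf a,P,y)$, while your ideal verifier receives only $\bot$. The compared honest output and transcript then differ. The simulator must submit the request first and hold delivery during the emulation. It should release success only if the proof accepts and the ideal predicate is true.

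\textbf{The case $u^*=\perp$ is undefined.} Receipts come from the ideal functionality, so $M_a^*$ must be defined even when $u^*=\perp$. You need a default here, such as $\bar u^*=0^q$.

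\textbf{Only $\mu$ can be the common label.} In the hiding step, the label of \Cref{def:vector-commitment}(iii) can only be $\mu$. The records $M_a$ are chosen adaptively after the commitment, possibly from the receiver's messages and from later device windows. They are therefore neither part of the pre-input state nor independent of the view. They must instead be produced inside the continuation that the hiding reduction runs, namely by the actual caller. This is legitimate because, after the proofs are replaced, the continuation reads neither $w$ nor the committed value. The ZK simulator is a fixed total circuit, so it remains efficient on the instances that become false after the swap.

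\textbf{Only one direction is treated.} The theorem has one bank in each direction. A statically corrupted party is therefore the owner of its own bank and, at the same time, the verifier of the other party's bank. You need one stateful simulator that:
\begin{itemize}
\item extracts from the corrupt party's bank, and
\item runs the dummy bank and simulated proofs for the honest party's bank.
\end{itemize}
You also need a hybrid order in which each module's replacement treats the other module's pending records, the future device batch, and the outer state as reference or efficient continuation. The paper first replaces the honest owner's proofs and bank, and then replaces the corrupt owner's preprocessing by extraction. It bounds the soundness event in that final experiment and checks that no rewinding segment crosses a device invocation. With these additions, your argument becomes the paper's proof.
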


\begin{protocolbox}
\protocolstyle
\noindent Public state consists of $\tau$ and the allocated records $(a,I_a,c_a)$. Each simulator preserves the entire corrupt workspace and the untouched reference. Map real rejection to ideal session abort.
\par\smallskip\noindent\emph{Corrupt owner; SIM plays the verifier.}
\begin{protocolsteps}
\item \emph{SIM (preprocessing).} Run source extraction and retain $u^*$. Put $\bar u^*=u^*$ if $u^*\in\Ftwo^q$ and $\bar u^*=0^q$ otherwise.
\item \emph{SIM (record).} After the honest syntax checks, set $M_a^*=c_a\oplus\bar u^*_{I_a}$ and submit it to ideal CP before releasing the receipt.
\item \emph{SIM (predicate).} Submit the public tuple request to ideal CP before emulating the honest verifier. Delay ideal delivery during the emulation and use the records fixed at their respective receipts. Release success only when the proof accepts and the ideal predicate is true; otherwise reproduce rejection or withholding with the ideal delivery commands.
\end{protocolsteps}
\noindent\emph{Corrupt verifier; SIM plays the owner.}
\begin{protocolsteps}
\item \emph{SIM (preprocessing).} Commit honestly to $0^q$, using only the declared honest-device interface of the source commitment.
\item \emph{SIM (record).} On an ideal receipt of public length $n_a$, allocate the prescribed interval and send independent uniform $c_a\leftarrow_U\Ftwo^{n_a}$.
\item \emph{SIM (predicate).} Relay the ideal public tuple request. For an authorized success, run the fixed ZK-simulation circuit on the public instance \eqref{eq:masked-cp-relation}. For ideal rejection, reproduce the owner's abort. Keep the same bank transcript throughout.
\end{protocolsteps}
\par\smallskip\noindent\emph{Where each step is used.} Corrupt owner, Steps~1--3: \eqref{eq:vector-all-openings} and the soundness step of \Cref{thm:masked-cp}; corrupt verifier, Steps~1--3: its witness-removal, hiding, and unused-mask steps.
\end{protocolbox}
\par\nobreak\noindent
\begin{minipage}{\linewidth}
\captionsetup{hypcap=false}
\captionof{algorithmica}{Persistent simulators for masked CP. Each numbered list follows the phases of one owner direction.}
\label{prot:masked-cp-sim}
\end{minipage}

\subsection{Proof of the masked-CP theorem}
\label{sec:masked-cp-proof}

\begin{proof}[Proof of \Cref{thm:masked-cp}]
Fix the adversary and a public polynomial bound $T$ on predicate phases. The compared state contains the caller's prescribed input--output interface, public transcript, corrupt workspace, and reference. The honest bank internals remain encapsulated as in \Cref{def:cp}.

\emph{Step 1: a corrupt owner's records.} Replace the preprocessing receiver by the source extractor. Item~(i) of \Cref{def:vector-commitment} preserves the joint state. Every later honest verification uses the public transcript and fresh coins, so the comparison extends through the continuation. At each record receipt the simulator can submit
\begin{equation}
 M_a^*=c_a\oplus\bar u^*_{I_a},\qquad
 \bar u^*=\begin{cases}u^*,&u^*\ne\perp,\\0^q,&u^*=\perp.
 \end{cases}
 \label{eq:masked-record-extract}
\end{equation}
This fixes the ideal message when the real receipt is issued.

\emph{Step 2: a corrupt owner's predicates.} Outside the bad event of \eqref{eq:vector-all-openings}, every native witness has $u=u^*\ne\perp$. Thus all components of any requested tuple recover the values fixed at their receipts:
\[
 (c_{a_1}\oplus u_{I_{a_1}})\|\cdots\|
 (c_{a_r}\oplus u_{I_{a_r}})
 =M_{a_1}^*\|\cdots\|M_{a_r}^*.
\]
If the ideal tuple predicate is false, the NP statement \eqref{eq:masked-cp-relation} is false. When $u^*=\perp$, there is no native witness at all. Statistical soundness from \Cref{lem:npbc-pqzk}, uniformly over false instances and arbitrary auxiliary states, gives

\begin{equation}
 \Pr_{\rm sim}[\text{some false predicate on the extracted records}
                 \text{ is certified}]
 \le\varepsilon_{\rm ao}+T\varepsilon_{\rm snd}.
 \label{eq:masked-cp-soundness}
\end{equation}
This argument covers adaptively chosen tuples and pays the single all-opening event once for the bank. The simulator publishes a request through ideal CP before emulating its proof, so rejected projection claims remain public. It delays delivery during the emulation and reproduces refusal, withholding, and terminal abort. Outside \eqref{eq:masked-cp-soundness}, coupling the verifier's coins gives identical transcripts and corrupt states. The existential bad event is bounded inside the extraction experiment, not transferred through a computational indistinguishability claim.

\emph{Step 3: remove honest witnesses before using hiding.} For a corrupt verifier, let $H_0$ be the real execution. In $H_1$, keep the commitment to $\mu$ and replace every honest-owner proof by the ZK simulator of \Cref{lem:npbc-pqzk}. Each replaced statement is true: one saved bank witness serves every component of the owner's true tuple predicate, and item~(iv) of \Cref{def:vector-commitment} supplies its native witness on every continuing branch. Sequential auxiliary-input ZK gives $H_0\approx_c H_1$. After these replacements, the continuation does not read the native witness.

In $H_2$, change the preprocessing commitment from $\mu$ to $0^q$, retaining a separate genuine uniform $\mu$ for the ciphertexts. The same strictly polynomial ZK-simulation circuit runs on each syntactically valid public instance, including false ones. Its simulation guarantee was used only in the preceding true-instance replacement. The complete suffix is therefore an efficient channel that does not inspect the honest commitment witness. Labelled joint hiding, with $\mu$ retained privately by the reduction, gives $H_1\approx_c H_2$. The reduction may retain the actual caller and its inputs, although the final simulator need not know those inputs.

\emph{Step 4: unused masks in the dummy-bank experiment.} Immediately before allocating a fresh interval $I_a$, its mask satisfies
\begin{equation}
 \rho^{\mu_{I_a}M_aER}
 =\tau_{\mu_{I_a}}\otimes\rho^{M_aER}.
 \label{eq:unused-mask-invariant}
\end{equation}
Here $E$ contains the preceding exposed state and transcript and all admissible continuation registers, but not private bank internals. For the first record, the dummy commitment is independent of $\mu$. Inductively, previous ciphertexts use only consumed intervals; requests and predicate results concern existing records; simulated proofs read only public instances and the verifier's state; and the caller has no access to unused masks. These operations preserve \eqref{eq:unused-mask-invariant}, including for an adaptive next input. Writing $\rho^{M_aER}=\sum_x|x\rangle\langle x|\otimes\rho_x^{ER}$,
\begin{align*}
 \rho^{c_aM_aER}
 &=2^{-n_a}\sum_{v,x}|x\oplus v,x\rangle\langle x\oplus v,x|
                      \otimes\rho_x^{ER}\\
 &=\tau_{c_a}\otimes\rho^{M_aER}.
\end{align*}
Consequently, the simulator may sample each $c_a$ uniformly, using only its public length. Later requests and outcomes are supplied by ideal CP. Even a full record opening reveals only consumed mask positions, so the induction continues. This identifies $H_2$ with the corrupt-verifier simulator of \Cref{prot:masked-cp-sim}.

\emph{Step 5: both directions and outer continuations.} Preprocess the two banks sequentially. First replace the honest owner's predicate proofs and switch its bank to the dummy commitment as in Steps~3--4. Then replace the corrupt owner's preprocessing by extraction and use Steps~1--2. At each replacement all other pending honest records and future device shares remain untouched references. The full suffix includes the caller's prescribed adaptive classical operations and any intervening declared device invocation. Such an invocation occurs between completed proof phases; no active simulation segment crosses it. Each reached boundary satisfies the source and device hypotheses. These are joint replacements of the two stateful modules, not independent replacements of their masked records.

At terminal return, erase the honest masks, native witnesses, raw tables, and internal proof coins. Keep the prescribed input--output interface and public transcript. The resource-relative simulator uses only the declared source device interfaces and makes no extra calls. For each induced QPT test the advantage is bounded by
\[
 \nu_{\rm ext}+\nu_{\rm hide}
 +T\nu_{\rm ZK}+\varepsilon_{\rm ao}+T\varepsilon_{\rm snd}
 =\negl(\lambda),
\]
where the bounds are uniform over the reached auxiliary states. The sum of the honest preprocessing and proof-completeness errors bounds honest failure. The public phase and size bounds give one polynomial running-time bound for the simulator.
\end{proof}

\subsection{Substitution in the OT protocol}
\label{app:bc-ot-compiler}

Let $L_0=kN_0m+t_A$ and $n=kN_0m$ be the outer OT sizes. Count only bits in newly created CP records. Alice's test mask and canonical data table require $q_A=L_0+6n$ bits. Bob's offsets, answers, audit masks, selector masks, full labels, and symbol-erasure flags require at most $q_B=5L_0+2n+km+m$ bits. Hence
\begin{equation}
 q_A,q_B\le q_{\rm bank}:=10L_0.
 \label{eq:cp-bank-capacity}
\end{equation}
Repeated projections and joint $\Label$ or $\Pay$ requests create no new records. Both preprocessing capacities include all late metadata.

\begin{corollary}[OT from two preprocessing commitments]
\label{cor:bc-ot-compiler}
Under \Cref{def:vector-commitment} and the hypotheses of \Cref{thm:unified-di-ot}, two sequential preprocessing commitments of capacity \eqref{eq:cp-bank-capacity}, followed by \OTprotocolparts{} with \Cref{prot:masked-cp}, realize one-bit OT with auxiliary-input-uniform QPT simulation and negligible error. The CP phases use no devices after preprocessing.
\end{corollary}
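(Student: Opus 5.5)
The proof is a composition argument, not a new entropy or simulation analysis. \Cref{thm:unified-di-ot} already realizes $\ideal(\mathsf{OT})$ in the $\ideal(\mathsf{CP})$-hybrid with auxiliary-input-uniform QPT simulators. \Cref{thm:masked-cp} realizes $\ideal(\mathsf{CP})$ from one string commitment per owner satisfying \Cref{def:vector-commitment}. The plan is to substitute the CP realization into the outer protocol. First, check capacity. By \eqref{eq:cp-bank-capacity}, Alice's records use $q_A=L_0+6n$ bits and Bob's use at most $q_B=5L_0+2n+km+m$ bits, both at most $10L_0=q_{\rm bank}$. Repeated projections and the joint $\Label$ and $\Pay$ requests create no new records. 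Hence the total new-record length in each direction stays within the capacity hypothesis of \Cref{thm:masked-cp}, and no honest owner aborts for lack of mask space.

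Second, perform the hybrid replacement. Start from the real protocol with two preprocessing banks followed by \Cref{prot:di-ot} run over \Cref{prot:masked-cp}. Replace the pair of CP modules jointly by $\ideal(\mathsf{CP})$ via the persistent simulators of \Cref{prot:masked-cp-sim}. The order follows Step~5 of the proof of \Cref{thm:masked-cp}. First simulate the honest owner's proofs and switch its bank to a commitment to $0^q$. Then replace the corrupt owner's preprocessing by extraction. The environment for this replacement is the whole outer OT together with the corrupted party's continuation. It contains the outer device window. This window lies strictly between completed classical proof phases, as required by \Cref{def:compiled-device-calls}(iii), so no equivocation or simulation segment crosses it. Every reached boundary stays admissible. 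The outer batch is fresh and isolated, honest coins are fresh, and the leakage budgets are charged against the one global $\ell$ as in \Cref{lem:unified-parameters}. \Cref{thm:masked-cp} therefore gives computational indistinguishability of the complete continuing state, including the reference and aborts. Finally, compose the resulting CP simulator with the OT simulator of \Cref{prot:di-ot-sim-a} or \Cref{prot:di-ot-sim-b}. Both are strict QPT and use only declared device interfaces, so the composite is a QPT simulator. The errors add by \Cref{lem:efficient-composition-hybrids}: CP errors $\nu_{\rm ext}+\nu_{\rm hide}+T\nu_{\rm ZK}+\varepsilon_{\rm ao}+T\varepsilon_{\rm snd}$, plus the OT statistical error $\varepsilon_a+2\varepsilon_m+\varepsilon_f$. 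All are negligible by \Cref{lem:unified-parameters}.

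I expect the main obstacle to be justifying that the statistical parts of the OT proof still hold in the compiled setting, not just relative to ideal CP. The statistical steps are \Cref{claim:blockwise-vanishing-audit}, \Cref{lem:post-J-selector-entropy} and \Cref{lem:ordered-block-hidden-pads}. They use guessing factors, and computational errors must never be multiplied by those factors. I would handle this as in convention (ii) of \Cref{sec:proof-conventions}. Carry out all information-theoretic steps inside the ideal-CP experiment, where the OT theorem is already proved. Transfer only the final efficient real/ideal comparison through \Cref{thm:masked-cp}, adding its computational error once. The paragraph after \Cref{thm:sender-leakage-simulation} notes that the same statistical bounds hold in the native equivocal experiments; this serves only as a consistency check.

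It remains to check two properties. Correctness follows from \Cref{claim:blockwise-leakage-correctness} plus CP completeness and honest-native-witness availability (\Cref{def:vector-commitment}(iv)). The claim that no devices are used after preprocessing is immediate from \Cref{thm:masked-cp}.
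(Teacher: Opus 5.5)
Your proposal is correct and follows essentially the same route as the paper. Both replace the two owners' masked-CP modules jointly by ideal CP via \Cref{thm:masked-cp}, with the outer device window lying between completed proof phases; both then apply \Cref{thm:unified-di-ot} in the resulting ideal-CP execution, where all entropy arguments take place, and add the simulation errors once. Your capacity check against \eqref{eq:cp-bank-capacity} and your correctness remark spell out what the paper leaves implicit, and the paper's extra appeal to \Cref{thm:sequential-composition} serves only to compose completed OT calls.
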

\begin{proof}
Apply \Cref{thm:masked-cp} jointly to both directions and all pending records, retaining the outer state and the prescribed device interfaces. The resulting execution is the ideal-CP protocol of \Cref{thm:unified-di-ot}. Apply that theorem and add the two simulation errors. All entropy arguments take place in this ideal-CP execution. \Cref{thm:sequential-composition} then composes completed OT calls.
\end{proof}

\section{Device-independent string commitment from qOWFs}
\label{sec:di-bc}

The masked-CP construction of \Cref{sec:bc-cp-compiler} needs one extractable commitment to an entire random string. This section constructs that commitment with one device window. Its public native opening relation and agreement with all native openings provide the interface of \Cref{def:vector-commitment}. \Cref{sec:di-bc-coin} uses this source and masked CP to obtain simulation-secure bit commitment and coin tossing. Alice is the source committer and Bob the source receiver.

The field is $\Ffield$, with fixed binary basis $e_1,\ldots,e_k$ and $k=6$. The letter $q$ continues to denote the number of committed bits. Choose an admissible code length $m$ with $q\le\lfloor m/100\rfloor$ and use $q$ secret evaluation positions. The codes $\CodeB,\CodeD,\CodeA$, secret maps, and diagonal twists are those of \Cref{lem:multiplicative-codes}. Put
\[
 n=kN_0m,\qquad L_q=n+t_A,\qquad t_A=\lceil m/200\rceil,
\]
where $N_0$ is the fixed microblock size. Products are coordinatewise over $\Ffield$. Only $\CodeA$ and $\CodeB$ are twisted: $\CodeA^\sigma=\sigma*\CodeA$, $\CodeB^\sigma=\sigma*\CodeB$, and $\secmap_\sigma(v)=\secmap(\sigma^{-1}*v)$. All vector fibers below use these $q$ secret positions. We first establish the classical simulation tools, then give the native records, device protocol, and security proofs.

\subsection{Concrete classical commitments and bounded quantum simulation}
\label{app:concrete-primitives}

This subsection proves the implementation properties used inside the source proof. The generic definition of equivocation does not promise two simultaneous classical witnesses. We use the BCKM compiler \cite[Section~4.1, Protocols~1--2]{OT2} on Naor's classical commitment. All assumptions use the auxiliary-input convention of \Cref{def:comp-aug-sec}: the qOWF and resulting PRG resist the allowed nonuniform quantum advice. The corresponding OWF-to-PRG convention is discussed in \cite[footnote~6]{OT2}.

\begin{lemma}[Post-quantum zero knowledge for NP]
\label{lem:npbc-pqzk}
Under qOWFs, every NP relation has a classical-message sequential proof with perfect completeness, negligible statistical soundness, and quantum-computational zero knowledge.  For every nonuniform QPT verifier there is one QPT simulator, fixed from the verifier independently of its auxiliary state, whose output channel is computationally indistinguishable from the real verifier channel uniformly on arbitrary quantum auxiliary inputs entangled with an inaccessible reference.  Polynomially many sequential invocations preserve this guarantee.
\end{lemma}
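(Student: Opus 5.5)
The plan is to instantiate the standard route: a classical-message, sequentially repeated Blum/GMW-style three-colouring (or Hamiltonicity) proof, built on Naor's statistically binding commitment from \Cref{fact:pqowf-np-anchors}, with zero knowledge established by Watrous's quantum rewinding lemma. Every NP relation Karp-reduces to graph three-colouring, with witnesses mapped in polynomial time, so it suffices to treat that one language.

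In each basic round the verifier first sends Naor's first message. The prover then commits to a randomly permuted colouring, the verifier challenges an edge, and the prover opens the two endpoint colours. Perfect completeness is immediate. Statistical soundness follows in three steps. Naor's commitment is statistically binding except on a negligible set of first messages, which the honest verifier samples. Outside that set, a false instance forces a detectable edge in each round with probability at least $1/|E|$, even against unbounded provers. With $\omega(|E|\log\lambda)$ sequential repetitions, the soundness error becomes negligible uniformly over arbitrary prover auxiliary states.

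For zero knowledge, I first construct the one-round simulator $\mathcal S_1$. It guesses the challenged edge, commits to a colouring that differs only on that edge's endpoints, runs the verifier unitarily on the auxiliary register, and measures whether the guess matched. Hiding of Naor's commitment against nonuniform quantum advice makes the success probability $p(\psi)$ within a negligible amount of $1/|E|$, independently of the auxiliary state $\psi$. It also makes the conditioned output computationally close to the real one-round verifier channel. Watrous's rewinding lemma then amplifies $\mathcal S_1$ into a QPT simulator that fails with negligible probability, using $O(|E|\log(1/\varepsilon))$ rewinding iterations, and it does so uniformly over $\psi$ entangled with an inaccessible reference. The key requirement here is that $p(\psi)$ be nearly constant in $\psi$. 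I expect this step to be the main obstacle, because the lemma needs an approximate independence that holds as an operator statement, not merely for each fixed $\psi$. I would handle it the way Watrous does: choose the simulator's random padding so that the guess bit is uncorrelated with the verifier's input, and charge the deviation to the hiding advantage against the purified auxiliary state.

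The multi-round simulator composes the one-round simulators round by round, feeding the verifier's residual state forward. A hybrid argument over rounds, via \Cref{lem:efficient-composition-hybrids}, bounds the total distinguishing advantage by a polynomial times a negligible quantity. The simulator depends only on the verifier's circuit, so one simulator works for every auxiliary input. Polynomially many sequential invocations compose by the same hybrid argument, since each invocation's simulator acts on the state output by the previous one. Finally, the qOWF-to-PRG step of \cite{HILL} against nonuniform quantum advice, as in \cite[footnote~6]{OT2}, supplies the post-quantum Naor commitment used throughout.
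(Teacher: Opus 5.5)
Your route is the paper's: GMW three-colouring over Naor commitments, statistical soundness from binding outside a negligible set of honest setup strings, Watrous rewinding for one round, and hybrids over rounds and invocations. The paper uses a single verifier setup string for the whole proof and does not rewind across it. Your per-round setup also works, since Naor hiding holds for every receiver-chosen string.

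\textbf{The gap.} The step that fails as written is your claim that hiding makes $p(\psi)$ negligibly close to $1/|E|$ for every $\psi$. A malicious verifier may send a malformed setup string or challenge, or simply stop, with a probability $\alpha(\psi)$ that depends on its auxiliary input. The honest prover then aborts and no challenge exists. Your success event ``challenge equals guess'' therefore has probability about $(1-\alpha(\psi))/|E|$. Hiding does not control this, and it can be arbitrarily close to $0$. So the near-constancy hypothesis of Watrous's lemma fails. Moreover, conditioning on success would discard every abort branch from the simulated output, even though the real verifier may abort with noticeable probability.

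This is exactly where the paper says Watrous's literal GMW presentation, which assumes valid challenge encodings, must be adapted. The paper's fix is an abort-balanced partial circuit. On an ordinary-abort branch, declare internal success using an independent Bernoulli$(\theta)$ coin with $\theta=1/h$, and on success output the saved abort. The success probability is then $\theta\pm\negl(\lambda)$ uniformly in $\psi$. After conditioning, the abort block is rescaled by $\theta_c/p=1\pm\negl(\lambda)$, so the whole output channel, aborts included, is reproduced. Alternatively, you could build Watrous's convention into the protocol: the prover reads every malformed message as a fixed valid one and never aborts mid-round. That convention has to be stated explicitly, since the natural default is that the prover aborts.

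\textbf{The obstacle you identify is not the real one.} The requirement $|p(\psi)-\theta|\le\epsilon$ for all $\psi$ already is the operator-norm statement about the success POVM element. Its uniformity in $\psi$ follows from hiding against nonuniform quantum advice, which you already invoke: a maximizing sequence of auxiliary states and setup strings would be advice breaking the PRG. No additional ``random padding'' is needed.

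Your iteration count $O(|E|\log(1/\varepsilon))$, with a fixed target $\varepsilon$, does give a clock independent of the unknown hiding advantage. That is what the paper's fixed-clock rewinding lemma also provides, so this part of your plan is fine once the abort issue is repaired.
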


This adapts the GMW proof for NP~\cite{GMW91} and Watrous's quantum-verifier analysis~\cite[Sections~3 and~6.2]{Watrous09ZK} to Naor's classical statistically binding commitment. The proof below gives the required simulation guarantees.

The source mode relation and masked-CP predicates both use this ordinary proof system. Its internal commitments are classical and local to the proof; no external CP or device call occurs in an active simulation segment. Fix a simulator circuit on every syntactically valid public input, with a polynomial clock and default rejection on malformed inputs. Its zero-knowledge guarantee is used only for true instances. The same total circuit is an efficient continuation on other instances in the hiding hybrids of \Cref{sec:masked-cp-proof}.

\begin{lemma}[Naor setup and auxiliary-input uniformity]
\label{lem:naor-auxiliary-setup}
Let $G:\mathbb F_2^\lambda\to\mathbb F_2^{3\lambda}$ be the quantum-secure PRG and put
\[
 \mathsf{Com}_a(b;s)=G(s)\oplus ba,
 \qquad\mathsf{Ver}_a(c,b,s)=[c=G(s)\oplus ba].
\]
Hiding holds uniformly over every receiver-chosen $a\in\mathbb F_2^{3\lambda}$ and allowed auxiliary/reference state independent of the fresh seed. For honest uniform $a$, except with probability $2^{-\lambda}$, every possible commitment string has at most one bit admitting a native opening.
\end{lemma}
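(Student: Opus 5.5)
The statement has two parts: computational hiding that is uniform over every receiver-chosen first message $a$ and every admissible auxiliary/reference state, and statistical binding for honest uniform $a$. I will prove them separately.

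\emph{Hiding.} Fix any nonuniform QPT receiver with auxiliary state $\rho^{PR}$, where $\Rref$ is inaccessible. Let $a$ be the string the receiver sends, possibly depending on its auxiliary state. I compare commitments to $b=0$ and $b=1$ through a two-step hybrid: $G(s)\oplus ba$ becomes $U\oplus ba$ with $U$ uniform on $\mathbb F_2^{3\lambda}$, and $U\oplus ba$ is exactly uniform for either bit. The first step reduces to PRG security. Given a challenge $z$, the reduction runs the receiver up to its choice of $a$, sends $z\oplus ba$, and continues the receiver's channel. The reference is carried along untouched.

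The point to check is the auxiliary-input convention. The reduction's advice is the receiver's advice together with its entanglement with $\Rref$. This is admissible because the qOWF, and hence the HILL PRG of \Cref{fact:pqowf-np-anchors}, is assumed secure against nonuniform quantum advice. It is also the reason the fresh seed $s$ must be independent of the auxiliary state. The advantage bound is therefore the PRG advantage of one fixed QPT distinguisher. That bound does not depend on $a$ or $\eta^{\Rref}$, which gives uniformity, and \Cref{lem:efficient-composition-hybrids} extends the comparison through any QPT continuation.

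\emph{Binding.} A commitment string $c$ has openings to both bits only if $c=G(s)=G(s')\oplus a$ for some seeds $s,s'$. That forces $a\in\{G(s)\oplus G(s'):s,s'\in\mathbb F_2^\lambda\}$, a set of size at most $2^{2\lambda}$. For uniform $a\in\mathbb F_2^{3\lambda}$, a union bound shows $a$ lies in this set with probability at most $2^{2\lambda}/2^{3\lambda}=2^{-\lambda}$. Outside this event, no commitment string, whether produced efficiently or not, has two native openings with different bits. This is the ``every possible commitment string'' form the source proof needs, because it lets an unbounded analysis register assign the unique committed value as in convention (ii) of \Cref{sec:proof-conventions}.

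The main obstacle is bookkeeping rather than mathematics. The hiding reduction must treat $a$ as adversarially chosen after the auxiliary state is fixed, and the reference must never be touched. Once the reduction is written with the receiver's advice and purification as the PRG adversary's advice, both claims follow from the standard Naor argument.
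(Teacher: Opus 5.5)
Your proposal is correct and follows the paper's proof essentially step for step: two PRG hybrids for hiding, with the receiver's first-message procedure and its purified auxiliary state folded into the reduction, and, for binding, the counting bound $|\{G(s)\oplus G(s')\}|\le 2^{2\lambda}$ among $2^{3\lambda}$ setups, which holds even against unbounded committers. The one step you compress is uniformity: your phrase ``one fixed QPT distinguisher'' implicitly assumes what the paper argues explicitly, namely that a non-negligible supremum over $a$ and auxiliary states would yield a maximizing advice sequence, which would contradict nonuniform PRG security.
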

\begin{proof}
For each bit replace $G(s)$ by independent uniform $U$. The reduction applies the known controlled XOR with $ba$ and runs the proposed quantum distinguisher on its auxiliary state. The uniform experiment is independent of the bit, so two PRG hybrids prove hiding. A nonnegligible supremum over $a$ and auxiliary states would supply a maximizing sequence as nonuniform advice, contradicting the PRG assumption. The reduction need not prepare that advice. If $a$ is produced by a measured quantum first-message procedure, include this procedure in the reduction; no conditioning on a rare setup value is necessary.

Two openings to different bits require $a=G(s_0)\oplus G(s_1)$. There are at most $2^{2\lambda}$ such strings among $2^{3\lambda}$ setups. Outside this setup-wide event, unique-bit binding holds simultaneously for all strings and seeds, even against an unbounded quantum committer. Fresh honest setups give the corresponding polynomial union bound independently of the preceding auxiliary state.
\end{proof}

\begin{lemma}[A fixed clock for almost constant success]
\label{lem:fixed-clock-rewinding}
Let $Q$ be a polynomial-size purified partial-simulation circuit. Suppose its success probability satisfies $|p(\psi)-\theta|\le\epsilon(\lambda)$ for every auxiliary input, where $\epsilon$ is negligible and $\theta,1-\theta$ are known and inverse polynomial. One circuit fixed from $Q$, independently of the input and the unknown function $\epsilon$, produces its conditional successful state with negligible error jointly with a reference. Its clock is polynomial on every input.
\end{lemma}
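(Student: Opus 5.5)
We would use Watrous's quantum rewinding lemma, adapted so that the clock is fixed in advance. The given success probability is close to a known constant $\theta$, but the exact function $p(\psi)$ is unknown. The first step is to purify $Q$ into a unitary $U$ that acts on the auxiliary register together with fresh ancillas in $\ket{0}$, followed by a projective success measurement $\Pi_1$. Write $\Pi_0$ for the projection onto the all-zero ancilla state, and define $\Delta_\psi=\Pi_0U^\dagger\Pi_1U\Pi_0$. The hypothesis says $\bra{\psi,0}U^\dagger\Pi_1U\ket{\psi,0}\in[\theta-\epsilon,\theta+\epsilon]$ for every input state $\psi$. Because this holds for all $\psi$, including states entangled with a reference, the restricted operator $Q_0:=(\bra0\otimes I)U^\dagger\Pi_1U(\ket0\otimes I)$ on the input space satisfies $\|Q_0-\theta I\|\le\epsilon$. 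This gives an operator-norm statement rather than a pointwise one, and it is what makes the argument hold jointly with the reference.

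The circuit itself is Watrous's procedure. Run $U$ and measure $\Pi_1$. On success, output the state. On failure, apply $U^\dagger$, reflect about the zero-ancilla subspace (the operator $2\Pi_0-I$), reapply $U$, and measure again. Repeat this for a fixed number $T$ of rounds, with $T$ polynomial, for example $T=\lambda\cdot\lceil 1/\min(\theta,1-\theta)\rceil$. The number $T$ depends only on $\theta$ and $\lambda$, so it is fixed from $Q$. It does not depend on $\epsilon$ or on the input, and so the clock is polynomial on every branch. On the branch where no round succeeds, output a default, and count that branch as error.

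The analysis should follow Watrous's perturbed version. When $Q_0=\theta I$ exactly, every input $\ket\psi$ lies in a two-dimensional invariant subspace (Jordan decomposition) with a single angle determined by $\theta$. Each failure-and-reflect step then rotates within that subspace, so the output after $T$ rounds equals the ideal conditional state $\Pi_1U\ket{\psi,0}/\sqrt{\theta}$, tensored with the reference, except with failure probability $(1-\theta)^{\Omega(T)}$ or better. For $\|Q_0-\theta I\|\le\epsilon$, we use the Marriott--Watrous style estimate, or Watrous's Lemma~9 with $p_0=\theta$ and the tolerance that lemma allows. The deviation per round contributes $O(\sqrt\epsilon)$ in trace distance, up to a factor depending on $\theta(1-\theta)$, and this accumulates linearly over $T$ rounds. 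The total error is therefore $O\bigl(T\sqrt{\epsilon}/(\theta(1-\theta))\bigr)+2^{-\Omega(\lambda)}$. This is negligible because $T$ and $1/(\theta(1-\theta))$ are polynomial and $\epsilon$ is negligible. The circuit never needed to know $\epsilon$; the bound only invokes it in the analysis.

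The main obstacle is making this error bound hold jointly with an arbitrary entangled reference and uniformly over inputs, rather than for each pure input separately. We would handle it by doing the entire analysis at the level of operators on the input space tensored with $I_{\rm ref}$. Since $Q_0\otimes I$ is still $\epsilon$-close to $\theta I$, the rotation argument applies to every purification at once, and the trace-distance bound passes to the reference by contractivity. The remaining point is the $\theta$-dependence of the number of rounds. Watrous's lemma needs a lower bound on $\theta$ and on $1-\theta$, and both are inverse polynomial here, so the round count and the error amplification stay polynomial.
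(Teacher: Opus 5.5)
Your proposal follows essentially the same route as the paper: Watrous's measure/uncompute/reflect/recompute loop with a fixed $T$ of order $\lambda/(\theta(1-\theta))$, then comparison with an idealized process of exactly constant success $\theta$ whose deviation telescopes to $O(T\sqrt{\epsilon})$, then the exact Lemma~8 tail. The paper makes the comparison step explicit through the comparison unitary built in the \emph{proof} of Watrous's Lemma~9, since that lemma's statement ties the round count to $\log(1/\epsilon)$. One slip: the tail is $(1-\theta)(1-2\theta)^{2T}\le e^{-4\theta(1-\theta)T}$, not $(1-\theta)^{\Omega(T)}$, which fails for $\theta$ near $1$; your choice of $T$ still makes it $e^{-\Omega(\lambda)}$.
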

\begin{proof}
Use the measurement/uncompute/phase-flip/recompute procedure of \cite[Lemmas~8--9 and Figure~3]{Watrous09ZK}. Measure success initially, then allow at most
\[
 T=\left\lceil\frac{\lambda}{\theta(1-\theta)}\right\rceil
\]
corrective rounds, each followed by a success measurement. If none succeeds, return a distinguished additional simulation-failure flag. The reference is never acted on. There are at most $2T+1$ calls to $Q$ or $Q^*$.

The hypothesis on all input states bounds the spectrum of the success POVM within $\epsilon$ of $\theta$. The comparison-unitary construction in the proof of Watrous's Lemma~9 gives a mathematical unitary $U$ with constant success $\theta$ and $\|Q-U\|\le\sqrt{2\epsilon}$; $U$ need not be efficiently computed. Replacing all circuit uses by $U$ changes the channel by $O(T\sqrt\epsilon)$, by telescoping operator norms. Tensoring the identity on an arbitrary reference changes no bound. For $U$, the exact calculation of Lemma~8 gives failure probability
\[
 (1-\theta)(1-2\theta)^{2T}
 \le e^{-4\theta(1-\theta)T}\le e^{-4\lambda}.
\]
The normalized successful $Q$ and $U$ states differ by at most $\operatorname{poly}(1/\theta)O(\sqrt\epsilon)$, using $p\ge\theta/2$ for sufficiently large $\lambda$. These errors are negligible. This uses a fixed polynomial clock with an exponentially small ideal tail, not Markov truncation of an expected-time simulator, and requires no knowledge of $\epsilon$.
\end{proof}

\paragraph{Ordinary aborts and malformed messages.}
Every activation has an explicit polynomial clock and message length bound. Missing, invalid, or abort messages set an ordinary-abort flag; the honest controller sends no further protocol messages and preserves the state at that cut. The remaining circuit is padded with dummy operations. Purify classical measurements for the active rewinding segment, and dephase the public classical output registers afterward.

Let the partial simulator guess a challenge $\widehat J$ from a set of size $h$, with $\theta=1/h$. On a valid challenge $J$, its internal success is $[J=\widehat J]$. On an ordinary-abort branch, set internal success using an independent Bernoulli($\theta$) coin and, on success, output the saved ordinary abort. There is no witness promise on this branch. In the commitment-hiding comparison where the sent commitments are independent of $\widehat J$, the total internal success probability is $\theta$, whether or not the receiver aborts. Computational hiding therefore gives $\theta\pm\negl$ for the original circuit, uniformly over auxiliary inputs. The binary BCKM coin is exact. General uniform indices use rejection sampling capped at $\lambda$ trials and a valid fixed fallback, giving error at most $2^{-\lambda}$.

To check the joint abort state, let $\Omega_{\rm ab}$ be its subnormalized receiver-reference output before conditioning on internal success, and let $\theta_c$ be the actual bounded coin probability. Independence of this coin gives exactly
\[
 \Omega_{\rm ab,success}=\theta_c\Omega_{\rm ab},
 \qquad|\theta_c-\theta|\le2^{-\lambda}.
\]
After conditioning on success of probability $p=\theta\pm\negl$, its weight is multiplied by $\theta_c/p=1\pm\negl$. No normalization on the public abort or nonabort branch is used. On valid challenges the usual hiding hybrids change only commitments whose openings are not sent; on earlier abort branches no challenge-dependent opening is sent. The loss in translating these comparisons to the internal successful state is at most the polynomial factor $1/\theta$. Thus the complete output channel, including ordinary abort, is preserved. Abort occurring after a successfully simulated valid challenge is ordinary subsequent channel processing. Public transcript copies are included in the comparison; they remain internal during the active rewinding segment and are released at its completion.

\begin{lemma}[Offline witnesses for the classical BCKM instantiation]
\label{lem:bckm-offline-witnesses}
Under qOWFs, the BCKM compiler applied to Naor's classical commitment \cite[Theorem~4.1 and Section~4.1]{OT2} gives a statistically binding, quantum-computationally hiding commitment with the following simulation. For every QPT receiver, one strict-QPT commit simulator is independent of the target bit. On each completed nonabort branch it retains private classical native witnesses for either bit of every unopened handle. A later opening uses the requested witness without interaction or further rewinding to compute that witness. The real and simulated commit/open channels are computationally indistinguishable, uniformly over admissible auxiliary inputs and jointly with an untouched reference. Ordinary abort may have arbitrary probability; only additional simulator failure is negligible. Sequential completed phases preserve this guarantee. Statistical binding is asserted for real commitments.
\end{lemma}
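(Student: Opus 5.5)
The plan is to instantiate the BCKM equivocality compiler on Naor's commitment and trace its simulator through the fixed-clock tools already established. Recall the structure. The committer first commits to the bit with Naor's scheme, using a receiver-chosen first message $a$. In parallel, it commits to several pairs of auxiliary Naor commitments and answers a binary challenge that either opens a consistency pair or proves equality, with $\lambda$ sequential repetitions. Statistical binding of the real commitment then follows directly from \Cref{lem:naor-auxiliary-setup}. For honest uniform $a$, except with probability $2^{-\lambda}$ per setup, every commitment string admits at most one bit, uniformly over all seeds. A union bound over the polynomially many fresh setups handles sequential phases. Hiding of the real commitment reduces to Naor hiding uniformly over receiver-chosen $a$, again by \Cref{lem:naor-auxiliary-setup}.

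For simulation, I would build the commit simulator repetition by repetition. In each repetition, the partial simulator guesses the binary challenge $\widehat J$ with $\theta=1/2$. It then prepares auxiliary commitments so that, if $J=\widehat J$, the response is consistent with \emph{both} bit values. Concretely, it commits to two classical openings, one for $b=0$ and one for $b=1$, so that both witnesses are known classical strings stored in its private register. Ordinary aborts are handled exactly as in the preceding paragraph on aborts: on abort branches, success is set by an independent Bernoulli$(\theta)$ coin.

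The core claim is that the success probability is $\theta\pm\negl$ uniformly over auxiliary inputs. This follows because the sent commitments are computationally independent of $\widehat J$ (Naor hiding), and the abort-branch coin is exact up to $2^{-\lambda}$. \Cref{lem:fixed-clock-rewinding} then yields a strict-QPT circuit with a fixed polynomial clock, independent of the target bit, that outputs the conditional successful state with negligible error jointly with the reference. I would iterate this across the $\lambda$ repetitions sequentially, adding errors by \Cref{lem:efficient-composition-hybrids}. Because each successful repetition records classical witnesses for both bits before the next begins, no later rewinding touches them. The stored two-bit witness table is measured and dephased at the end of the active segment, so an opening later simply reads the requested witness offline.

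Indistinguishability of the real and simulated commit/open channels I would prove by a hybrid chain. First, replace real honest responses by the rewinding simulator that knows the true bit, which is identical except for the negligible simulator failure. Then switch the commitments whose openings are never sent from the true-bit pattern to the equivocal pattern, using Naor hiding. The loss is at most a factor $1/\theta$ from translating to the conditioned state, as noted in the abort paragraph. Finally, append the opening as a classical postprocessing channel, uniformly over admissible auxiliary inputs. The main obstacle I expect is the middle step: showing that the hiding switch preserves the \emph{joint} successful state including abort branches and the reference, rather than only the marginal success probability. I would handle this by exhibiting the subnormalized decomposition $\Omega_{\rm ab,success}=\theta_c\Omega_{\rm ab}$ explicitly for each hybrid, and by never conditioning on public abort.
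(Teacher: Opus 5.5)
Your rewinding machinery matches the paper's proof: the simulator guesses the challenged pair with $\theta=1/2$, balances aborts with an independent Bernoulli coin, applies \Cref{lem:fixed-clock-rewinding} per repetition, iterates, keeps a private classical copy of the simulator-owned seeds, and runs the BCKM hiding/equivocation hybrids. The gap is in the compiler you feed into it. Your description of BCKM is wrong, and that breaks the two structural claims the lemma rests on.

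In BCKM the bit $b$ is never committed directly with Naor's scheme. In each repetition the committer makes two pairs of Naor commitments, and a fair challenge opens one pair, which must contain equal bits. The bit enters only through public mask bits $e_i$. A native opening reveals, in each repetition, one member of the unchecked pair, and that member's bit must equal $b\oplus e_i$. Suppose instead $b$ sat under a statistically binding Naor commitment, as in your first paragraph. Then for a receiver choosing a uniform setup, \Cref{lem:naor-auxiliary-setup} says that string admits only one bit, so no simulator could hold witnesses for both bits of that handle.

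Your second paragraph (``commits to two classical openings, one for $b=0$ and one for $b=1$'') points toward the fix but never states it. The paper's simulator does the following:
\begin{itemize}
\item It commits the guessed pair to equal bits and the other pair to opposite bits $d_i\oplus a$, $a\in\{0,1\}$.
\item Every public message, including a uniform $e_i$, is therefore independent of the target; this is why the simulator is target-independent.
\item The witness for $b$ is $w_b=(\alpha_i(b),s_{i,\alpha_i(b)})_{i=1}^{\lambda}$ with $\alpha_i(b)=b\oplus e_i\oplus d_i$. The selected member then has bit $b\oplus e_i$, which is exactly the native verification equation.
\end{itemize}
This construction, together with Naor's one-message classical opening, is the actual content of ``offline witnesses for either bit.''

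For the same reason, your binding argument fails. Naor binding of the base commitments is necessary but not sufficient for the compiled scheme, because a cheating committer may place opposite bits inside a pair. Binding needs a cut-and-choose step. Two openings to different bits require the unchecked pair to be opposite in every repetition while the checked pair is equal. A fresh fair challenge permits this with probability at most $1/2$ per repetition, giving $2^{-\lambda}$ per handle on top of the Naor setup event; this is the paper's real-table binding paragraph.

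A smaller omission: rewinding is only approximately ideal. The paper therefore checks all stored seed/transcript equations and the equal/opposite pair conditions on nonabort branches, and charges any rejection to the negligible simulator-failure flag. That check is what makes the retained witnesses exactly valid on every completed branch, as the lemma asserts, rather than valid up to negligible error.
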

\begin{proof}
In each repetition, BCKM's partial simulator guesses which pair the receiver will challenge. It commits that pair to equal bits and the other pair to opposite bits, using four ordinary Naor commitments. It retains all its own seeds. Hiding and the abort convention above give internal success $1/2\pm\negl$, so \Cref{lem:fixed-clock-rewinding} supplies the successful partial state with negligible trace error. Open the checked equal pair and repeat. All final public mask bits $e_i$ are uniform and independent of the eventual target.

Number the members of the unchecked pair by $a\in\{0,1\}$. Their bits are $d_i\oplus a$, with retained seeds $s_{i,a}$. For either target bit $b$, choose
\[
 \alpha_i(b)=b\oplus e_i\oplus d_i,
 \qquad w_b=(\alpha_i(b),s_{i,\alpha_i(b)})_{i=1}^{\lambda}.
\]
The selected base opening has bit $b\oplus e_i$, exactly the native verification equation. Previously opened pairs are unchanged. The two $w_b$ are simultaneous classical NP witnesses. This step uses Naor's one-message opening; a quantum or interactive base opening would not by itself supply such offline witnesses.

After each completed rewind, measure only the simulator-owned seed register $S$ and retain a persistent private classical copy. Exactly,
\[
 \operatorname{Tr}_S\mathsf{Dephase}_S(\rho_{SVR})
       =\operatorname{Tr}_S\rho_{SVR}.
\]
Thus there is no additional disturbance of the joint receiver-reference marginal. Future opening operations only read this saved copy and send classical functions of it; they never overwrite or uncompute it. Expanding in matrix units $|s\rangle\langle s'|$ shows that the off-diagonal terms vanish when this unchanged private copy is finally traced, with or without the earlier measurement. This proves the same continuation channel even when all seeds are measured together. Subsequent rewinding acts on the new repetition and receiver workspace, not on completed-round seed copies; entanglement with those copies is covered by the reference guarantee.

On nonabort branches check all stored seed/transcript equations and the equal/opposite pair conditions. The ideal successful partial state always passes; trace-distance contractivity bounds any additional rejection by the negligible rewinding error. Skip this check on ordinary abort. A polynomial union bound covers all repetitions and handles, without calling the receiver's genuine abort mass negligible. The resulting witnesses are exactly valid on each completed remaining branch. The ordinary BCKM sequential hiding/equivocation hybrids apply to these same classical selected-base openings and retained receiver states. Uniformity in the auxiliary input follows from \Cref{lem:naor-auxiliary-setup,lem:fixed-clock-rewinding}.
\end{proof}

\paragraph{Target deferral and real-table binding.}
First replace a real commitment while keeping its genuine target at the genuine sampling cut. Only in the target-independent simulated experiment may an unused fresh audit target be sampled later. No target-dependent message length, early abort, or visible decision may precede that later cut. This gives statistical deferred sampling inside that experiment, not a statistical claim about real hiding. For the opposite mode, the saved witnesses can change unopened labels and offsets, but cannot change already disclosed values.

Meanwhile, Alice's table commitments received by honest Bob remain real. Outside their honest Naor setup bad events, the four base bits in every BCKM repetition are unique if they exist. Two possible outer openings would require the unchecked pair to be opposite in every repetition, whereas the checked pair must be equal. A fresh fair challenge leaves such an opposite pair unchecked with probability at most $1/2$ per repetition, giving $2^{-\lambda}$ for one handle. This argument applies after arbitrary simulated prefixes because the honest receiver's new randomness is fresh. Save completed transcripts in protected records; later total simulation channels cannot increase the mass of their historical bad-binding event. This proves the all-native-opening binding property in the joint extraction experiment.

\paragraph{Ordinary statistically sound quantum ZK.}
\begin{proof}[Proof of \Cref{lem:npbc-pqzk}]
Reduce the fixed NP instance to a graph with $h\ge2$ edges. Before sequential GMW checks, let the verifier send one Naor setup $a$, used with fresh seeds for each binary-encoded vertex color. An honest verifier chooses uniform $a$. Outside the setup-wide event, every commitment is uniquely binding, so every round on a false graph rejects with probability at least $1/h$. With $\lceil\lambda h\rceil$ fresh challenges, soundness is at most $2^{-\lambda}+e^{-\lambda}$, plus the negligible bounded-sampling error. This is statistical soundness against arbitrary provers. Valid fallback permutations and edges preserve perfect completeness.

For a malicious verifier, run its setup step normally and return ordinary abort on invalid setup; do not rewind across it. For every well-formed $a$, \Cref{lem:naor-auxiliary-setup} gives hiding. Guess an edge, give its endpoints uniformly distinct colors, and commit arbitrary valid colors elsewhere. Apply the abort-balanced partial circuit above with $\theta=1/h$ and then \Cref{lem:fixed-clock-rewinding}. On valid success reveal the two endpoint seeds. The true-instance hybrid changes unopened dummy colors to a randomly permuted proper coloring; the simulator never receives that coloring. Iterate with the entire receiver/reference state preserved. This adapts Watrous's GMW argument, whose literal presentation assumes perfectly binding noninteractive commitments and valid challenge encodings.

The circuit is total and strictly polynomial on false but syntactically valid instances as well, although its ZK guarantee is invoked only on true instances. This is the property needed when a witness is removed before the analytical uniform-mask or dummy-bank hybrid. No external device, another honest session, or irreversible external transcript log responds while one of these rewinding segments is active.
\end{proof}

\subsection{Classical native records and the mode relation}
\label{sec:bc-step-one}

We use the classical commitment of \Cref{lem:bckm-offline-witnesses} and the zero-knowledge proof of \Cref{lem:npbc-pqzk}. The commitment simulator saves both seeds of each unchecked opposite-bit pair. Their measured private copies are retained and only classically read; later rewinding does not uncompute completed copies. Each classical commitment or proof invocation runs to completion, with the other party as receiver or verifier, before a device window or another outer operation begins.

All records use fixed canonical bit encodings. Every trit is represented by two bits with the fourth encoding rejected; row and column records use the parity conventions of the physical experiment. Separately opened fields have separate bit handles. Write $\Ver(c,v,w)$ for the conjunction of the corresponding native-opening checks. Public records include session identifiers, fixed lengths, all earlier openings, and their verification results.

\begin{definition}[Packed source mode relation]
\label{blk:def:mode}
The public instance contains the offset commitments $c^U$, Alice's offsets $V$, the ordered microblocks $I_{i,j}$, the test and audit sets, the opened erasure and selector masks, the unopened full-label commitments $c^\beta$, and $(\sigma,M)$. Let $\mathcal I=[m]\setminus\mathcal E$. Its syntax requires
\[
\begin{gathered}
 \sigma\in(\Ffield^*)^m,\quad M\in\Ffield^m,\quad
 |\mathcal E|\le\lfloor m/100\rfloor,\\
 J^{i,j}=\varnothing\quad(i\in\mathcal E),\qquad
 J^{i,j}\subseteq I_{i,j}\setminus T_B^{i,j},\\
 N_0/8\le |J^{i,j}|\le N_0/2
       \quad(i\in\mathcal I,\ j\in[k]).
\end{gathered}
\]
All public mask openings must already be verified and consistent. The witness consists of $U'\in\mathbb Z_3^{L_q}$, $\beta'\in\Ffield^m$, and their classical native witnesses. The predicate $\Mode$ checks every required native opening, all previously opened offset values, and
\begin{equation}
\begin{gathered}
 J^{i,j}\subseteq\{a:U'_a+V_a=\beta'_{i,j}\},\\
 R':=M+\beta'\in\CodeB^\sigma,
 \qquad\secmap_\sigma(R')=0^q .
\end{gathered}
\label{blk:eq:mode-equations}
\end{equation}
Here $\beta'_{i,j}\in\mathbb F_2$ is the $j$th binary coefficient of $\beta'_i$. This predicate checks inclusion, not equality with an entire question class, and does not certify that $\mathcal E$ is the set produced by the honest size-failure rule.
\end{definition}

\subsection{The one-window string commitment}
\label{sec:bc-step-two}

The protocol uses the same physical input/anchor experiment as the packed OT, but replaces CP by the inner commitment above. It sends payloads only for $i\in\mathcal I$. There are no payloads, hashes, or pad equations for erased symbols. The full label record nevertheless contains all $km$ bits, including the erased coordinates. The late mode seed $\sigma$ is sampled after the physical leakage interface closes and all selectors are fixed.

\begingroup
\tcbset{lines before break=8}
\begin{protocolbox}
\protocolstyle
\emph{Input:} Alice has $u\in\mathbb F_2^q$. Public parameters satisfy \Cref{lem:unified-parameters}, sized also for $q$. Every malformed message, invalid native opening, or failed required public check causes terminal abort. All messages carry the session identifier.
\begin{protocolsteps}
\item \emph{Alice (hidden test).} Sample fresh uniform physical row questions $X^{\rm in}\in\mathbb Z_3^{L_q}$ and a uniform test set $T_A$ of size $t_A$. Commit to its incidence string.
\item \emph{Bob (offsets).} Sample uniform $U\in\mathbb Z_3^{L_q}$ and commit separately to its trits.
\item \emph{Alice and Bob (one physical window).} Alice sends fresh uniform $V\in\mathbb Z_3^{L_q}$. The laboratories input $X^{\rm in}$ and $Y=U+V$ to the prescribed current device batch, collect all outputs, and close the physical leakage interface. Normalize recorded rows and columns to the prescribed parities. Bob commits to every column $B_a$ before the test set is opened.
\item \emph{Alice (partition).} Open $T_A$ and send a fresh uniform ordering of its complement. Bob checks the opening, size, and ordering. Consecutive groups of $N_0$ define $I_{i,j}$ for $(i,j)\in[m]\times[k]$.
\item \emph{Bob (early masks and labels).} Independently select one uniform audit coordinate $T_B^{i,j}\subset I_{i,j}$ and a fair bit $\beta_{i,j}$ in every microblock. Put $\beta_i=\sum_j e_j\beta_{i,j}$ and $J_b^{i,j}=\{a\in I_{i,j}\setminus T_B^{i,j}:Y_a=b\}$. Erase symbol $i$ iff at least one of its two candidate sizes in some microblock lies outside $[N_0/8,N_0/2]$. Set $J^{i,j}=J_{\beta_{i,j}}^{i,j}$ for nonerased $i$ and $J^{i,j}=\varnothing$ otherwise. Commit to the audit masks, the full $km$-bit label record, the $m$-bit erasure record, and all selector masks, on fixed metadata. Defer every size decision.
\item \emph{Alice (canonical table).} Sample independent fair anchors $C_a$ on all $L_q$ coordinates. For $C_a=1$ retain the actual row and question; for $C_a=0$ replace them by $(X_a,A_a)=(0,000)$ and discard the erased raw controller record. Commit separately to the canonical $(X_a,A_a,C_a)$ fields on all data coordinates.
\item \emph{Bob and Alice (test and global audit).} Bob opens the audit masks and his $(U,B)$ records on $T_A$. Alice verifies their syntax and sends the prescribed hidden-test decision, counting anchors as passing. On success she opens her table on all audit positions. Bob verifies every opening and the canonical syntax. He checks that the \emph{total} number of audit losses is at most $\kappa_{\rm aud} km$ and that $|\mathcal E|\le\lfloor m/100\rfloor$. He sends one aggregate decision and aborts on rejection. No per-block audit acceptance is required.
\item \emph{Bob and Alice (selector disclosure).} Bob opens $\mathcal E$ and every $J^{i,j}$ mask. Alice checks their syntax, disjointness, audit avoidance, and active size bounds and sends the selector success decision. The label record remains unopened.
\item \emph{Bob (mode).} Sample fresh independent $\sigma\in(\Ffield^*)^m$ and uniform $R\in\CodeB^\sigma_{0^q}$. Send $\sigma$ and $M=\beta+R$ and prove \Cref{blk:def:mode} using the actual records. Use the bounded nonzero sampler specified in \Cref{lem:unified-parameters}.
\item \emph{Alice (payload).} Verify $\Mode$ and open $(X,C)$ on the union of selectors. Sample independent uniform $f\in\CodeA^\sigma_{0^q}$ and $D\in\CodeD_u$. For each $i\in\mathcal I$, sample uniform shares $Q_{i,1},\ldots,Q_{i,k}\in\Ffield$ subject to $\sum_jQ_{i,j}=f_i+M_iD_i$. Set
\begin{equation}
 p_{i,j}^b=Q_{i,j}+b e_jD_i,
 \quad Z_{i,j}^b=(A_a[b])_{a\in J^{i,j}},
 \quad F_{i,j}^b=p_{i,j}^b+K_{i,j}^b,
\label{blk:eq:masked-payload}
\end{equation}
where $K_{i,j}^b$ is the field element represented by a fresh uniform $k\times|J^{i,j}|$ binary matrix applied to $Z_{i,j}^b$. Send both matrix seeds and ciphertexts for each active microblock.
\item \emph{Bob (receipt).} Verify all required $(X,C)$ openings, anchor syntax, and both branches' matrix and field encodings. Issue the ordinary receipt. Retain only the public transcript $\tau$ and its determined verification records. Discard $U,Y,B,\beta,R$, all mode witnesses, device state, and private scratch. There is no public decoding or private-output-dependent acceptance flag.
\item \emph{Alice (native opening).} On the one native-opening request, send $u$ and open the full committed canonical data table, consistently with every earlier opening. Bob applies the public predicate below, outputs $u$ on acceptance and $\perp$ otherwise, and closes the session. No device operation occurs.
\end{protocolsteps}
\smallskip\noindent\textit{Where each step is used.}
Steps 1--8: \Cref{blk:lem:audits,blk:lem:sparse,blk:lem:label-entropy}. Step 9: \Cref{blk:lem:witness,blk:lem:mode-soundness}. Step 10: \Cref{blk:lem:algebra,blk:lem:pads}. Steps 11--12: \Cref{blk:thm:state,blk:thm:agreement,lem:vector-honest-witness}.
\end{protocolbox}
\par\nobreak\noindent
\begin{minipage}{\linewidth}
\captionsetup{hypcap=false}
\captionof{algorithmica}{One-window packed string commitment and public native opening.}
\label{prot:di-bc}
\end{minipage}
\endgroup

\paragraph{The public native-opening predicate.}
Given $\tau,u,w$, verify $u\in\mathbb F_2^q$, all table openings, canonical syntax, and agreement with earlier fields. Recompute both pads from the table on every active microblock and put $p=F+K$. For each $i\in\mathcal I$, require the values $(p_{i,j}^1+p_{i,j}^0)/e_j$ to agree for all $j$, and denote their common value by $D_i$. Put
\begin{equation}
 f_i=\sum_{j=1}^k p_{i,j}^0+M_iD_i\quad(i\in\mathcal I).
 \label{blk:eq:opening-equations}
\end{equation}
Solve the two linear extension systems into $\CodeD$ and $\CodeA^\sigma$. Reject unless they have solutions with $\secmap(D)=u$ and $\secmap_\sigma(f)=0^q$. The solutions, when they exist, are unique because $|\mathcal E|<d(\CodeD),d(\CodeA)$. Call this predicate $\Ver_{\rm str}(\tau,u,w)$, where $w$ is the classical full-table opening. It reads no private receiver record. In the mask-bank application the bank is never natively opened; this same predicate is evaluated inside the ordinary NP proof.

\subsection{Extraction, agreement, and continuing state}
\label{blk:sec:extraction}

We establish the payload identity and the alternative mode witness, then prove agreement with all native openings and preservation of the continuing state.

\begin{lemma}[Payload identity and native witnesses]
\label{blk:lem:algebra}
For any accepted native opening and any actual $R=M+\beta\in\CodeB^\sigma$, the word obtained by summing the actual selected plaintexts on active symbols is the restriction of
\[
 H=f+R*D\in\CodeA^\sigma,
 \qquad\secmap_\sigma(H)=\secmap_\sigma(R)*u.
\]
In particular, its secret is zero in ordinary mode and $u$ when $\secmap_\sigma(R)=\mathbf1^q$.
\end{lemma}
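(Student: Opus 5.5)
The proof is a direct algebraic unwinding of the native-opening predicate and the payload construction in \eqref{blk:eq:masked-payload}. It has three steps.

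\emph{Step 1: recover the codewords and the per-symbol identity.} Fix an accepted native opening $(u,w)$ of $\tau$. Acceptance means that on every active microblock the recomputed pads give plaintexts $p_{i,j}^b=F_{i,j}^b+K_{i,j}^b$. The quotients $(p_{i,j}^1+p_{i,j}^0)/e_j$ agree for all $j$, and their common value defines $D_i$. Then $f_i$ is defined by \eqref{blk:eq:opening-equations}, and the two extensions exist with $D\in\CodeD$, $\secmap(D)=u$, $f\in\CodeA^\sigma$ and $\secmap_\sigma(f)=0^q$. These extensions are unique, as noted after the predicate. With this $f$ and $D$ fixed, take any actual label vector $\beta$ with $R=M+\beta\in\CodeB^\sigma$. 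On each active symbol $i\in\mathcal I$, the selected plaintext of microblock $j$ is $p_{i,j}^{\beta_{i,j}}=p_{i,j}^0+\beta_{i,j}(p_{i,j}^1+p_{i,j}^0)=p_{i,j}^0+\beta_{i,j}e_jD_i$. Summing over $j$ and using $\beta_i=\sum_j e_j\beta_{i,j}$ gives
\[
 \sum_j p_{i,j}^{\beta_{i,j}}=\sum_jp_{i,j}^0+\beta_iD_i
 =f_i+(M_i+\beta_i)D_i=f_i+R_iD_i ,
\]
where the last step uses characteristic two, $-M_i=M_i$. Hence the selected sum is the restriction of $H:=f+R*D$ to $\mathcal I$. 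This identity is pure field arithmetic and needs no honesty assumption on how the payload was generated.

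\emph{Step 2: place $H$ in the code and compute its secret.} By \Cref{lem:multiplicative-codes} and its twisted form, $\CodeB^\sigma*\CodeD\subseteq\CodeA^\sigma$ and $\secmap_\sigma(b*d)=\secmap_\sigma(b)*\secmap(d)$. So $R*D\in\CodeA^\sigma$, and linearity of $\secmap_\sigma$ on $\CodeA^\sigma$ gives
\[
 \secmap_\sigma(H)=\secmap_\sigma(f)+\secmap_\sigma(R)*\secmap(D)=0^q+\secmap_\sigma(R)*u .
\]
In ordinary mode $\secmap_\sigma(R)=0^q$ and the secret is $0^q$. When $\secmap_\sigma(R)=\mathbf 1^q$ the secret is $u$, since the secret-position products are coordinatewise and $u\in\mathbb F_2^q\subseteq\Ffield^q$.

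\emph{Expected obstacle.} The main care point is interpreting ``the restriction of $H$.'' The selected sum is defined only on $\mathcal I$, so the statement must refer to the unique extension fixed by the opening, not to some other codeword agreeing on $\mathcal I$. This is settled by the uniqueness remark: $|\mathcal E|<d(\CodeA)$, so at most one codeword of $\CodeA^\sigma$ agrees with the selected sums on $\mathcal I$. The multiplication law must also be applied in the twisted form, with $\CodeD$ untwisted. The identity $\secmap_\sigma(b*d)=\secmap_\sigma(b)*\secmap(d)$ follows because $\sigma^{-1}(\sigma b'*d)=b'*d$. Finally, the argument should note that it holds for every accepted witness simultaneously. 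This is what lets the extractor decode $u$ from the genuine answers in the $\mathbf 1^q$ mode.
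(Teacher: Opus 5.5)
Your proposal is correct and follows essentially the same route as the paper. The paper's proof has three ingredients, and you use the same three:
\begin{itemize}
\item the native equations give the per-symbol identity $\sum_j p_{i,j}^{\beta_{i,j}}=f_i+R_iD_i$ on active symbols;
\item multiplication compatibility of the twisted codes gives both $R*D\in\CodeA^\sigma$ and the secret identity;
\item uniqueness of the extensions (since $|\mathcal E|$ is below the code distances) fixes the full $f$ and $D$.
\end{itemize}
You additionally spell out the characteristic-two bookkeeping and the twisted form of the multiplication law, both of which the paper leaves implicit.
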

\begin{proof}
On $i\in\mathcal I$ the native equations give $\sum_jp_{i,j}^{\beta_{i,j}} =f_i+(M_i+\sum_j e_j\beta_{i,j})D_i=f_i+R_iD_i$. Multiplication compatibility gives $R*D\in\CodeA^\sigma$ and the displayed secret identity. Unique extension supplies the same full $f,D$ used on all active symbols. No erased payload is needed.
\end{proof}

\begin{lemma}[Alternative $\Mode$ witness]
\label{blk:lem:witness}
Suppose Bob's records use the commit simulator of \Cref{lem:bckm-offline-witnesses}. Keep the entire physical experiment and all disclosed records fixed. If the actual mask satisfies $R\in\CodeB^\sigma_{\mathbf1^q}$, there is an efficiently computed classical native witness for the zero-mode statement.
\end{lemma}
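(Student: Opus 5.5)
The plan is to write down the zero-mode witness explicitly as $(U',\beta')$ together with native openings, and then verify the predicate $\Mode$ of \Cref{blk:def:mode} coordinate by coordinate. The actual mask is $R=M+\beta\in\CodeB^\sigma_{\mathbf1^q}$. By \Cref{lem:multiplicative-codes}, the fibre map is onto, so we can efficiently fix one codeword $c_1\in\CodeB^\sigma_{\mathbf1^q}$, for instance by sampling it uniformly in that fibre. We then set $R'=R-c_1$, which lies in $\CodeB^\sigma_{0^q}$, and $\beta'=M+R'=\beta+c_1$. The first obstacle is that $\beta'$ must be consistent with the selectors, since \eqref{blk:eq:mode-equations} requires $J^{i,j}\subseteq\{a:U'_a+V_a=\beta'_{i,j}\}$. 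Changing a label bit $\beta_{i,j}$ therefore forces a change of the offsets $U'_a$ on the selected positions. So we define
\[
 U'_a=U_a+(\beta'_{i,j}-\beta_{i,j})\quad(a\in J^{i,j}),\qquad U'_a=U_a\ \text{otherwise},
\]
with the bits viewed inside $\mathbb Z_3$ so that the equation holds there. Equivalently, on $a\in J^{i,j}$ we set $U'_a=\beta'_{i,j}-V_a$ modulo three.

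Next we must check that this change never touches a value that has already been disclosed. The offsets were opened only on $T_A$, and selectors avoid $T_A$ because they sit in the complement partition. Selectors also avoid the audit positions $T_B^{i,j}$. The label record is never opened before Step~9. On erased symbols $J^{i,j}=\varnothing$, so those labels enter only through $R'$, and erased coordinates of $c_1$ can be absorbed freely. Hence every modified field is an unopened handle. \Cref{lem:bckm-offline-witnesses} says the commit simulator retains private classical native witnesses for either bit of every unopened handle. Since each trit and each label bit has its own bit handles, we read off witnesses for the bit encodings of $U'_a$ and $\beta'_{i,j}$ and reuse the genuine witnesses everywhere else. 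This gives a classical witness computed offline, with no rewinding and no device operation. We must also check that the two-bit trit encoding stays canonical and avoids the rejected fourth encoding. This holds because $U'_a$ is a genuine trit.

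Finally, we verify the remaining conjuncts. The condition $R'=M+\beta'\in\CodeB^\sigma$ with $\secmap_\sigma(R')=0^q$ holds by construction. All public syntax conditions (the erasure cap, selector sizes and containment) are unchanged because only private records moved. Consistency with earlier openings was shown above. The whole computation is linear algebra over $\Ffield$ plus table lookups, so it is efficient.

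The step I expect to be the main obstacle is the claim that only unopened handles change. One subtlety is any position of $J^{i,j}$ whose offset trit was opened, for example through a projection during the test. The other is whether the BCKM simulator keeps dual witnesses for every handle or only for the non-checked pairs within each repetition. For the first, I would rely on disjointness of $J$ from $T_A\cup T_B$ and on the schedule in Steps~7--8. For the second, I would use that outer openings use only the unchecked pair, for which \Cref{lem:bckm-offline-witnesses} provides both witnesses.
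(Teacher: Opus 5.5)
Your proposal is correct and follows the paper's proof. You shift the label vector by a codeword of secret $\mathbf 1^q$, set $U'_a=\beta'_{i,j}-V_a$ on the selectors, and read off the saved dual witnesses, using that the selectors are disjoint and avoid every opened test or audit field. The only difference is that the paper takes the canonical shift $v=\sigma*\mathbf 1^m$ (the constant function $1\in\mathcal L(G)$ has secret $\mathbf 1^q$), which makes the witness deterministic and avoids sampling in the fibre $\CodeB^\sigma_{\mathbf 1^q}$.
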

\begin{proof}
The constant function one belongs to $\CodeB$. Put $v=\sigma*\mathbf1^m$, $\beta'=\beta+v$, and $R'=R+v$. Then $M=\beta'+R'$ and $R'\in\CodeB^\sigma_{0^q}$. Read the saved native witness for each bit of the new full label $\beta'$, including its erased coordinates. For each $a\in J^{i,j}$, set $U'_a=\beta'_{i,j}-V_a$ in $\mathbb Z_3$ and select the saved native witnesses for that trit. Elsewhere keep $U'_a=U_a$. Selected sets are disjoint and avoid every previously opened test or audit field, so these assignments are consistent. Audit, erasure, and selector masks keep their actual opened values. The predicate requires only inclusion of the selected positions, so all its clauses hold. No late device call or equivocal-opening interaction is used to compute this polynomial-length witness.
\end{proof}

\begin{lemma}[Soundness of ordinary $\Mode$]
\label{blk:lem:mode-soundness}
When Bob's records are real, outside their statistical binding error and the ordinary proof's soundness error, an accepted $\Mode$ fixes his actual early labels and selected question classes and gives $M+\beta\in\CodeB^\sigma_{0^q}$.
\end{lemma}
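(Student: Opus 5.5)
The plan is to combine statistical soundness of the ordinary proof (\Cref{lem:npbc-pqzk}) with the setup-wide unique binding of Bob's real Naor/BCKM commitments (\Cref{lem:naor-auxiliary-setup}, and the real-table binding paragraph after \Cref{lem:bckm-offline-witnesses}). Both are statistical statements, so the argument runs inside one fixed statistical experiment. By convention~(ii) of \Cref{sec:proof-conventions}, we assign to each of Bob's real commitments its unique committed value by an unbounded function of the classical transcript. These analysis values are $U^\circ$ (offsets), $\beta^\circ$ (full labels), and the audit, erasure and selector masks. Outside the binding bad event, every native opening of a handle equals this value, and each of these records has been committed before $\sigma$, $M$ are sent.

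First step: define the exceptional event as the union of (a) some honest Naor setup being bad, which has probability at most $2^{-\lambda}$ per setup, union-bounded over polynomially many fresh setups and valid after arbitrary prefixes because the honest receiver's setup is fresh; and (b) the $\Mode$ proof accepting a false instance. Event (b) has negligible probability uniformly over arbitrary provers and auxiliary states, even though the instance (including $\sigma$, $M$) is chosen adaptively. The reason is that soundness holds for every fixed false instance, and the instance is classical and fixed before the proof's challenges. To make this precise, I would condition on the public instance and apply soundness pointwise.

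Second step: outside the exceptional event, the instance is true, so there is a witness $(U',\beta')$ with valid native openings of $c^U$ and $c^\beta$. Unique binding forces $U'=U^\circ$ and $\beta'=\beta^\circ$ on every handle. The clauses of \eqref{blk:eq:mode-equations} then read: $J^{i,j}\subseteq\{a:U^\circ_a+V_a=\beta^\circ_{i,j}\}$ for active $i$, so every selected position lies in the question class named by his committed label under the effective question $Y=U^\circ+V$; and $M+\beta^\circ\in\CodeB^\sigma_{0^q}$. Since the opened selector and erasure masks are also bound, the selectors named in the instance are the ones frozen at Step~5. This gives the claim: "actual early labels and selected question classes" means the bound values fixed before the physical window, respectively before the table (by \Cref{lem:cp-causal-locality}-style timing in the native experiment).

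The main obstacle is the first step, and specifically the uniformity of the soundness bound when the verifier's own Naor setup and the prior transcript were produced in an experiment where other components may later be simulated or rewound. I would handle it by noting that in this lemma all of Bob's records and the proof are real, and Alice is the prover facing an honest verifier. The only quantum-computational parts are therefore on Alice's side, and statistical soundness plus statistical binding are insensitive to them. I would then record the total error as $\varepsilon_{\rm bind}+\varepsilon_{\rm snd}$, to be added once to the common bad mass $\varepsilon_{\rm bad}$ used in \Cref{lem:ordered-block-hidden-pads}.
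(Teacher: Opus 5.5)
Your argument is the paper's: outside the binding failures of Bob's real inner commitments and the soundness error of the ordinary proof, an accepted $\Mode$ statement is true, so its witness openings must equal the unique committed offsets and labels, which yields the selector inclusions and $M+\beta\in\CodeB^\sigma_{0^q}$. Two small repairs are needed. First, the exceptional event in (a) must also include the BCKM cut-and-choose binding failure (at most $2^{-\lambda}$ per handle), not only bad Naor setups. Second, your last paragraph reverses the roles: Bob is the possibly unbounded prover of $\Mode$ and Alice is the honest verifier with fresh coins, and that is exactly why Alice's equivocal simulations do not affect statistical binding or soundness.
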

\begin{proof}
Outside the setup-wide and cut-and-choose binding failures of the real inner commitments, every committed field has at most one native value. An accepted false NP statement has negligible probability, uniformly over its adaptively chosen instance. A true statement therefore supplies these unique early values and the mode equations. The argument does not assert that every real handle has an opening: if the required openings do not exist, the statement is false. This event is paid once before entropy or character sums.
\end{proof}

\begin{lemma}[One audit controls every native opening]
\label{blk:lem:audits}
In the extraction experiment, let $W$ count canonical data coordinates with $C_a=1$ and $A_a[Y_a]\ne B_a[X_a]$ in Alice's unique possible table. If no full table admits native openings, the bad event below is defined to be false and the subsequent agreement claim is vacuous. For the fixed constants of \Cref{lem:unified-parameters},
\[
 \Pr[\mathsf{aud}=1\ \land\ W>m/40]
 \le \varepsilon_a+\varepsilon_{\rm bind}
       +\varepsilon_{\rm eq\text{-}fail},\qquad
 \varepsilon_a=e^{-\Omega(m)}.
\]
This controls every later accepted native opening simultaneously.
\end{lemma}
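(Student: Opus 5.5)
The plan is to follow the proof of \Cref{claim:blockwise-vanishing-audit}, now inside the extraction experiment, where Bob's records come from the commit simulator of \Cref{lem:bckm-offline-witnesses} and Alice's table commitments remain real. I will proceed in three steps.

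\emph{Step 1: uniqueness of the table.} Alice's table commitments are received by honest Bob with the honest binding algorithm. The paragraph on target deferral and real-table binding then shows the following. Outside the Naor setup-wide bad event and the cut-and-choose bad event, each handle admits at most one native value, even after arbitrary simulated prefixes. Let $\varepsilon_{\rm bind}$ be this total mass, bounded by a polynomial union over handles. Outside it, there is at most one full canonical table admitting native openings. This fixes $W$ as a function of that table together with the frozen $Y,B$. If no table opens, the event is false by convention.

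\emph{Step 2: independence of the audit masks.} Bob's commitments use the target-independent simulator, and none of its messages depends on the audit coordinates. By the deferral convention, I will therefore resample the fresh audit positions $T_B^{i,j}$ after Alice's table is committed. The statistical deferred sampling is valid only inside this simulated experiment, so I will charge its additional simulator failure as $\varepsilon_{\rm eq\text{-}fail}$.

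I also need the whole pre-table channel to be independent of the masks, jointly with the frozen $Y,B$. For this I will argue as in \Cref{claim:blockwise-vanishing-audit}:
\begin{itemize}
\item Bob's computation of labels, selectors and size flags can be deferred.
\item The simulated commitment channels do not read their targets.
\item Bob's receiving algorithm for the table uses only fresh coins.
\end{itemize}
I expect this to be the main obstacle. The equivocal simulator must fix selectors and erasure masks consistently once the audit is sampled late. No audit-dependent length, early abort or visible decision may precede the late cut. The erasure flags are computed after the audit, but they are never conditioned on in the sampling argument; the cap on $\mathcal E$ only restricts acceptance further.

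\emph{Step 3: sampling bound.} Fix $Y,B$ and the unique table. There are $km$ independent uniform audit samples, one per microblock, each indicating a bad entry; their expected sum is $W/N_0$. If $W>m/40\ge\eta_A n$, which holds since $\eta_A\le1/(40kN_0)$, then acceptance at threshold $\kappa_{\rm aud}km$ requires a downward deviation of $(\eta_A-\kappa_{\rm aud})km$. Hoeffding (\Cref{fact:Chernoff}) bounds this by $\exp[-2(\eta_A/2)^2km]=e^{-\Omega(m)}$. Anchored coordinates have $C_a=0$ and contribute nothing. Sampling a malformed opening causes rejection, so malformed entries only lower acceptance.

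Because binding leaves a single possible table, this bound covers every later accepted native opening simultaneously, with no union over witnesses. The error terms add up to the stated bound.
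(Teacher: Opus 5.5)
Your proposal is correct and follows essentially the paper's own argument: binding of Alice's honestly received table commitments fixes at most one table, and the target-independent equivocal commit circuits for Bob's fields let the audit coins be sampled after the table cut. The obstacle you flag is already resolved by the protocol, since Step~5 commits on fixed metadata and defers every size decision. Hoeffding over the $km$ independent audits with $\eta_A\le 1/(40kN_0)$ then gives $e^{-\Omega(m)}$. The only detail you leave implicit is that the paper keeps an immutable copy of each completed table transcript, so later simulation cannot alter the historical binding event.
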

\begin{proof}
Alice's commitments are received by the ordinary honest algorithm, even though Bob's sender-role records are simulated. Fresh honest setup strings and cut-and-choose challenges give statistical binding against Alice with arbitrary preceding auxiliary state. Keep an immutable copy of each completed table transcript; later simulation does not change this historical event.

In the fixed equivocal-Bob experiment, the audit and selector commit circuits do not read their targets. No target-dependent size flag is sent before Alice fixes the table. Thus the audit coins may be sampled after this cut, exactly in this experiment. Condition on a possible table, the physical records, and the partition. If $w_{i,j}$ positions in a microblock are bad, its one audit has loss probability $w_{i,j}/N_0$; these audit coins are independent. The average audit expectation is $W/(kN_0m)$. Hoeffding bounds acceptance when $W>\eta_A kN_0m$ by $\exp[-2(\eta_A-\kappa_{\rm aud})^2km]$. The chosen $\eta_A\le1/(40kN_0)$ proves the claim. Invalid earlier openings only reject. Binding assigns at most one table for analysis and imposes no code-membership assumption on it. Consequently this one event controls all future openings, with no union bound over witnesses and no computational transfer of an existential event.
\end{proof}

\begin{protocolbox}
\protocolstyle
\emph{Setting:} Alice is corrupt; SIM plays the source receiver.
\begin{protocolsteps}
\item \emph{Records and physical execution.} Receive Alice's records honestly. Use target-independent equivocal commitments for all Bob-owned fields. Execute the genuine prescribed physical inputs, answers, and leakage window, and keep all tests, audit masks, labels, erasures, and selectors at their actual values. Open public fields genuinely and reproduce every ordinary abort.
\item \emph{Mode.} Sample the genuine fresh diagonal $\sigma$ and uniform $R\in\CodeB^\sigma_{\mathbf1^q}$. Send $M=\beta+R$ and run the real $\Mode$ prover with the alternative native witness of \Cref{blk:lem:witness}.
\item \emph{Private extraction.} After all public payload checks, use $\widehat Z_{i,j}=(C_aB_a[X_a])_{a\in J^{i,j}}$ to estimate the actual $\beta_{i,j}$-branch pads. Sum the resulting selected plaintexts to obtain a word $\widehat H$ on $\mathcal I$. Run the total decoder of \Cref{lem:ag-decoding}, with the public error bound $T=\lfloor m/40\rfloor$ and erasures $\mathcal E$. If it returns a codeword whose secret lies in $\mathbb F_2^q$, retain that secret as $u^*$; otherwise retain $\perp$.
\item \emph{Receipt and retirement.} Issue the same public receipt and discard the same honest receiver systems as the real protocol. The private $u^*$ is omitted from the state comparison. Any later native opening is verified from $\tau$ by the ordinary predicate.
\end{protocolsteps}
\smallskip\noindent\textit{Where each step is used.}
Steps 1--2: \Cref{lem:bckm-offline-witnesses,blk:lem:witness,blk:lem:label-entropy}. Steps 3--4: \Cref{blk:thm:agreement,blk:thm:state}.
\end{protocolbox}
\par\nobreak\noindent
\begin{minipage}{\linewidth}
\captionsetup{hypcap=false}
\captionof{algorithmica}{A fixed source extractor using a real proof of $\Mode$.}
\label{prot:di-bc-sim}
\end{minipage}

\begin{theorem}[Agreement with every native opening]
\label{blk:thm:agreement}
For the fixed simulator of \Cref{prot:di-bc-sim},
\[
 \Pr_{\rm sim}[\exists u,w:\Ver_{\rm str}(\tau,u,w)=1,
                                      \ u\ne u^*]
 \le \varepsilon_a+\varepsilon_{\rm bind}
                  +\varepsilon_{\rm eq\text{-}fail}.
\]
\end{theorem}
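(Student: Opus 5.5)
The plan is to show that, outside the single exceptional event of \Cref{blk:lem:audits}, every accepted native opening has $u=u^*$. The argument is deterministic on the remaining branch, so no union over witnesses is needed.

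\textbf{Step 1: fix the good branch.} Work in the fixed simulation experiment of \Cref{prot:di-bc-sim}. Here Bob's records are equivocal, while Alice's table commitments are received honestly and hence remain statistically binding. Delete the event of \Cref{blk:lem:audits}, of mass at most $\varepsilon_a+\varepsilon_{\rm bind}+\varepsilon_{\rm eq\text{-}fail}$. On the complement, one of two cases holds. If no full table admits a native opening, the existential event is empty and there is nothing to prove. Otherwise there is a unique possible table, determined by binding, and the audit gives $W\le m/40$. Because this table is unique, every accepted $(u,w)$ opens the same canonical data.

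\textbf{Step 2: apply the payload identity.} Take any accepted opening $(u,w)$. Step~2 of the simulator uses the genuine $\beta$ and an $R=M+\beta\in\CodeB^\sigma_{\mathbf1^q}$, so \Cref{blk:lem:algebra} applies. The true selected-plaintext sums on $\mathcal I$ are then the restriction of the codeword $H=f+R*D\in\CodeA^\sigma$, and $\secmap_\sigma(H)=\mathbf1^q*u=u$. The pair $(f,D)$ is the unique extension fixed by the opening. Since the table is unique, this $H$ does not depend on which accepted witness we use.

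\textbf{Step 3: show that the decoder recovers $H$.} The simulator's estimate $\widehat Z_{i,j}=(C_aB_a[X_a])_{a\in J^{i,j}}$ agrees with the table bits $A_a[\beta_{i,j}]$ except at positions counted in $W$. At an anchored position both sides are zero. At an active selected position we have $Y_a=\beta_{i,j}$, so agreement is exactly the Magic Square intersection condition. Correct raw bits give correct pads for every public matrix, so $\widehat H$ differs from $H|_{\mathcal I}$ in at most $W\le m/40=T$ symbols. The schedule gives $2T+|\mathcal E|\le0.06m<m-2\deg G-g$. By \Cref{lem:ag-decoding} the total decoder therefore returns $H$, and the simulator retains $u^*=\secmap_\sigma(H)=u\in\mathbb F_2^q$. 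Hence $u\ne u^*$ is impossible on the good branch, and the stated bound follows.

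\textbf{Main obstacle.} The delicate step is justifying \Cref{blk:lem:audits} in this experiment, and I would treat it as already established. It needs the audit coins to remain independent of Alice's table even though Bob's commitments are equivocal and $R$ is in the opposite mode. Two further points need care. First, $W$ must count exactly the positions that corrupt the estimated selected pads, including malformed entries. Second, the bad event is bounded statistically inside the experiment and is not transferred across a computational hybrid.
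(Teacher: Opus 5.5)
Your proposal is correct and follows essentially the same route as the paper. You remove the single exceptional event of \Cref{blk:lem:audits} and use \Cref{blk:lem:algebra} with $R\in\CodeB^\sigma_{\mathbf1^q}$ to obtain a codeword $H$ with secret $u$. You bound the discrepancy of $\widehat H$ by $W\le m/40$, since correct selected coordinates and anchors give identical pads, and the margin $2W+|\mathcal E|\le0.06m<m-2\deg G-g$ then forces the decoder to return $H$, so $u^*=u$ for every witness of the one fixed table. Your handling of the vacuous no-opening case, and your use of table uniqueness to avoid any union over witnesses, match the paper's closing remarks.
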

\begin{proof}
Work outside the single exceptional event of \Cref{blk:lem:audits}. For any accepted native opening, \Cref{blk:lem:algebra} supplies a full $H\in\CodeA^\sigma$ with secret $u$. A correct raw selected coordinate gives the same bit to Alice's pad and the extractor's pad; anchors give zero to both. A wrong active symbol therefore requires at least one of the $W$ bad selected coordinates. The word $\widehat H$ has at most $W\le m/40$ errors relative to this $H$ and at most $m/100$ erasures. The decoding margin is
\[
 2W+|\mathcal E|\le 0.06m
       <m-2\deg G-g,
\]
so the checked total decoder returns $H$. Its secret is binary because the native predicate accepted $u$. Hence $u^*=u$. The argument begins with an arbitrary existing native witness; it does not require the extractor to know beforehand that the received word admits one. If the decoder returns $\perp$, no accepted native opening exists on this event. The same reasoning applies to every witness for the one fixed table and transcript.
\end{proof}

\begin{lemma}[Label entropy in the committer-side comparison]
\label{blk:lem:label-entropy}
In the totalized equivocal-Bob experiment, immediately before the fresh diagonal is sampled, let $E_A$ contain Alice's full continuing state, the public transcript, and the untouched reference. Then
\begin{equation}
 H_{\min}(\beta\mid E_A)
       \ge km-2\ell-3t_A-1.
 \label{blk:eq:label-guess}
\end{equation}
The bound also holds on the actual continuing substate, without normalizing that substate.
\end{lemma}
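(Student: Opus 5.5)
The argument is a transfer of the proof of \Cref{lem:post-J-selector-entropy} to the native experiment in which Bob's records are produced by the target-independent commit simulator of \Cref{lem:bckm-offline-witnesses}. The bound \eqref{blk:eq:label-guess} is the min-entropy form of \eqref{eq:selector-global-entropy} with $\ell=\ell_{\sf A}$. So it suffices to show $\pg(\beta\mid E_A)\le 2^{2\ell+3t_A+1-km}$ on the totalized, subnormalized continuing branch.

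\emph{Totalization.} Every label bit $\beta_{i,j}$ is sampled on every branch, including erased and aborted ones. Test and partition commands receive canonical defaults on invalid input, and all size flags stay private. No branch is normalized, so the final bound holds absolutely on the continuing substate, as in \Cref{lem:cq-substate-calculus}.

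\emph{Cost of disclosures.} I would remove the physical-window communication using \Cref{lem:interactive-transcript-guessing}, which costs a factor $2^{2\ell}$ and reproduces the state exactly on the correct-guess branch. At the first release of Bob's $3t_A$ tested answer bits in Step~7, I would replace them by an independent uniform guess, costing $2^{3t_A}$. The single aggregate audit decision costs one further bit. I would not condition the reduced strategy on the correctness of any guess.

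\emph{Symmetry in the reduced experiment.} Here all of Bob's actual outcomes are private, and the committed records are simulated commitments that never read their targets. Whole-laboratory no-signalling (\Cref{fact:local-instrument-marginal}) then makes Alice's view and the reference independent of Bob's questions off the test set, given $U$-commitments that carry no content. The test-set questions may be fixed freely, since they were opened before selectors.

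Within each microblock I would apply the involution that swaps question labels $0\leftrightarrow1$ outside its audit and flips $\beta_{i,j}$. This maps $Y$ to an equally likely question string and preserves the selected set $J^{i,j}$. It also preserves both candidate sizes, hence the erasure mask, as well as the audit positions and every opened field. Because the offset commitments are target-independent in the equivocal experiment, the swap is invisible in the transcript. Each of the $km$ labels is therefore uniform jointly with $E_A$, giving $\pg=2^{-km}$ in the reduced experiment. Multiplying back the guessing factors $2^{2\ell+3t_A+1}$ gives the claim.

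\emph{Main obstacle.} The delicate step is justifying the symmetry when Bob's Step~2 offset commitments and the opened selector and erasure masks are present. I must check that the simulated commitments are exactly, not just computationally, independent of the targets, so that the swap is a statistical symmetry inside this fixed experiment per convention (ii). I must also check that the opened $U$ and $B$ values on $T_A$ involve only test coordinates untouched by the swap.
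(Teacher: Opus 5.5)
Your proposal is correct and follows the paper's own proof. Both transfer the communication-elimination and per-microblock label-swap symmetry argument of \Cref{lem:post-J-selector-entropy} to the equivocal-Bob experiment, charge $2\ell$ for the window, $3t_A$ for the tested answers and one bit for the aggregate audit/size decision, and then restrict to the continuing substate without normalizing. The obstacle you flag is handled in the paper just as you anticipate: the commit simulator is target independent, so Bob's public openings can be generated from the guessed values and the saved seed witnesses. One small correction: the swap exchanges the two candidate sizes rather than preserving each of them, and this suffices because the erasure rule is symmetric in the two sizes.
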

\begin{proof}
Apply the joint communication-elimination and selector-symmetry argument of \Cref{lem:post-J-selector-entropy} in this fixed native experiment. It is legitimate here because Bob's commit simulator is target independent and his public native openings can be generated from the guessed value and saved seed witnesses. Their polynomial encoding lengths are not additional raw-data disclosures. Guess the physical-window communication with debit $2\ell$ and the $3t_A$ test-answer bits at their first release. All remaining Bob outcomes stay private. Whole-laboratory no-signalling gives the uniform unused questions needed for the label exchange in each microblock. The two candidate size tests and the honest symbol-erasure rule are symmetric in labels zero and one. Therefore the disclosed selectors and erasures do not debit label entropy. The one aggregate audit/size decision costs one bit. Restore the guessing factors and restrict to actual continuation without a rare-event normalization.
\end{proof}

\begin{theorem}[State-preserving source extraction]
\label{blk:thm:state}
Omitting $u^*$, \Cref{prot:di-bc-sim} and the real commitment phase have computationally indistinguishable joint corrupt states, public transcripts, honest retained verification records, and references. The simulator is strict QPT and is fixed independently of its unknown auxiliary state.
\end{theorem}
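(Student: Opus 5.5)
We prove \Cref{blk:thm:state} by a short sequence of hybrids running from the real commitment phase to \Cref{prot:di-bc-sim}. The private extraction output $u^*$ is always omitted, and every step preserves the genuine physical execution. Let $H_0$ be the real execution with honest Bob.

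\emph{Hybrid $H_1$: equivocal records.} Replace every Bob-owned classical commitment (offsets, columns, audit masks, full labels, erasure record, selector masks) by the target-independent commit simulator of \Cref{lem:bckm-offline-witnesses}. All openings in Steps~7--8 use the saved witnesses for the actual values. The $\Mode$ proof still uses the real witness, and the mask still has zero secret. By \Cref{lem:bckm-offline-witnesses}, applied sequentially and uniformly over auxiliary states, $H_0\approx_c H_1$ jointly with the reference. The physical inputs, answers and leakage window are untouched, because the commitments run outside the window (\Cref{def:compiled-device-calls}).

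\emph{Hybrid $H_2$: simulated $\Mode$.} Replace the $\Mode$ proof by the fixed total ZK simulator of \Cref{lem:npbc-pqzk}. The statement is true here, so $H_1\approx_c H_2$. After this step the continuation reads no $\Mode$ witness and no label value except through $M$.

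\emph{Hybrids $H_3$ and $H_4$: switch the mode secret.} In $H_3$, replace $M=\beta+R_0$, with $R_0\in\CodeB^\sigma_{0^q}$, by an independent uniform $M\in\Ffield^m$. In $H_4$, replace it by $M=\beta+R_1$ with $R_1\in\CodeB^\sigma_{\mathbf1^q}$. Both steps are statistical. \Cref{blk:lem:label-entropy} gives $H_{\min}(\beta\mid E_A)\ge km-2\ell-3t_A-1$ in this totalized equivocal-Bob experiment on the continuing substate. \Cref{lem:diagonal-mask}, applied with the fresh diagonal retained, bounds each step by $2^{-(\xi m-2\ell-3t_A-O(1))/2}$. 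This is negligible by the slack in \Cref{lem:unified-parameters}. Everything after $M$ is an efficient channel that does not read $\beta$: the simulated proof, Alice's payload, the receipt, and the discarding of $U,Y,B,\beta,R$.

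\emph{Hybrids $H_5$ and $H_6$: real proof in mode one.} In $H_5$, restore the real $\Mode$ prover, now using the alternative witness of \Cref{blk:lem:witness}, which exists because $R\in\CodeB^\sigma_{\mathbf1^q}$. The statement is again true, so ZK gives $H_4\approx_c H_5$. Finally, $H_6$ appends the private decoding of Step~3. This step is a local computation on honest scratch that is discarded at receipt, so it leaves the compared marginal unchanged by \Cref{fact:local-instrument-marginal}. $H_6$ is exactly \Cref{prot:di-bc-sim}. Summing with \Cref{lem:efficient-composition-hybrids} gives the claim. The simulator is strict QPT because \Cref{lem:fixed-clock-rewinding} gives fixed clocks and all code operations are polynomial.

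\emph{Main obstacle.} The delicate step is $H_2\to H_4$. The entropy bound must hold in the experiment that actually contains equivocal records and a simulated proof, and it must hold jointly with Alice's quantum state, the leakage and the reference, not merely on a marginal. I would verify that the guessing arguments (communication elimination, guessing the $3t_A$ test bits, and the label/question-class swap symmetry) are performed inside this fixed statistical hybrid. I would also verify that no target-dependent message precedes $\sigma$, and that the computational errors of $H_0\to H_2$ and $H_4\to H_5$ are only added, never multiplied by the guessing factor $2^{2\ell+3t_A+1}$ (convention (ii) of \Cref{sec:proof-conventions}).
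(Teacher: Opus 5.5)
Your proposal is correct and follows the paper's proof essentially step for step. You equivocate Bob's records, replace the true zero-mode proof by the fixed total ZK simulator, move $M$ from the zero fiber through a uniform mask to the all-one fiber using \Cref{blk:lem:label-entropy} and \Cref{lem:diagonal-mask}, restore the real prover with the alternative witness of \Cref{blk:lem:witness}, and compute $u^*$ locally before retirement. The paper additionally charges the bounded-sampling error, for a total of $\nu_{\rm eq,B}+2\nu_{\rm ZK,\Mode}+2\varepsilon_m+\varepsilon_{\rm samp}$.
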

\begin{proof}
First replace Bob's real inner commitments sequentially by their equivocal versions, opening disclosed fields genuinely. Alice's commitments remain real. Replace the real zero-mode proof by the fixed-clock ZK simulator while the statement is true. The receipt retains only $\tau$, so the remaining comparison uses one total witness-free continuation on every syntactically valid $M$.

By \Cref{blk:lem:label-entropy,lem:diagonal-mask}, the joint message $(\sigma,\beta+R_0,E_A)$ is within $\varepsilon_m$ of $(\sigma,U,E_A)$, and the latter is within $\varepsilon_m$ of $(\sigma,\beta+R_1,E_A)$, where $R_0,R_1$ are uniform in the zero and all-one secret fibers. In the middle experiment no honest prover is asked to prove a false statement; the same total proof-simulation circuit is merely an efficient continuation. In the final endpoint, \Cref{blk:lem:witness} gives a true zero-$\Mode$ statement with alternative native witnesses, so restore the real prover. Bob's commitments stay equivocal in that endpoint.

Finally compute $u^*$ locally and retire the receiver scratch. This adds no message and, with $u^*$ omitted, cannot change the retained marginal. Including abort blocks, the comparison error is
\[
 \nu_{\rm eq,B}+2\nu_{\rm ZK,\Mode}+2\varepsilon_m
                 +\varepsilon_{\rm samp}.
\]
Every channel preserves the inaccessible reference. The bounded primitive clocks, finite-field operations, and checked decoder are polynomial on all inputs. Subsequent native verification depends only on $\tau$, so any permitted efficient continuation preserves the comparison. This is the same extractor for which agreement was proved; agreement was not transported through these computational hybrids.
\end{proof}

\subsection{Joint hiding and honest native witnesses}
\label{blk:sec:hiding}

We use one common good substate for all payload comparisons. This gives joint hiding; we then show how the honest committer obtains a native opening witness.

\begin{lemma}[A common good substate in the native receiver experiment]
\label{blk:lem:sparse}
Replace Alice's hidden-test and table commitments by target-independent equivocal commitments, opening genuine fields, and keep Bob's records and $\Mode$ real. In this one fixed experiment there is a common good substate, independent of the Fourier character used later, whose complement has mass $\varepsilon_{\rm bad}+\varepsilon_{\rm bind} +\varepsilon_{\rm snd}=\negl(\lambda)$. On it the hidden test bounds the total active physical losses and accepted $\Mode$ fixes the early labels, selector inclusion, and $R=M+\beta\in\CodeB^\sigma_{0^q}$.
\end{lemma}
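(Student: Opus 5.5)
I would build the good substate as the intersection of three statistical events, each defined inside the fixed native receiver experiment. Every event is a function only of classical records frozen before the late diagonal, mask and payload. That makes the substate independent of any Fourier index used later. The events are: (a) the hidden-test good event, (b) the binding-good event for Bob's real records, and (c) the event that accepted $\Mode$ statements are true.

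\emph{Step 1: the test.} Alice's test-set commitment is now equivocal and target independent, and its circuit does not read the incidence string. So in this experiment the test set $T_A$ can be sampled after Bob's column record $B$, his offsets $U$, the physical raw records and the privately presampled anchors are fixed. This is the same deferral as in \Cref{claim:sparse-test-to-random-blocks}. It is legitimate because no target-dependent message length or abort precedes the opening of $T_A$; the paragraph on target deferral permits exactly this. Sampling without replacement then gives
\[
\Pr[\mathsf{tst}=1,\ |\mathsf{Lose}|>\eta_{\rm MS}L_q]\le e^{-2(\eta_{\rm MS}-\kappa_{\rm test})^2t_A}.
\]
I set $\varepsilon_{\rm bad}$ to this quantity plus the bounded-sampling errors. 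With $t_A\ge s_\lambda/200$ it is negligible.

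\emph{Step 2: binding and soundness.} Bob's offset, column, audit, label, erasure and selector commitments are received by honest Alice's real algorithm, with fresh Naor setups and cut-and-choose challenges. By \Cref{lem:naor-auxiliary-setup} and the real-table binding argument, outside a mass $\varepsilon_{\rm bind}$ every such handle has at most one native value. This holds against arbitrary preceding simulated prefixes, because Alice's receiver randomness is fresh. I save immutable transcript copies so that later simulation channels cannot enlarge this historical event. Next, \Cref{blk:lem:mode-soundness}, which uses the statistical soundness of \Cref{lem:npbc-pqzk} uniformly over adaptively chosen instances, shows that outside an extra $\varepsilon_{\rm snd}$ an accepted $\Mode$ statement is true. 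Its witness then coincides with the unique committed early values: $U'=U$ on the selectors and $\beta'=\beta$. Hence $J^{i,j}$ is contained in the actual question class of $\beta_{i,j}$, and $R=M+\beta\in\CodeB^\sigma_{0^q}$.

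\emph{Step 3: assembly.} I define the good substate as the projection onto the complement of the union of these three classical events. I totalize undefined values on stopped branches, as in convention~(iii). This is a single cq-substate $\widehat\rho\preceq\rho$, and its trace deficit is bounded by the union bound. The step I expect to be the main obstacle is keeping this event independent of the character and unaffected by the equivocal replacements. Binding and soundness are properties of real objects; to handle this, I would measure them only on Bob's real commitments and his real proof. The test event needs deferred sampling, which is valid only in this target-independent experiment and is never transported through a computational hybrid. Finally, because none of the three events refers to $\sigma$, $M$, the payloads or $\alpha$, \Cref{lem:ordered-block-hidden-pads} can delete this one mass once, before communication guessing and Fourier expansion.
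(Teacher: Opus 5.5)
Your proposal is correct and follows the paper's proof. You defer the hidden test only inside the target-independent equivocal-Alice experiment and apply the whole-population sampling bound, then remove the binding failure of Bob's real records and false $\Mode$ acceptance (\Cref{blk:lem:mode-soundness}) as common transcript events before any character sum. One correction: the $\Mode$-soundness event does depend on $\sigma$ and $M$, since its instance contains them, so the independence from the character should rest, as in the paper, on all three events being fixed before the payload and never referring to the index $\alpha$, not on their preceding the diagonal.
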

\begin{proof}
The first replacement is an efficient sequential computational hybrid. Only inside that experiment is the private hidden test statistically deferred past Bob's frozen records. Apply the sampling proof of \Cref{claim:sparse-test-to-random-blocks} to the entire population, and remove the one event of test acceptance with excessive total loss. No condition on all constant-size microblocks is required. Remove the statistical binding event for Bob's real records and false $\Mode$ acceptance, using \Cref{blk:lem:mode-soundness}. These are common transcript events; they are removed before any character sum. The repetition theorem is applied to the original question law and its restricted-order early selector, not to a postselected input law.
\end{proof}

\begin{lemma}[Vector payload hiding in the native source]
\label{blk:lem:pads}
On the native experiment of \Cref{blk:lem:sparse}, payloads for any prescribed $u_0,u_1\in\mathbb F_2^q$ have receiver-reference distance at most
\[
 \varepsilon_f=
 2\varepsilon_{\rm bad}
 +2^{(3k^2m+1-\gamma L_q+2\ell+b_{\rm pub})/2},
 \qquad b_{\rm pub}\le3km+O(1),
\]
in addition to the single binding and $\Mode$-soundness errors. The comparison may retain a common classical label specifying both inputs, including a fresh bank label $\mu$ used for $u_0=\mu,u_1=0^q$. There is no factor exponential in $q$.
\end{lemma}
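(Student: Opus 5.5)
The proof transfers the argument of \Cref{lem:ordered-block-hidden-pads} to the native source experiment of \Cref{blk:lem:sparse}, with the OT roles relabelled. Alice's input $u$ now plays the role of the difference secret $\Delta$: the payload uses $f\in\CodeA^\sigma_{0^q}$ and $D\in\CodeD_u$. On the common good substate, accepted $\Mode$ (\Cref{blk:lem:mode-soundness}) fixes Bob's early labels $\beta$, selector inclusion, and $R=M+\beta\in\CodeB^\sigma_{0^q}$.

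The selected sum is $H=f+R*D\in\CodeA^\sigma$. By \Cref{blk:lem:algebra} its secret is $\secmap_\sigma(R)*u=0^q$. For fixed legal $R,D$, the translation $f\mapsto f+R*D$ is a bijection of $\CodeA^\sigma_{0^q}$. Hence $H$ and its uniform selected shares $U_{i,j}=p^{\beta_{i,j}}_{i,j}$ have the same law for $u_0$ and $u_1$, and are independent of $D$ and the raw data given the prefix. This is where the effective selected input is the same at both endpoints. The retained common label, including the bank label $\mu$, enters only through the prefix and this pointwise algebra. It is admissible under \Cref{def:vector-commitment}, so it adds no charge and no factor in $q$.

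Next I would delete the single exceptional mass $\varepsilon_{\rm bad}$ (test, binding, false $\Mode$) once, before any communication guessing. I would then Fourier-expand the unselected ciphertexts $V_{i,j}=K^{1-\beta_{i,j}}_{i,j}+e_jD_i$ over the ambient index space $\Ffield^{km}$. Averaging over uniform $D\in\CodeD_u$ kills every coefficient with $w\notin\CodeD_0^\perp$. The coefficients with $w\in\CodeD^\perp$ have input-independent phase and cancel in the difference. The sensitive indices have $\wt(w)>0.047m$ by \Cref{lem:multiplicative-codes}, so the union $K_\alpha$ of their selectors is an early-frozen selector of density at least $0.047/(8(k+1))$. \Cref{thm:restricted-order-repetition}, via \Cref{lem:density-bucketing}, together with \Cref{lem:interactive-transcript-guessing} and \Cref{lem:post-output-block-restriction}, then gives
\[
 \pg(X_\alpha\mid E_0)\le2^{2\ell-\gamma L_q}.
\]
After this I would charge the disclosures of \Cref{tab:game-correspondence}: audit answers $3km$, the test bit, and the selected pads $k^2m$ granted to Bob. \Cref{lem:linear-extraction} then bounds each sensitive coefficient by $2^{(1-h)/2}$, with $h=\gamma L_q-2\ell-b_{\rm pub}-k^2m$. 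Summing over at most $2^{k^2m}$ indices and restoring the deleted mass gives the stated $\varepsilon_f$. The factor $2$ on $\varepsilon_{\rm bad}$ comes from the two endpoints.

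The main obstacle is justifying that this one good substate and the one repetition bound serve every character and both inputs at once. In the native experiment, Alice's test and table commitments are equivocal while Bob's records stay binding. I need the hidden test to be statistically deferrable there, and $K_\alpha$ must be determined before the anchors and the table. I would handle this as in \Cref{blk:lem:sparse}: fix the exceptional events as transcript events before any character sum. Then I would check that the equivocal openings of Alice's fields are functions of genuine values already charged in the table, so they add no further debit.
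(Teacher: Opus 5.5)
Your proposal is correct and is essentially the paper's proof. The zero-secret translation $f\mapsto f+R*D$ makes $H$ and the selected shares input-independent, and the remaining affine Fourier terms are bounded as in \Cref{lem:ordered-block-hidden-pads}, with one common bad substate removed outside all characters and Alice's equivocal openings generated from guessed values and source-independent equivocation data; the paper simply cites that lemma for the character sum instead of re-deriving it, and notes that zero-extending $w$ over the erased symbols lands in $\CodeD_0^\perp$, so the dual distance needs no $|\mathcal E|$ subtraction.
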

\begin{proof}
For each fixed valid prefix, $R$ has zero secret. A fresh uniform $f\in\CodeA^\sigma_0$ makes $H=f+R*D$ uniform in that fiber, independently of $u$. Selected-branch shares conditioned on $H$ have the same distribution for both inputs.

Apply \Cref{lem:ordered-block-hidden-pads} to the remaining affine Fourier terms. Its code of differences is the untwisted $\CodeD$. For completeness, on $\mathcal I$ the difference variables are $V_{i,j}=e_jD_i+K_{i,j}^{1-\beta_{i,j}}$. A character with coefficients $a_{i,j}$ aggregates to $w_i=\sum_j e_ja_{i,j}$. Its dependence on the prescribed secret vanishes unless the zero extension of $w$ lies in $\CodeD_0^\perp\setminus\{0\}$. Indeed $(\CodeD_0|_{\mathcal I})^\perp$ is exactly the set of restrictions whose zero extensions belong to $\CodeD_0^\perp$. Consequently every input-sensitive term has at least $0.047m$ active symbol positions; there is no subtraction of $|\mathcal E|$ and no erased payload information to condition on.

The resulting missing-string selector is early and has the fixed density required by the physical guessing theorem. Communication elimination is performed before its no-communication bound, paying $2\ell$. Charge each public source-dependent audit value at its first disclosure, totaling at most $3km+O(1)$, and grant selected pads only afterward. The matrix-hash Fourier estimate then gives the displayed exponent. In the native experiment, openings of Alice's fields are generated from guessed values and retained source-independent equivocation data, so they implement the same causal disclosure argument. The common removed substate is paid once outside all characters. The message label is fixed before the fresh physical questions (or belongs to already granted side information); bank masks satisfy this condition. It is not an uncharged function of Alice's later private raw data.
\end{proof}

\begin{theorem}[Labelled joint hiding]
\label{blk:thm:hiding}
The packed source satisfies item~(iii) of \Cref{def:vector-commitment}, uniformly over string inputs and admissible auxiliary/reference states.
\end{theorem}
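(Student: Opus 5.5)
The plan is a hybrid argument over the complete commitment phase of \Cref{prot:di-bc}, run against a fixed QPT receiver Bob with commitments to $u_0$ and $u_1$. A permitted common classical label (for instance a fresh bank label $\mu$ when $u_0=\mu$, $u_1=0^q$) is retained on both sides. It lives in the admissible pre-input state, independent of the fresh game questions, as required by item~(iii) of \Cref{def:vector-commitment}.

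\emph{First hybrid: native experiment.} For each $b\in\{0,1\}$, I move from the real execution on $u_b$ to the native experiment of \Cref{blk:lem:sparse}. Alice's hidden-test and table commitments are replaced sequentially by the target-independent equivocal commitments of \Cref{lem:bckm-offline-witnesses}, which open every genuinely disclosed field. Bob's records and his real $\Mode$ proof are left unchanged. This is a computational hybrid costing $\nu_{\rm eq,A}$ on each side. The only requirement is that no commitment whose opening is later sent depends on a deferred target. That holds because the test set and the table are opened only on the prescribed positions, and the equivocation data are source-independent.

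\emph{Statistical core.} Inside the one fixed native experiment, I remove the common bad substate of \Cref{blk:lem:sparse}: excessive test loss, binding failure of Bob's records, and false acceptance of $\Mode$. This is done once, before any Fourier expansion. On the remaining good substate, $R\in\CodeB^\sigma_{0^q}$ is fixed by Bob's early records. I then invoke \Cref{blk:lem:pads} to bound the trace distance between the payload states for $u_0$ and $u_1$ by $\varepsilon_f$, with no $2^q$ factor. That lemma already accounts for $2\ell$ leakage, the $3km+O(1)$ public audit debit and the granted selected pads. \Cref{lem:unified-parameters} makes $\varepsilon_f$ negligible. The receipt step and the discarding of Bob's private records are a common channel applied after the payload, so trace distance only contracts through commitment completion. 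Ordinary aborts at every step are included as the same channel on both sides, so neither branch is normalized.

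\emph{Uniformity and the main obstacle.} The total error is $2\nu_{\rm eq,A}+\varepsilon_f+\varepsilon_{\rm bind}+\varepsilon_{\rm snd}$. This bound is uniform over $u_0,u_1$ and over admissible auxiliary states: the statistical bounds are absolute on subnormalized branches, and the computational ones are uniform by \Cref{lem:naor-auxiliary-setup,lem:fixed-clock-rewinding}. I expect the delicate step to be legitimizing the label: I must check that the label never becomes an uncharged function of Alice's raw source. Because it is fixed before the physical questions, the guessing bound of \Cref{lem:ordered-block-hidden-pads} applies with the label inside $E_0$. The other point needing care is that computational errors are only added and never multiplied by guessing factors; this is ensured by convention~(ii) of \Cref{sec:proof-conventions}.
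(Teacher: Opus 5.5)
Your proposal follows essentially the paper's own route. You move both input endpoints into the same equivocal-Alice native experiment, with Bob's records and $\Mode$ left real. You then remove the common bad substate of \Cref{blk:lem:sparse} once and apply \Cref{blk:lem:pads}, with the label fixed before the physical questions. Finally, you add the two computational hybrid errors at the efficient endpoints, without multiplying them by guessing or character factors. The paper's proof is a compressed version of this same argument.
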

\begin{proof}
Use the same equivocal-Alice experiment at both input endpoints, opening genuine values and leaving Bob's records real. Remove the common bad substate of \Cref{blk:lem:sparse}, apply \Cref{blk:lem:pads}, and restore Alice's real commitments at the other endpoint. The two computational hybrid errors are added at the efficient endpoints; neither they nor the common bad mass are multiplied by the number of characters. In particular,
\begin{equation}
 (\mu,\mathsf{View}_{\rm rec}(\mu))
 \approx_c (\mu,\mathsf{View}_{\rm rec}(0^q)),
 \label{eq:labelled-vector-hiding}
\end{equation}
with the same private classical label on both sides. The native witness stays private. Every subsequent allowed efficient operation that does not access that hidden witness is a common continuation.
\end{proof}

\begin{lemma}[Honest native witnesses]
\label{lem:vector-honest-witness}
On every continuing honest-committer branch, including against a malicious receiver, Alice computes a valid native witness from her saved classical table and decommitments, without another device call.
\end{lemma}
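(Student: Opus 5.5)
The witness is just Alice's full-table opening, so the plan is to show that the honest committer's saved state already contains every ingredient of the public predicate $\Ver_{\rm str}$. The native witness $w$ for \Cref{prot:di-bc} consists of the native openings of her committed canonical table fields $(X_a,A_a,C_a)$ on all data coordinates, together with $u$. I would first observe that Alice keeps three kinds of data on every continuing branch: the canonical table (Step~6), the decommitment information for each of her inner commitments, and the public transcript $\tau$. For her own commitments, \Cref{lem:bckm-offline-witnesses} and \Cref{lem:naor-auxiliary-setup} apply with Alice as the honest committer. So each honest BCKM commitment has a classical native opening consisting of the seeds of the unchecked pairs. These openings are consistent with any pair already opened during cut-and-choose and with the earlier openings of test, audit and selector fields. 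No device call is needed, since the device window closed at Step~3.

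Second, I would verify the algebraic clauses of $\Ver_{\rm str}$ from the honest payload construction of Step~10. Recomputing $K_{i,j}^b$ from the stored table and the public matrices gives $p_{i,j}^b=F_{i,j}^b+K_{i,j}^b=Q_{i,j}+be_jD_i$ exactly. Hence $(p^1_{i,j}+p^0_{i,j})/e_j=D_i$ for every $j$, and $\sum_j p^0_{i,j}+M_iD_i=f_i+2M_iD_i=f_i$ in characteristic two. The restrictions to $\mathcal I$ therefore extend to $D\in\CodeD_u$ and $f\in\CodeA^\sigma_{0^q}$. By the uniqueness remark following the predicate (using $|\mathcal E|<d(\CodeD),d(\CodeA)$), the solved extensions are these, so $\secmap(D)=u$ and $\secmap_\sigma(f)=0^q$. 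The canonical-syntax clause holds because Alice formed the table canonically. Agreement with earlier fields holds because every earlier opening of $(X,C)$ and of the audit positions came from the same stored table.

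The main obstacle is the adversarial receiver. A malicious Bob may send an arbitrary $\sigma$ and $M$ (even with a false $\Mode$ proof that is accepted), malformed selectors, or a deviating setup string. The claim therefore has to hold pointwise, not just with high probability. My plan is to note that the algebraic identities above are independent of $M$, $\sigma$ and the selectors. They use only that Alice herself sampled $f,D,Q$ and computed $F$ as prescribed, for whatever public values reached Step~10; any malformed earlier message sends Alice to abort rather than to a continuing branch. The only remaining dependence is on whether native openings of her BCKM commitments exist for every receiver-chosen Naor setup $a$. Opening validity holds for every $a$, because the verification equation $c=G(s)\oplus ba$ is satisfied by construction. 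Binding is the only property that needs an honest setup, and it is not required here. The result is perfect correctness of the honest witness on each continuing branch.
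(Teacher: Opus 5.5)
Your proposal is correct and follows essentially the same route as the paper's proof. You recompute the matrix pads from the stored table to recover $p_{i,j}^b$, read off $D_i$ and $f_i$ on $\mathcal I$ (the $M_iD_i$ terms cancel), note that Alice's own $D\in\CodeD_u$ and $f\in\CodeA^\sigma_{0^q}$ are the unique extensions by the erasure cap and the code distances, and observe that all earlier openings came from the same records and that nothing depends on an honest choice of $\sigma,M$. Your added remark that her Naor/BCKM openings verify by construction for every receiver-chosen setup is a correct elaboration that the paper leaves implicit; that point needs only her own retained seeds, not the simulator lemma \Cref{lem:bckm-offline-witnesses}.
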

\begin{proof}
The matrix pads recomputed from that table recover Alice's own $p_{i,j}^b$. Their differences and sums recover $D_i,f_i$ on $\mathcal I$, and her chosen full $D,f$ solve the extension systems with secrets $u,0^q$. They are unique by the public erasure cap and code distances. All earlier table openings used the same records. This proof requires no receiver decoding success or honest choice of $\sigma,M$ beyond the checked public syntax.
\end{proof}

\subsection{The source theorem and its scope}
\label{blk:sec:scope}

\begin{theorem}[One-window extractable DI string commitment]
\label{thm:di-vector-commitment}
Assume qOWFs and the physical, invocation, and retirement conditions of \Cref{def:admissible-extension,def:compiled-device-calls}. For every public $1\le q\le\poly(\lambda)$ and physical-window budget $\ell\le\poly(\lambda)$, size the packed schedule of \Cref{lem:unified-parameters} also for $m\ge100q$. Then \Cref{prot:di-bc} satisfies \Cref{def:vector-commitment} with strict-QPT algorithms and auxiliary-input-uniform simulators. It uses one device window and
\[
 L_q=\Theta(q+\ell+\log^2\lambda)
\]
coordinates, up to the fixed code-length padding convention. There are fixed positive local noise and leakage fractions under \eqref{eq:honest-device-fault} with $q_{\rm dev}\le\nu/2$ and the stated physical-window leakage model. Neither constant depends on $q,m,\lambda$.
\end{theorem}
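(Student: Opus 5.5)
The proof will be an assembly of the results already established in \Cref{sec:di-bc}. We check the four items of \Cref{def:vector-commitment} in turn, then the resource and robustness claims. The parameters come from \Cref{lem:unified-parameters}, applied with $m=\mathsf{Next}(\max\{100q,100\ell,s_\lambda\})$. This makes all statistical terms negligible: $\varepsilon_a$, $\varepsilon_m$, the Fourier term in $\varepsilon_f$, the test bound, and the sampling errors. We instantiate the inner commitments and proofs by \Cref{lem:bckm-offline-witnesses,lem:npbc-pqzk}, so every primitive error is negligible under qOWFs, uniformly in the auxiliary state.

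\emph{Item (i), extraction.} Use the fixed simulator of \Cref{prot:di-bc-sim}. \Cref{blk:thm:state} gives indistinguishability of the joint corrupt state, transcript, retained verification records and reference once $u^*$ is omitted. The error is $\nu_{\rm eq,B}+2\nu_{\rm ZK}+2\varepsilon_m+\varepsilon_{\rm samp}$, and $\varepsilon_m$ is negligible because \Cref{blk:lem:label-entropy} leaves slack $\xi m-2\ell-3t_A-1=\Omega(m)$ in \Cref{lem:diagonal-mask}.

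\emph{Item (ii), agreement.} \Cref{blk:thm:agreement} gives agreement with every native opening for that same simulator, with $\varepsilon_{\rm ao}=\varepsilon_a+\varepsilon_{\rm bind}+\varepsilon_{\rm eq\text{-}fail}$. I will stress that this bound is proved inside the extraction experiment itself, not transported through a hybrid.

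\emph{Item (iii), hiding.} \Cref{blk:thm:hiding} gives labelled joint hiding. The exponent there is $\gamma L_q-2\ell-3km-3k^2m-O(1)$, which is positive and linear in $m$ by Step~1 of \Cref{lem:unified-parameters}.

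\emph{Item (iv), honest witnesses.} \Cref{lem:vector-honest-witness} provides the honest native witnesses.

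\emph{Correctness.} Honest completeness follows as in \Cref{claim:blockwise-leakage-correctness}. Honest faults with $q_{\rm dev}\le\nu/2$ keep the test and audit below threshold. The erasure count stays below $m/100$ except with probability $e^{-\Omega(m)}$. The honest $\Mode$ statement is true, so the ZK proof is perfectly complete. The honest opening then satisfies $\Ver_{\rm str}$ by \Cref{blk:lem:algebra}.

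\emph{Resources and uniformity.} The coordinate count $L_q=kN_0m+t_A=\Theta(q+\ell+\log^2\lambda)$ follows from the bound on $\mathsf{Next}$. Only Step~3 of \Cref{prot:di-bc} touches the devices, so the protocol uses one window. The constants $\nu$ and $c_{\rm local}$ depend only on $k,N_0,\gamma,\eta_{\rm MS}$, and not on $q,m,\lambda$. Strict QPT follows from the fixed clocks of \Cref{lem:fixed-clock-rewinding} together with the polynomial decoder of \Cref{lem:ag-decoding}. Auxiliary-input uniformity is inherited from each component, since every simulator is fixed from the adversary alone.

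\emph{Main obstacle.} I expect the delicate step to be checking that the hypotheses of the cited lemmas hold simultaneously in the same fixed experiments. In particular, the equivocal-Bob experiment used for extraction must also satisfy the causal and independence conditions of \Cref{lem:post-J-selector-entropy}: the audit coins must be sampled after the table is fixed, and no target-dependent size flag may be released before then. In addition, the sequential completed phases must compose without a rewinding segment crossing the device window, as \Cref{def:compiled-device-calls}(iii) requires. I would handle this by listing, for each hybrid, the order of cuts from \Cref{fig:cuts}, and verifying that every simulation segment completes before the next outer operation.
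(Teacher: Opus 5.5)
Your proposal matches the paper's proof: items (i)--(iv) are discharged by exactly the same four results (\Cref{blk:thm:state,blk:thm:agreement,blk:thm:hiding,lem:vector-honest-witness}), and the correctness, resource and fixed-constant claims come, as in the paper, from the erasure estimate, the noise promise with hidden-test and global-audit sampling, and the margins of \Cref{lem:unified-parameters}. One small citation fix: validity of the honest opening under $\Ver_{\rm str}$ is \Cref{lem:vector-honest-witness} (recomputing the pads recovers $D,f$ with secrets $u,0^q$), not \Cref{blk:lem:algebra}, which gives the selected-sum identity used by the extractor.
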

\begin{proof}
Items (i)--(iv) of the string-commitment interface are respectively \Cref{blk:thm:state,blk:thm:agreement,blk:thm:hiding,lem:vector-honest-witness}. All comparisons include ordinary rejection and withholding. In the honest execution the native witness is valid whenever the commitment phase continues. The symmetric size-failure probability per symbol is at most $24e^{-N_0/18}<1/400$; independent trusted questions give $\Pr[|\mathcal E|>m/100]=e^{-\Omega(m)}$. The derived bound \eqref{eq:honest-noise-promise}, hidden-test sampling, and one global audit give negligible remaining failure. Individual microblocks may contain errors; the global decoder corrects the resulting symbol errors.

The label-masking, Fourier, and decoder margins follow from \Cref{lem:unified-parameters}. Their statistical errors are $2^{-\Omega(m)}$ or $e^{-\Omega(m)}$, and the polynomial sums of inner primitive errors are negligible. Taking the first admissible length at least a fixed multiple of $q+\ell+\log^2\lambda$ gives the resource bound.

The source never invokes CP. Its completed primitive phases and its one physical batch obey the same phase barriers. At receipt the receiver retains only the public verification record; other pending honest states remain untouched. The committer's private witness is encapsulated until an authorized native opening or a masked-CP predicate and is erased at terminal return. Thus the direct stateful mask-bank proof applies without a bootstrap cycle. Completed outer calls subsequently use \Cref{thm:sequential-composition}; no device reuse or concurrent invocation is inferred.
\end{proof}

\section{Device-independent bit commitment and coin tossing}
\label{sec:di-bc-coin}

The string commitment of \Cref{thm:di-vector-commitment} provides extraction and joint hiding through commitment completion. Here masked CP adds efficient simulation of a bit-commitment opening against a corrupt receiver, and commit-and-reveal gives coin tossing with abort. Under qOWFs and the model of \Cref{sec:model-assumptions}, both protocols use one device window and $O(\ell+\log^2\lambda)$ coordinates, with polynomial classical resources and the source's constant honest-fault tolerance and physical leakage rate.

\subsection{Bit commitment with simulation for both parties}
\label{sec:bc-both-sides}

A full certified projection of a one-bit CP record is a bit-commitment opening. The following protocol realizes the interface of \Cref{def:bc-two-sided}, including simulation of the opening against a corrupt receiver.

\begin{protocolbox}
\protocolstyle
\noindent Alice commits to $b\in\{0,1\}$; Bob receives. Use \Cref{prot:masked-cp} in Alice's direction only, with capacity $q=1$ and source \Cref{prot:di-bc}. Rejection is terminal abort.
\begin{protocolsteps}
\item \emph{Alice (preprocessing).} Complete the mask-bank preprocessing with Bob. Its mask is $\mu\in\{0,1\}$ and its transcript is $\tau$.
\item \emph{Alice (commitment).} Create the one-bit record $b$ by sending $c=b\oplus\mu$ with its prescribed identifier.
\item \emph{Bob (receipt).} Verify the masked-record syntax, save $c$, and send the receipt.
\item \emph{Alice (opening).} Publish the full projection value $b$ and prove $\exists u,w:\Ver_{\rm str}(\tau,u,w)=1\land c\oplus u=b$ by the predicate phase of masked CP.
\item \emph{Bob (verification).} Verify that projection and send accept or reject. On acceptance output $b$. Close the session and discard all call-local private state; Alice does the same on the result.
\end{protocolsteps}
\noindent\emph{Where each step is used.} Step~1: \Cref{thm:di-vector-commitment}; Steps~2--5: \Cref{thm:masked-cp,thm:di-bc-two-sided}.
\end{protocolbox}
\par\nobreak\noindent
\begin{minipage}{\linewidth}
\captionsetup{hypcap=false}
\captionof{algorithmica}{One-window DI bit commitment with a certified opening. Only preprocessing uses devices.}
\label{prot:di-bc-certified}
\end{minipage}

\begin{theorem}[One-window simulation-secure DI bit commitment]
\label{thm:di-bc-two-sided}
Under the assumptions of \Cref{thm:di-vector-commitment}, \Cref{prot:di-bc-certified} satisfies \Cref{def:bc-two-sided} with auxiliary-input-uniform QPT simulators and negligible error. It uses one device window and $O(\ell+\log^2\lambda)$ coordinates, with constant honest-fault tolerance and constant physical leakage rate, as for the source commitment at $q=1$.
\end{theorem}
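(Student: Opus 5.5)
My plan is to reduce \Cref{thm:di-bc-two-sided} to \Cref{thm:masked-cp} and \Cref{thm:di-vector-commitment} by treating \Cref{prot:di-bc-certified} as an instance of \Cref{prot:masked-cp}. The instance has capacity $q=1$, one record, and one full-projection predicate. First I check that, with the source commitment of \Cref{prot:di-bc} at $q=1$, every item of \Cref{def:vector-commitment} holds, which is \Cref{thm:di-vector-commitment}. \Cref{thm:masked-cp} then gives an auxiliary-input-uniform QPT simulation of ideal CP for either static corruption, retaining aborts and the complete continuing state. Next I define the ideal BC functionality as the restriction of ideal CP to a single one-bit record followed by a single certified projection $\pi_{\{1\}}$, after which the session closes. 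The two BC simulators are then obtained by composing the CP simulators of \Cref{prot:masked-cp-sim} with a trivial BC-to-CP translation layer.

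\emph{Corrupt committer.} The simulator runs source extraction at preprocessing to obtain $u^*$, sets $\bar u^*=u^*$ (or $0$ on $\perp$), and at the receipt outputs $b^*=c\oplus\bar u^*$. State preservation follows from item~(i) of \Cref{def:vector-commitment} and Step~1 of the proof of \Cref{thm:masked-cp}. The binding clause, that any later accepted opening equals $b^*$, follows from the bound \eqref{eq:masked-cp-soundness} with $T=1$. Concretely, outside the agreement event \eqref{eq:vector-all-openings}, an accepted opening to $1-b^*$ is a false NP statement, so it is accepted only with probability $\varepsilon_{\rm snd}$.

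\emph{Corrupt receiver.} The simulator commits to $0$ through the honest device interface and sends a uniform $c$. Given $b$ at opening, it publishes $b$ and runs the fixed ZK-simulation circuit on the instance $\exists u,w:\Ver_{\rm str}(\tau,u,w)=1\land c\oplus u=b$. The hybrid sequence is Steps~3--4 of the proof of \Cref{thm:masked-cp}. First replace the real proof by the simulated one while the statement is still true. Then switch the bank from $\mu$ to $0$ using labelled joint hiding \eqref{eq:labelled-vector-hiding}, keeping $\mu$ as a private label. Finally, use \eqref{eq:unused-mask-invariant} to make $c$ uniform and independent of $b$ before the opening. The delayed opening is handled because the proof simulation takes place only after $b$ is released. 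Sequential completed phases, rejection and withholding are inherited from the CP delivery rules and \Cref{def:compiled-device-calls}.

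I expect the main obstacle to be the corrupt-receiver ordering. The hiding hybrid must be applied after the honest witness has been removed, and the opened value $c\oplus\mu=b$ exposes the consumed mask. I will therefore check that the hybrid switching $\mu$ to $0$ retains $\mu$ as a common label, so the real-world opening remains consistent with it. After that switch, the simulated opening depends only on $b$ and on the uniform $c$. Resource claims come directly from \Cref{lem:unified-parameters} at $q=1$: $L_1=\Theta(\ell+\log^2\lambda)$, one window, and the same constant noise and leakage fractions. Since only preprocessing touches devices, honest correctness follows from honest correctness of the source and perfect ZK completeness.
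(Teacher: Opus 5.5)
Your proposal is correct and takes essentially the paper's route. The paper proves the theorem by restricting \Cref{thm:masked-cp} to one owner, one one-bit record, and one full certified projection, with the simulators you describe: for a corrupt committer, extraction of $b^*=c\oplus\bar u^*$ together with the predicate-soundness bound; for a corrupt receiver, a dummy bank, uniform $c$, and ZK simulation at opening. Your extra detail on the hybrid order (removing the honest witness before labelled joint hiding, with $\mu$ kept as a label) and on the resource count at $q=1$ unpacks Steps~3--4 of that proof and \Cref{thm:di-vector-commitment}.
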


\begin{protocolbox}
\protocolstyle
\noindent Use the single bank transcript $\tau$ throughout.
\par\smallskip\noindent\emph{Corrupt committer; SIM plays Bob.}
\begin{protocolsteps}
\item \emph{SIM (preprocessing).} Run \Cref{prot:di-bc-sim} and retain $u^*$, using $0$ if extraction returns $\perp$.
\item \emph{SIM (commitment).} On a valid $c$, extract $b^*=c\oplus u^*$ and send the ordinary receipt.
\item \emph{SIM (opening).} Run honest proof verification. Accept the ideal opening only when the proof accepts and its public bit is $b^*$; reproduce every other abort or withholding decision.
\end{protocolsteps}
\noindent\emph{Corrupt receiver; SIM plays Alice.}
\begin{protocolsteps}
\item \emph{SIM (preprocessing).} Commit honestly to $0$ with the single prescribed source-device batch.
\item \emph{SIM (commitment).} Send an independent uniform $c$ and complete the receipt phase, without receiving $b$.
\item \emph{SIM (opening).} On receiving $b$, publish it and run the fixed ZK simulator for the public projection statement. Reproduce the receiver's continuation, verification result, and abort decision.
\end{protocolsteps}
\noindent\emph{Where each step is used.} Each list specializes the corresponding three phases of \Cref{prot:masked-cp-sim}; their joint state guarantees are \Cref{thm:masked-cp}.
\end{protocolbox}
\par\nobreak\noindent
\begin{minipage}{\linewidth}
\captionsetup{hypcap=false}
\captionof{algorithmica}{The two bit-commitment simulators. The receiver simulation obtains the bit only at opening.}
\label{prot:di-bc-certified-sim}
\end{minipage}

\begin{proof}[Proof of \Cref{thm:di-bc-two-sided}]
Restrict \Cref{thm:masked-cp} to one owner, one record bit, and one full projection. Its corrupt-owner simulator extracts $c\oplus u^*$ at the receipt, and its predicate-soundness bound controls every accepted opening. Its corrupt-verifier simulator needs only the receipt length before opening and the public bit at opening. These are the simulators in \Cref{prot:di-bc-certified-sim}. The same theorem preserves the joint continuing state and all abort branches. The single source uses one window; all remaining phases are classical.
\end{proof}

\subsection{Coin tossing with abort}
\label{sec:coin-toss}

The ideal functionality $\ideal(\mathsf{CT})$ samples a uniform bit $c$. With two honest parties it outputs $c$ to both. With one corrupt party, it first releases $c$ to the simulator, which chooses delivery or abort; the honest party receives $c$ or $\perp$. The output need not remain uniform after conditioning on delivery.

\begin{protocolbox}
\protocolstyle
\noindent Alice and Bob use \Cref{prot:di-bc-certified}; rejection or missing messages cause terminal abort.
\begin{protocolsteps}
\item \emph{Alice.} Sample uniform $a\in\{0,1\}$ and complete its commitment phase with Bob, including the one-window preprocessing.
\item \emph{Bob.} Sample uniform $b'\in\{0,1\}$ and send it.
\item \emph{Alice.} Open $a$ through the certified opening phase.
\item \emph{Bob.} Verify the opening and send its result. On acceptance output $a\oplus b'$; otherwise output $\perp$.
\item \emph{Alice.} On acceptance output $a\oplus b'$; otherwise output $\perp$.
\end{protocolsteps}
\noindent\emph{Where each step is used.} Steps~1 and~3--4: \Cref{thm:di-bc-two-sided}; Steps~2--5: \Cref{cor:di-coin}.
\end{protocolbox}
\par\nobreak\noindent
\begin{minipage}{\linewidth}
\captionsetup{hypcap=false}
\captionof{algorithmica}{DI coin tossing: Blum's commit-and-reveal protocol~\cite{Blum83} with a one-window certified commitment.}
\label{prot:di-coin}
\end{minipage}

\begin{corollary}[One-window DI coin tossing with abort]
\label{cor:di-coin}
Under the assumptions of \Cref{thm:di-bc-two-sided}, \Cref{prot:di-coin} realizes $\ideal(\mathsf{CT})$ in the sense of \Cref{def:comp-aug-sec}, with auxiliary-input-uniform QPT simulators for either corruption and negligible error. It uses one device window and $O(\ell+\log^2\lambda)$ coordinates, with polynomial classical computation and communication, and inherits the source's constant honest-fault tolerance and constant physical leakage rate at $q=1$.
\end{corollary}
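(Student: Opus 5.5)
We treat coin tossing as Blum's commit-and-reveal protocol run in the hybrid where \Cref{prot:di-bc-certified} is replaced by its simulators. We do not reprove any device property. By \Cref{thm:di-bc-two-sided}, each corrupt party has a QPT simulator for the commitment. That simulator preserves the joint continuing state, the reference, and all abort branches. Applying \Cref{lem:efficient-composition-hybrids} to the efficient suffix (Bob's message, the opening, and the outputs), the real execution is computationally indistinguishable from an execution where the commitment is ideal. Here "ideal" means the simulator extracts the bit at receipt, or receives the bit only at opening. The resource claims come from the commitment: one device window, $O(\ell+\log^2\lambda)$ coordinates at $q=1$, and the source's noise and leakage constants. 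Everything after preprocessing is classical and polynomial.

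\emph{Corrupt Alice.} SIM runs the committer simulator of \Cref{prot:di-bc-certified-sim} and obtains $a^*$ at receipt, using $a^*=0$ if extraction failed. SIM then queries $\ideal(\mathsf{CT})$, receives $c$, and sends $b'=c\oplus a^*$. This $b'$ is uniform and independent of Alice's view, exactly as an honest Bob's would be. SIM then verifies Alice's opening honestly. If the proof accepts, the opened bit equals $a^*$ except with negligible probability, by the opening guarantee of \Cref{def:bc-two-sided}(i). In that case SIM instructs delivery, and honest Bob's output is $a^*\oplus b'=c$. Otherwise SIM aborts. Honest Bob's output in this hybrid is therefore the ideal output, with error bounded by the extraction and agreement errors.

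\emph{Corrupt Bob.} SIM receives $c$ first and runs the receiver simulator, which does not need the committed bit during the commitment phase. When Bob sends $b'$, SIM sets $a=c\oplus b'$. This $a$ is uniform and independent of $b'$, so it has the same joint law as an honest $a$. SIM hands $a$ to the opening simulator, which produces Bob's continuing view. If Bob accepts, SIM delivers; if he rejects or withholds, SIM aborts. Bob may abort after learning $c$, and the ideal functionality's abort-with-delivery rule permits exactly this.

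The main obstacle is the corrupt-Bob case. There the simulator chooses the committed value only after seeing $b'$, so the commitment phase must be simulated without $a$ and the opening must be equivocated later. \Cref{def:bc-two-sided}(ii) grants exactly this: the simulator is given the bit only at opening, jointly with the reference. The remaining task is to check that Bob's adaptive choice of $b'$ depends only on the simulated commitment view. Since $a$ is defined from $b'$, its distribution is correct whatever Bob does. Summing the negligible errors over the constant number of hybrids gives the claim.
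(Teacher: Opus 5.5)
Your proposal is correct and follows the paper's proof essentially step for step. In the corrupt-Alice case you extract $a^*$, send $b'=c\oplus a^*$, and deliver on acceptance; in the corrupt-Bob case you first swap in the input-independent commitment simulation, so that $a$ is first used at opening and $a=c\oplus b'$ has the correct joint law with Bob's state. The only item you leave implicit is honest--honest correctness, which the paper obtains in one line from commitment correctness and the uniform XOR of the two honest coins.
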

\begin{proof}
Correctness follows from commitment correctness and the uniform XOR of the honest coins.

\emph{Corrupt Alice.} Given the ideal coin $c$, the simulator runs the committer simulation of \Cref{prot:di-bc-certified-sim} and obtains $a^*$. It sends $b'=c\oplus a^*$, verifies the opening, and permits ideal delivery exactly on acceptance. The fresh $c$ makes $b'$ uniform independently of the preceding state. Except on the commitment agreement error, acceptance fixes $a=a^*$ and the honest output is $c$; the source simulator preserves the continuing state.

\emph{Corrupt Bob.} Given $c$, the simulator runs the input-independent commitment simulation. After receiving $b'$, it opens to $c\oplus b'$ and reproduces the delivery or abort decision. To compare with the real execution, first replace the honest commitment to its uniform $a$ by that simulation followed by simulated opening to $a$. \Cref{def:bc-two-sided}(ii) preserves the joint state, including any choice of $b'$ based on the commitment transcript. In this hybrid, $a$ is first used at opening. Thus $c:=a\oplus b'$ is uniform and independent of the earlier state, and $a=c\oplus b'$, exactly as in the simulator. The final acceptance message only chooses delivery or abort. All later operations are classical.
\end{proof}

\section{The three-window construction and secure computation}
\label{sec:complete-construction}

The source string commitment uses one device window, and masked CP needs one such commitment in each direction. Both preprocessings finish before the outer OT begins. This section assembles the schedule, accounts for the shared physical budget, and derives secure computation by sequential composition.

The schedule satisfies \Cref{def:compiled-device-calls}. Sizing each batch for $\ell$ suffices for a global budget $\ell$; it does not renew that budget. A device window counts one parallel input--output interval, not its circuit depth or adaptive leakage messages. Classical rounds between windows are counted separately.

With \Cref{sec:unified-parameters}, set
\begin{equation}
 \begin{gathered}
 m_0=\mathsf{Next}(\max\{100,100\ell,s_\lambda\}),\qquad
 L_0=kN_0m_0+\lceil m_0/200\rceil,\\
 q_{\rm bank}=10L_0,\qquad
 m_{\rm pre}=\mathsf{Next}(100q_{\rm bank}),\qquad
 L_{\rm pre}=kN_0m_{\rm pre}+\lceil m_{\rm pre}/200\rceil.
 \end{gathered}
 \label{eq:three-window-sizes}
\end{equation}
The capacity bound \eqref{eq:cp-bank-capacity} includes all late labels, symbol-erasure flags, and answer records. Each record is masked once, including when it participates in several joint predicates.

\begin{protocolbox}
\protocolstyle
\noindent Alice holds $(S_0,S_1)\in\{0,1\}^2$ and Bob holds $D\in\{0,1\}$. Public parameters are \eqref{eq:three-window-sizes}. Use three distinct batches satisfying \Cref{def:compiled-device-calls}; every rejection causes global abort.
\begin{protocolsteps}
\item \emph{Alice (first preprocessing).} Sample a uniform length-$q_{\rm bank}$ mask string and complete Step~1 of \Cref{prot:masked-cp} with Bob, using \Cref{prot:di-bc} on the first batch. Save the private bank state and the shared public transcript.
\item \emph{Bob (second preprocessing).} Complete the same operation with the roles reversed and fresh uniform masks, using the second batch. Save the bank for Bob-owned records.
\item \emph{Alice and Bob (transfer).} Run \Cref{prot:di-ot} on the third batch. Implement every CP record, projection, and joint predicate by Steps~2--5 of \Cref{prot:masked-cp}, using the corresponding owner's bank. On delivery or abort, discard all private call-local state as in \Cref{def:compiled-device-calls}.
\end{protocolsteps}
\noindent\emph{Where each step is used.} Steps~1--2: \Cref{thm:di-vector-commitment,thm:masked-cp}; Step~3: \Cref{cor:bc-ot-compiler,thm:unified-di-ot,thm:qowf-di-ot}.
\end{protocolbox}
\par\nobreak\noindent
\begin{minipage}{\linewidth}
\captionsetup{hypcap=false}
\captionof{algorithmica}{Complete DI OT in three device windows, including preprocessing.}
\label{prot:three-window-ot}
\end{minipage}

\begin{theorem}[Robust DI OT in three device windows]
\label{thm:qowf-di-ot}
Assume qOWFs and \Cref{def:compiled-device-calls}, with \eqref{eq:three-window-sizes} for a polynomially bounded $\ell$. Then \Cref{prot:three-window-ot} realizes one-bit OT with
\[
 \begin{gathered}
 \forall\text{ static }\mathsf P^*\in\QPT\quad
 \exists\Sim\in\QPT\quad\forall\eta^{\Rref}_{\rm adm}:\\
 \mathsf{Real}_{\mathcal P,\mathsf P^*}(\eta^{\Rref}_{\rm adm})
 \approx_c
 \mathsf{Ideal}_{\mathsf{OT},\Sim}(\eta^{\Rref}_{\rm adm}).
 \end{gathered}
\]
Simulation is auxiliary-input-uniform for either corruption and admits efficient sequential composition. Arbitrary joint device measurements are allowed. The device bounds are
\begin{equation}
 R_{\rm DI}\le3,\qquad
 L_{\rm total}=2L_{\rm pre}+L_0
 =\Theta(\ell+\log^2\lambda).
 \label{eq:three-window-resources}
\end{equation}
For the fixed constants in \eqref{eq:constant-global-rates}, a total physical-window leakage budget at most $c_{\rm global}L_{\rm total}$ is admissible regardless of its distribution among the three windows. Honest correctness has negligible error under \eqref{eq:honest-device-fault} with $q_{\rm dev}\le\nu_0:=\nu_{\rm global}/2>0$. All prescribed interparty messages are classical, and total classical computation and communication are polynomial in $\lambda$.
\end{theorem}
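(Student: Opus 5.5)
This theorem is assembled from results already in hand rather than proved directly. The plan is to compose \Cref{thm:di-vector-commitment}, \Cref{thm:masked-cp}, \Cref{cor:bc-ot-compiler} and \Cref{thm:unified-di-ot} along the three steps of \Cref{prot:three-window-ot}, and to get the resource and budget claims from \Cref{lem:unified-parameters}.

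First, instantiate the source commitment \Cref{prot:di-bc} twice with capacity $q_{\rm bank}=10L_0$ and length parameter $m_{\rm pre}=\mathsf{Next}(100q_{\rm bank})$. \Cref{thm:di-vector-commitment} then provides items (i)--(iv) of \Cref{def:vector-commitment} for each bank, each in one window. Second, apply \Cref{thm:masked-cp} jointly to both owner directions. The capacity bound \eqref{eq:cp-bank-capacity} ensures that every record of \Cref{prot:di-ot} fits within its owner's bank. The result replaces the real CP layer by $\ideal(\mathsf{CP})$ while preserving the complete continuing state.

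Third, \Cref{thm:unified-di-ot} gives OT in the ideal-CP hybrid on the third batch. \Cref{cor:bc-ot-compiler} already packages the first three steps. The simulators are those of \Cref{prot:di-ot-sim-a,prot:di-ot-sim-b}, composed with the masked-CP simulators of \Cref{prot:masked-cp-sim}. Errors are added by \Cref{lem:efficient-composition-hybrids}, and each one is negligible. Sequential composability then follows from \Cref{thm:sequential-composition} at the retirement boundaries of \Cref{def:compiled-device-calls}.

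For resources, $R_{\rm DI}\le3$ holds because only the two preprocessing sources and the outer batch invoke devices; all CP phases are classical. The identity $L_{\rm total}=2L_{\rm pre}+L_0\le C_{\rm size}L_0=\Theta(\ell+\log^2\lambda)$ and the lower bound $L_{\rm total}\ge L_0=\Omega(\ell+s_\lambda)$ both come from Step 3 of \Cref{lem:unified-parameters} and the bound on $\mathsf{Next}$. The global leakage claim is handled as follows. Any split $\sum_j\ell_j\le c_{\rm global}L_{\rm total}$ gives $\ell_j\le c_{\rm local}L_j\le m_j/100$, because every batch satisfies $L_j\ge L_{\rm total}/C_{\rm size}$. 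So each invocation sees a leakage budget within the range permitted by its own theorem. Robustness works the same way: with $q_{\rm dev}\le\nu_{\rm global}/2\le\nu/2$, each batch's hypothesis \eqref{eq:honest-device-fault} holds.

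The main obstacle is that the leakage budget is shared across all three windows rather than renewed per window. The leakage in an early window is adaptive, so it may enter the adversary's side information used in later invocations. Verifying that the hybrids remain valid under this sharing needs care. My approach is to treat the whole global leakage transcript as part of the admissible auxiliary state at each reached boundary, which \Cref{def:ot-call-boundary} permits. Each invocation's theorem is uniform over admissible extensions, so it then applies with only its own window's charge $2\ell_j$. I must also check that every replacement hybrid still satisfies the fresh-coin and isolation conditions when the batches are swapped in order: first the honest bank's proofs and its dummy commitment, then extraction, then the OT simulation. For that I would follow Step 5 of the proof of \Cref{thm:masked-cp}, with the outer batch treated as an intervening declared device invocation.
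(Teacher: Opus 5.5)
Your proposal is correct and follows essentially the same route as the paper's proof. The paper likewise substitutes both masked-CP banks jointly via \Cref{thm:masked-cp}, with each source supplying \Cref{def:vector-commitment} by \Cref{thm:di-vector-commitment}; it then applies \Cref{thm:unified-di-ot} in the resulting ideal-CP execution with additive errors, and reads the window count, the sizes, and the shared leakage budget off \Cref{lem:unified-parameters}.

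The main difference is robustness. You check it directly, using $\nu_{\rm global}/2\le\nu/2$ for each batch, whereas the paper argues through the total-fault count; both are valid. When bounding honest failure, also include the two source failures and the ZK proof-completeness errors, as the paper does.
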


The constants are fixed before $\lambda$ or $\ell$ is chosen. For zero leakage the same protocol has constant noise tolerance and $\Theta(\log^2\lambda)$ device use, with no coordinate-locality restriction. The three-window bound concerns one complete OT; $K$ sequential calls use at most $3K$ windows. A global leakage rate across all $K$ calls does not alone imply the separate budget conditions at each reached call.

\subsection{Proof and resource bounds}
\label{sec:three-window-proof}

\begin{proof}[Proof of \Cref{thm:qowf-di-ot}]
\emph{Step 1: substitute both banks jointly.} \Cref{thm:di-vector-commitment} supplies state-preserving extraction, agreement with all native openings, labelled joint hiding, and honest native witnesses. Each source invocation satisfies those hypotheses with the other bank and future outer batch retained as protected state. Apply \Cref{thm:masked-cp} to both owner directions and all pending records together. Its simulation includes the intervening outer device invocation between completed CP phases. It introduces no extra device operation and preserves public requests on rejection. The resulting execution has ideal CP records and the prescribed outer device interface. The joint $\Label$ check uses the immutable offset and selector records of Bob's one bank.

\emph{Step 2: apply the outer OT theorem.} Apply \Cref{thm:unified-di-ot} in that ideal-CP execution. Its causal and entropy proofs concern the same frozen records, questions, anchors, and leakage. For any fixed final QPT test, the error is at most
\begin{equation}
 \nu_{\rm outer}+\nu_{\rm ext}+\nu_{\rm hide}
 +T_{\rm CP}\nu_{\rm ZK}+\varepsilon_{\rm ao}
 +T_{\rm CP}\varepsilon_{\rm snd}=\negl(\lambda),
 \label{eq:three-window-error}
\end{equation}
where $T_{\rm CP}\le\poly(\lambda)$ bounds the predicate phases. The source bounds include their internal simulation and entropy errors and are uniform over reached auxiliary states and induced tests. Computational errors are added at the efficient endpoints. At terminal return the honest banks' private state is discarded, leaving exactly the interface of \Cref{def:comp-aug-sec}. \Cref{thm:sequential-composition} then applies at admissible boundaries.

\emph{Step 3: correctness and tolerance.} Honest failure is bounded by the sum of the two source failures, the outer failure, and proof-completeness errors. \Cref{lem:unified-parameters} makes this sum negligible. A total of at most $\nu_{\rm global}L_{\rm total}$ physical faults gives at most $\nu L_i$ faults in each batch $i$, by its size comparison. Independent faults of probability at most $\nu_{\rm global}/2$ exceed this total only with probability $\exp[-\Omega(L_{\rm total})]$. The global leakage calculation in the same lemma charges the sum of the three window costs once, allowing arbitrary concentration in any batch.

\emph{Step 4: windows and coordinates.} The only source calls are the two preprocessings. Later CP operations use masks and classical-message ZK. Each source and the outer OT use one window. Since $N_0$ is fixed and admissible code lengths increase by factors below nine,
\[
 m_0,L_0=\Theta(\ell+s_\lambda),\qquad
 q_{\rm bank},m_{\rm pre},L_{\rm pre}=\Theta(L_0).
\]
This proves \eqref{eq:three-window-resources}. The field-code algorithms, matrix seeds, and witnesses have polynomial size. All classical primitives retain security parameter $\lambda$, and their fixed polynomial clocks bound the total resources.
\end{proof}

\begin{corollary}[A concrete polylogarithmic regime]
\label{cor:concrete-polylog-leakage}
For fixed $c\ge1$ and total leakage budget at most $\log^c\lambda$, put $a=\max\{c,2\}$. Then \Cref{prot:three-window-ot} has
\[
 R_{\rm DI}\le3,\qquad L_{\rm total}=\Theta(\log^a\lambda),
 \qquad q_{\rm dev}\le\nu_0,
\]
with negligible security and correctness errors, where $\nu_0>0$ is fixed. Classical time and communication are polynomial in $\lambda$.
\end{corollary}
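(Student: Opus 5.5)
This corollary is a parameter specialization of \Cref{thm:qowf-di-ot}, so the plan is to instantiate the schedule \eqref{eq:three-window-sizes} with $\ell=\lfloor\log^c\lambda\rfloor$ and read off each claimed bound. The theorem already gives the security quantifier, $R_{\rm DI}\le3$, the existence of a fixed $\nu_0=\nu_{\rm global}/2>0$ under which honest correctness holds, and polynomial classical resources, all for any polynomially bounded $\ell$. Since $\log^c\lambda$ is polynomially bounded, these carry over verbatim. Two things then remain. First, compute $L_{\rm total}$ in this regime. Second, check that the prescribed leakage budget is admissible, that is, at most $c_{\rm global}L_{\rm total}$.

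\emph{Coordinate count.} By \eqref{eq:three-window-resources}, $L_{\rm total}=\Theta(\ell+\log^2\lambda)$ with $s_\lambda=\lceil\log_2^2(\lambda+2)\rceil$. I will make this explicit through \eqref{eq:three-window-sizes}. With $\ell\le\log^c\lambda$, the argument of $\mathsf{Next}$ is $\max\{100,100\ell,s_\lambda\}=\Theta(\log^c\lambda+\log^2\lambda)$. Since $\max\{443,a\}\le\mathsf{Next}(a)<9\max\{443,a\}$, this gives $m_0=\Theta(\log^{a}\lambda)$ with $a=\max\{c,2\}$. Then $L_0=kN_0m_0+\lceil m_0/200\rceil=\Theta(m_0)$ because $N_0$ is fixed. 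Likewise $q_{\rm bank}=10L_0$, $m_{\rm pre}=\mathsf{Next}(100q_{\rm bank})$ and $L_{\rm pre}$ are all $\Theta(L_0)$ by the same factor-nine bound. Hence $L_{\rm total}=2L_{\rm pre}+L_0=\Theta(\log^a\lambda)$.

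\emph{Leakage admissibility.} This is the only point needing care, because it is a condition and not just bookkeeping. The schedule sets $m_0\ge\mathsf{Next}(100\ell)\ge100\ell$, so $\ell\le m_0/100$. From \eqref{eq:constant-global-rates} we have $c_{\rm global}=1/(100(K+1)C_{\rm size})$. \Cref{lem:unified-parameters} gives $L_{\rm total}\le C_{\rm size}L_0$ and $L_0\le(K+1)m_0$, so $c_{\rm global}L_{\rm total}$ is at least about $m_0/100\cdot (L_{\rm total}/(C_{\rm size}L_0))$.

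This bound is not automatically $\ge\ell$, since $L_{\rm total}/L_0$ may be much smaller than $C_{\rm size}$. I would therefore not rely on the global-rate clause. Instead I would invoke the per-invocation argument of \Cref{lem:unified-parameters} Step~3 directly: each batch satisfies $m_i\ge m_0\ge100\ell$, so $\ell_i\le\ell\le m_i/100$, which is the local budget each invocation was sized for. This holds however the total $\ell$ is distributed among the three windows. If a matching $\ell\le c_{\rm global}L_{\rm total}$ is wanted literally, the fallback is to enlarge $m_0$ by the fixed factor $C_{\rm size}(K+1)$. That changes $L_{\rm total}$ only by a constant and keeps it $\Theta(\log^a\lambda)$.

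\emph{Negligible errors.} All statistical errors in the proof of \Cref{thm:qowf-di-ot} are $2^{-\Omega(m)}$ or $e^{-\Omega(m)}$ with $m\ge s_\lambda=\Omega(\log^2\lambda)$, hence $\lambda^{-\Omega(\log\lambda)}=\negl(\lambda)$. The computational errors are negligible in $\lambda$ as before. This completes the specialization.
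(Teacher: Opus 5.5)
Your proposal is correct and follows the paper's proof, which is simply the substitution $\ell=\log^c\lambda$ into \Cref{thm:qowf-di-ot}. Your explicit computation of $m_0,L_0,m_{\rm pre},L_{\rm total}$ and of the $\lambda^{-\Omega(\log\lambda)}$ error tails only unpacks that substitution.

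You are right that the $c_{\rm global}L_{\rm total}$ clause need not cover $\ell$ literally, but no fallback is needed. The theorem already assumes \Cref{def:compiled-device-calls} with global budget equal to the sizing parameter $\ell$. Your per-batch check $m_i\ge m_0\ge100\ell$ is exactly the paper's remark that sizing each batch for $\ell$ suffices for a global budget $\ell$.
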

\begin{proof}
Substitute $\ell=\log^c\lambda$ into \Cref{thm:qowf-di-ot}.
\end{proof}

\subsection{Sequential composition of completed calls}
\label{app:sequential-composition}

We compose completed calls by sequential replacement with efficient simulation; see also Unruh~\cite{Unruh_2}. The theorem retains the physical conditions of \Cref{def:admissible-extension,def:ot-call-boundary}. Pending records within one call use the stateful guarantee of \Cref{thm:masked-cp}.

\begin{theorem}[Efficient sequential composition]
\label{thm:sequential-composition}
Let QPT protocols $\mathcal P_f$ and $\mathcal P_g^f$ realize $f$ and $g$ as in \Cref{def:comp-aug-sec}. Under at most one static corruption, replace $q(\lambda)\leq\poly(\lambda)$ completed ideal-$f$ calls by $\mathcal P_f$. If every reached boundary in every replacement hybrid satisfies \Cref{def:ot-call-boundary}, then
\[
 \mathcal P_g^{\mathcal P_f}\ \text{realizes }g,\qquad
 \begin{aligned}
 \adv&\leq\nu_g+q\nu_f,\\
 \varepsilon^{\rm cor}&\leq
 \varepsilon_g^{\rm cor}+q\varepsilon_f^{\rm cor}.
 \end{aligned}
\]
Simulation is auxiliary-input-uniform QPT. Here $\nu_g,\nu_f$ are uniform negligible bounds for the induced QPT tests, and the correctness bound assumes joint output--reference trace-distance errors at most $\varepsilon_f^{\rm cor}$ at each reached honest call and $\varepsilon_g^{\rm cor}$ for the outer protocol.
\end{theorem}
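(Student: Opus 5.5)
The proof is a standard hybrid argument over the $q$ replaced calls, processed in reverse order of invocation. Let $\mathsf H_q$ be the real composed execution $\mathcal P_g^{\mathcal P_f}$ and $\mathsf H_0$ the $f$-hybrid execution of $\mathcal P_g^f$. The hybrid $\mathsf H_{j-1}$ is obtained from $\mathsf H_j$ by replacing the $j$th completed call of $\mathcal P_f$ with ideal $f$ run by the simulator $\Sim_f$ that \Cref{def:comp-aug-sec} assigns to the induced corrupt strategy. That induced strategy is the QPT map obtained by composing the outer adversary with the outer honest parties' prefix up to the call boundary. Because calls have disjoint intervals (\Cref{def:ot-call-boundary}), the execution factors at the $j$th boundary as prefix, call, suffix. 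The prefix and the untouched outer and future-call registers go into the reference $\Rref$ of the call. The suffix, consisting of the remaining calls (real for index $>j$, ideal for index $<j$, after reordering as above), the outer honest code and the adversary, is a single QPT channel applied to the call's return interface.

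The key step is to check that the state at the boundary is an admissible initial extension for $f$, so that the security guarantee for $f$ applies. This is exactly the hypothesis that every reached boundary in every hybrid satisfies \Cref{def:ot-call-boundary}: item (i) gives membership in $\mathsf{Adm}$ and fresh independent coins, item (ii) identifies the reference, and item (iv) ensures only the current batch is accessible, so $\Sim_f$ needs only the declared $\mathsf{Dev}_H$ interface. Security of $\mathcal P_f$ is auxiliary-input-uniform over such extensions, so the real and ideal outputs of call $j$ are $\nu_f$-indistinguishable jointly with $\Rref$. Item (iii) says the compared return interface contains everything the suffix reads; the traced-out honest scratch and retired devices are never read again. \Cref{lem:efficient-composition-hybrids} then pushes this indistinguishability through the suffix channel. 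The same holds on abort branches by item (v), since aborts are part of the compared states and honest parties make no further calls after them. Telescoping gives $\mathsf H_q\approx\mathsf H_0$ with advantage at most $q\nu_f$.

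Finally I apply security of $\mathcal P_g^f$ against the same adversary to obtain the outer simulator $\Sim_g$, adding $\nu_g$. The composed simulator runs $\Sim_g$, and uses $\Sim_f$ as a subroutine wherever $\Sim_g$ interacts with the adversary's side of an ideal-$f$ call. It is QPT because $q$ is polynomial and each $\Sim_f$ is strict QPT. It is auxiliary-input-uniform because it is chosen from the adversary alone. The correctness bound follows by the same telescoping, applied to joint output–reference trace distances on honest–honest executions, using contractivity of trace distance under the suffix channel.

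The main obstacle is making the induced adversary at call $j$ a legitimate QPT strategy whose simulator is independent of $\eta^{\Rref}$. The prefix may contain ideal calls whose simulators were themselves derived from induced adversaries, so I must make sure this derivation does not blow up recursively. Processing calls from last to first avoids this: the suffix for call $j$ contains only real later calls, and the earlier ideal calls lie in the prefix, which is absorbed into the reference. Each $\Sim_f$ is then derived from one fixed QPT map, and the nesting depth stays one.
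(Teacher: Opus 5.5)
Your skeleton is the paper's: hybrids over the calls, the single-call guarantee applied at each reached boundary with the untouched registers in the reference, the common suffix followed by the distinguisher as one QPT test, outer security for $\mathcal P_g^f$, and correctness by telescoping. The gap is in the step you identify as the main obstacle.

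You define the induced corrupt strategy at call $j$ as the outer adversary composed with the honest parties' prefix up to that boundary. That is not a corrupt strategy for $\mathcal P_f$, because it contains honest inputs, coins and device operations. A simulator derived from it would also depend on $j$ and on honest data. You would then have $q(\lambda)$ different simulators, with no common polynomial bound and no single negligible $\nu_f$ covering all slots. Your bound $q\nu_f$, the QPT claim for the composed simulator, and its independence from honest inputs all need exactly these.

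Your ordering argument does not repair this, and it is stated backwards. When you replace calls from last to first, at call $j$ the already-replaced ideal calls are the later ones. They therefore sit in the suffix, and the prefix is real. The configuration you describe, with real later calls in the suffix and earlier ideal calls absorbed into the reference, is the forward order.

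The missing idea, which is how the paper argues, is to take as the corrupted call machine the \emph{fixed} code that runs $\mathcal A$ during a subroutine. The slot index, the role and the entire adaptive history, including whatever earlier call simulators produced, are supplied as auxiliary input. The quantifier order $\forall\mathsf P^*\,\exists\Sim\,\forall\eta^{\Rref}$ of \Cref{def:comp-aug-sec} then yields one simulator with one polynomial bound, valid at every slot and every reached boundary. Simulators run successively on the corrupt state and are never substituted into their own code. Hence no recursion arises and the order of replacement is immaterial; the paper uses forward hybrids. The slot dependence of the suffix tests is handled by hardwiring a maximizing index as nonuniform advice, which makes the maximum adjacent advantage negligible.

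Two smaller points. First, outer security must be applied to the composite $f$-hybrid adversary made of $\mathcal A$ and the call simulators, not to ``the same adversary''. Second, the number of reached calls may be adaptive. The paper therefore pads to $q$ slots with reach flags, normalizes only the reached branch to invoke the single-call guarantee, and multiplies back by the reach probability.
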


\begin{proof}[Proof of \Cref{thm:sequential-composition}]
\emph{Step 1: setting.} Fix a corrupted party, a QPT outer adversarial machine $\mathcal A$, and a polynomial upper bound on the number of calls. Calls are completed and non-overlapping; all messages, including the subroutine's proof and opening messages, are processed before return. A prescribed ideal resource used within a call is retained with its specified interface. Distinct invocation labels prevent messages from being interpreted in another call. The corruption is static, although the subroutine's sender/receiver roles may be selected by the outer protocol whenever both component guarantees cover the corresponding role assignment.

\emph{Step 2: call slots and their simulators.} Pad to $q$ potential call slots, with an immutable classical reach flag for each slot. An unreached slot is the identity. The number of reached calls, their inputs, and the decision to continue may depend on preceding outputs and transcripts, subject to boundary admissibility. The corrupted call machine is the fixed code running $\mathcal A$ during the subroutine; the slot index, role, and all adaptive history are classical or quantum auxiliary input. The single-call quantifier order supplies a simulator for this fixed machine that works uniformly for those inputs. The original adversarial code and the call simulator have polynomial bounds independent of the slot. Simulators run successively, return the prescribed corrupt continuation state, and are not recursively substituted into their own code. Consequently, polynomially many such calls still give a QPT machine.

\emph{Step 3: the hybrids.} Let $H_j$, for $0\leq j\leq q$, replace the first $j$ potential calls by the ideal-$f$ interface and these call simulators, leaving later calls real. Fix $j$. The prefixes of $H_{j-1}$ and $H_j$ agree through the boundary of slot $j$. Let $\mathsf C_j$ be the complete corrupted continuation state, $\mathsf H_j$ the honest subroutine input--output interface, and $U_j$ all remaining outer registers untouched during the call, including scheduling and control state. Apply the single-call guarantee with reference $U_j\Rref$. Call-local honest scratch and device systems may be traced out only if they are irrevocably unavailable to the continuation. In particular, a discarded device cannot subsequently be reused as an unmodeled communication or memory channel.

\emph{Step 4: one replacement.} Fix a final QPT distinguisher $\mathcal D$. The common suffix of the two hybrids, followed by $\mathcal D$, is a QPT test $\mathcal D_j$ of the complete post-call state. It includes all later adaptive choices and any abort continuation. The suffix may act on $U_j$ after return; the external $\Rref$ stays untouched until the final distinguishing test. The theorem's hypothesis makes the normalized reached boundary state admissible even when earlier slots have been simulated. The call simulator does not depend on that state, its unknown description, or the reference. Hence single-call security bounds this test's advantage by a negligible $\nu_j(\lambda)$, uniformly over the reached boundary state.

\emph{Step 5: weighting by the reach probability.} Write $p_j=\Pr[\mathsf{Reach}_j=1]$, the same in both prefixes. On the unreached substate, the hybrids agree. If $p_j>0$, normalize that branch only to invoke the uniform single-call guarantee and multiply back by $p_j$. Thus
\begin{equation}
 \left|\Pr[\mathcal D(H_{j-1})=1]
       -\Pr[\mathcal D(H_j)=1]\right|
 \leq p_j\nu_j(\lambda)\leq\nu_j(\lambda).
 \label{eq:composition-call-advantage}
\end{equation}

Aborts inside a reached call are ordinary output blocks in this same comparison, with the full corrupt state and reference retained. The honest abort flag is absorbing, so no later honest call is made on that branch; a corrupt party's allowed continuing computation remains part of the common suffix.

\emph{Step 6: summation.} The call tests form a polynomially bounded family, with the index and adaptive control carried as auxiliary data for a fixed call machine. Nonuniform QPT security, uniform over auxiliary states, also covers a sequence of indices maximizing the adjacent advantage. Therefore their maximum $\nu_f(\lambda)$ is negligible for the fixed outer adversary and distinguisher. \Cref{lem:efficient-composition-hybrids} yields
\[
 \left|\Pr[\mathcal D(H_0)=1]-\Pr[\mathcal D(H_q)=1]\right|
 \leq\sum_{j=1}^q\nu_j(\lambda)
 \leq q(\lambda)\nu_f(\lambda).
\]

\emph{Step 7: the outer simulator.} The final hybrid $H_q$ is $\mathcal P_g^f$ with one stateful QPT adversary comprising $\mathcal A$ and the call simulators. The outer security theorem gives one QPT simulator $\Sim_g$, using only $\ideal(g)$ and the resources explicitly allowed in that outer theorem. In particular, it has no undeclared ideal-$f$ oracle. The declared device interfaces of \Cref{def:comp-aug-sec} remain explicit resources of the outer ideal experiment. This simulator is chosen independently of the initial extension and $\Rref$. If its induced distinguishing error is $\nu_g(\lambda)$, the final advantage is at most
\[
                   \nu_g(\lambda)+q(\lambda)\nu_f(\lambda).
\]
The simulations are chosen before the final distinguisher; only the negligible advantage bounds may depend on it. This proves the required auxiliary-input-uniform quantifier order and worst-case polynomial running time. Repeat the argument for the other corruption case.

\emph{Step 8: correctness.} For correctness, use the same replacements with both parties honest and the assumed joint input--output/reference correctness guarantees. At each reached slot, these compare the real call with its ideal output while retaining $U_j\Rref$. Contractivity, weighting by reach probability, and telescoping give $\varepsilon_g^{\rm cor}+q\varepsilon_f^{\rm cor}$ in the same distance convention as the component bounds. In the OT instantiations here, the honest outputs are deterministic functions of the classical inputs. The untouched input--reference marginal is unchanged by a complete honest call, and the incorrect-output/abort substate has mass at most the stated failure probability. Consequently its trace distance $\tfrac12\|\cdot\|_1$ from the ideal joint output is at most that probability. Thus the existing honest-error bounds supply the reference-preserving correctness needed here, including honest aborts.
\end{proof}

\paragraph{Application to the OT protocol.}
\Cref{thm:unified-di-ot} provides the single-call simulation and correctness guarantees. At every invocation, retain all corrupt device residuals and place untouched honest outer state in $\Rref$; only discarded honest call-local systems are traced out. Each call has fresh honest coins, distinct record handles, and its own prescribed device window. Every call satisfies its own prescribed global leakage budget and, for honest correctness, its noise promise. The next call's state must again be admissible in every hybrid. No arbitrary device reuse or interleaving is inferred from the single-call result.

For the compiled OT, first apply \Cref{thm:masked-cp,cor:bc-ot-compiler} within each OT call. Their direct simulation retains pending record states across sequential commit phases and delayed openings. This is the stateful masked-CP guarantee, rather than an application of the completed-call theorem to overlapping record lifetimes. It establishes the required efficient OT call simulator; the theorem then composes completed OT calls.

\subsection{The MPC corollary}
\label{app:mpc-proof}
The classical OT-to-MPC reduction uses polynomially many completed pairwise OT calls. Its statistical hybrid and quantum lifting give the outer simulator; the three-window OT theorem implements each pairwise call under the stated device conditions.

\begin{corollary}[Device-independent multiparty computation]
\label{cor:di-mpc}
Assume qOWFs and private authenticated classical channels.
Suppose that every reached OT call in every replacement hybrid satisfies the conditions in \Cref{def:ot-call-boundary,def:compiled-device-calls}, and that honest--honest calls also satisfy the corresponding correctness conditions.
For every fixed $p\geq2$ and classical PPT functionality $\mathcal F$ on $p$ parties, with public input/output lengths and corrupt-output-first, adversarially suppressible honest delivery, there is a DI protocol $\Pi_{\mathcal F}$ with security with abort against static corruption of any proper subset $C\subsetneq[p]$:
\[
 \begin{gathered}
 \forall\mathcal A\in\QPT\quad\exists\Sim\in\QPT
 \quad\forall\eta^{\Rref}_{\rm adm}:\\
 \mathsf{Real}_{\Pi_{\mathcal F},\mathcal A}(\eta^{\Rref}_{\rm adm})
 \approx_c \mathsf{Ideal}_{\mathcal F,\Sim}(\eta^{\Rref}_{\rm adm}).
 \end{gathered}
\]
$\Pi_{\mathcal F}$ uses $\poly(\lambda)$ sequential pairwise DI OT calls and $\poly(\lambda)$ total resources. Each admissible OT call has the guarantees of \Cref{thm:qowf-di-ot}, with $\varepsilon_{\rm corr}=\negl(\lambda)$ under the stated honest-device assumptions.
\end{corollary}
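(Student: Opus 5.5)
The plan is to compose three ingredients: the classical OT-to-MPC reduction of Ishai, Prabhakaran, and Sahai~\cite{IPS08}, Unruh's quantum lifting theorem~\cite{Unruh_2}, and the efficient sequential composition theorem \Cref{thm:sequential-composition}. Fix $p$, the functionality $\mathcal F$ with its delivery convention, and a static corrupted proper subset $C$, which acts as one adversary.

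First, I would take the IPS protocol $\Pi^{\rm OT}_{\mathcal F}$. It uses private authenticated channels and polynomially many sequential pairwise calls to $\ideal(\mathsf{OT})$, and it is statistically secure with abort against any static $C\subsetneq[p]$ in the OT-hybrid model. I would fix a sequential schedule so that the calls are completed and non-overlapping; this is allowed because round complexity is not a concern here. Each call uses the delivery rule of \Cref{def:idealOT}, which matches the corrupt-output-first, suppressible delivery that IPS requires.

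Second, I would lift this hybrid-model security to quantum adversaries with \cite{Unruh_2}. Statistical security of a classical protocol in the OT-hybrid model, with a simulator that does not rewind, carries over to quantum polynomial-time adversaries with quantum auxiliary input. The lifted simulator is QPT and, for the fixed adversary, is chosen before the auxiliary state. This gives $\ideal(\mathsf{OT})$-hybrid security in the sense of \Cref{def:comp-aug-sec}. Two points need checking:
\begin{itemize}
\item the simulator also preserves the untouched reference $\Rref$ and abort branches, which holds because the simulator acts as $\Sim\otimes\mathrm{id}_{\Rref}$;
\item the declared device interfaces are carried along as inert resources, since the outer protocol never touches them.
\end{itemize}

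Third, I would replace each $\ideal(\mathsf{OT})$ call by \Cref{prot:three-window-ot}, using fresh batches, via \Cref{thm:sequential-composition}. The OT calls occur between two parties, so I would group the corrupted parties with the call's corrupt endpoint. If both endpoints are corrupt, the call is internal to the adversary and needs no simulation. If both are honest, only correctness is needed, and \Cref{thm:qowf-di-ot} supplies it with joint output--reference error $\negl(\lambda)$. If exactly one endpoint is corrupt, single-call simulation from \Cref{thm:qowf-di-ot} applies with the remaining honest parties' registers placed in $\Rref$.

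The hypothesis of the corollary is exactly the boundary condition \Cref{def:ot-call-boundary} at every reached call in every hybrid, so the theorem's telescoping gives advantage $\nu_g+q\nu_f$ and correctness error $\varepsilon_g+q\varepsilon_f$. Resource counts then follow: there are $\poly(\lambda)$ calls, each using $3$ windows and polynomial classical work.

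I expect the main obstacle to be the mismatch between \Cref{thm:sequential-composition}, stated for two parties with at most one corruption, and the multiparty setting with a corrupted set $C$. I would handle it by treating $C$ as one stateful adversary and, in each call slot, viewing the non-participating honest parties as part of the untouched reference. I would then re-check that Steps~3--5 of that proof use nothing two-party-specific beyond the role of the corrupt endpoint. A secondary check is that Unruh's lifting requires the classical simulator to be straight-line; IPS in the OT-hybrid model is, since extraction comes from the OT inputs.
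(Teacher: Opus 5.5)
Your proposal is correct and follows essentially the same route as the paper. Both use the IPS protocol in the OT hybrid, lift it with Unruh's theorem, and then replace calls sequentially: the coalition plays the corrupt endpoint, corrupt--corrupt calls stay unreplaced, and honest--honest calls rely on the correctness hypothesis with the adversary's state in the reference. The paper additionally makes explicit some bookkeeping you gloss over: IPS's string-OT-to-bit-OT reduction, averaging over call inputs correlated with the reference, and padding traffic on the private channels. It also notes that honest--honest failures enter the security advantage itself, giving $\nu_{\rm outer}+q(\nu_f+\varepsilon_f)$ rather than a separate correctness term.
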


\noindent The ideal experiment retains the declared device resources of \Cref{def:comp-aug-sec}, and an absorbing abort suppresses future outputs, not outputs already delivered. The three-window bound and constant leakage rate apply per OT call. A single global fractional leakage budget for an arbitrary number of calls requires an additional allocation condition.

\begin{proof}[Proof of \Cref{cor:di-mpc}]
\emph{Step 1: the outer protocol.} Use the classical statistical OT-hybrid MPC protocol of Ishai, Prabhakaran, and Sahai~\cite{IPS08}, including their statistically UC-secure string-OT to bit-OT reduction~\cite[Section~2]{IPS08}. Use its public polynomial clock and the stated length/delivery convention. An abort before OT input submission withholds the corrupt input; an abort after corrupt receiver output is a stop after that standard OT output. A missing completion signal triggers the prescribed absorbing outer abort. Honest final delivery follows the same adversarially suppressible interface. For reactive outputs this rule applies only to future delivery. Thus the staged OT interface in \Cref{def:idealOT} adds no delivery guarantee to the target functionality. Unruh's quantum lifting theorem makes this classical hybrid quantum secure \cite[Theorem~15 and Section~6]{Unruh_2}. Schedule its pairwise calls sequentially and replace each by \Cref{thm:qowf-di-ot}. All other honest outer registers remain untouched during a call. Use private authenticated channels, pad each call's traffic to its public polynomial bound, and release only its prescribed input--output interface to the caller. Internal transcripts and honest call-local device systems are retired. The same current device share is supplied through its declared interface in each replacement; future shares remain protected. The final simulator retains these interfaces, so the conclusion is resource-relative.

\emph{Step 2: calls with one corrupted endpoint.} Let $C\subsetneq[p]$ be the corrupted set. In a call with exactly one endpoint in $C$, the coalition plays that endpoint: its joint workspace, including the workspaces and device shares of the other corrupted parties, is the corrupt workspace of \Cref{def:admissible-extension}. Retain the honest parties' states and the pending outer registers in the reference and apply \Cref{thm:sequential-composition}. Classical call inputs may be correlated with this reference. Conditional on each input value $x$ and reached boundary, the stated invocation hypothesis supplies an admissible extension. The same simulator and the same negligible bound $\nu_f$ work for all $x$. For any induced QPT test $\mathcal D$, put
\[
 d_x:=\Pr[\mathcal D(\mathsf{Real}_x)=1]
       -\Pr[\mathcal D(\mathsf{Ideal}_x)=1].
\]
Linearity and the triangle inequality give
\begin{align}
 \bigl|\Pr[\mathcal D(\mathsf{Real})=1]
       -\Pr[\mathcal D(\mathsf{Ideal})=1]\bigr|
 &=\Bigl|\sum_xp_xd_x\Bigr|\\
 &\leq\sum_xp_x|d_x|
 \leq\sum_xp_x\nu_f=\nu_f.
 \label{eq:mpc-input-averaging}
\end{align}
Thus conditioning for this calculation imposes no independence on the outer classical data and requires no input-dependent simulator.

\emph{Step 3: calls between corrupted parties.} In a call with both endpoints in $C$, both roles and both device shares belong to the adversary, so the call is part of its own computation. In the OT-hybrid execution, the adversary supplies both inputs to such an ideal call and learns nothing that it does not already hold. These calls are therefore not replaced and contribute no error term.

\emph{Step 4: calls between honest parties.} A call with two honest endpoints needs a separate argument. Its complete DI implementation, including source commitment calls, satisfies the stated honest-device correctness assumptions. Hence its probability of abort or incorrect delivery is at most a uniform negligible $\varepsilon_f$. Its private, isolated execution acts only on the two honest laboratories and call-local registers. After those registers are discarded, no-signalling leaves the unconditioned external reference unchanged. For fixed classical inputs $x$, let $o_x$ denote the correct ideal output and write the real output--reference state as $\sigma^x_{O\Rref}=\sum_o|o\rangle\!\langle o|\otimes\sigma_{\Rref}^{x,o}$. Then $\sum_o\sigma_{\Rref}^{x,o}=\rho_{\Rref}^x$ and $\sum_{o\ne o_x}\operatorname{Tr}\sigma_{\Rref}^{x,o}\leq\varepsilon_f$. Consequently,
\begin{align}
 \tfrac12\bigl\|\sigma^x_{O\Rref}
       -|o_x\rangle\!\langle o_x|\otimes\rho_{\Rref}^x\bigr\|_1
 &=\tfrac12\bigl\|\sigma_{\Rref}^{x,o_x}-\rho_{\Rref}^x\bigr\|_1
       +\tfrac12\sum_{o\ne o_x}\operatorname{Tr}\sigma_{\Rref}^{x,o}\\
 &=\sum_{o\ne o_x}\operatorname{Tr}\sigma_{\Rref}^{x,o}
 \leq\varepsilon_f.
 \label{eq:mpc-honest-pair}
\end{align}
Averaging over $x$ preserves this bound. In particular, we do not infer independence of an abort flag from no-signalling: abort branches are included in the negligible error above. This is why the honest-device hypothesis on honest--honest calls is explicit in the corollary.

\emph{Step 5: summation.} For at most $q=\poly(\lambda)$ reached and replaced calls, the hybrid inequality is
\[
 \operatorname{Adv}_{\rm MPC}
 \leq\nu_{\rm outer}+q(\nu_f+\varepsilon_f)=\negl(\lambda).
\]
Uniform bounds apply at every replacement boundary and retain the probability weights of reached branches. The composed simulator is QPT. The outer protocol uses only trusted classical operations; all quantum device use occurs in the substituted calls. A callwise union bound on honest failures preserves the stated robustness under the per-call hypotheses. A single fractional budget over polynomially many calls requires a separate bound on its concentration in each call. The channel hypothesis is the secure-channel model used in~\cite[Section~2]{Unruh_2}.
\end{proof}

\section{Selector-aware repetition and hidden-string entropy}
\label{app:selector-repetition}

This section supplies the entropy input to the OT and commitment proofs. We define the game and prove selector-aware threshold repetition, then derive the hidden-string bound. The repetition proof uses the product-anchored threshold theorem of Kundu and Tan (\Cref{fact:kundu-tan}).

\subsection{Selector-aware repetition with late erasures}
\label{sec:selector-repetition-input}

In $G_1^{\rm sel}$ the verifier sends independent uniform trits $\Xraw,Y$ to Charlie and Dave. Charlie returns an even row $\Araw$; Dave's $Y$-controlled instrument freezes $F=(B,J)$, where $B$ is an odd column and $J\in\{0,1\}$, and retains arbitrary quantum memory. Only then does the verifier sample $C\gets\mathrm{Bern}(1-\delta_*)$ and send Dave $\widehat X=\Xraw$ if $C=1$, and $\widehat X=*$ otherwise. Dave returns $\widehat A[0,1]$; late maps preserve $F$ and its inaccessible classical copy. Put
\[
\begin{gathered}
 W=\begin{cases}
 1,&C=0,\\
 \Id\{\Araw[0,1]=\widehat A[0,1]\}
       \MS(\Xraw,Y,\Araw,B),&C=1,
 \end{cases}
\\ Q=JW,\qquad
 L=C(1-\MS(\Xraw,Y,\Araw,B)).
\end{gathered}
\]
The parallel game uses product questions and fresh independent $C_i$, with arbitrary entanglement and joint operations across coordinates.

\begin{theorem}[Selector-aware constant-threshold repetition]
\label{thm:restricted-order-repetition}
For every $\delta_*\in(0,1)$, $\rho\in(0,1]$, and $\gamma_{\rm sel}\in(((1+\delta_*)/2)\rho,\rho)$, there exist $\eta_{\rm sel},c_{\rm sel}>0$ and $n_0\in\mathbb N$ such that $\forall\mathcal S$ allowed and $\forall n\geq n_0$,
\begin{equation}
 \Pr_{\mathcal S}\!\left[
 \sum_{i=1}^nJ_i\leq\rho n,\quad
 \sum_{i=1}^nL_i\leq\eta_{\rm sel}n,\quad
 \sum_{i=1}^nJ_iW_i\geq\gamma_{\rm sel}n
 \right]\leq2^{-c_{\rm sel}n}.
 \label{eq:selector-repetition}
\end{equation}
The allowed strategies are those defined above; no lower bound on $\sum_iJ_i$ is assumed.
\end{theorem}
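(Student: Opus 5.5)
The plan is to reduce \eqref{eq:selector-repetition} to the product-anchored threshold theorem of Kundu and Tan (\Cref{fact:kundu-tan}), applied to a single-coordinate game $G^{\rm mix}$ with value strictly below one. $G^{\rm mix}$ has the same two-round question structure as $G_1^{\rm sel}$: uniform trits to Charlie and Dave, Dave freezes $F=(B,J)$, then an anchor $C$ that is independent of $F$, and then the late question $\widehat X$. The erasure $C=0$ plays the role of the anchor: it is sampled after $F$ is frozen and is independent of all questions, so the lifted game stays product-anchored and no conditioning on Dave's wins is needed.

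The scoring of $G^{\rm mix}$ is a private verifier mixture. With probability $p$ the verifier scores the Magic Square test $1-L$. With probability $1-p$ it scores the row-guess test
\[
 Z=JW+(1-J)\bigl((1-C)+C\,\Theta\bigr),
\]
where $\Theta$ is a fresh fair coin. Thus an unselected coordinate is completed by a fair coin on its unerased positions. The mixture choice and $\Theta$ enter only the score and are never sent to the players.

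The first step is a single-round bound for $G^{\rm mix}$. Consider a strategy whose Magic Square loss is $\varepsilon$ on unerased coordinates. Given $C=1$, Dave's probability of guessing both bits $\Araw[0,1]$ and winning is at most $\tfrac12+O(\sqrt\varepsilon)$. The reason is that the winning answer $B$ fixes only the intersection bit, and the Fu--Miller rigidity bound \cite{FM-BRE} leaves the complementary bit nearly unbiased. This holds even though Dave sees $\Xraw$ only after freezing $F$, since that can only hurt him. Hence
\[
 \mathbb E[Z]\le(1+\delta_*)/2+O(\sqrt\varepsilon)
\]
for every choice rule for $J$; unselected coordinates contribute exactly $(1+\delta_*)/2$ in expectation. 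Balancing the $\sqrt\varepsilon$ term against the loss term $p\varepsilon$ by choosing $p$, the value satisfies
\[
 \omega(G^{\rm mix})\le(1-p)\tfrac{1+\delta_*}{2}+p+O(\text{small})<1 .
\]
What must actually be shown is that, with the threshold chosen below, the relevant winning fraction exceeds $\omega$ by a constant.

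The second step is the counting argument. Suppose $\sum_iJ_i\le\rho n$, $\sum_iL_i\le\eta_{\rm sel}n$ and $\sum_iJ_iW_i\ge\gamma_{\rm sel}n$. Chernoff bounds (\Cref{fact:Chernoff}) on the independent $C_i,\Theta_i$, which are independent of $F$, show that outside probability $2^{-\Omega(n)}$ the unselected coordinates contribute at least
\[
 \bigl(\tfrac{1+\delta_*}{2}-\epsilon\bigr)\bigl(n-\textstyle\sum_iJ_i\bigr)\ge\bigl(\tfrac{1+\delta_*}{2}-\epsilon\bigr)(1-\rho)n .
\]
The total row-guess score is therefore at least
\[
 \Bigl(\tfrac{1+\delta_*}{2}+\bigl(\gamma_{\rm sel}-\tfrac{1+\delta_*}{2}\rho\bigr)-\epsilon\Bigr)n ,
\]
a constant excess over $\tfrac{1+\delta_*}{2}$ because $\gamma_{\rm sel}>\tfrac{1+\delta_*}{2}\rho$. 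The missing bound $\sum_iJ_i\ge\cdot$ is not needed, since fewer selections only enlarge the coin part. The mixture of the two tests is then realized coordinatewise through independent private test choices. On this event the fraction of $G^{\rm mix}$-wins is at least $\omega(G^{\rm mix})+\Omega(1)$ once $\eta_{\rm sel}$ and $p$ are small. Another Chernoff step converts the two test totals into the mixed win count.

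The final step applies the Kundu--Tan threshold theorem to $G^{\rm mix}$. This gives $2^{-c_{\rm sel}n}$ for $n\ge n_0$, and the constants depend only on $\delta_*,\rho,\gamma_{\rm sel}$.

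The main obstacle I expect is the first step: proving a clean single-round value bound for $G^{\rm mix}$ that is strictly below the threshold. This needs a quantitative Fu--Miller-type inequality in the two-round setting, where Dave's guess uses post-$F$ quantum memory and the revealed $\widehat X$. I would first try to prove it directly by modelling the late guess as a measurement conditioned on $(Y,B,\Xraw)$ and bounding the complementary bit through the Magic Square self-test. I would also check that Kundu--Tan's hypotheses allow both the anchor and the private mixture coins.
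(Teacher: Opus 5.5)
Your plan follows the paper's proof. It uses the same lifted binary game: a Magic Square test mixed with a row-guess test, unselected coordinates completed by a fair coin, and the late erasures acting as the product anchor. It also uses the same Chernoff control of erasure and coin counts given the frozen selector, a conditional Chernoff step for the mixture coins, and a final appeal to \Cref{fact:kundu-tan}.

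The step you leave open is where the constants are fixed, and you close it in the wrong direction. You write that the winning fraction beats $\omega(G^{\rm mix})$ ``once $\eta_{\rm sel}$ and $p$ are small.'' In fact the test weight $p$ must be close to one, with $1-p=O(g)$, where $g=\gamma_{\rm sel}-s_*\rho$ and $s_*=(1+\delta_*)/2$. The one-copy bound $\mathbb E Z\le s_*+15(1-\delta_*)\sqrt\varepsilon$, followed by completing the square, gives
\[
 \omega(G^{\rm mix})\le p+(1-p)s_*+(1-\delta_*)\max_{\varepsilon\ge0}\bigl[15(1-p)\sqrt\varepsilon-p\varepsilon\bigr]
 =p+(1-p)s_*+\frac{225(1-\delta_*)(1-p)^2}{4p}.
\]
On the selector event, the conditional mean of the mixed score is at least $p(1-\eta_{\rm sel})+(1-p)(s_*+g-\epsilon)$ per coordinate. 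The margin is
\[
 (1-p)(g-\epsilon)-p\eta_{\rm sel}-\frac{225(1-\delta_*)(1-p)^2}{4p},
\]
which sets a term linear in $1-p$ against a quadratic one. It is positive only when $(1-p)/p$ is a small multiple of $g$, and $\eta_{\rm sel}$ must then be small compared with $(1-p)g$.

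With $p$ small the step genuinely fails. Take a classical deterministic strategy in which Dave knows Charlie's whole table. It loses with probability $1/9$, selects every coordinate, and scores $1-(1-\delta_*)/9$ in both modes. As $p\to0$ this exceeds the score $s_*+g$ that the event guarantees whenever $g<7(1-\delta_*)/18$. The paper fixes this with a test-to-guess ratio $M:1$, where $M=\lceil450/g\rceil+1$. Then $\zeta=225/(4M)<g/8$ bounds the uncompensated guessing gain, and the paper takes $\eta_{\rm sel}=g/(8M)$.

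Three smaller points.
\begin{enumerate}
\item \textbf{Hypotheses of \Cref{fact:kundu-tan}.} The paper makes the predicate deterministic by sending the mode bit and the fair coin to Dave together with Charlie's question. It does so only on the $C=1$ branch, so the anchored question is still exactly $\bot$. Mixing on anchors is unnecessary, since both modes accept there. Revealing the coins does not help Dave: the test score is fixed before round two, and the coin is independent of everything else.
\item \textbf{Your expected main obstacle.} This is supplied by \Cref{lem:conservative-residual-guessing}, which is Fu--Miller with an explicit constant and with $J$ folded into Dave's retained memory. \Cref{lem:late-erasure-selected} then applies fair-coin completion, which is exactly your bound on $\mathbb E Z$.
\item \textbf{The Chernoff step for unselected coordinates.} Use an additive deviation $\epsilon n$ rather than the relative deviation $\epsilon(n-\sum_iJ_i)$. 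For $\rho=1$ the unselected set can have size $o(n)$, and then the relative tail is not $2^{-\Omega(n)}$. Conditioned on the frozen $J$, Hoeffding with deviation $\epsilon n$ gives $e^{-2\epsilon^2 n}$ uniformly, as in the paper's bound on $N_{J,0}$.
\end{enumerate}
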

The \hyperref[par:intro-core]{proof overview} gives the main idea.

Honest Alice always queries $\Xraw_j$ and records $\Araw_j$ before using $C_j$. Her committed table has $(X_j,A_j)=(\Xraw_j,\Araw_j)$ on active coordinates and $(X_j,A_j)=(0,000)$ on anchors. Raw anchor questions are absent from subsequent controller messages; $C$ does not reach either device. Thus the protocol supplies precisely this late-erasure interface. Private presampling of unused $C$ is equivalent for the sparse-test argument.

\paragraph{Why the Magic Square game and not CHSH.}
CHSH is far easier to implement, and it certifies entropy in the usual DI protocols, but it does not survive the two-round structure above. CHSH is a projection game: once Dave holds his answer $b$ and learns both questions, Charlie's winning answer is determined, $a=b\oplus xy$. A strategy that wins therefore leaves no randomness in Charlie's answer given Dave's, and the hidden string, which is what a guessing Dave must fail to produce after learning $\Xraw$, would have no entropy at all. Magic Square is not a projection game: Charlie's row has two free bits, Dave's column fixes only the entry at the intersection, and Fu and Miller~\cite{FM-BRE} show that the other entry stays close to uniform in every strategy that wins with probability close to one. That second bit is the hidden string. The proof uses the game through this one-copy bound (\Cref{lem:conservative-residual-guessing}) and, in the hidden-string lemma, through the row parity that reconstructs the third entry. A replacement game therefore needs both a comparable residual-guessing bound and a reconstruction rule with the same role in the hidden-string argument.

\paragraph{Notation for the proof.}
The game is that of \Cref{sec:selector-repetition-input}. We write $Z^{\rm late}$ for Dave's second-round question, that is, $Z^{\rm late}=\bot$ if $C=0$ and $Z^{\rm late}=\Xraw$ if $C=1$, so that $\bot$ is the erasure symbol $*$ of $\widehat X$; and we use the scores
\begin{align}
 W&=(1-C)+C\,\mathbf1\{\widehat A[0,1]=\Araw[0,1]\}
       \MS(\Xraw,Y,\Araw,B),\\
 Q&=JW,\qquad
 L=C\bigl(1-\MS(\Xraw,Y,\Araw,B)\bigr).
 \label{eq:late-erasure-scores}
\end{align}
In the $n$-copy game, the first-round questions are sampled productwise, both parties may use arbitrary entangled states and joint first-round operations, Dave may use one joint second-round operation on all the revealed questions and his residual state, and $C$ is sampled after all first-round outputs, including $J$, are fixed.

\begin{lemma}[Conservative residual guessing bound]
\label{lem:conservative-residual-guessing}
Suppose a Magic Square strategy with even-row and odd-column outputs and independent uniform questions has average loss $\varepsilon$. For fixed questions $x,y$ and $k\ne y$, after producing its column, Dave's probability of guessing $\Araw[k]$ from arbitrary retained quantum memory and the revealed $x$ is at most $1/2+15\sqrt\varepsilon$.
\end{lemma}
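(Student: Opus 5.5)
Our plan is to reduce the statement to the one-copy Fu--Miller bound~\cite{FM-BRE}, which controls Dave's residual uncertainty about the non-intersection entry of Charlie's row. Fix the questions $x,y$ and an index $k\ne y$. Charlie's row has even parity and $\Araw[y]$ is the intersection entry. Once $x$ and Dave's column $B$ are known, a winning $\Araw[y]$ equals $B[x]$. Since $k\ne y$, the entry $\Araw[k]$ is one of the other two entries, and by parity it determines the third. We will work with the conditional strategy at the fixed pair $(x,y)$. Writing $\varepsilon_{xy}$ for the loss probability at that pair, uniform questions give $\varepsilon_{xy}\le 9\varepsilon$, because $\varepsilon=\frac19\sum_{x,y}\varepsilon_{xy}$.

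We will then invoke the Fu--Miller result in its form for a fixed question pair. That form says that a strategy losing with probability $\varepsilon'$ at $(x,y)$ lets Dave, holding his post-measurement memory and learning $x$, guess an off-intersection entry of row $x$ with probability at most $\tfrac12+c\sqrt{\varepsilon'}$, for an absolute constant $c$. We will choose the normalization of their bound so that $c\sqrt{9\varepsilon}\le 15\sqrt\varepsilon$, which requires $c\le5$. If their result is stated only with the average loss $\varepsilon$, we will apply it directly and avoid the factor 9.

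The step we expect to be hardest is the shape of the guessing interface. Dave's guess may use his retained quantum memory after he has output $B$, together with the revealed $x$. We will model his second measurement as a POVM applied to the post-measurement state conditioned on $B$. Composing it with the column measurement produces a two-step Dave instrument, and this is exactly the sequential setting that Fu--Miller handle. We will absorb $B$ into the classical side information. The even-row and odd-column normalization only restricts the strategy, so the bound still applies.

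Finally, we will check that the constant 15 is conservative. It should cover the loss of a factor of 3 when passing from average loss to per-pair loss, since $\sqrt 9=3$, together with Fu--Miller's own constant. If necessary, we will use their explicit rigidity constant here.
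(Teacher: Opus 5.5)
\textbf{There is a genuine gap in the central step.} You plan to invoke a Fu--Miller bound at a fixed question pair: guessing probability at most $\tfrac12+c\sqrt{\varepsilon_{xy}}$, where $\varepsilon_{xy}$ is the loss at $(x,y)$ alone. No such bound exists. A deterministic classical strategy that wins at $(x,y)$ has $\varepsilon_{xy}=0$, because an even row and an odd column agreeing at one cell always exist. Yet Charlie's row is then a fixed function of $x$, so Dave, once he learns $x$, guesses $\Araw[k]$ with probability one. Any residual uncertainty about $\Araw[k]$ must come from near-perfect play on the \emph{other} question pairs. Passing to $\varepsilon_{xy}\le9\varepsilon$ therefore discards exactly the information the bound needs. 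Your fallback, applying Fu--Miller with the average loss, has the right hypothesis. But then the proof is only a citation, and the explicit constant $15$ is not established: you cannot choose the normalization of an external bound, and nothing in the proposal shows the cited constant is small enough.

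\textbf{What is missing is the reflection argument itself, which the paper carries out with explicit constants.}
\begin{itemize}
\item Purify, dilate to projective measurements, and write $F_{ij}$, $G_{ij}$ for the $\pm1$ observables of Charlie's and Dave's entries.
\item Consistency gives $\|(F_{ij}-G_{ij})|\psi\rangle\|=2\sqrt{\varepsilon_{ij}}$.
\item A chain of nine transfers through the row and column parity relations gives $\|\{F_{xk},F_{ry}\}|\psi\rangle\|\le2\sum_{ij}\sqrt{\varepsilon_{ij}}\le18\sqrt\varepsilon$ for $r\ne x$, using Cauchy--Schwarz.
\item Let $P_s$ be the spectral projectors of $F_{xk}$ and $\Pi_b$ Dave's column projectors for $y$. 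The vectors $F_{ry}P_0\Pi_b|\psi\rangle$ and $(-1)^{b[r]}P_1\Pi_b|\psi\rangle$ differ, in total over $b$, by $\|(\tfrac12\{F_{ry},F_{xk}\}+P_1(F_{ry}-G_{ry}))|\psi\rangle\|\le9\sqrt\varepsilon+2\sqrt{\varepsilon_{ry}}\le15\sqrt\varepsilon$.
\item The partial trace over Charlie removes the conjugation by $F_{ry}$. So Dave's two target-conditioned states, with his column outcome retained, are within $15\sqrt{2\varepsilon}$ in trace norm.
\item The Helstrom bound then gives $\tfrac12+(15/\sqrt2)\sqrt\varepsilon$.
\end{itemize}
The factor $\sqrt9$ you anticipated does appear, but at the pair $(r,y)$ with $r\ne x$, not at $(x,y)$. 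Your treatment of the sequential guessing interface is fine; the difficulty lies elsewhere.
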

\begin{proof}
We make the normalization in the reflection argument of Fu--Miller~\cite[Section~3 and Appendix~B]{FM-BRE} explicit and use a conservative constant. Purify the state to $|\psi\rangle$ and dilate the local instruments to projective measurements, retaining all local ancillas. Write $F_{ij}$ and $G_{ij}$ for Alice's and Dave's bit reflections, respectively, and let $\varepsilon_{ij}$ be the loss on questions $(i,j)$, so $\sum_{ij}\varepsilon_{ij}=9\varepsilon$. The parity and consistency relations give, for $r\ne x$ and $k\ne y$,
\[
\begin{aligned}
 \|(F_{ij}-G_{ij})|\psi\rangle\|&=2\sqrt{\varepsilon_{ij}},\\
 \|\{F_{xk},F_{ry}\}|\psi\rangle\|
 &\le2\sum_{ij}\sqrt{\varepsilon_{ij}}
 \le18\sqrt\varepsilon.
\end{aligned}
\]
For clarity, the second estimate follows from the following nine-transfer chain on $|\psi\rangle$; an arrow labelled $ij$ has norm error at most $2\sqrt{\varepsilon_{ij}}$, and unlabelled equalities are exact:
\[
\begin{aligned}
 F_{00}F_{11}
 &\xrightarrow{11}F_{00}G_{11}
   =-F_{02}F_{01}G_{21}G_{01}
   \xrightarrow{01}-F_{02}G_{21}\\
 &\xrightarrow{02}-G_{21}G_{02}
   =G_{21}G_{22}G_{12}
   \xrightarrow{12}F_{12}G_{21}G_{22}\\
 &\xrightarrow{22}F_{12}F_{22}G_{21}
   \xrightarrow{21}F_{12}F_{22}F_{21}
   =F_{12}F_{20}\\
 &\xrightarrow{20}F_{12}G_{20}
   =-F_{11}F_{10}G_{00}G_{10}
   \xrightarrow{10}-F_{11}G_{00}
   \xrightarrow{00}-F_{11}F_{00}.
\end{aligned}
\]
Row and column permutations give the general indices above.

Fix $r\ne x$, set $F=F_{xk}$, $H=F_{ry}$, $G=G_{ry}$, and $P_s=(I+(-1)^sF)/2$. Let $\Pi_b$ be Dave's full-column projectors for input $y$, so $G\Pi_b=(-1)^{b[r]}\Pi_b$. For $u_b=HP_0\Pi_b|\psi\rangle$ and $v_b=(-1)^{b[r]}P_1\Pi_b|\psi\rangle$, orthogonality gives
\begin{align*}
 \left(\sum_b\|u_b-v_b\|^2\right)^{1/2}
 &=\left\|\left(\tfrac12(HF+FH)+P_1(H-G)\right)|\psi\rangle\right\|
 \le15\sqrt\varepsilon,\\
 \sum_b(\|u_b\|+\|v_b\|)^2
 &\le2\sum_b(\|P_0\Pi_b|\psi\rangle\|^2
             +\|P_1\Pi_b|\psi\rangle\|^2)=2.
\end{align*}
Let $\sigma_s$ be Dave's subnormalized state when the target bit is $s$, retaining his classical column outcome. Partial trace over Alice removes the conjugation by $H$, so its $b$-blocks are the partial traces of $u_bu_b^\dagger$ and $v_bv_b^\dagger$. Since $\|uu^\dagger-vv^\dagger\|_1\le(\|u\|+\|v\|)\|u-v\|$, contractivity and Cauchy--Schwarz imply $\|\sigma_0-\sigma_1\|_1\le15\sqrt{2\varepsilon}$. \Cref{fact:helstrom} therefore gives $\pg\le1/2+(15/\sqrt2)\sqrt\varepsilon \le1/2+15\sqrt\varepsilon$. Any original instrument's residual state is obtained from the retained dilation by a local channel, so the same bound applies to it. An input-independent reference may be included in Dave's initial share; his copied early operation can act as the identity on that reference.
\end{proof}

\begin{lemma}[Selected one-copy guessing]
\label{lem:late-erasure-selected}
For a Magic-Square strategy with independent $\Xraw,Y\gets_U\{0,1,2\}$, let $\varepsilon=\Pr[\MS(\Xraw,Y,\Araw,B)=0]$. For every $Y$-controlled first instrument with output $(B,J)$ and arbitrary retained quantum memory, every subsequent row-guessing measurement given $\Xraw$ and that memory satisfies
\begin{equation}
 \Pr[J=1,\ \widehat A[0,1]=\Araw[0,1]]
 \leq \tfrac12\Pr[J=1]+15\sqrt{\varepsilon}.
 \label{eq:late-erasure-selected-bound}
\end{equation}
For the scores in~\eqref{eq:late-erasure-scores},
\begin{equation}
 \mathbb E Q\leq s_*\mathbb E J+15\sqrt{\mathbb E L},
 \qquad s_*:=\frac{1+\delta_*}{2}.
 \label{eq:late-erasure-one-copy-score}
\end{equation}
\end{lemma}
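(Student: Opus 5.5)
The plan is to reduce \eqref{eq:late-erasure-selected-bound} to \Cref{lem:conservative-residual-guessing} by conditioning on the questions and the column outcome, and then to obtain \eqref{eq:late-erasure-one-copy-score} by splitting on the late coin $C$.

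\emph{Step 1: the selected guessing bound.} Fix the questions $x,y$ and let $\varepsilon_{xy}$ be the loss on $(x,y)$, so that $\varepsilon=\tfrac19\sum_{xy}\varepsilon_{xy}$. Dave's $Y$-controlled instrument outputs $(B,J)$ and a retained memory. On the branch $J=1$ he then measures, given $x$, to produce a guess $\widehat A[0,1]$. For each $x$, pick the index $k\in\{0,1\}\setminus\{y\}$; it exists because $|\{0,1\}|=2$. A correct guess of $\Araw[0,1]$ implies a correct guess of $\Araw[k]$.

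The guess on the $J=1$ branch together with a fixed default guess on the $J=0$ branch is a single guessing strategy for $\Araw[k]$ with access to $J$. This is covered by \Cref{lem:conservative-residual-guessing}, since that lemma allows arbitrary retained classical memory.

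That lemma only bounds the total guessing probability, however, and I need the selected form $\tfrac12\Pr[J=1]+15\sqrt{\varepsilon}$. For this I will reuse its proof rather than its statement. Its proof gives a trace-norm bound $\|\sigma_0-\sigma_1\|_1\le15\sqrt{2\varepsilon_{xy}}$ on Dave's subnormalized states, indexed by the target bit and including the classical column record. The register $J$ is produced by a further local channel on this state, so contractivity gives the same bound on $\sigma_s\otimes$(the $J$ record). Restricting to the $J=1$ block, Helstrom applied to subnormalized states gives
\[
\Pr[J=1,\ \text{correct}]\le\tfrac12\bigl(\Tr\sigma_0^{J=1}+\Tr\sigma_1^{J=1}\bigr)+\tfrac12\|\sigma_0-\sigma_1\|_1\le\tfrac12\Pr[J=1\mid x,y]+11\sqrt{\varepsilon_{xy}}.
\]
Averaging over $(x,y)$ and using the concavity bound $\tfrac19\sum\sqrt{\varepsilon_{xy}}\le\sqrt{\varepsilon}$ yields \eqref{eq:late-erasure-selected-bound}.

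\emph{Step 2: the one-copy score bound.} Since $C$ is independent of everything frozen by the first round, I split $\mathbb E Q$ by the value of $C$. On $C=0$ we have $W=1$, contributing $\delta_*\mathbb E J$. On $C=1$, Dave receives $\widehat X=\Xraw$, and dropping the $\MS$ factor gives
\[
W\le\ind\{\widehat A[0,1]=\Araw[0,1]\},
\]
so this branch contributes at most $(1-\delta_*)\bigl(\tfrac12\mathbb E J+15\sqrt\varepsilon\bigr)$ by Step~1. Because $\mathbb E L=(1-\delta_*)\varepsilon$, we get $(1-\delta_*)\sqrt\varepsilon\le\sqrt{(1-\delta_*)\varepsilon}=\sqrt{\mathbb E L}$. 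Summing the two branches,
\[
\mathbb E Q\le\Bigl(\delta_*+\tfrac{1-\delta_*}2\Bigr)\mathbb E J+15\sqrt{\mathbb E L},
\]
and $\delta_*+\tfrac{1-\delta_*}{2}=\tfrac{1+\delta_*}{2}=s_*$, as claimed.

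\emph{Main obstacle.} The delicate step is passing from the total guessing bound to the selector-restricted bound in Step~1. It needs the trace-norm form of the residual bound, with the classical register $J$ retained jointly, and not merely the guessing probability stated in \Cref{lem:conservative-residual-guessing}. I will verify that the $\sigma_s$ in that proof are defined after Dave's complete first-round instrument, including $J$; if they are defined only after the column measurement, I will append $J$ as a local post-processing of the retained memory.
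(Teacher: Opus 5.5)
Your Step~2 is the paper's argument. Step~1, however, contains a false intermediate claim. The trace-norm estimate proved inside \Cref{lem:conservative-residual-guessing} is $\|\sigma_0-\sigma_1\|_1\le15\sqrt{2\varepsilon}$, where $\varepsilon$ is the \emph{average} loss of the strategy. It is not $15\sqrt{2\varepsilon_{xy}}$. The anticommutator bound runs through the nine-transfer chain and uses all nine $\varepsilon_{ij}$, and the consistency term is $\|(F_{ry}-G_{ry})|\psi\rangle\|=2\sqrt{\varepsilon_{ry}}$ with $r\ne x$. No per-pair version can hold. The deterministic classical strategy that wins eight of the nine question pairs has $\varepsilon_{xy}=0$ on every winning pair, yet Dave predicts $\Araw[k]$ with certainty. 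The repair is immediate, because the pointwise bound already carries the global $\varepsilon$. Averaging $\tfrac12\Pr[J=1\mid x,y]+\tfrac{15}{\sqrt2}\sqrt{\varepsilon}$ over the questions gives \eqref{eq:late-erasure-selected-bound}, with no concavity step needed.

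With that correction your route is valid, but it differs from the paper's. The paper never opens the lemma's proof. It coarsens the instrument's output to $B$, keeps $J$ in Dave's memory, and applies the stated guessing bound to a completed strategy: output the guessed bit when $J=1$, and an independent fair coin when $J=0$. This strategy succeeds with probability $\Pr[J=1,\text{correct}]+\tfrac12(1-p)\le\tfrac12+15\sqrt\varepsilon$. Subtracting $\tfrac12(1-p)$ gives the selected form. Your \emph{fixed default guess} was one step away from this. A fair coin on the $J=0$ branch, or equivalently averaging over the two fixed defaults, removes the obstacle you identified and uses the lemma as a black box. Your block-restricted Helstrom argument instead depends on the internal trace-norm form of that proof, in exchange for the slightly better constant $15/\sqrt2$. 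Your choice $k\in\{0,1\}\setminus\{y\}$ also avoids the parity completion $\widehat A[2]=\widehat A[0]\oplus\widehat A[1]$ that the paper needs for $k(y)=(y+1)\bmod 3$.
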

\begin{proof}
We use \Cref{lem:conservative-residual-guessing}, a conservative form of the residual-guessing argument of Fu--Miller~\cite{FM-BRE}. Coarsen the first instrument's public output from $(B,J)$ to $B$ and retain $J$ as part of Dave's memory. The lemma includes arbitrary instruments and all retained ancillas, so its pointwise estimate applies to this residual state.

For each $y\in\{0,1,2\}$, fix $k(y)=(y+1)\bmod 3$.  Complete the guessed two-bit row to even parity, writing $\widehat A[2]=\widehat A[0]\oplus\widehat A[1]$. Consider the following allowed bit guess: if $J=1$, output $\widehat A[k(y)]$; if $J=0$, output an independent fair bit. Write $p=\Pr[J=1]$.  Averaging the cited pointwise inequality over the uniform first questions gives
\begin{align}
 &\Pr[J=1,\widehat A[0,1]=\Araw[0,1]]+\tfrac12(1-p)
 \nonumber\\
 &\quad\leq
 \Pr[J=1,\widehat A[k(Y)]=\Araw[k(Y)]]+\tfrac12(1-p)
 \nonumber\\
 &\quad=\Pr[\text{the completed bit guess is correct}]
 \leq\tfrac12+15\sqrt{\varepsilon}.
 \label{eq:late-erasure-fair-completion}
\end{align}
Subtracting $(1-p)/2$ proves~\eqref{eq:late-erasure-selected-bound}.

Imposing the additional Magic Square check can only reduce its left side. Since $C$ is independent of the complete first-round experiment, the same $p$ occurs on its two branches and $\mathbb E L=(1-\delta_*)\varepsilon$. Thus
\begin{align}
 \mathbb E Q
 &\leq\delta_*p+(1-\delta_*)\bigl(p/2+15\sqrt{\varepsilon}\bigr)
 \nonumber\\
 &=s_*p+15\sqrt{1-\delta_*}\sqrt{\mathbb E L}
 \leq s_*p+15\sqrt{\mathbb E L}.
\end{align}
\end{proof}

\begin{fact}[Two-round product-anchored threshold theorem {\cite[Definition~3 and Theorem~29]{kundu_tan}}]
\label{fact:kundu-tan}
Let $\mathcal G$ be a two-round game with finite alphabets in which the first-round questions to Charlie and Dave are independent, both first-round answers are fixed before the second round, and only Dave receives a second-round question. That question is $\bot$ with a fixed probability $\delta_*>0$, independently of the first-round questions, and otherwise consists of Charlie's first-round question together with an additional finite-valued question. Let $V_i$ indicate that copy $i$ is won. For every $v\geq\omega^*(\mathcal G)$ and every fixed $\xi>0$ with $v+\xi<1$, there are constants $c_{\rm KT}>0$ and $n_{\rm KT}$, depending only on $\delta_*$, $\xi$, and the alphabet sizes, such that every strategy for $n$ parallel copies, with arbitrary entanglement and arbitrary quantum memory retained between the two rounds, satisfies
\begin{equation}
 \Pr\left[\sum_{i=1}^n V_i\geq(v+\xi)n\right]
 \leq 2^{-c_{\rm KT}n}\qquad(n\geq n_{\rm KT}).
 \label{eq:late-erasure-kt}
\end{equation}
\end{fact}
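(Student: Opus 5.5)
Since this is a cited result, the plan is to derive it from \cite[Theorem~29]{kundu_tan} by checking that our game is an instance of their two-round product-anchored class, rather than to reprove parallel repetition. First I will match the syntax. The first-round questions $(\Xraw,Y)$ are drawn from a product distribution. Both first-round answers, Charlie's row and Dave's frozen $F=(B,J)$, are registered before any second-round message. Only Dave receives the second-round question $Z^{\rm late}$. Its law is the anchor $\bot$ with probability $\delta_*$, independent of $(\Xraw,Y)$, and otherwise a deterministic function of Charlie's first-round question (here the identity, with a trivial additional component). This is exactly the anchoring condition of \cite[Definition~3]{kundu_tan}: conditioned on the anchor, the second-round question is independent of everything else. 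All alphabets are finite, namely trits, three-bit rows and columns, one selector bit, and two guessed bits. The winning predicate is an arbitrary $\{0,1\}$-valued function of questions and answers, which their framework allows.

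Second, I will verify that the strategy model agrees. Their $n$-fold strategies allow arbitrary shared entanglement, joint first-round measurements across copies, and arbitrary quantum memory carried by Dave into a joint second-round measurement. That is precisely the class fixed in the notation paragraph above. The only subtlety is that the erasure bits $C_i$ must be sampled after all first-round outputs, including $J$, are fixed. I will argue that this is harmless: in \cite{kundu_tan} the second-round question distribution is a fixed product independent of first-round answers, so sampling it before or after the first-round measurement produces the same joint state. This is the same presampling equivalence used for the sparse test.

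Third, I will translate their conclusion into the displayed form. Theorem~29 of \cite{kundu_tan} states that for any game of value $\omega^*<1$ and any $\xi>0$ with $\omega^*+\xi<1$, the probability of winning at least $(\omega^*+\xi)n$ copies is at most $2^{-c n}$. The constant $c$ depends only on $\xi$, the anchoring probability and the alphabet sizes. Monotonicity then gives the bound for every $v\ge\omega^*$ with $v+\xi<1$, since the event $\sum V_i\ge(v+\xi)n$ only shrinks as $v$ grows. If needed I will absorb polynomial prefactors into the exponent and enlarge $n_{\rm KT}$.

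I expect the main obstacle to be confirming the exact form of their anchoring definition. In particular, I need to check that the non-anchored second-round question may include Charlie's first-round question together with extra finite information. I also need to check that their dependence of constants is uniform in the way stated here, namely independent of the game beyond $\delta_*$, $\xi$ and the alphabet sizes. If their statement is only for $v=\omega^*$, or keeps the constants game-dependent, I would take the plan that seems most likely to work. That plan is to rerun their information-theoretic argument, which proceeds by conditioning on a random subset of won copies, bounding the correlated-sampling and anchor-based embedding errors, and applying a Rao-style threshold reduction. I would then track only the quantities that enter: the anchor probability (which controls the embedding loss) and the alphabet sizes (which control the chain-rule terms). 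If that turns out to be infeasible, I would state explicitly that the constants may depend on the fixed game $G_1^{\rm sel}$ mixed with the test game. This costs nothing downstream, because the fixed game family used later is finite.
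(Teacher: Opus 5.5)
Your plan matches the paper, which imports this statement from \cite[Definition~3 and Theorem~29]{kundu_tan} without reproving it; the only work is matching the game class to their product-anchored definition and reading off the threshold bound, using monotonicity in $v$ as you describe. One refinement: the membership check that matters is for the general class in the statement, and in the paper's application the game is the binary lift whose non-anchored second-round question is $(\Xraw,T,Q)$, so the additional finite-valued component is genuinely nontrivial, not the trivial one you check for $G_1^{\rm sel}$.
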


Dave's first-round answer alphabet below is $(B,J)$, which is finite. We take the first-round predicate identically one and place all checks in the second-round predicate.

\paragraph{Proof of the threshold bound.}
Recall $s_*=(1+\delta_*)/2$, so the hypothesis of \Cref{thm:restricted-order-repetition} reads $s_*\rho<\gamma_{\rm sel}<\rho$. All constants chosen below depend only on $\delta_*,\rho,\gamma_{\rm sel}$.

\begin{proof}[Proof of \Cref{thm:restricted-order-repetition}]
\emph{Goal and plan.} We must show that a strategy which keeps its Magic Square losses below $\eta_{\rm sel}n$, selects at most $\rho n$ coordinates, and is correct on at least $\gamma_{\rm sel}n$ of them, succeeds with probability $2^{-\Omega(n)}$. The tool is \Cref{fact:kundu-tan}, which bounds the probability of winning many copies of a product-anchored game. Two things stand between the selector game and that class. First, the selector is Dave's own choice, so "correct on the selected coordinates" is not a per-copy predicate; Step~1 repairs this by scoring an unselected coordinate with a fair coin, which makes the selection cost Dave exactly what a coin would. Second, the threshold theorem bounds wins, not losses; Step~1 also mixes the Magic Square test with the guessing test at a ratio $M:1$, so that a loss costs $M$ points and a selected success earns one. The value of the lifted game is then below one, by the one-copy bound of \Cref{lem:conservative-residual-guessing}. Steps~2 and~3 show that the selector event forces a large score in the lifted game, and Step~4 applies the threshold theorem. The constants are chosen so that the $15\sqrt\varepsilon$ guessing advantage minus the $M\varepsilon$ loss penalty is at most $\zeta$ for every $\varepsilon$, as shown by the completed square below.

Put $g=\gamma_{\rm sel}-s_*\rho>0$, and choose
\begin{equation}
 M=\left\lceil\frac{450}{g}\right\rceil+1,\quad \bar M=M+1,\quad
 \zeta=\frac{225}{4M}<\frac g8,\quad
 u=\frac g8,\quad \eta_{\rm sel}=\frac{g}{8M}.
 \label{eq:late-erasure-constants}
\end{equation}
Completing the square gives, for every $\varepsilon\geq0$,
\begin{equation}
 15\sqrt\varepsilon-M\varepsilon
 =\frac{225}{4M}-M\left(\sqrt\varepsilon-\frac{15}{2M}\right)^2
 \leq\zeta.
 \label{eq:late-erasure-square}
\end{equation}

\emph{Step 1: a binary product-anchored game.} We define the lifted game and bound its value. Keep the same first questions and first answers.  On $C=0$, accept. On $C=1$, the verifier samples an independent mode $T$ with $\Pr[T=\mathsf{test}]=M/\bar M$ and $\Pr[T=\mathsf{guess}]=1/\bar M$, and an independent fair bit $Q$. It gives Dave the second question $(\Xraw,T,Q)$, and checks
\begin{equation}
 V=\begin{cases}
 1,&C=0,\\
 \MS(\Xraw,Y,\Araw,B),
          &C=1,\ T=\mathsf{test},\\
 \mathbf1\{Q=0\},&C=1,\ T=\mathsf{guess},\ J=0,\\
 W,&C=1,\ T=\mathsf{guess},\ J=1.
 \end{cases}
 \label{eq:late-erasure-binary-lift}
\end{equation}
This is a game of the class in \Cref{fact:kundu-tan}, and the timing is what puts it there: the anchor $C$ is sampled after $(B,J)$ are frozen, so it is independent of both first questions and of the selector; all output alphabets are finite; and neither $C$ nor $T,Q$ is available before $(B,J)$ is fixed. Had the erasures been revealed to Dave before the selector, Dave could have selected only unerased coordinates and the anchor would no longer be independent. The coins $T,Q$ are part of the second-round question, so the predicate is a deterministic function of questions and answers.

For any one-copy strategy let $p=\Pr[J=1]$, and let $\varepsilon$ be its first-round Magic Square loss.  The first-round experiment does not depend on $T,Q,C$.  On the guess branch, the final measurement may depend on $Q$; Lemma~\ref{lem:late-erasure-selected} holds for each of these measurements.  Therefore
\begin{align}
 \Pr[V=1\mid C=1]
 &\leq\frac{M(1-\varepsilon)+(1-p)/2+p/2+15\sqrt\varepsilon}{\bar M}
 \nonumber\\
 &=\frac{M+1/2+15\sqrt\varepsilon-M\varepsilon}{\bar M}
 \leq\frac{M+1/2+\zeta}{\bar M}.
 \label{eq:late-erasure-lift-value}
\end{align}
Consequently
\begin{equation}
 \omega^*(\mathcal G)\leq
 v:=\delta_*+(1-\delta_*)\frac{M+1/2+\zeta}{\bar M}<1.
 \label{eq:late-erasure-v}
\end{equation}
Here $\zeta<g/8<1/16$, since $g<(1-s_*)\rho\leq(1-\delta_*)/2$.

\emph{Step 2: the two anchor counts.} The selector event speaks of all coordinates, whereas the lifted score only counts unerased ones, so we need the erasures to be typical both overall and inside the selector. For the given parallel strategy, define
\begin{equation}
 N_{\rm anc}=\sum_i(1-C_i),\qquad
 N_{J,0}=\sum_iJ_i(1-C_i).
\end{equation}
Conditional on the entire frozen $J$, the bits $1-C_i$ remain independent Bernoulli variables of mean $\delta_*$. Hence the additive form of \Cref{fact:Chernoff} gives
\begin{align}
 \Pr[N_{\rm anc}<\delta_*n-un]&\leq e^{-2u^2n},\\
 \Pr[N_{J,0}>\delta_*\textstyle\sum_iJ_i+un]&\leq e^{-2u^2n}.
 \label{eq:late-erasure-counts}
\end{align}
For the second inequality, a fixed selector of size $k>0$ gives the bound
\[
 e^{-2u^2n^2/k}\leq e^{-2u^2n}.
\]
At $k=0$, the bad event is empty.  Write $\mathcal T$ for the two typical events.  Thus $\Pr[\mathcal T^c]\leq2e^{-2u^2n}$.

\emph{Step 3: selected success implies a large binary score.} We show that on the selector event, with typical erasures, the lifted score exceeds the threshold $(v+\xi)n$ by a constant fraction of $n$. Extend the given parallel strategy to the binary game by ignoring all the added $T,Q$ coins in its final guessing map.  It receives the same erased question vector, so this is a legal strategy with the same distribution of $(\Xraw,Y,\Araw,B,J,C,\widehat A)$. Conditional on that full classical transcript, the binary scores $V_i$ are independent.  Let $\mu_V$ be their conditional mean sum.  Set $N_J=\sum_iJ_i$, $N_Q=\sum_iJ_iW_i$, and $N_L=\sum_iL_i$. Expanding the four cases in~\eqref{eq:late-erasure-binary-lift} yields
\begin{align}
 \bar M\mu_V
 &=\bar M N_{\rm anc}+M(n-N_{\rm anc}-N_L)
     +\tfrac12(n-N_{\rm anc}-N_J+N_{J,0})+(N_Q-N_{J,0})
 \nonumber\\
 &=(M+\tfrac12)n+\tfrac12N_{\rm anc}-\tfrac12N_J
       -\tfrac12N_{J,0}+N_Q-MN_L,\\
 \bar M vn&=(M+\tfrac12)n+\tfrac12\delta_*n
            +(1-\delta_*)\zeta n.
 \label{eq:late-erasure-mean-expansion}
\end{align}
Let $\mathcal E$ be the event in~\eqref{eq:selector-repetition}. On $\mathcal E\cap\mathcal T$, the difference satisfies, line by line,
\begin{align}
 \bar M(\mu_V-vn)
 &=\tfrac12(N_{\rm anc}-\delta_*n)+N_Q-\tfrac12N_J
          -\tfrac12N_{J,0}-MN_L-(1-\delta_*)\zeta n
 \nonumber\\
 &\geq N_Q-s_*N_J-MN_L-un-(1-\delta_*)\zeta n
 \nonumber\\
 &\geq\bigl(g-M\eta_{\rm sel}-u-(1-\delta_*)\zeta\bigr)n
 \geq\frac{5g}{8}n.
 \label{eq:late-erasure-mean-gap}
\end{align}
Take $\xi=g/(4\bar M)$.  It is positive and satisfies $v+\xi<1$ because
\begin{equation}
 1-v=\frac{(1-\delta_*)(1/2-\zeta)}{\bar M}
 >\frac{7(1-\delta_*)}{16\bar M}
 >\frac{g}{4\bar M}.
\end{equation}
The mean in~\eqref{eq:late-erasure-mean-gap} exceeds $(v+\xi)n$ by at least $3gn/(8\bar M)$. The additive form of \Cref{fact:Chernoff}, applied conditionally on the transcript, therefore gives
\begin{equation}
 \Pr\left[\mathcal E\cap\mathcal T\cap
      \left\{\sum_iV_i<(v+\xi)n\right\}\right]
 \leq\exp\left(-2\left(\frac{3g}{8\bar M}\right)^2n\right).
 \label{eq:late-erasure-coin-tail}
\end{equation}

\emph{Step 4: apply \Cref{fact:kundu-tan}.} A score above $(v+\xi)n$ is exactly what the threshold theorem forbids, so the selector event is contained in the union of the three rare events. Combining~\eqref{eq:late-erasure-kt}, \eqref{eq:late-erasure-counts}, and~\eqref{eq:late-erasure-coin-tail}, we obtain
\begin{equation}
 \Pr[\mathcal E]
 \leq 2e^{-2u^2n}+2^{-c_{\rm KT}n}
       +\exp\left(-2\left(\frac{3g}{8\bar M}\right)^2n\right).
 \label{eq:late-erasure-final-tail}
\end{equation}
Let
\begin{equation}
 c_0=\min\left\{\frac{2u^2}{\ln2},\ c_{\rm KT},\
                 \frac{2(3g/(8\bar M))^2}{\ln2}\right\}>0,
 \qquad c_{\rm sel}=c_0/2.
\end{equation}
The right side of~\eqref{eq:late-erasure-final-tail} is at most $4\cdot2^{-c_0n}\leq2^{-c_{\rm sel}n}$ whenever $n\geq4/c_0$. Increase $n_0$ to cover this condition and the finite-size requirement of~\eqref{eq:late-erasure-kt}.  Every choice depends only on the fixed game and thresholds, not on the strategy or its dimension.
\end{proof}

\noindent The theorem derives a receiver-chosen-selector bound from the threshold theorem of Kundu and Tan~\cite{kundu_tan}, using the one-copy bound of Fu and Miller~\cite{FM-BRE} as its game-specific input. Fair completion of unselected guesses and the weighted score connect the selector event to the imported product-anchored game.

\subsection{Min-entropy of the hidden raw string in one block}
\label{app:source-analysis}

We now derive entropy for one hidden column string from the game. Tests, audit disclosures, and leakage are charged in the protocol proofs (\Cref{tab:game-correspondence}). The global payload proof uses the same row-reconstruction argument and density bound.

A useful special case concerns one hidden column string. For a strategy of the game with a nonempty frozen selector $J$, define
\begin{equation}
 N_b=|\{j\in J:Y_j=b\}|,\qquad
 H=\arg\min_{b\in\{0,1\}}N_b,
 \label{eq:transcript-hidden-index}
\end{equation}
breaking ties in favor of zero, and put
\begin{equation}
 K=\{j\in J:Y_j\ne H\},\qquad
 |K|\geq|J|/2.
 \label{eq:transcript-selector-K}
\end{equation}
Both $H$ and $K$ are functions of Dave's early record $(Y,B,J)$. For a selector containing only question class $\beta$, the other branch is $H=1-\beta$. The global payload proof in \Cref{lem:ordered-block-hidden-pads} uses the same row-reconstruction rule with a separate complementary branch in each microblock.

\begin{lemma}[Hidden raw string from the frozen record]
\label{lem:raw-entropy-J}
Fix $\delta_*\in(0,1)$ and constants $0<\rho_{\min}\leq\rho_{\max}\leq1$. Let $\mathcal S$ be a strategy of \Cref{thm:restricted-order-repetition} with frozen $J$ satisfying $\rho_{\min}n\leq|J|\leq\rho_{\max}n$. Put
\[
 Z_H^{\rm raw}=(\Araw_j[H])_{j\in J},\quad
 Z_H^{\rm can}=(C_j\Araw_j[H])_{j\in J},\quad
 \mathsf{Good}=\Bigl\{\sum_{j=1}^nL_j\leq\eta n\Bigr\}.
\]
There exist $\eta,c>0$ and $n_0$, depending only on the fixed parameters, such that for either $Z_H\in\{Z_H^{\rm raw},Z_H^{\rm can}\}$, every late map $\mathcal G$, which receives Dave's entire residual state and the late questions, and every $n\geq n_0$,
\begin{equation}
 \Pr_{\mathcal S,\mathcal G}
 [\mathsf{Good}\land\mathcal G\text{ outputs }Z_H]\leq2^{-cn}.
 \label{eq:hidden-guess-J}
\end{equation}
\end{lemma}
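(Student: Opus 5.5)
The plan is to reduce the lemma to \Cref{thm:restricted-order-repetition} by building, from any pair $(\mathcal S,\mathcal G)$ that guesses $Z_H$, a strategy in the selector game whose selector is $K$ and whose selected-success count is large whenever the guess is right. Dave's early record $(Y,B,J)$ fixes $H$ and $K$. Define the new strategy $\mathcal S'$ to output the same $B$ and the frozen selector $J':=K$, with $|K|\ge|J|/2\ge\rho_{\min}n/2$ and $|K|\le\rho_{\max}n$. On an unerased coordinate $j\in K$, the late map $\mathcal G$ supplies a guess of $\Araw_j[H]$. Since $Y_j\ne H$, the column $B_j$ together with $\Xraw_j$ gives $\Araw_j[Y_j]$ on every coordinate winning Magic Square. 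Row parity then yields the third entry, so the full row, and hence $\widehat A_j[0,1]$, follows.

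For the canonical string $Z_H^{\rm can}$, anchored coordinates carry the known value $0$ and are scored as correct in $W$ anyway. So a correct guess of either $Z_H$ implies $W_j=1$ for every $j\in K$ that is erased or wins. For the raw string, an erased coordinate still counts as correct in $W$, so a correct guess of $Z_H^{\rm raw}$ also suffices. The extra information in the raw string only makes guessing harder.

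On $\mathsf{Good}$, the number of selected coordinates with $C_j=1$ that lose Magic Square is at most $\eta n$. Therefore correct guessing gives
\[
 \sum_jJ'_jW_j\ \ge\ |K|-\eta n .
\]

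The main obstacle is that $|K|$ is random and strategy-dependent, whereas the repetition theorem fixes one density $\rho$ and a threshold $\gamma_{\rm sel}\in(s_*\rho,\rho)$. I would bucket the range $|K|/n\in[\rho_{\min}/2,\rho_{\max}]$ into finitely many intervals $[\rho_t(1-\theta),\rho_t]$, with $\theta$ small enough that $s_*\rho_t<\rho_t(1-\theta)-\eta$. This is possible because $s_*<1$. Note that $s_*\rho_t$ rather than $s_*|K|$ appears, since the theorem needs only the upper bound $\sum J\le\rho n$. For each bucket, apply \Cref{thm:restricted-order-repetition} with $\rho=\rho_t$ and $\gamma_{\rm sel}=\rho_t(1-\theta)-\eta$. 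To ensure the remaining hypothesis $\sum_jL_j\le\eta_{\rm sel}n$ holds, take $\eta\le\min_t\eta_{{\rm sel},t}$, with $\eta$ fixed after $\theta$. This is the content of the density bucketing, presumably \Cref{lem:density-bucketing}.

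A union bound over the constantly many buckets then gives \eqref{eq:hidden-guess-J}, with $c$ the minimum exponent minus $o(1)$ and $n_0$ the maximum threshold.

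It remains to check legality. The selector $K$ is a function of Dave's first-round output, so it is frozen before the late $C$ is sampled. The reconstruction is a late map acting on Dave's residual state and the late questions. Both therefore lie in the allowed class, and conditioning on $\mathsf{Good}$ is replaced by intersecting with it, as the theorem's event already does.
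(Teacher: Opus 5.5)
Your proposal is correct and follows essentially the paper's proof: pass to the early-frozen selector $K$ and reconstruct the row on each active non-losing $j\in K$ from the guessed $\Araw_j[H]$, the column entry $B_j[X_j]$ and row parity; this gives at least $|K|-\eta n$ selected wins on $\mathsf{Good}$, and density bucketing over $|K|/n\in[\rho_{\min}/2,\rho_{\max}]$ finishes. The only slip is in the order of constants: with $\gamma_{\rm sel}=\rho_t(1-\theta)-\eta$, the loss constant $\eta_{\rm sel}$ depends on $\eta$, so first fix each bucket threshold $\gamma_t\in(s_*\rho_t,\rho_t(1-\theta))$ and then take $\eta\le\min_t\min\{\rho_t(1-\theta)-\gamma_t,\ \eta_{\rm sel}(\rho_t,\gamma_t)\}$, exactly as in \Cref{lem:density-bucketing}.
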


For the game's odd-column alphabet, replace an analysis record $B_j^*$ by $\overline B_j=(B_j^*[0],B_j^*[1],1\oplus B_j^*[0]\oplus B_j^*[1])$, using $001$ for malformed records. Every original Magic-Square-winning coordinate already has odd column parity, so $\overline B_j=B_j^*$ there. The replacement cannot increase loss and preserves every successful reconstruction coordinate. Actual committed records and opening checks are unchanged.

\begin{lemma}[Constant-density bucketing]
\label{lem:density-bucketing}
Fix $0<r_{\min}\leq r_{\max}\leq1$ and $s_*=(1+\delta_*)/2$. For sufficiently small constants $h,\eta>0$, cover $[r_{\min}n,r_{\max}n]$ by finitely many intervals with lower endpoints $r_jn$ and upper endpoints $\rho_j n$, including the right endpoint of the last interval, and choose
\[
 s_*\rho_j<g_j<r_j-\eta,\qquad
 \rho_j=\min\{r_j+h,r_{\max}\}.
\]
If $\eta<\min_j\eta_{\rm sel}(\rho_j,g_j)$, then $\exists c>0,n_0$ such that every admissible strategy with a pre-late-question selector $K$ satisfies, for $n\geq n_0$,
\[
 \Pr\!\left[r_{\min}n\leq|K|\leq r_{\max}n,\quad
             \sum_{j=1}^nL_j\leq\eta n,\quad
             \sum_{j\in K}W_j\geq|K|-\eta n\right]
       \leq2^{-cn}.
\]
Here $L_j,W_j$ and $\eta_{\rm sel}$ are as in \Cref{thm:restricted-order-repetition}.
\end{lemma}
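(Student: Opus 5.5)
The plan is to reduce the bucketed statement to finitely many applications of \Cref{thm:restricted-order-repetition}, one per density interval, followed by a union bound. First fix the cover. Put $h$ small and $r_1=r_{\min}$, $r_{j+1}=r_j+h$ until $r_j+h\geq r_{\max}$, so the number of intervals is $O(1/h)$. Set $\rho_j=\min\{r_j+h,r_{\max}\}$. We need $g_j$ with $s_*\rho_j<g_j<r_j-\eta$. Since $s_*<1$, it suffices that $s_*(r_j+h)<r_j-\eta$, that is, $h+\eta<(1-s_*)r_{\min}/s_*$. Choose $h$ below half of this quantity and fix the $g_j$ (for instance as midpoints). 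Then choose $\eta$ smaller than both that slack and $\min_j\eta_{\rm sel}(\rho_j,g_j)$. This minimum is positive because it ranges over finitely many $j$, and the $\eta_{\rm sel}$ values do not depend on $\eta$. Each triple $(\rho_j,g_j)$ satisfies $((1+\delta_*)/2)\rho_j<g_j<\rho_j$, which is the hypothesis of the theorem.

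Second, embed a selector $K$ into the selector game. Set $J_i=\mathbf 1\{i\in K\}$. Because $K$ is fixed before the late question, it is a legal first-round output frozen with $B$, and the same strategy is an allowed strategy for $G^{\rm sel}$ with the same $W_i$ and $L_i$. Now suppose the target event holds and $|K|\in[r_jn,\rho_jn]$ for some $j$. Then $\sum_iJ_i\leq\rho_jn$, $\sum_iL_i\leq\eta n\leq\eta_{\rm sel}(\rho_j,g_j)n$, and
\[
\sum_iJ_iW_i\geq|K|-\eta n\geq(r_j-\eta)n>g_jn.
\]
So the event is contained in the event of \eqref{eq:selector-repetition} with parameters $(\rho_j,g_j)$. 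Its probability is therefore at most $2^{-c_{\rm sel}(\rho_j,g_j)n}$ for $n\geq n_0(\rho_j,g_j)$. The upper bound on $|K|$ matters only through $\rho_j$, and the theorem needs no lower bound on $\sum_iJ_i$. This is convenient, because the lower endpoint $r_j$ is used only to convert $|K|-\eta n$ into the threshold $g_jn$.

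Third, union over the $O(1/h)$ intervals. Take $n_0$ to be the maximum of the $n_0(\rho_j,g_j)$, increased so that the multiplicative factor $\lceil(r_{\max}-r_{\min})/h\rceil+1$ is absorbed. Then $c=\tfrac12\min_jc_{\rm sel}(\rho_j,g_j)$ works. The main obstacle I expect is ordering the constants without circularity. $h$ must be fixed first, depending only on $r_{\min}$ and $\delta_*$. Next come $g_j$, then the constants $\eta_{\rm sel},c_{\rm sel}$ supplied by the theorem, and only then $\eta$. I also need to check that the theorem's guarantee is uniform over strategies, including the selector-dependent embedding. It is uniform, since its constants depend only on $(\delta_*,\rho_j,g_j)$.
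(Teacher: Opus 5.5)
Your proposal is correct and is essentially the paper's proof. Fix $h$ with $s_*h<(1-s_*)r_{\min}$ and choose $g_j\in(s_*\rho_j,r_j)$. Invoke \Cref{thm:restricted-order-repetition} for each of the finitely many pairs $(\rho_j,g_j)$. Then take $\eta<\min_j\{r_j-g_j,\ \eta_{\rm sel}(\rho_j,g_j)\}$ and absorb the finite union bound into $c$ and $n_0$.

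One cosmetic slip: $s_*(r_j+h)<r_j-\eta$ is equivalent to $s_*h+\eta<(1-s_*)r_j$, not to $h+\eta<(1-s_*)r_{\min}/s_*$. This is harmless, because you fix the $g_j$ before choosing $\eta$ below the gaps $r_j-g_j$.
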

\begin{proof}
First choose $h>0$ with $(1-s_*)r_{\min}>s_*h$, and fix $g_j$ strictly between $s_*\rho_j$ and $r_j$. Apply \Cref{thm:restricted-order-repetition} to these finitely many fixed pairs $(\rho_j,g_j)$. Then choose $\eta>0$ below both $\min_j(r_j-g_j)$ and all resulting loss constants. Taking the minimum exponent and a finite union bound proves the claim. All choices precede $n$ and the adversarial state.
\end{proof}

In the protocols, Dave's question string is the effective question string, defined as follows. Let $\mathsf{trit}(00)=0$, $\mathsf{trit}(01)=1$, $\mathsf{trit}(10)=2$, and $\mathsf{trit}(11)=0$ for analysis. Decode the original committed two-bit offset at position $j$ as $U_j^*$ using this map, and put $Y_j^*=V_j+U_j^*$. The original two-bit record is retained, and actual projection and syntax checks still reject the encoding $11$ wherever it is opened. With $Y=Y^*$ and $B=\overline B$, \eqref{eq:transcript-hidden-index} and \eqref{eq:transcript-selector-K} define $H$ and $K$ from the immutable early record.

\begin{proof}[Proof of \Cref{lem:raw-entropy-J}]
For either target, a correct guess supplies the same raw answer bit on every active coordinate. On an active non-losing coordinate $j\in K$, Dave's fixed answer supplies $A_j[Y_j]=B_j[X_j]$. A correct guess for $Z_H$ supplies the different entry $A_j[H]$. If $Y_j=1-H$, both independent row bits are known; if $Y_j=2$, the prescribed row parity supplies the remaining bit. Thus a successful guess yields a legal late answer for $A_j[0,1]$ on every such coordinate. Anchors score automatically. On $\mathsf{Good}$ the induced selected-win count is at least $|K|-\eta n$.

Because $K$ is determined by the frozen $(Y,B,J)$ before the late questions, it is an admissible selector. Its density is between $\rho_{\min}/2$ and $\rho_{\max}$. Apply \Cref{lem:density-bucketing} and shrink $c>0$ to absorb the fixed number of buckets. This gives \eqref{eq:hidden-guess-J} for every late map, including a map acting on arbitrary permitted entangled side information.
\end{proof}

\section*{Acknowledgements}
\addcontentsline{toc}{section}{Acknowledgements}
This project is supported by the National Research Foundation, Singapore through the National Quantum Office, hosted in A*STAR, under its Centre for Quantum Technologies Funding Initiative (S24Q2D0009) and its Advanced Quantum Algorithms and Solutions Funding Initiative (S25Q9DA001 and S25Q9DA002).

\section*{AI use disclosure}
\addcontentsline{toc}{section}{AI use disclosure}
We used OpenAI's ChatGPT interactively to assist with literature searches, exploring proof strategies, examining commitment assumptions and security simulations, identifying potential gaps and inconsistencies, drafting and revising the manuscript, and preparing the LaTeX source.
Its substantive technical assistance covered the string-commitment construction, the masked commit-and-prove protocol and its security simulations, the code-masking and global Fourier arguments.
We also used Anthropic's Claude interactively for manuscript restructuring, editing, and drafting.
The authors retain full responsibility for all mathematical claims, proofs, references, and the final manuscript.

\phantomsection
\addcontentsline{toc}{section}{References}
\begingroup
\Urlmuskip=0mu plus 2mu\relax
\bibliographystyle{alphaurl}
\bibliography{ref}
\endgroup
\end{document}